\renewcommand\thmt@listnumwidth{4.3em}
\newenvironment{dedication}
{
   \cleardoublepage
   \thispagestyle{empty}
   \vspace*{\stretch{1}}
   \hfill\begin{minipage}[t]{0.66\textwidth}
   \raggedright
}%
{
   \end{minipage}
   \vspace*{\stretch{3}}
   \clearpage
}
\declaretheorem
[
    numberwithin=section,
    name=Theorem,
    refname={theorem,theorems},
    Refname={Theorem,Theorems}
] {theorem}
\declaretheorem
[
    numberwithin=section,
    name=Lemma,
    refname={lemma,lemmas},
    Refname={Lemma,Lemmas}
] {lemma}
\declaretheorem
[
    numberwithin=section,
    name=Corollary,
    refname={corollary,corollaries},
    Refname={Corollary,Corollaries}
] {corollary}
\declaretheorem
[
    style=definition,
    numberwithin=section,
    name=Definition,
    refname={definition,definitions},
    Refname={Definition,Definitions}
] {definition}
\declaretheorem
[
    style=remark,
    numbered=no,
    name=Remark,
    refname={remark,remarks},
    Refname={Remark,Remarks}
]{remark}
\declaretheorem
[
    style=example,
    numbered=no,
    name=Example,
    refname={example,examples},
    Refname={Example,Examples}
]{example}
\newcommand{\turnaround}[1]{%
  \rotatebox[origin=c]{180}{\ensuremath#1}}
\newcommand{\logicpar}{\turnaround{\&}}
\newcommand{\bra}[1]{\ensuremath{\langle#1|}}
\newcommand{\ket}[1]{\ensuremath{|#1\rangle}}
\begin{document}

\begin{titlepage}
\begin{center}
    {\LARGE Higher-order semantics for quantum programming languages with classical control}\\[2cm]
    {\Large George Philip Atzemoglou}\\
    {\large Wolfson College, Oxford}\\[3cm]
    \includegraphics[width=5cm]{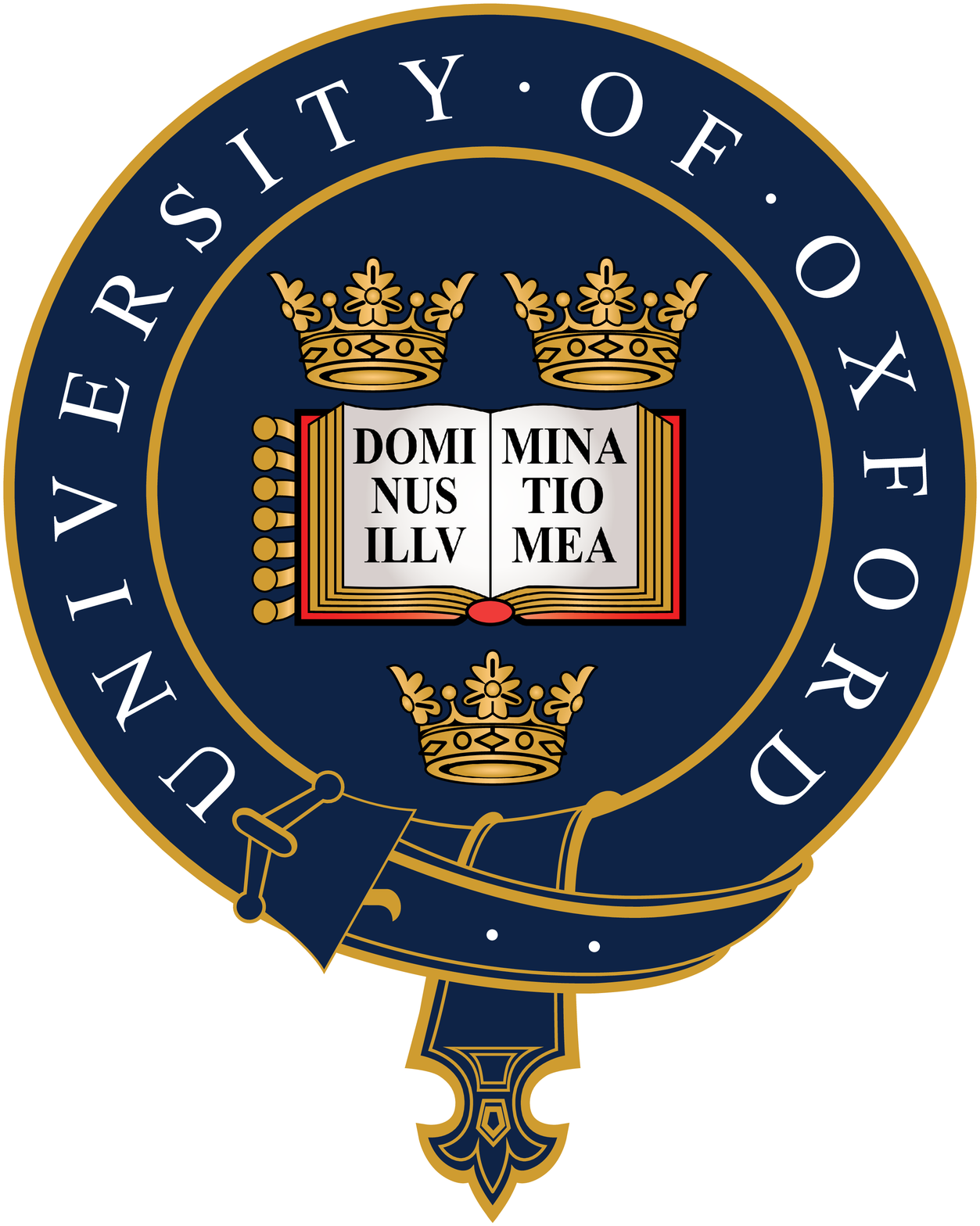}\\[3cm]
    {\large Department of Computer Science, University of Oxford}\\[1cm]
    \textit{Submitted for the degree of Doctor of Philosophy}\\
    \textit{Michaelmas Term 2012}
\end{center}
\end{titlepage}

\pagenumbering{gobble}

\begin{dedication}
To my grandfathers,\\
George and Ermis.
\end{dedication}

\pagenumbering{roman}

\chapter*{\centering \begin{LARGE}Abstract\end{LARGE}}
\begin{quotation}
    \noindent This thesis studies the categorical formalisation of quantum computing, through the prism of type theory, in a three-tier process. The first stage of our investigation involves the creation of the dagger lambda calculus; a lambda calculus for dagger compact categories \cite{AC04}. Our second contribution lifts the expressive power of the dagger lambda calculus to that of a quantum programming language, by adding classical control in the form of complementary classical structures \cite{CPP10, CD11} and dualisers \cite{CPP08}. Finally, our third contribution demonstrates how our lambda calculus can be applied to various well known problems in quantum computation.

    Our construction of the dagger lambda calculus extends the linear typed lambda calculus, by defining a higher-order language for quantum protocols which is an internal language for dagger compact categories. The resulting language includes a linear negation operator and redefines the notion of binding as a symmetric relation whose scope spans the entire sequent. Reduction works by means of an explicit substitution, in the spirit of the operational semantics of the linear chemical abstract machine. The rules for explicit substitution act globally on the entire typing judgement, instead of limiting their scope to a specific subterm. This particular implementation of reduction enables us to enrich our typing dynamics by allowing the binding not just of variables, but of arbitrary terms. An elimination procedure allows us to reconstruct application using Cut, hence removing it from our primitive rule set. The new rules allow for a fully symmetric language, where inputs and outputs are treated as elements of a symmetric relation, and give rise to a new structural rule called the \textit{dagger-flip}. The resulting set of rules is minimal and simple to use, which allows us to prove with ease properties like subject reduction, confluence, strong normalisation and consistency. Our analysis of the language's semantics is completed by a proof that the dagger lambda calculus is an internal language for dagger compact categories.

    In our second contribution, we provide a controlled way of breaching the linearity constraints of the dagger lambda calculus. Instead of the exponential connective of linear logic, we enrich our language with a view of axiomatising the basis structures of Hilbert spaces. We achieve this by providing the type-theoretic equivalent of complementary classical structures, which allows us to implement a controlled form of copying and deleting of terms in our classical basis. In order to retain sensible types, our language is also equipped with dualisers in a non self-dual setting, allowing us to factor the familiar notion of Currying by making its rule admissible in the dagger lambda calculus.

    Our final contribution is in the study of three well known applications of quantum computation. We first demonstrate the expressiveness of the dagger lambda calculus by using it to represent the formalism of \textit{Quantum Key Distribution}; we then put it to use, by using the language to verify the protocol's procedures. Our second application is in the \textit{quantum Fourier transform}; we demonstrate how our language can represent a controlled phase gate, we use that construction to build a \textit{quantum Fourier transform}, and we then use the dagger lambda calculus to "run" a sample input through the \textit{quantum Fourier transform}. Lastly, our third application provides an examination of the \textit{teleportation protocol}, explaining how it is represented in our language, demonstrating corrections and the flow of measurement outcomes, and showing how a quantum term actually ends up teleporting from one party to the other.
\end{quotation}
\clearpage

\clearpage
\setcounter{tocdepth}{2}
\tableofcontents

\clearpage
\pagenumbering{arabic}
\part{Introduction}
\label{Part:Introduction}
\chapter{Motivation}
\label{Chapter:Motivation}
Quantum mechanics was initially developed in the first third of the previous century, with quantum computation being explicitly studied since the 1980's. Though a lot of work has been done on quantum programming languages, it is still a nascent branch of science, where most of the languages have not yet been developed to higher levels of abstraction. As a result, while we do have a handful of quantum algorithms which provide promising results in the area of computational efficiency, much of the work involved in designing such an algorithm seems to be still largely based on guesswork; there is no clear set of rules, or unifying principle, that would easily allow us to combine computational primitives into building a new and efficient quantum algorithm. Furthermore, even though quantum computation is probably a more powerful means of computation than classical computing, this remains an unproven conjecture to this day. Unless we develop the means of abstracting our computational primitives to higher notions, thereby setting the foundations for a solid theory of algorithms and figuring out what gives quantum computers their extra power, we will probably never be able to prove a computational speedup or write efficient algorithms en masse.

Since the turn of the century, we have seen increasing interest in the development of a quantum programming language. One very actively pursued approach has been from a programming languages perspective \cite{Sel04a, Sel04b, vTD03, vT04, SV06, SV08, SV10}. The researchers in this field, whose work has been seminal in establishing a semantic approach to quantum programming language design, have focused in designing a higher order lambda calculus for quantum computation with classical control. More specifically, in \cite{SV10}, a quantum lambda calculus with a complicated set of rules is presented, whose structural equations nevertheless allow for higher-order structures. The rest of the work towards constructing a concrete model for the language's semantics remains an open problem.

Another very successful approach has come from a category theoretic perspective, where researchers have proposed a variety of diagrammatic calculi: Starting with the work of \cite{AC04} and \cite{Sel05}; progressing into the classical structures of \cite{CP06}, \cite{CP07} and \cite{CPP10}; the dualisers of \cite{CPP08}; up until the full axiomatisation of bases through complementary observables in \cite{CD08} and \cite{CD11}. Because of their design, these languages are capable of expressing both quantum and classical processes, which makes them ideal for representing measurement based quantum computation, the computational paradigm that is closest to being practically implemented. Furthermore, these languages possess an obvious visual appeal and are, as a result, easy to understand and work with. Despite all that, the diagrammatic calculus does not readily lend itself to higher order operations. Moreover, despite some recent results by \cite{DP10} and \cite{Kis11}, which deal with the issue of automated rewriting in the diagrammatic calculus, the rewrite steps are not always immediately apparent or easy to follow.

The purpose of this dissertation is to bridge these two approaches, hence bringing the programming languages approach closer to the categorical approach of complementary observables, by casting the diagrammatic formalism into the rich and well established tradition of type theory. The language presented in part \ref{Part:Quantum programming and classical control} of the dissertation attempts this bridge by providing a higher-order computational interpretation for the categorical semantics of \cite{AC04}, \cite{CD11}, \cite{CPP08} and \cite{CPP10}. The dagger lambda calculus is expressive enough to perform operations on arbitrary (black box) functions and, as such, it is capable of encoding many well known quantum algorithms.

The natural way in which the dagger lambda calculus represents higher-order operations, together with the simplicity of its sequent rewriting, make it an ideal tool to be used side by side with the diagrammatic calculus. The two calculi can then complement each other, by providing the best of both worlds; visual clarity, intuitive rewrite rules and higher-order expressibility.

\chapter{Outline of the Dissertation}
\label{Chapter:Outline of the Dissertation}
Part \ref{Part:Introduction} serves as an introduction to the semantics of quantum programming languages. It prepares the ground for the rest of the thesis by presenting the main motivation for research in this area. The current section will explain the structure of the remaining parts of the dissertation, by providing an outline of the various sections and subsections.

In order to make this dissertation more self-contained, part \ref{Part:Background material} covers all of the background material that will be used in the constructive portions of this work. This presentation starts with a quick overview of quantum computing in chapter \ref{Chapter:Quantum computing}. Chapter \ref{Chapter:Category theory} provides some background on category theory, including definitions for a selection of topics from \cite{Mac98}, as well as the definitions for many of the categories that will be used in later chapters. Chapter \ref{Chapter:Categorical model} explains how the categorical structures of \cite{AC04}, together with the Frobenius algebras of \cite{CD08}, can be used to model quantum computation and classical control structures. Chapter \ref{Chapter:Linear logic} contains a brief presentation of linear logic, as well as a variant by \cite{AD06} that is better suited for quantum computation. Finally, chapter \ref{Chapter:The linear typed lambda calculus}, the last chapter in this part of the dissertation, prepares the ground for the later parts by presenting the linear typed lambda calculus of \cite{AT10}.

Part \ref{Part:Quantum programming and classical control} of the dissertation forms the main constructive portion of this work. It begins by introducing in section \ref{Section:Language construction} the dagger lambda calculus, a language for quantum protocols that corresponds to dagger compact categories. The next section, \ref{Section:Proofs of properties}, provides proofs of the language's most important properties; namely of subject reduction, strong normalisation, confluence and consistency. Section \ref{Section:Correspondence to dagger compact categories} presents a proof of the language's correspondence to dagger compact categories by showing that the syntactic category is indeed a free dagger compact category. In section \ref{Section:Classical structures}, the dagger lambda calculus is enriched with the classical structures of \cite{CP06}, \cite{CP07} and \cite{CPP10}. The language is then further enriched by introducing the notion of a dualiser \cite{CPP08} in section \ref{Section:Dualisers}, which can be used as a primitive to factor the notion of Currying, hence making the language's Curry rule admissible. Sections \ref{Section:Monoidal product of terms and phase shifts} and \ref{Section:Unbiased and classical constants} define the monoidal product and phase shift operations in the calculus and outline the requirements for the properties of unbiasedness and classical constants. Section \ref{Section:Complementary observables} enriches the dagger lambda calculus with the complementary classical structures of \cite{CD11}, which lift its expressive power from quantum protocols to that of a quantum programming language.

Part \ref{Part:Applications} focuses on the applications of the language designed in part \ref{Part:Quantum programming and classical control}. Chapter \ref{Chapter:Quantum Key Distribution} shows how the dagger lambda calculus can be used to perform the Quantum Key Distribution of \cite{Eke91}, using the formalism of \cite{CWWWZ11}. Chapter \ref{Chapter:Quantum Fourier Transform} then demonstrates how the language can represent and run a quantum Fourier transform, the most essential part of Shor's factoring algorithm \cite{Sho97}. Completing the applications' part, chapter \ref{Chapter:Teleportation Protocol} uses the language to perform the teleportation protocol.

Finally, the dissertation comes to a close in part \ref{Part:Conclusion}. This part highlights the importance of the higher-order computational interpretation provided in the dissertation, by outlining the structural insights that were gained through the study of its semantics. A list of possible directions for future work is provided, along with some concluding remarks.

\clearpage
\part{Background material}
\label{Part:Background material}
\chapter{Quantum computing}
\label{Chapter:Quantum computing}
Quantum computing is a radically different paradigm for computation which relies on the laws of quantum mechanics, in the hopes of achieving a higher computational efficiency than its currently used classical counterpart. This section will cover some fundamental concepts of quantum computing \cite{Mer07,NC00}, reviewing all the material that is necessary for understanding this dissertation.

Classical computers operate on regular bits, whose value is either $0$ or $1$. When studying quantum computers, we view $\ket{0}$ and $\ket{1}$ as orthonormal vectors, using them as a basis to span a complex Hilbert space. We are free to pick a different set of orthonormal vectors as the basis of our Hilbert space, however, the one mentioned earlier is usually referred to as the \textit{standard basis}. The length of a vector in this Hilbert space does not really matter; we therefore only keep its direction and group all like vectors up to a complex multiple into equivalence classes called rays. Qubits, the quantum analogue of a bit, can have any of these rays as their value. This means that the value or \textit{state} of any qubit can be written as
\[ \ket{\psi} = \alpha \ket{0} + \beta \ket{1} \]
where $\alpha$ and $\beta$ are complex coefficients. The fact that a quantum state can be "a little bit of" $\ket{0}$ and "a little bit of" $\ket{1}$ at the same time, is called \textit{superposition}. Transformations of a quantum system's state are described by unitary operations acting on the system's Hilbert space. Since the underlying field for our Hilbert spaces is the field of complex numbers, state vectors in $\mathcal{H}$ can be trivially shown to be isomorphic to the linear maps in $\mathbb{C} \rightarrow \mathcal{H}$. The isomorphism maps every state $\ket{\psi}$ to the linear map spanned by $1 \mapsto \psi$. By a slight abuse of notation, we sometimes use $\ket{\psi}$ to refer to the linear map as such:
\[ \ket{\psi} : \mathbb{C} \rightarrow \mathcal{H} :: c \mapsto c\psi \]

The state of composite quantum systems is represented by the tensor product of the Hilbert spaces that describe their constituent parts. This behaves like a regular Kronecker product; for a system composed of $A$ and $B$, we write $A \otimes B$. Similarly, for two linear maps $f$ and $g$ running parallel to each other, each acting on a different state of a composite system, we would write $f \otimes g$. It seems natural that we could use $\ket{\psi} \otimes \ket{\phi}$ to describe the state of two qubits. We tend to write $\ket{00}$ for $\ket{0} \otimes \ket{0}$ and $\ket{11}$ for $\ket{1} \otimes \ket{1}$. Because of superposition, however, there are some cases where a state cannot be written as the tensor product of two or more states. Typical examples of this are the Bell states: $\ket{00} + \ket{11}$, $\ket{00} - \ket{11}$, $\ket{01} + \ket{10}$ and $\ket{01} - \ket{10}$.

Hilbert spaces come equipped with an inner-product which, as a convention in quantum computation, is usually defined as linear in the second argument:
\[ \langle - | - \rangle : \mathcal{H} \times \mathcal{H} \rightarrow \mathbb{C} \]
In order to formally introduce Dirac notation in our work, we want to further refine the definition of an inner-product by breaking it down to a composition of a \textit{bra} $\bra{\phi}$ and a \textit{ket} $\ket{\psi}$, yielding $\langle\phi|\psi\rangle = \bra{\phi} \circ \ket{\psi}$. A \textit{bra} $\bra{\phi}$ stands for the dual vector of $\ket{\phi}$. At this point it is useful to introduce the notion of an \textit{adjoint} or, more precisely, a Hermitian adjoint for linear maps:
\[ (f : \mathcal{H}_1 \rightarrow \mathcal{H}_2) \mapsto (f^\dag : \mathcal{H}_2 \rightarrow \mathcal{H}_1) \]
Building up on our convention to use $\ket{\phi}$ to represent linear maps, we will define the adjoint of that map to be equivalent to the dual of the corresponding state vector \cite{NC00}:
\[ \left(\ket{\phi}\right)^\dag \equiv \bra{\phi} : \mathcal{H} \rightarrow \mathbb{C} \]

Performing a measurement on a quantum system against some orthonormal basis destroys its state, by making it collapse into one of the basis vectors. To define this more formally, a measurement against some orthonormal basis consists of a set of projectors ${P_i}$, each sending the measured state to one of the basis vectors. At the time of measurement, the system's wave function collapses, causing a non-deterministic jump in the system's state by stochastically applying one of the projectors to the state vector. If we represent the state as a linear combination of basis vectors, the square of the complex coefficient of any basis vector gives us the probability of collapsing to that outcome during a measurement.

\chapter{Category theory}
\label{Chapter:Category theory}
Category theory is an area of mathematics that provides us with a means of reasoning about common properties of abstract structures; it allows the use of diagrammatic reasoning while, at the same time, extending connections to mathematical logic. The rest of this dissertation makes extensive use of category theory \cite{Mac98}. Therefore, for the sake of completeness, we will provide definitions for all of the notions used.

\section{Basic notions}
\label{Section:Basic notions}
Since this chapter is about category theory, we will begin our exposition appropriately by providing a definition for categories. Following that, we will be able to define functors and natural transformations, all of which will be used extensively later on.

\begin{definition}[Category]
A \textit{category} is a collection of objects and arrows between objects such that the following conditions hold:
\begin{itemize}
\item There is a composition operator $\circ$, that can take any two arrows of the form $f : A \rightarrow B$ and $g : B \rightarrow C$ and produce a new arrow $g \circ f : A \rightarrow C$
\item Composition is associative, so $h \circ (g \circ f) = (h \circ g) \circ f$
\item For every object $A$ in our category, there is an identity arrow $id_A : A \rightarrow A$
\item The identity arrows satisfy the \textit{unit law}, whereby, for any arrow $f : A \rightarrow B$ in our category, $id_B \circ f = f = f \circ id_A$
\end{itemize}
\end{definition}

\begin{definition}[Functor]
A \textit{functor} is a morphism between categories, mapping objects to objects and arrows to arrows, in a way that preserves identities and composition.
\end{definition}
\begin{proof}[Example]
Consider two categories $\mathcal{C}$ and $\mathcal{D}$. A functor $F : \mathcal{C} \rightarrow \mathcal{D}$, assigns to each object $A \in \mathcal{C}$ an object $FA \in \mathcal{D}$ and to each arrow $f : A \rightarrow B$ of $\mathcal{C}$ an arrow $Ff : FA \rightarrow FB$ in $\mathcal{D}$. Since it preserves identities and composition, it will have to be the case that $F(id_A) = id_{FA}$ and $F(g \circ f) = Fg \circ Ff$.
\renewcommand{\qedsymbol}{}
\end{proof}

\begin{definition}[Natural transformation]
When $F$ and $G$ are functors such that $F,G : \mathcal{C} \rightarrow \mathcal{D}$, a \textit{natural transformation} $\eta : F \Rightarrow G$ is a collection of arrows $\eta_A : FA \rightarrow GA$ in $\mathcal{D}$ for every object $A$ in $\mathcal{C}$. These arrows have to be such that for any $f : A \rightarrow B$ in $\mathcal{C}$, the following diagram commutes:
\[
\setlength{\arraycolsep}{1.1cm}
\begin{array}{cc}
\Rnode{FA}{FA} & \Rnode{GA}{GA} \\[1.5cm]
\Rnode{FB}{FB} & \Rnode{GB}{GB}
\end{array}
\psset{nodesep=5pt,arrows=->} %
\everypsbox{\scriptstyle} %
\ncline{FA}{GA}\taput{\eta_A} %
\ncline{FB}{GB}\tbput{\eta_B} %
\ncline{FA}{FB}\tlput{Ff} %
\ncline{GA}{GB}\trput{Gf}
\]
\end{definition}
\begin{proof}[Note]
When all the $\eta_A$ of a natural transformation are isomorphisms, we call $\eta$ a natural isomorphism.
\renewcommand{\qedsymbol}{}
\end{proof}

\section{More advanced notions}
\label{Section:More advanced notions}
We now wish to define a way of describing how different categories are related to each other. In order to do that, consider categories $\mathcal{C}$ and $\mathcal{D}$, with functors $F$ and $G$ between them such that:
\[
    \Large
    \Rnode{C}{\mathcal{C}} \hskip 1.5cm \Rnode{D}{\mathcal{D}}
    \normalsize
    \psset{nodesep=5pt,offset=4pt,arrows=->} %
    \everypsbox{\scriptstyle} %
    \ncline{C}{D}\taput{F} %
    \ncline{D}{C}\tbput{G} %
\]

The strictest and perhaps most obvious type of relation occurs when both of these functors are identity endofunctors $F = G = Id$, which would mean that the categories are \textit{equal}. An \textit{isomorphism} is a weaker kind of relation where $G \circ F = Id_\mathcal{C}$ and $F \circ G = Id_\mathcal{D}$. An even weaker kind of relation is an \textit{equivalence}, where instead of requiring the composition of functors to be equal to the identity functor, we ask that there be a natural isomorphism between them. In other words $G \circ F \cong Id_\mathcal{C}$ and $F \circ G \cong Id_\mathcal{D}$.

Continuing down that path leads us to a very important notion called an adjunction, which is another yet weaker kind of relation between categories \cite{Che07}. To define this more rigorously:

\begin{definition}[Adjunction]
Let $\mathcal{C}$ and $\mathcal{D}$ be categories, with functors $F : \mathcal{C} \rightarrow \mathcal{D}$ and $G : \mathcal{D} \rightarrow \mathcal{C}$. We say there is an \textit{adjunction} $\langle F,G,\eta,\varepsilon\rangle$ when there exist two natural transformations $\eta : Id_\mathcal{C} \Rightarrow GF$ and $\varepsilon : FG \Rightarrow Id_\mathcal{D}$, respectively called the \textit{unit} and \textit{counit} of the adjunction, such that the following diagrams commute:
\[
\psmatrix[colsep=1.5cm,rowsep=1cm]
G   &   GFG &   &   F   &   FGF \\
    &   G   &   &       &   F
\endpsmatrix
\everypsbox{\scriptstyle} %
\psset{nodesep=5pt,arrows=->} %
\ncline{1,1}{1,2} \taput{\eta G} %
\ncline{1,2}{2,2} \trput{G \varepsilon} %
\ncline{1,1}{2,2} \tlput{1_G} %
\ncline{1,4}{1,5} \taput{F \eta} %
\ncline{1,5}{2,5} \trput{\varepsilon F} %
\ncline{1,4}{2,5} \tlput{1_F} %
\]

\end{definition}
\begin{proof}[Note]
We say that $F$ is \textit{left adjoint} to $G$, writing this as $F \dashv G$. Similarly, $G$ is right adjoint to $F$.
\renewcommand{\qedsymbol}{}
\end{proof}

\begin{definition}[Monad]
A \textit{monad} $\langle T,\eta,\mu\rangle$ in a category $\mathcal{C}$ consists of an endofunctor $T : \mathcal{C} \rightarrow \mathcal{C}$, together with two natural transformations $\eta : Id_\mathcal{C} \Rightarrow T$ and $\mu : T^2 \Rightarrow T$, such that the following diagrams commute:
\[
\psmatrix[colsep=1.5cm,rowsep=1cm]
T^3 &   T^2 &   &   T   &   T^2 &   T \\
T^2 &   T   &   &       &   T
\endpsmatrix
\everypsbox{\scriptstyle} %
\psset{nodesep=5pt,arrows=->} %
\ncline{1,1}{1,2} \taput{T \mu} %
\ncline{1,2}{2,2} \trput{\mu} %
\ncline{1,1}{2,1} \tlput{\mu T} %
\ncline{2,1}{2,2} \tbput{\mu} %
\ncline{1,4}{1,5} \taput{\eta T} %
\ncline{1,6}{1,5} \taput{T \eta} %
\ncline{1,5}{2,5} \trput{\mu} %
\ncline{1,4}{2,5} \tlput{1_T} %
\ncline{1,6}{2,5} \trput{1_T} %
\]

\end{definition}

\begin{corollary}[Adjunctions define a monad]
Any adjunction $\langle F,G,\eta,\varepsilon\rangle$ defines a monad. This can be done by setting $T = GF$, which would make the unit of our adjunction a natural transformation of the form $\eta : Id_\mathcal{C} \Rightarrow T$ and our $\mu = G \varepsilon F : GFGF \Rightarrow GF$. The resulting monad would be of the form $\langle FG,\eta,G \varepsilon F\rangle$. This can be easily verified by checking that the following diagrams do indeed commute:
\[
\psmatrix[colsep=2cm,rowsep=1.3cm]
GFGFGF  &   GFGF    &   GF  &   GFGF    &   GF \\
GFGF    &   GF      &       &   GF
\endpsmatrix
\everypsbox{\scriptstyle} %
\psset{nodesep=5pt,arrows=->} %
\ncline{1,1}{1,2} \taput{GFG \varepsilon F} %
\ncline{1,2}{2,2} \trput{G \varepsilon F} %
\ncline{1,1}{2,1} \tlput{G \varepsilon FGF} %
\ncline{2,1}{2,2} \tbput{G \varepsilon F} %
\ncline{1,3}{1,4} \taput{\eta GF} %
\ncline{1,5}{1,4} \taput{GF \eta} %
\ncline{1,4}{2,4} \trput{G \varepsilon F} %
\ncline{1,3}{2,4} \tlput{1_T} %
\ncline{1,5}{2,4} \trput{1_T} %
\]

\end{corollary}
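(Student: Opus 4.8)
The plan is to verify directly that the triple $\langle T,\eta,\mu\rangle$, with $T = GF$, with $\eta : Id_\mathcal{C} \Rightarrow GF$ the unit of the adjunction, and with $\mu = G\varepsilon F : GFGF \Rightarrow GF$, satisfies the two defining diagrams of a monad. First I would dispatch the easy structural observations: $GF$ is an endofunctor on $\mathcal{C}$ because $F : \mathcal{C} \rightarrow \mathcal{D}$ and $G : \mathcal{D} \rightarrow \mathcal{C}$; $\eta$ is, verbatim, the unit of the adjunction and hence already a natural transformation of the required type; and $\mu = G\varepsilon F$ is a natural transformation $T^2 \Rightarrow T$, being the horizontal composite (whiskering) of the natural transformation $\varepsilon : FG \Rightarrow Id_\mathcal{D}$ with the functor $G$ on the left and the functor $F$ on the right. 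All the real content therefore lies in showing that the associativity square and the two unit triangles commute.

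For the associativity square --- which after the substitution $T = GF$, $\mu = G\varepsilon F$ becomes the claim that $G\varepsilon F \circ GFG\varepsilon F = G\varepsilon F \circ G\varepsilon FGF$ as transformations $GFGFGF \Rightarrow GF$ --- I would obtain it from the naturality of $\varepsilon$ applied to a counit component. Instantiating the naturality square of $\varepsilon : FG \Rightarrow Id_\mathcal{D}$ at the morphism $\varepsilon_D : FGD \rightarrow D$ gives $\varepsilon_D \circ \varepsilon_{FGD} = \varepsilon_D \circ FG(\varepsilon_D)$ for every object $D$; that is, $\varepsilon \circ \varepsilon FG = \varepsilon \circ FG\varepsilon$ as transformations $FGFG \Rightarrow Id_\mathcal{D}$. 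Whiskering this identity with $G$ on the left and $F$ on the right, and using functoriality of whiskering, yields precisely $G\varepsilon F \circ G\varepsilon FGF = G\varepsilon F \circ GFG\varepsilon F$, i.e. $\mu \circ \mu T = \mu \circ T\mu$.

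For the two unit triangles I would invoke the triangle identities of the adjunction directly. Whiskering the identity $G\varepsilon \circ \eta G = 1_G$ with $F$ on the right gives $G\varepsilon F \circ \eta GF = 1_{GF}$, which is exactly $\mu \circ \eta T = 1_T$; at an object $C$ this is the statement $G(\varepsilon_{FC}) \circ \eta_{GFC} = 1_{GFC}$. Symmetrically, the component of $\mu \circ T\eta$ at $C$ is $G(\varepsilon_{FC}) \circ GF(\eta_C) = G(\varepsilon_{FC} \circ F\eta_C) = G(1_{FC}) = 1_{GFC}$, where the first step uses functoriality of $G$ and the penultimate one uses the second triangle identity $\varepsilon F \circ F\eta = 1_F$; hence $\mu \circ T\eta = 1_T$.

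The only genuinely non-mechanical step is spotting that monad associativity is nothing more than the naturality of the counit evaluated at a counit component; the rest is routine whiskering bookkeeping, where the main thing to be careful about is matching the labels $GFG\varepsilon F$, $G\varepsilon FGF$, $\eta GF$ and $GF\eta$ appearing in the corollary's diagrams with the horizontal composites $T\mu$, $\mu T$, $\eta T$ and $T\eta$ of the generic monad diagrams.
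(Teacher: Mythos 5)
Your proposal is correct and fills in exactly the verification the paper leaves implicit: the paper merely asserts that the two diagrams commute, and your argument — monad associativity from the naturality of $\varepsilon$ instantiated at its own components, and the unit laws from the two triangle identities via whiskering — is the standard way to check them. (Note in passing that the paper's phrase ``of the form $\langle FG,\eta,G\varepsilon F\rangle$'' is a typo for $\langle GF,\eta,G\varepsilon F\rangle$, which you have correctly used.)
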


\section{Specific constructions}
\label{Section:Specific constructions}
\begin{definition}[Monoidal category]
A \textit{monoidal category} is a category that has been equipped with an bifunctor called tensor $\otimes : \mathcal{C} \times \mathcal{C} \longrightarrow \mathcal{C}$. Up to appropriate natural isomorphisms, the tensor is associative and features a special object $I$ that acts as a left and right identity:

\[ \alpha_{A,B,C} : A \otimes (B \otimes C) \stackrel{\cong}{\longrightarrow} (A \otimes B) \otimes C \]
\begin{center}
$\lambda_A : I \otimes A \stackrel{\cong}{\longrightarrow} A$ \hskip 2cm $\rho_A : A \otimes I \stackrel{\cong}{\longrightarrow} A$
\end{center}

The tensor product thus forms a monoid, with $I$ acting as the unit. These isomorphisms have to further satisfy some conditions called coherence conditions. These can be summarily represented by requiring that the following diagrams commute for all $A$, $B$, $C$ and $D$:
\[
\scalebox{1} 
{
\begin{pspicture}(0,-4.706704)(14.542812,4.7130847)
\usefont{T1}{ptm}{m}{n}
\rput(7.3323436,4.4543896){$(A \otimes B) \otimes (C \otimes D)$}
\usefont{T1}{ptm}{m}{n}
\rput(2.9323437,2.2543893){$A \otimes (B \otimes (C \otimes D))$}
\usefont{T1}{ptm}{m}{n}
\rput(11.732344,2.2543893){$((A \otimes B) \otimes C) \otimes D$}
\usefont{T1}{ptm}{m}{n}
\rput(2.9323437,-0.5456106){$A \otimes ((B \otimes C) \otimes D)$}
\usefont{T1}{ptm}{m}{n}
\rput(11.732344,-0.5456106){$(A \otimes (B \otimes C)) \otimes D$}
\psline[linewidth=0.04cm,arrowsize=0.05291667cm 2.0,arrowlength=1.4,arrowinset=0.4]{->}(2.9809375,1.9443893)(2.9809375,-0.25561062)
\psline[linewidth=0.04cm,arrowsize=0.05291667cm 2.0,arrowlength=1.4,arrowinset=0.4]{<-}(11.780937,1.9443893)(11.780937,-0.25561062)
\psline[linewidth=0.04cm,arrowsize=0.05291667cm 2.0,arrowlength=1.4,arrowinset=0.4]{->}(4.7809377,-0.5556106)(9.980938,-0.5556106)
\usefont{T1}{ptm}{m}{n}
\rput{23.539944}(1.7754987,-1.4517372){\rput(4.3523436,3.5543895){$\alpha_{A, B, C \otimes D}$}}
\usefont{T1}{ptm}{m}{n}
\rput{-23.857168}(-0.60100454,4.424468){\rput(10.152344,3.6543894){$\alpha_{A \otimes B, C, D}$}}
\usefont{T1}{ptm}{m}{n}
\rput(7.3523436,-0.34561062){$\alpha_{A, B \otimes C, D}$}
\usefont{T1}{ptm}{m}{n}
\rput{-270.0}(3.5261083,-1.8567044){\rput(2.6723437,0.85438937){$id_A \otimes \alpha_{B,C,D}$}}
\usefont{T1}{ptm}{m}{n}
\rput{-90.0}(11.256704,12.926107){\rput(12.072344,0.85438937){$\alpha_{A,B,C} \otimes id_D$}}
\psline[linewidth=0.04cm,arrowsize=0.05291667cm 2.0,arrowlength=1.4,arrowinset=0.4]{->}(2.9809375,2.5443895)(6.5809374,4.144389)
\psline[linewidth=0.04cm,arrowsize=0.05291667cm 2.0,arrowlength=1.4,arrowinset=0.4]{<-}(11.780937,2.5443895)(8.180938,4.144389)
\usefont{T1}{ptm}{m}{n}
\rput(7.6023436,-1.7456106){$A \otimes (I \otimes B)$}
\usefont{T1}{ptm}{m}{n}
\rput(12.002344,-1.7456106){$(A \otimes I) \otimes B$}
\usefont{T1}{ptm}{m}{n}
\rput(7.6023436,-3.9456105){$A \otimes B$}
\psline[linewidth=0.04cm,arrowsize=0.05291667cm 2.0,arrowlength=1.4,arrowinset=0.4]{->}(8.780937,-1.7556106)(10.780937,-1.7556106)
\psline[linewidth=0.04cm,arrowsize=0.05291667cm 2.0,arrowlength=1.4,arrowinset=0.4]{->}(7.5809374,-2.0556107)(7.5809374,-3.6556106)
\psline[linewidth=0.04cm,arrowsize=0.05291667cm 2.0,arrowlength=1.4,arrowinset=0.4]{->}(11.980938,-2.0556107)(7.9809375,-3.6556106)
\usefont{T1}{ptm}{m}{n}
\rput(9.742344,-1.4456106){$\alpha_{A, I, B}$}
\usefont{T1}{ptm}{m}{n}
\rput{21.993166}(-0.4415648,-4.058219){\rput(10.202344,-3.1456106){$\rho_A \otimes id_B$}}
\usefont{T1}{ptm}{m}{n}
\rput{-270.0}(4.5361085,-10.066705){\rput(7.282344,-2.7456107){$id_A \otimes \lambda_B$}}
\usefont{T1}{ptm}{m}{n}
\rput(3.6323438,-1.7456106){$\lambda_I = \rho_I : I \otimes I \longrightarrow I$}
\end{pspicture}
}
\]
\end{definition}

\begin{definition}[Symmetric monoidal category]
A \textit{symmetric} monoidal category is a monoidal category with an additional natural isomorphism called symmetry, $\sigma_{A,B} : A \otimes B \stackrel{\cong}{\longrightarrow} B \otimes A$, such that the following diagrams commute:
\[
\scalebox{1} 
{
\begin{pspicture}(0,-3.8677344)(14.002812,3.8677344)
\usefont{T1}{ptm}{m}{n}
\rput(1.8623438,0.3060156){$(A \otimes B) \otimes C$}
\usefont{T1}{ptm}{m}{n}
\rput(12.062344,0.3060156){$(C \otimes A) \otimes B$}
\psline[linewidth=0.04cm,arrowsize=0.05291667cm 2.0,arrowlength=1.4,arrowinset=0.4]{->}(3.1809375,0.29601562)(5.9809375,0.29601562)
\usefont{T1}{ptm}{m}{n}
\rput{-90.57707}(0.05250093,3.424631){\rput(1.7023437,1.7060156){$\alpha_{A,B,C}$}}
\psline[linewidth=0.04cm,arrowsize=0.05291667cm 2.0,arrowlength=1.4,arrowinset=0.4]{->}(1.9809375,2.7960157)(1.9809375,0.59601563)
\psline[linewidth=0.04cm,arrowsize=0.05291667cm 2.0,arrowlength=1.4,arrowinset=0.4]{->}(12.180938,2.7960157)(12.180938,0.59601563)
\usefont{T1}{ptm}{m}{n}
\rput(1.8623438,3.1060157){$A \otimes (B \otimes C)$}
\usefont{T1}{ptm}{m}{n}
\rput(12.062344,3.1060157){$(A \otimes C) \otimes B$}
\usefont{T1}{ptm}{m}{n}
\rput{-90.0}(10.755078,14.127734){\rput(12.422344,1.7060156){$\sigma_{A,C} \otimes id_B$}}
\usefont{T1}{ptm}{m}{n}
\rput(7.262344,3.1060157){$A \otimes (C \otimes B)$}
\usefont{T1}{ptm}{m}{n}
\rput(7.262344,0.3060156){$C \otimes (A \otimes B)$}
\usefont{T1}{ptm}{m}{n}
\rput(9.702344,0.10601562){$\alpha_{C,A,B}$}
\usefont{T1}{ptm}{m}{n}
\rput(9.702344,3.4060156){$\alpha_{A,C,B}$}
\usefont{T1}{ptm}{m}{n}
\rput(4.552344,0.10601562){$\sigma_{A \otimes B, C}$}
\usefont{T1}{ptm}{m}{n}
\rput(4.5223436,3.4060156){$id_A \otimes \sigma_{B,C}$}
\psline[linewidth=0.04cm,arrowsize=0.05291667cm 2.0,arrowlength=1.4,arrowinset=0.4]{->}(3.1809375,3.0960157)(5.9809375,3.0960157)
\psline[linewidth=0.04cm,arrowsize=0.05291667cm 2.0,arrowlength=1.4,arrowinset=0.4]{->}(8.580937,3.0960157)(10.780937,3.0960157)
\psline[linewidth=0.04cm,arrowsize=0.05291667cm 2.0,arrowlength=1.4,arrowinset=0.4]{->}(8.580937,0.29601562)(10.780937,0.29601562)
\usefont{T1}{ptm}{m}{it}
\rput(1.9014063,-1.0939844){$A \otimes B$}
\usefont{T1}{ptm}{m}{n}
\rput(5.4023438,-1.0939844){$B \otimes A$}
\usefont{T1}{ptm}{m}{it}
\rput(5.4014063,-3.6939843){$A \otimes B$}
\usefont{T1}{ptm}{m}{it}
\rput(12.151406,-1.0939844){$I \otimes A$}
\usefont{T1}{ptm}{m}{it}
\rput(9.051406,-1.0939844){$A \otimes I$}
\usefont{T1}{ptm}{m}{it}
\rput(9.061406,-3.6939843){$A$}
\psline[linewidth=0.04cm,arrowsize=0.05291667cm 2.0,arrowlength=1.4,arrowinset=0.4]{->}(9.080937,-1.4039844)(9.080937,-3.4039843)
\psline[linewidth=0.04cm,arrowsize=0.05291667cm 2.0,arrowlength=1.4,arrowinset=0.4]{->}(9.680938,-1.1039844)(11.580937,-1.1039844)
\psline[linewidth=0.04cm,arrowsize=0.05291667cm 2.0,arrowlength=1.4,arrowinset=0.4]{->}(11.980938,-1.4039844)(9.280937,-3.4039843)
\usefont{T1}{ptm}{m}{n}
\rput(10.602344,-0.8939844){$\sigma_{A,I}$}
\usefont{T1}{ptm}{m}{n}
\rput{36.866886}(0.5998461,-7.0270395){\rput(10.822344,-2.5939844){$\lambda_A$}}
\usefont{T1}{ptm}{m}{n}
\rput{90.46118}(6.6191354,-11.193414){\rput(8.842343,-2.2939844){$\rho_A$}}
\psline[linewidth=0.04cm,arrowsize=0.05291667cm 2.0,arrowlength=1.4,arrowinset=0.4]{->}(2.5809374,-1.1039844)(4.7809377,-1.1039844)
\psline[linewidth=0.04cm,arrowsize=0.05291667cm 2.0,arrowlength=1.4,arrowinset=0.4]{->}(5.3809376,-1.4039844)(5.3809376,-3.4039843)
\psline[linewidth=0.04cm,arrowsize=0.05291667cm 2.0,arrowlength=1.4,arrowinset=0.4]{->}(1.9809375,-1.4039844)(5.1809373,-3.4039843)
\usefont{T1}{ptm}{m}{n}
\rput(3.6623437,-0.8939844){$\sigma_{A,B}$}
\usefont{T1}{ptm}{m}{n}
\rput{-89.14804}(7.911822,3.2030058){\rput(5.5623436,-2.3939843){$\sigma_{B,A}$}}
\usefont{T1}{ptm}{m}{n}
\rput{-32.523285}(1.9308116,1.3919636){\rput(3.3323438,-2.5939844){$id_{A,B}$}}
\end{pspicture}
}
\]
\end{definition}

\begin{definition}[Symmetric monoidal closed category]
A symmetric monoidal \textit{closed} category is a symmetric monoidal category where, for any two objects $A$ and $B$, there is an \textit{exponential object}\footnote[1]{Note that these are different from the exponential connectives of linear logic.} $A \multimap B$, together with an evaluation morphism $ev_{A,B} : (A \multimap B) \otimes A \longrightarrow B$. In addition to that, on any arrow of the form $f : C \otimes A \longrightarrow B$, a process called \textit{Currying} yields a unique morphism $\Lambda(f) : C \longrightarrow (A \multimap B)$ such that:
\[ ev_{A,B} \circ \left(\Lambda(f) \otimes id_A\right) = f \]
\[
\scalebox{1} 
{
\begin{pspicture}(0,-1.9947656)(5.5028124,1.9947656)
\usefont{T1}{ptm}{m}{n}
\rput(2.0023437,1.5330468){$C \otimes A$}
\usefont{T1}{ptm}{m}{n}
\rput(2.0623438,-1.4669532){$(A \multimap B) \otimes A$}
\usefont{T1}{ptm}{m}{n}
\rput(5.162344,-1.4669532){$B$}
\psline[linewidth=0.04cm,linestyle=dashed,dash=0.16cm 0.16cm,arrowsize=0.05291667cm 2.0,arrowlength=1.4,arrowinset=0.4]{<-}(1.9809375,-1.2769531)(1.9809375,1.3230469)
\psline[linewidth=0.04cm,arrowsize=0.05291667cm 2.0,arrowlength=1.4,arrowinset=0.4]{->}(3.3809376,-1.4769531)(4.9809375,-1.4769531)
\psline[linewidth=0.04cm,arrowsize=0.05291667cm 2.0,arrowlength=1.4,arrowinset=0.4]{->}(2.1809375,1.3230469)(4.9809375,-1.2769531)
\usefont{T1}{ptm}{m}{n}
\rput(3.6023438,0.33304688){$f$}
\usefont{T1}{ptm}{m}{n}
\rput{-270.0}(1.7547656,-1.7280469){\rput(1.7223438,0.033046875){$\Lambda(f) \otimes id_A$}}
\usefont{T1}{ptm}{m}{n}
\rput(4.1723437,-1.7669531){$ev_{A,B}$}
\end{pspicture}
}
\]
\end{definition}

\begin{definition}[Compact closed category]
A \textit{compact closed category} is a symmetric monoidal category where, for every object $A$, there is a dual object $A^*$ along with two morphisms $\eta_A : I \longrightarrow A^* \otimes A$ and $\varepsilon_A : A \otimes A^* \longrightarrow I$ such that:
\begin{center}
$(\varepsilon_A \otimes id_A) \circ (id_A \otimes \eta_A) = id_A$ \hskip 0.5cm and \hskip 0.5cm $(id_{A^*} \otimes \eta_A) \circ (\varepsilon_A \otimes id_{A^*}) = id_{A^*}$
\end{center}
The dual object is unique up to canonical isomorphism.
\end{definition}

\begin{corollary}[Closure of Compact Closed Categories]
Every compact closed category is closed.
\end{corollary}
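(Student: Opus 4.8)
The plan is to exhibit the closed structure explicitly. Given objects $A$ and $B$ in a compact closed category, I would take the exponential object to be $A \multimap B := A^* \otimes B$, where $A^*$ is the dual supplied by compact closure. With this choice everything reduces to unwinding the two zig-zag equations, $(\varepsilon_A \otimes id_A) \circ (id_A \otimes \eta_A) = id_A$ and its companion for $A^*$, modulo the coherence isomorphisms $\alpha$, $\lambda$, $\rho$, $\sigma$ of the symmetric monoidal structure.

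First I would construct the evaluation map $ev_{A,B} : (A^* \otimes B) \otimes A \longrightarrow B$: reassociate and apply the symmetry $\sigma$ to bring an $A \otimes A^*$ pair adjacent, apply $\varepsilon_A : A \otimes A^* \longrightarrow I$ to delete it, and finish with the appropriate unitor. Next, given an arbitrary $f : C \otimes A \longrightarrow B$, I would define its Curry $\Lambda(f) : C \longrightarrow A^* \otimes B$ as the composite that begins with $\rho_C^{-1}$, inserts the unit via $id_C \otimes \eta_A : C \otimes I \longrightarrow C \otimes (A^* \otimes A)$, reassociates and uses $\sigma$ to reach $A^* \otimes (C \otimes A)$, and then applies $id_{A^*} \otimes f$. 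Both definitions are essentially forced, up to coherence, by the requirement that the types match.

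The computational heart is the triangle $ev_{A,B} \circ (\Lambda(f) \otimes id_A) = f$. I would expand both composites, push the coherence isomorphisms through using their naturality together with Mac Lane's coherence theorem (so that bracketings and unit insertions become invisible), and then recognise inside the resulting string the subterm $(\varepsilon_A \otimes id_A) \circ (id_A \otimes \eta_A)$ acting on the $A$-wire; replacing it by $id_A$ collapses the whole composite to $f$. For the uniqueness of $\Lambda(f)$, I would show that Currying and ``evaluation-precomposition'' are mutually inverse: for any $h : C \longrightarrow A^* \otimes B$ one has $\Lambda\big(ev_{A,B} \circ (h \otimes id_A)\big) = h$ by the same kind of manipulation, this time invoking the \emph{other} zig-zag identity (the one for $A^*$); hence any $g$ satisfying $ev_{A,B} \circ (g \otimes id_A) = f$ must coincide with $\Lambda(f)$.

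I expect the only real obstacle to be bookkeeping: tracking the instances of $\alpha$, $\lambda$, $\rho$, $\sigma$ carefully enough that the zig-zag equations can be applied at exactly the right place. Appealing to coherence — or, equivalently, reasoning in the graphical calculus for symmetric monoidal categories, where the zig-zags are literally the act of pulling a bent wire straight — dissolves this obstacle and makes every step routine. There is no genuinely hard idea beyond the defining choice $A \multimap B = A^* \otimes B$.
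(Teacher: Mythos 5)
Your proposal follows exactly the same route as the paper: set $A \multimap B := A^* \otimes B$ and build $ev_{A,B}$ from $\varepsilon_A$, the symmetry and the coherence isomorphisms. In fact you go further than the paper's proof, which only writes down the evaluation map and omits the Currying map, the triangle identity and uniqueness — the details you sketch (invoking the two zig-zag equations and coherence) are correct and fill that gap.
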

\begin{proof}
All of the exponential structure can be recreated by setting $A \multimap B = A^* \otimes B$. The evaluation function can be simulated by:
\[ ev_{A,B} = \rho_A \circ (id_B \otimes \varepsilon_A) \circ (id_B \otimes \sigma) \circ \alpha^{-1} \circ (\sigma \otimes id_A) \]
\end{proof}

\begin{definition}[Dagger compact category]
A \textit{$\dag$-compact category} is a compact closed category that is equipped with an involutive, contravariant, identity-on-objects endofunctor. That functor, called dagger, reverses all arrows, leaves objects unchanged, and preserves the tensor structure. For any $f : A \rightarrow B$, it will be the case that $f^\dag : B \rightarrow A$ and $f^{\dag\dag} = f$. Moreover, for any object $A$ in our category, it must be the case that $\sigma_{A, A^*} \circ \varepsilon_A^\dag = \eta_A$
\end{definition}

\chapter{Frobenius algebras}
\label{Chapter:Frobenius algebras}
This chapter presents a quick overview of \textit{Frobenius algebras}, how they generalise over a category's monoidal structure, as well as some related algebraic properties that will be used extensively in the parts of this dissertation that deal with classical structures. We will provide diagrammatic representations to accompany many of the textual definitions for the required conditions and properties. It should be noted that, as a notational convention, all the diagrams should be read from bottom to top.

Frobenius algebras were studied since the 1930's but have become exceedingly popular, in the past three decades, in the mathematical representation of quantum physics \cite{JS91, KL01, Koc03}. More recent developments in quantum information theory \cite{CP07, CD08, CPP10, CD11} have used these algebras to axiomatise the notion of a classical basis in quantum computation.

We will start by presenting some of the most common definitions of a Frobenius algebra, adapted from \cite{Str04}:
\begin{definition}[Frobenius algebra]
A \text{Frobenius algebra} $A$ is a finite dimensional, unital and associative algebra over a field $k$, that is equipped with a nondegenerate bilinear pairing $\sigma : A \otimes A \rightarrow k$. The bilinear form must be such that the following condition holds $\sigma((ab) \otimes c) = \sigma(a \otimes (bc))$, for $a,b,c \in A$.

Alternatively, we could say that the algebra is Frobenius if it is equipped with a linear function $\varepsilon : A \rightarrow k$, such that:
\begin{center}
    $\varepsilon(ab) = 0$ for all $a \in A$ implies $b=0$.
\end{center}
\end{definition}

In order to generalise this to the categorical setting, we will look at monoidal categories, and will define what it means for an object in that category to have a Frobenius structure associated with it.
\begin{definition}[Frobenius structure]
We say that an object $A$ in a monoidal category $\mathcal{C}$ has a \textit{Frobenius structure} attached to it when it is equipped with four morphisms, $\mu : A \otimes A \rightarrow A$, $\eta : I \rightarrow A$, $\delta : A \rightarrow A \otimes A$ and $\varepsilon : A \rightarrow I$, such that the following conditions hold:
\begin{itemize}
  \item $(A,\mu,\eta)$ forms a monoid,
  \item $(A,\delta,\varepsilon)$ forms a comonoid,
  \item and $(id_A \otimes \mu) \circ (\delta \otimes id_A) = \delta \circ \mu = (\mu \otimes id_A) \circ (id_A \otimes \delta)$
\end{itemize}
\end{definition}

We denote the Frobenius structure as $(A,\mu,\eta,\delta,\varepsilon)$. In the case of dagger monoidal categories, the dagger functor can give us $\mu = \delta^\dag$ and $\eta = \varepsilon^\dag$, simplifying the Frobenius structure into $(A,\delta,\varepsilon)$. The conditions required by our definition are easier to visualise in diagrammatic form. In order to achieve this, we will represent our fundamental morphisms, $\delta$ and $\varepsilon$ as follows:
\begin{center}
$\delta : A \longrightarrow A \otimes A :: a_i \mapsto a_i \otimes a_i$
\hskip 2cm
$\epsilon : A \longrightarrow I :: a_i \mapsto 1$
\end{center}
\[
\scalebox{1} 
{
\begin{pspicture}(0,-0.72)(6.52,0.72)
\psline[linewidth=0.04cm](6.22,-0.1)(6.22,-0.7)
\pscircle[linewidth=0.04,dimen=outer,fillstyle=solid,fillcolor=green](6.22,0.0){0.3}
\psline[linewidth=0.04cm](0.62,-0.1)(0.62,-0.7)
\psarc[linewidth=0.04](0.62,0.7){0.6}{-180.0}{0.0}
\pscircle[linewidth=0.04,dimen=outer,fillstyle=solid,fillcolor=green](0.62,0.0){0.3}
\end{pspicture}
}
\]

\noindent This allows us to represent the monoidal and comonoidal conditions diagrammatically as such:
\[
\scalebox{1} 
{
\begin{pspicture}(1,-3.1128125)(14.34,3.1528125)
\usefont{T1}{ptm}{m}{n}
\rput(2.3045313,1.1121875){\huge =}
\psline[linewidth=0.04cm](3.32,-3.0928125)(3.32,-0.7928125)
\psline[linewidth=0.04cm](0.62,1.2071875)(0.62,0.6071875)
\psarc[linewidth=0.04](0.62,-0.1928125){0.6}{0.0}{180.0}
\pscircle[linewidth=0.04,dimen=outer,fillstyle=solid,fillcolor=green](0.62,0.5071875){0.3}
\psline[linewidth=0.04cm](1.22,2.5071876)(1.22,1.9071875)
\psarc[linewidth=0.04](1.22,1.1071875){0.6}{0.0}{180.0}
\pscircle[linewidth=0.04,dimen=outer,fillstyle=solid,fillcolor=green](1.22,1.8071876){0.3}
\psline[linewidth=0.04cm](1.82,1.2071875)(1.82,-0.1928125)
\psline[linewidth=0.04cm](3.32,2.5071876)(3.32,-0.1928125)
\psarc[linewidth=0.04](3.32,0.4071875){0.6}{0.0}{180.0}
\pscircle[linewidth=0.04,dimen=outer,fillstyle=solid,fillcolor=green](3.32,1.1071875){0.3}
\psline[linewidth=0.04cm](2.72,0.5071875)(2.72,-0.1928125)
\psline[linewidth=0.04cm](3.92,0.5071875)(3.92,-0.1928125)
\psline[linewidth=0.04cm](6.02,1.2071875)(6.02,0.6071875)
\psarc[linewidth=0.04](6.02,-0.1928125){0.6}{0.0}{180.0}
\pscircle[linewidth=0.04,dimen=outer,fillstyle=solid,fillcolor=green](6.02,0.5071875){0.3}
\psline[linewidth=0.04cm](5.42,2.5071876)(5.42,1.9071875)
\psarc[linewidth=0.04](5.42,1.1071875){0.6}{0.0}{180.0}
\pscircle[linewidth=0.04,dimen=outer,fillstyle=solid,fillcolor=green](5.42,1.8071876){0.3}
\psline[linewidth=0.04cm](4.82,1.2071875)(4.82,-0.1928125)
\usefont{T1}{ptm}{m}{n}
\rput(4.4045315,1.1121875){\huge =}
\psline[linewidth=0.04cm](1.72,-0.7928125)(1.72,-1.3928125)
\psarc[linewidth=0.04](1.72,-2.1928124){0.6}{0.0}{180.0}
\pscircle[linewidth=0.04,dimen=outer,fillstyle=solid,fillcolor=green](1.72,-1.4928125){0.3}
\psline[linewidth=0.04cm](4.92,-0.7928125)(4.92,-1.3928125)
\psarc[linewidth=0.04](4.92,-2.1928124){0.6}{0.0}{180.0}
\pscircle[linewidth=0.04,dimen=outer,fillstyle=solid,fillcolor=green](4.92,-1.4928125){0.3}
\psline[linewidth=0.04cm](5.52,-2.0928125)(5.52,-2.6928124)
\pscircle[linewidth=0.04,dimen=outer,fillstyle=solid,fillcolor=green](5.52,-2.7928126){0.3}
\psline[linewidth=0.04cm](4.32,-3.0928125)(4.32,-2.0928125)
\psline[linewidth=0.04cm](1.12,-2.0928125)(1.12,-2.6928124)
\pscircle[linewidth=0.04,dimen=outer,fillstyle=solid,fillcolor=green](1.12,-2.7928126){0.3}
\psline[linewidth=0.04cm](2.32,-3.0928125)(2.32,-2.0928125)
\usefont{T1}{ptm}{m}{n}
\rput(2.8045313,-2.0878124){\huge =}
\usefont{T1}{ptm}{m}{n}
\rput(3.8045313,-2.0878124){\huge =}
\usefont{T1}{ptm}{m}{n}
\rput(10.004531,1.2121875){\huge =}
\psline[linewidth=0.04cm](8.32,1.7071875)(8.32,1.1071875)
\psarc[linewidth=0.04](8.32,2.5071876){0.6}{180.0}{0.0}
\pscircle[linewidth=0.04,dimen=outer,fillstyle=solid,fillcolor=green](8.32,1.8071876){0.3}
\psline[linewidth=0.04cm](9.52,2.5071876)(9.52,1.1071875)
\psline[linewidth=0.04cm](8.92,0.4071875)(8.92,-0.1928125)
\psarc[linewidth=0.04](8.92,1.2071875){0.6}{-180.0}{0.0}
\pscircle[linewidth=0.04,dimen=outer,fillstyle=solid,fillcolor=green](8.92,0.5071875){0.3}
\psline[linewidth=0.04cm](10.42,2.5071876)(10.42,1.8071876)
\psline[linewidth=0.04cm](11.62,2.5071876)(11.62,1.8071876)
\psline[linewidth=0.04cm](12.52,2.5071876)(12.52,1.1071875)
\usefont{T1}{ptm}{m}{n}
\rput(12.104531,1.2121875){\huge =}
\psline[linewidth=0.04cm](11.02,2.5071876)(11.02,-0.1928125)
\psarc[linewidth=0.04](11.02,1.9071875){0.6}{-180.0}{0.0}
\pscircle[linewidth=0.04,dimen=outer,fillstyle=solid,fillcolor=green](11.02,1.2071875){0.3}
\psline[linewidth=0.04cm](13.72,1.7071875)(13.72,1.1071875)
\psarc[linewidth=0.04](13.72,2.5071876){0.6}{180.0}{0.0}
\pscircle[linewidth=0.04,dimen=outer,fillstyle=solid,fillcolor=green](13.72,1.8071876){0.3}
\psline[linewidth=0.04cm](13.12,0.4071875)(13.12,-0.1928125)
\psarc[linewidth=0.04](13.12,1.2071875){0.6}{-180.0}{0.0}
\pscircle[linewidth=0.04,dimen=outer,fillstyle=solid,fillcolor=green](13.12,0.5071875){0.3}
\psline[linewidth=0.04cm](11.02,-3.0928125)(11.02,-0.7928125)
\psline[linewidth=0.04cm](13.22,-1.1928124)(13.22,-1.7928125)
\psline[linewidth=0.04cm](12.02,-1.7928125)(12.02,-0.7928125)
\psline[linewidth=0.04cm](8.82,-1.1928124)(8.82,-1.7928125)
\psline[linewidth=0.04cm](9.42,-2.4928124)(9.42,-3.0928125)
\psarc[linewidth=0.04](9.42,-1.6928124){0.6}{-180.0}{0.0}
\pscircle[linewidth=0.04,dimen=outer,fillstyle=solid,fillcolor=green](9.42,-2.3928125){0.3}
\psline[linewidth=0.04cm](10.02,-1.7928125)(10.02,-0.7928125)
\pscircle[linewidth=0.04,dimen=outer,fillstyle=solid,fillcolor=green](8.82,-1.0928125){0.3}
\usefont{T1}{ptm}{m}{n}
\rput(10.504531,-1.8878125){\huge =}
\usefont{T1}{ptm}{m}{n}
\rput(11.504531,-1.8878125){\huge =}
\psline[linewidth=0.04cm](12.62,-2.4928124)(12.62,-3.0928125)
\psarc[linewidth=0.04](12.62,-1.6928124){0.6}{-180.0}{0.0}
\pscircle[linewidth=0.04,dimen=outer,fillstyle=solid,fillcolor=green](12.62,-2.3928125){0.3}
\pscircle[linewidth=0.04,dimen=outer,fillstyle=solid,fillcolor=green](13.22,-1.0928125){0.3}
\usefont{T1}{ptm}{m}{n}
\rput(3.2442188,2.9421875){\large Monoid conditions}
\usefont{T1}{ptm}{m}{n}
\rput(10.8698435,2.9421875){\large Comonoid conditions}
\end{pspicture}
}
\]

\noindent The last condition in the definition, also known as the \textit{Frobenius condition}, thus becomes particularly easy to visualise:
\[
\scalebox{1} 
{
\begin{pspicture}(4.5,0)(11.225312,3.07)
\psline[linewidth=0.04cm](3.6053126,0.95)(3.6053126,0.35)
\psarc[linewidth=0.04](3.6053126,1.75){0.6}{-180.0}{0.0}
\pscircle[linewidth=0.04,dimen=outer,fillstyle=solid,fillcolor=green](3.6053126,1.05){0.3}
\psline[linewidth=0.04cm](4.8053126,3.05)(4.8053126,2.45)
\psarc[linewidth=0.04](4.8053126,1.65){0.6}{-0.0}{180.0}
\pscircle[linewidth=0.04,dimen=outer,fillstyle=solid,fillcolor=green](4.8053126,2.35){0.3}
\psline[linewidth=0.04cm](3.0053124,1.63)(3.0053124,3.03)
\psline[linewidth=0.04cm](5.4053125,1.73)(5.4053125,0.33)
\psline[linewidth=0.04cm](10.605312,0.95)(10.605312,0.35)
\psarc[linewidth=0.04](10.605312,1.75){0.6}{-180.0}{0.0}
\pscircle[linewidth=0.04,dimen=outer,fillstyle=solid,fillcolor=green](10.605312,1.05){0.3}
\psline[linewidth=0.04cm](9.405313,3.05)(9.405313,2.45)
\psarc[linewidth=0.04](9.405313,1.65){0.6}{-0.0}{180.0}
\pscircle[linewidth=0.04,dimen=outer,fillstyle=solid,fillcolor=green](9.405313,2.35){0.3}
\psline[linewidth=0.04cm](8.805312,1.73)(8.805312,0.33)
\psline[linewidth=0.04cm](11.205313,1.63)(11.205313,3.03)
\psline[linewidth=0.04cm](7.1053123,2.25)(7.1053123,1.65)
\psarc[linewidth=0.04](7.1053123,3.05){0.6}{-180.0}{0.0}
\pscircle[linewidth=0.04,dimen=outer,fillstyle=solid,fillcolor=green](7.1053123,2.35){0.3}
\psline[linewidth=0.04cm](7.1053123,1.75)(7.1053123,1.15)
\psarc[linewidth=0.04](7.1053123,0.35){0.6}{-0.0}{180.0}
\pscircle[linewidth=0.04,dimen=outer,fillstyle=solid,fillcolor=green](7.1053123,1.05){0.3}
\usefont{T1}{ptm}{m}{n}
\rput(5.989844,1.535){\huge =}
\usefont{T1}{ptm}{m}{n}
\rput(8.189844,1.535){\huge =}
\end{pspicture}
}
\]

\noindent We will now define two important properties, \textit{symmetry} and \textit{isometry}, that will be used in later parts of our exposition. We will start with the definition of symmetry:
\begin{definition}[Symmetric algebra]
Consider the symmetry isomorphism $\sigma_{A,B} : A \otimes B \rightarrow B \otimes A$ that is part of the definition of a symmetric monoidal category. A Frobenius algebra over an object of such a category is \textit{symmetric} if its underlying monoid and comonoid are commutative and cocommutative, respectively. This means that $\delta$ must be $\delta = \sigma_{A,A} \circ \delta$ and $\mu$ must be $\mu = \mu \circ \sigma_{A,A}$.
\[
\scalebox{1} 
{
\begin{pspicture}(0,-1.32)(7.54,1.32)
\psline[linewidth=0.04cm](2.62,1.3)(2.62,0.7)
\psarc[linewidth=0.04](2.62,-0.1){0.6}{0.0}{180.0}
\pscircle[linewidth=0.04,dimen=outer,fillstyle=solid,fillcolor=green](2.62,0.6){0.3}
\psline[linewidth=0.04](2.02,0.0)(2.02,-0.4)(3.22,-1.0)(3.22,-1.3)
\psline[linewidth=0.04](3.22,0.0)(3.22,-0.4)(2.02,-1.0)(2.02,-1.3)
\psline[linewidth=0.04cm](0.62,1.3)(0.62,0.7)
\psarc[linewidth=0.04](0.62,-0.1){0.6}{0.0}{180.0}
\pscircle[linewidth=0.04,dimen=outer,fillstyle=solid,fillcolor=green](0.62,0.6){0.3}
\psline[linewidth=0.04cm](0.02,0.0)(0.02,-1.3)
\psline[linewidth=0.04cm](1.22,0.0)(1.22,-1.3)
\usefont{T1}{ptm}{m}{n}
\rput(1.6045313,0.005){\huge =}
\psline[linewidth=0.04cm](4.32,1.3)(4.32,0.0)
\psline[linewidth=0.04cm](5.52,1.3)(5.52,0.0)
\usefont{T1}{ptm}{m}{n}
\rput(5.9045315,-0.095){\huge =}
\psline[linewidth=0.04cm](4.92,-0.7)(4.92,-1.3)
\psarc[linewidth=0.04](4.92,0.1){0.6}{-180.0}{0.0}
\pscircle[linewidth=0.04,dimen=outer,fillstyle=solid,fillcolor=green](4.92,-0.6){0.3}
\psline[linewidth=0.04cm](6.92,-0.7)(6.92,-1.3)
\psarc[linewidth=0.04](6.92,0.1){0.6}{-180.0}{0.0}
\pscircle[linewidth=0.04,dimen=outer,fillstyle=solid,fillcolor=green](6.92,-0.6){0.3}
\psline[linewidth=0.04](6.32,1.3)(6.32,0.9)(7.52,0.3)(7.52,0.0)
\psline[linewidth=0.04](7.52,1.3)(7.52,0.9)(6.32,0.3)(6.32,0.0)
\end{pspicture}
}
\]
\end{definition}

\noindent The last property that we will define in this chapter is isometry:
\begin{definition}[Isometric or special algebra]
A Frobenius algebra $(A,\mu,\eta,\delta,\varepsilon)$ over an object $A$ is said to be \textit{isometric} or \textit{special} if $\mu \circ \delta = id_A$.
\[
\scalebox{1} 
{
\begin{pspicture}(0,-1.39)(2.44,1.37)
\psline[linewidth=0.04cm](0.62,-0.75)(0.62,-1.35)
\psarc[linewidth=0.04](0.62,0.05){0.6}{-180.0}{0.0}
\pscircle[linewidth=0.04,dimen=outer,fillstyle=solid,fillcolor=green](0.62,-0.65){0.3}
\psline[linewidth=0.04cm](0.62,1.35)(0.62,0.75)
\psarc[linewidth=0.04](0.62,-0.05){0.6}{0.0}{180.0}
\pscircle[linewidth=0.04,dimen=outer,fillstyle=solid,fillcolor=green](0.62,0.65){0.3}
\psline[linewidth=0.04cm](2.42,-1.37)(2.42,1.33)
\usefont{T1}{ptm}{m}{n}
\rput(1.8045312,-0.065){\huge =}
\end{pspicture}
}
\]
\end{definition}

\chapter{Categorical model}
\label{Chapter:Categorical model}
This section presents the categories used to model quantum computation and classical operations in the rest of the dissertation. It begins by explaining how the Hilbert space formalism can be recast into the language of $\dag$-compact categories \cite{AC04}. One of the biggest practical advantages of monoidal categories is that, on many occasions, they ``formally justify their absence'' \cite{Coe06}, meaning that they can be represented using a graphical calculus that greatly simplifies categorical reasoning. This section also demonstrates how every element of the initial quantum structure can be represented graphically in what resembles a two dimensional Dirac notation. Following our notational convention for the direction of the compositional flow of time, all the diagrams should be read from bottom to top. The next part of this section deals with classical operations, which are modelled in terms of internal spider monoids, as well as with the computational interplay inherent in introducing complementarity.

\section{Categorical quantum computation}
\label{Section:Categorical quantum computation}
The category we will be using to model quantum computation is called \textit{FDHilb} and is the category of finite dimensional complex Hilbert spaces. Its objects are finite dimensional Hilbert spaces and its arrows are linear maps. Monoidal multiplication is represented by the Kronecker tensor product, while the monoidal unit object corresponds to the set of complex numbers $I = \mathbb{C}$. Associativity of the tensor and tensor identities are up to equality, so $\alpha_{A,B,C}$, $\lambda_A$ and $\rho_A$ are reduced to identity arrows.

The adjoint is modelled using the dagger functor. In terms of the picture calculus, the dagger denotes flipping a picture upside down, while the arrows continue pointing the same way they were before (i.e. upwards).

For any three arrows $f:A \rightarrow B$, $g:C \rightarrow D$ and $h:E \rightarrow F$, associativity allows us to write their tensor product as:
\[
\scalebox{1} 
{
\begin{pspicture}(0,-1.0376563)(3.2515626,1.0376563)
\definecolor{color868b}{rgb}{0.8,1.0,1.0}
\definecolor{color886b}{rgb}{1.0,1.0,0.8}
\definecolor{color898b}{rgb}{0.8,0.8,1.0}
\usefont{T1}{ptm}{m}{n}
\rput(0.7071875,0.71765625){B}
\usefont{T1}{ptm}{m}{n}
\rput(0.7265625,-0.88234377){A}
\psline[linewidth=0.06cm,arrowsize=0.05291667cm 2.0,arrowlength=1.4,arrowinset=0.4]{->}(0.4,-0.9923437)(0.4,1.0076562)
\psframe[linewidth=0.04,dimen=outer,fillstyle=solid,fillcolor=color868b](0.76,0.40765625)(0.0,-0.35234374)
\usefont{T1}{ptm}{m}{n}
\rput(0.36734375,0.00265625){\large $f$}
\psline[linewidth=0.06cm,arrowsize=0.05291667cm 2.0,arrowlength=1.4,arrowinset=0.4]{->}(1.6,-0.9923437)(1.6,1.0076562)
\usefont{T1}{ptm}{m}{n}
\rput(1.9071875,-0.88234377){C}
\usefont{T1}{ptm}{m}{n}
\rput(1.92,0.71765625){D}
\psframe[linewidth=0.04,dimen=outer,fillstyle=solid,fillcolor=color886b](2.0,0.40765625)(1.2,-0.39234376)
\usefont{T1}{ptm}{m}{n}
\rput(1.5973438,0.02265625){\large $g$}
\psline[linewidth=0.06cm,arrowsize=0.05291667cm 2.0,arrowlength=1.4,arrowinset=0.4]{->}(2.8,-0.9923437)(2.8,1.0076562)
\usefont{T1}{ptm}{m}{n}
\rput(3.1020312,-0.88234377){E}
\usefont{T1}{ptm}{m}{n}
\rput(3.0896876,0.71765625){F}
\psframe[linewidth=0.04,dimen=outer,fillstyle=solid,fillcolor=color898b](3.2,0.40765625)(2.4,-0.39234376)
\usefont{T1}{ptm}{m}{n}
\rput(2.7773438,0.04265625){\large $h$}
\end{pspicture}
}
\]

By bifunctoriality of the tensor, we know that it preserves composition. For any arrows of the form $f:A \rightarrow B$, $g:B \rightarrow C$, $h:D \rightarrow E$ and $k:E \rightarrow F$, once we add composition to our diagrams, the following property should become more evident:
\[ (g \circ f) \otimes (k \circ h) = (g \otimes k) \circ (f \otimes h) \]

\[
\scalebox{1} 
{
\begin{pspicture}(0,-1.73)(5.5234375,1.73)
\definecolor{color1395b}{rgb}{1.0,1.0,0.8}
\definecolor{color1398b}{rgb}{0.8,1.0,1.0}
\usefont{T1}{ptm}{m}{n}
\rput(3.56375,-1.39){A}
\usefont{T1}{ptm}{m}{n}
\rput(3.544375,0.01){B}
\usefont{T1}{ptm}{m}{n}
\rput(3.544375,1.41){C}
\psline[linewidth=0.06cm,arrowsize=0.05291667cm 2.0,arrowlength=1.4,arrowinset=0.4]{->}(3.8371875,-0.1)(3.8371875,1.7)
\psline[linewidth=0.06cm,arrowsize=0.05291667cm 2.0,arrowlength=1.4,arrowinset=0.4]{->}(3.8371875,-1.7)(3.8371875,0.1)
\psframe[linewidth=0.04,dimen=outer,fillstyle=solid,fillcolor=color1395b](4.2371874,-0.3)(3.4371874,-1.1)
\usefont{T1}{ptm}{m}{n}
\rput(3.8045313,-0.705){\large $f$}
\psframe[linewidth=0.04,dimen=outer,fillstyle=solid,fillcolor=color1398b](4.2371874,1.1)(3.4371874,0.3)
\usefont{T1}{ptm}{m}{n}
\rput(3.8345313,0.715){\large $g$}
\usefont{T1}{ptm}{m}{n}
\rput(5.3571873,-1.39){D}
\usefont{T1}{ptm}{m}{n}
\rput(5.326875,1.41){F}
\usefont{T1}{ptm}{m}{n}
\rput(5.3392186,0.01){E}
\psline[linewidth=0.06cm,arrowsize=0.05291667cm 2.0,arrowlength=1.4,arrowinset=0.4]{->}(5.0371876,-0.1)(5.0371876,1.7)
\psline[linewidth=0.06cm,arrowsize=0.05291667cm 2.0,arrowlength=1.4,arrowinset=0.4]{->}(5.0371876,-1.7)(5.0371876,0.1)
\psframe[linewidth=0.04,dimen=outer,fillstyle=solid,fillcolor=color1395b](5.4371877,-0.3)(4.6371875,-1.1)
\usefont{T1}{ptm}{m}{n}
\rput(5.014531,-0.665){\large $h$}
\psframe[linewidth=0.04,dimen=outer,fillstyle=solid,fillcolor=color1398b](5.4371877,1.1)(4.6371875,0.3)
\usefont{T1}{ptm}{m}{n}
\rput(5.0245314,0.715){\large $k$}
\usefont{T1}{ptm}{m}{n}
\rput(2.6690626,-0.04){\Huge =}
\usefont{T1}{ptm}{m}{n}
\rput(0.16375,-1.39){A}
\usefont{T1}{ptm}{m}{n}
\rput(0.144375,0.01){B}
\usefont{T1}{ptm}{m}{n}
\rput(0.144375,1.41){C}
\psline[linewidth=0.06cm,arrowsize=0.05291667cm 2.0,arrowlength=1.4,arrowinset=0.4]{->}(0.4371875,-0.1)(0.4371875,1.7)
\psline[linewidth=0.06cm,arrowsize=0.05291667cm 2.0,arrowlength=1.4,arrowinset=0.4]{->}(0.4371875,-1.7)(0.4371875,0.1)
\psframe[linewidth=0.04,dimen=outer,fillstyle=solid,fillcolor=color1398b](0.7971875,-0.3)(0.0371875,-1.06)
\usefont{T1}{ptm}{m}{n}
\rput(0.40453124,-0.705){\large $f$}
\psframe[linewidth=0.04,dimen=outer,fillstyle=solid,fillcolor=color1398b](0.8371875,1.1)(0.0371875,0.3)
\usefont{T1}{ptm}{m}{n}
\rput(0.43453124,0.715){\large $g$}
\usefont{T1}{ptm}{m}{n}
\rput(1.9571875,-1.39){D}
\usefont{T1}{ptm}{m}{n}
\rput(1.926875,1.41){F}
\usefont{T1}{ptm}{m}{n}
\rput(1.9392188,0.01){E}
\psline[linewidth=0.06cm,arrowsize=0.05291667cm 2.0,arrowlength=1.4,arrowinset=0.4]{->}(1.6371875,-0.1)(1.6371875,1.7)
\psline[linewidth=0.06cm,arrowsize=0.05291667cm 2.0,arrowlength=1.4,arrowinset=0.4]{->}(1.6371875,-1.7)(1.6371875,0.1)
\psframe[linewidth=0.04,dimen=outer,fillstyle=solid,fillcolor=color1395b](2.0371876,-0.3)(1.2371875,-1.1)
\usefont{T1}{ptm}{m}{n}
\rput(1.6145313,-0.665){\large $h$}
\psframe[linewidth=0.04,dimen=outer,fillstyle=solid,fillcolor=color1395b](2.0371876,1.1)(1.2371875,0.3)
\usefont{T1}{ptm}{m}{n}
\rput(1.6245313,0.715){\large $k$}
\end{pspicture}
}
\]

Moreover, it should also be evident that the tensor preserves identities:
\[ id_{A \otimes B} = id_A \otimes id_B \]

\[
\scalebox{1} 
{
\begin{pspicture}(0,-0.83)(1.4575,0.83)
\psline[linewidth=0.06cm,arrowsize=0.05291667cm 2.0,arrowlength=1.4,arrowinset=0.4]{->}(0.3971875,-0.8)(0.3971875,0.8)
\usefont{T1}{ptm}{m}{n}
\rput(0.12375,-0.09){A}
\psline[linewidth=0.06cm,arrowsize=0.05291667cm 2.0,arrowlength=1.4,arrowinset=0.4]{->}(0.9971875,-0.8)(0.9971875,0.8)
\usefont{T1}{ptm}{m}{n}
\rput(1.304375,-0.09){B}
\end{pspicture}
}
\]

The proper graphical representation for the unit $I$ is "no line", while arrows containing $I$ as their domain or codomain are represented as follows:

\begin{center}
$\psi : I \rightarrow A$ \hskip 1.2cm $\phi^\dag : A \rightarrow I$ \hskip 1.2cm $\phi^\dag \circ \psi : I \rightarrow I$
\end{center}
\[
\scalebox{1} 
{
\begin{pspicture}(0,-0.945)(7.3098593,0.915)
\usefont{T1}{ptm}{m}{n}
\rput(0.8145469,0.495){A}
\psline[linewidth=0.06cm,arrowsize=0.05291667cm 2.0,arrowlength=1.4,arrowinset=0.4]{->}(0.4879844,-0.315)(0.4879844,0.885)
\rput{-180.0}(0.9759688,-1.03){\pstriangle[linewidth=0.04,dimen=outer,fillstyle=solid,fillcolor=green](0.4879844,-0.915)(1.0,0.8)}
\usefont{T1}{ptm}{m}{n}
\rput(0.47939062,-0.405){$\psi$}
\usefont{T1}{ptm}{m}{n}
\rput(3.8145468,-0.505){A}
\psline[linewidth=0.06cm,arrowsize=0.05291667cm 2.0,arrowlength=1.4,arrowinset=0.4]{>-}(3.4879844,-0.915)(3.4879844,0.485)
\pstriangle[linewidth=0.04,dimen=outer,fillstyle=solid,fillcolor=red](3.4879844,0.085)(1.0,0.8)
\usefont{T1}{ptm}{m}{n}
\rput(3.4393907,0.375){$\phi^\dag$}
\psline[linewidth=0.06cm](6.487984,-0.315)(6.487984,0.485)
\pstriangle[linewidth=0.04,dimen=outer,fillstyle=solid,fillcolor=red](6.4879847,0.085)(1.0,0.8)
\rput{-180.0}(12.975968,-1.03){\pstriangle[linewidth=0.04,dimen=outer,fillstyle=solid,fillcolor=green](6.4879847,-0.915)(1.0,0.8)}
\usefont{T1}{ptm}{m}{n}
\rput(6.4793906,-0.405){$\psi$}
\usefont{T1}{ptm}{m}{n}
\rput(6.4393907,0.375){$\phi^\dag$}
\end{pspicture}
}
\]

A special case of arrows called scalars consists of all arrows of the form $c : I \rightarrow I$. In these specific cases, our categorical structure collapses to the point where tensor is equal to composition. In other words $c_1 \otimes c_2 = c_1 \circ c_2 = c_2 \otimes c_1$. Scalars can be moved freely around in the category's graphical representation.

Symmetry corresponds to a well known quantum operation called \textit{swap}; it is graphically represented by a pair of crossing lines. The following properties are more easily understood graphically:
\[
\scalebox{1} 
{
\begin{pspicture}(0,-2.4453125)(13.2925,2.43)
\definecolor{color121b}{rgb}{0.8,1.0,1.0}
\definecolor{color131b}{rgb}{1.0,1.0,0.8}
\psline[linewidth=0.06cm](0.3965625,-2.4)(0.3965625,-1.4)
\psline[linewidth=0.06cm](0.3965625,-1.4)(1.3965625,-0.4)
\psline[linewidth=0.06cm,arrowsize=0.05291667cm 2.0,arrowlength=1.4,arrowinset=0.4]{->}(1.3965625,-0.4)(1.3965625,0.2)
\psline[linewidth=0.06cm](1.3965625,-2.4)(1.3965625,-1.4)
\psline[linewidth=0.06cm](1.3965625,-1.4)(0.3965625,-0.4)
\usefont{T1}{ptm}{m}{n}
\rput(0.15125,-1.865){\large A}
\usefont{T1}{ptm}{m}{n}
\rput(1.7276562,-1.865){\large B}
\psline[linewidth=0.06cm](0.3965625,0.4)(1.3965625,1.4)
\psline[linewidth=0.06cm,arrowsize=0.05291667cm 2.0,arrowlength=1.4,arrowinset=0.4]{->}(1.3965625,1.4)(1.3965625,2.4)
\psline[linewidth=0.06cm](1.3965625,-0.2)(1.3965625,0.4)
\psline[linewidth=0.06cm](1.3965625,0.4)(0.3965625,1.4)
\psline[linewidth=0.06cm,arrowsize=0.05291667cm 2.0,arrowlength=1.4,arrowinset=0.4]{->}(0.3965625,1.4)(0.3965625,2.4)
\usefont{T1}{ptm}{m}{n}
\rput(2.8284376,-0.14){\Huge =}
\psline[linewidth=0.06cm,arrowsize=0.05291667cm 2.0,arrowlength=1.4,arrowinset=0.4]{->}(4.1965623,-2.4)(4.1965623,2.4)
\psline[linewidth=0.06cm,arrowsize=0.05291667cm 2.0,arrowlength=1.4,arrowinset=0.4]{->}(5.1965623,-2.4)(5.1965623,2.4)
\usefont{T1}{ptm}{m}{n}
\rput(0.12765625,-0.065){\large B}
\usefont{T1}{ptm}{m}{n}
\rput(1.7276562,1.935){\large B}
\usefont{T1}{ptm}{m}{n}
\rput(1.75125,-0.065){\large A}
\usefont{T1}{ptm}{m}{n}
\rput(0.15125,1.935){\large A}
\usefont{T1}{ptm}{m}{n}
\rput(3.75125,1.935){\large A}
\usefont{T1}{ptm}{m}{n}
\rput(5.527656,1.935){\large B}
\usefont{T1}{ptm}{m}{n}
\rput(7.90375,-0.69){B}
\usefont{T1}{ptm}{m}{n}
\rput(7.923125,-2.29){A}
\psline[linewidth=0.06cm,arrowsize=0.05291667cm 2.0,arrowlength=1.4,arrowinset=0.4]{->}(8.196563,-2.4)(8.196563,-0.4)
\psframe[linewidth=0.04,dimen=outer,fillstyle=solid,fillcolor=color121b](8.556562,-1.0)(7.7965627,-1.76)
\usefont{T1}{ptm}{m}{n}
\rput(8.163906,-1.405){\large $f$}
\psline[linewidth=0.06cm,arrowsize=0.05291667cm 2.0,arrowlength=1.4,arrowinset=0.4]{->}(9.196563,-2.4)(9.196563,-0.4)
\usefont{T1}{ptm}{m}{n}
\rput(9.50375,-2.29){C}
\usefont{T1}{ptm}{m}{n}
\rput(9.516562,-0.69){D}
\psframe[linewidth=0.04,dimen=outer,fillstyle=solid,fillcolor=color131b](9.556562,-1.0)(8.796562,-1.76)
\usefont{T1}{ptm}{m}{n}
\rput(9.193906,-1.385){\large $g$}
\usefont{T1}{ptm}{m}{n}
\rput(5.527656,-1.865){\large B}
\usefont{T1}{ptm}{m}{n}
\rput(3.75125,-1.865){\large A}
\psline[linewidth=0.06cm](8.196563,-0.6)(8.196563,0.4)
\psline[linewidth=0.06cm](8.196563,0.4)(9.196563,1.4)
\psline[linewidth=0.06cm,arrowsize=0.05291667cm 2.0,arrowlength=1.4,arrowinset=0.4]{->}(9.196563,1.4)(9.196563,2.4)
\psline[linewidth=0.06cm](9.196563,-0.6)(9.196563,0.4)
\psline[linewidth=0.06cm](9.196563,0.4)(8.196563,1.4)
\psline[linewidth=0.06cm,arrowsize=0.05291667cm 2.0,arrowlength=1.4,arrowinset=0.4]{->}(8.196563,1.4)(8.196563,2.4)
\psline[linewidth=0.06cm](11.796562,-2.4)(11.796562,-1.4)
\psline[linewidth=0.06cm](11.796562,-1.4)(12.796562,-0.4)
\psline[linewidth=0.06cm,arrowsize=0.05291667cm 2.0,arrowlength=1.4,arrowinset=0.4]{->}(12.796562,-0.4)(12.796562,0.6)
\psline[linewidth=0.06cm](12.796562,-2.4)(12.796562,-1.4)
\psline[linewidth=0.06cm](12.796562,-1.4)(11.796562,-0.4)
\psline[linewidth=0.06cm,arrowsize=0.05291667cm 2.0,arrowlength=1.4,arrowinset=0.4]{->}(11.796562,-0.4)(11.796562,0.6)
\usefont{T1}{ptm}{m}{n}
\rput(13.10375,2.11){B}
\usefont{T1}{ptm}{m}{n}
\rput(13.123125,0.51){A}
\psline[linewidth=0.06cm,arrowsize=0.05291667cm 2.0,arrowlength=1.4,arrowinset=0.4]{->}(12.796562,0.4)(12.796562,2.4)
\psframe[linewidth=0.04,dimen=outer,fillstyle=solid,fillcolor=color121b](13.156563,1.8)(12.396563,1.04)
\usefont{T1}{ptm}{m}{n}
\rput(12.7639065,1.395){\large $f$}
\psline[linewidth=0.06cm,arrowsize=0.05291667cm 2.0,arrowlength=1.4,arrowinset=0.4]{->}(11.796562,0.4)(11.796562,2.4)
\usefont{T1}{ptm}{m}{n}
\rput(11.50375,0.51){C}
\usefont{T1}{ptm}{m}{n}
\rput(11.516562,2.11){D}
\psframe[linewidth=0.04,dimen=outer,fillstyle=solid,fillcolor=color131b](12.156563,1.8)(11.396563,1.04)
\usefont{T1}{ptm}{m}{n}
\rput(11.793906,1.415){\large $g$}
\psline[linewidth=0.06cm,arrowsize=0.05291667cm 2.0,arrowlength=1.4,arrowinset=0.4]{->}(0.3965625,-0.4)(0.3965625,0.2)
\psline[linewidth=0.06cm](0.3965625,-0.2)(0.3965625,0.4)
\usefont{T1}{ptm}{m}{n}
\rput(7.9165626,1.91){D}
\usefont{T1}{ptm}{m}{n}
\rput(9.50375,1.91){B}
\usefont{T1}{ptm}{m}{n}
\rput(13.10375,-2.09){C}
\usefont{T1}{ptm}{m}{n}
\rput(11.523125,-2.09){A}
\usefont{T1}{ptm}{m}{n}
\rput(10.428437,-0.14){\Huge =}
\end{pspicture}
}
\]
\begin{center}
$\sigma_{A,B} \circ \sigma_{A,B} = id_A \otimes id_B$ \hskip 3cm $\sigma_{B,D} \circ (f \otimes g) = (g \otimes f) \circ \sigma_{A,C}$
\end{center}

Compact closure is used to model entangled states. These are the only cases where we see arrows pointing downwards, as the $*$ in $A^*$ reverses the arrow's direction. The graphical representation looks like this:
\[
\scalebox{1} 
{
\begin{pspicture}(0,-1.5178125)(11.3828125,0.83)
\rput{-180.0}(5.561875,1.6){\psarc[linewidth=0.06,arrowsize=0.05291667cm 2.0,arrowlength=1.4,arrowinset=0.4]{>->}(2.7809374,0.8){1.6}{-0.0}{180.0}}
\usefont{T1}{ptm}{m}{n}
\rput(2.8623438,-1.29){$d_A : I \longrightarrow A^* \otimes A$}
\usefont{T1}{ptm}{m}{n}
\rput(8.452344,-1.29){$e_A : A \otimes A^* \longrightarrow I$}
\psarc[linewidth=0.06,arrowsize=0.05291667cm 2.0,arrowlength=1.4,arrowinset=0.4]{<-<}(8.380938,-0.8){1.6}{-0.0}{180.0}
\end{pspicture}
}
\]

These have to adhere to a property, fundamental in proving teleportation, whose graphical representation is reminiscent of \textit{yanking} a wire:
\[
\scalebox{1} 
{
\begin{pspicture}(0,-1.12)(12.82,1.12)
\rput{-180.0}(6.0,0.0){\psarc[linewidth=0.04](3.0,0.0){1.0}{-0.0}{180.0}}
\psarc[linewidth=0.04](1.0,-0.1){1.0}{-0.0}{180.0}
\psline[linewidth=0.04cm,arrowsize=0.05291667cm 2.0,arrowlength=1.4,arrowinset=0.4]{>-}(0.0,-1.1)(0.0,0.0)
\psline[linewidth=0.04cm,arrowsize=0.05291667cm 2.0,arrowlength=1.4,arrowinset=0.4]{->}(4.0,-0.1)(4.0,1.0)
\usefont{T1}{ptm}{m}{n}
\rput(4.631875,-0.04){\Huge =}
\psline[linewidth=0.04cm,arrowsize=0.05291667cm 2.0,arrowlength=1.4,arrowinset=0.4]{->}(5.4,-1.1)(5.4,1.1)
\rput{-180.0}(16.8,0.0){\psarc[linewidth=0.04](8.4,0.0){1.0}{-0.0}{180.0}}
\psarc[linewidth=0.04](10.4,-0.1){1.0}{-0.0}{180.0}
\psline[linewidth=0.04cm,arrowsize=0.05291667cm 2.0,arrowlength=1.4,arrowinset=0.4]{<-}(11.4,-1.1)(11.4,0.0)
\psline[linewidth=0.04cm,arrowsize=0.05291667cm 2.0,arrowlength=1.4,arrowinset=0.4]{-<}(7.4,-0.1)(7.4,1.0)
\usefont{T1}{ptm}{m}{n}
\rput(12.031875,-0.04){\Huge =}
\psline[linewidth=0.04cm,arrowsize=0.05291667cm 2.0,arrowlength=1.4,arrowinset=0.4]{<-}(12.8,-1.1)(12.8,1.1)
\end{pspicture}
}
\]

\section{Representation of classical structures}
\label{Section:Representation of classical structures}
One of the fundamental known distinctions between quantum and classical computation is derived from no-go theorems. Classical computers routinely copy and delete data; it is such a commonplace thing to do that we hardly ever notice how entwined it is to the classical computational paradigm itself. Quantum computers, on the other hand, cannot perform either of these operations. In this part of this section, we will see how to turn this problem into a very important feature, which will in turn enable us to account for classical operations within the, already defined, quantum categorical framework. The following result \cite{WZ82} is referred to as the \textit{no-cloning} theorem:

\begin{theorem}[No-cloning theorem]
There is no quantum operation $D$, such that
\[ D : \ket{0} \otimes \ket{\psi} \mapsto \ket{\psi} \otimes \ket{\psi} \]
\[ D : \ket{0} \otimes \ket{\phi} \mapsto \ket{\phi} \otimes \ket{\phi} \]
unless $\ket{\psi}$ and $\ket{\phi}$ are orthogonal.
\end{theorem}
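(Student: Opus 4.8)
The plan is to derive a contradiction from the assumption that such a $D$ exists, by exploiting the fact that any quantum operation is realised by a unitary (more weakly, an isometric) map and therefore preserves inner products. First I would fix normalised representatives $\ket{\psi}$ and $\ket{\phi}$ of the two rays in question and compute the inner product of the two prescribed inputs, namely $\langle 0 \otimes \psi \mid 0 \otimes \phi \rangle = \langle 0|0\rangle\,\langle\psi|\phi\rangle = \langle\psi|\phi\rangle$, using that $\ket{0}$ is a unit vector. Next I would compute the inner product of the two prescribed outputs, $\langle \psi \otimes \psi \mid \phi \otimes \phi \rangle = \langle\psi|\phi\rangle\,\langle\psi|\phi\rangle = \langle\psi|\phi\rangle^2$, using that the bra of a tensor is the tensor of the bras. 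Since $D$ must send the first pair to the second pair while preserving inner products, these two scalars must coincide.

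Writing $z = \langle\psi|\phi\rangle \in \mathbb{C}$, the equation $z = z^2$ forces $z \in \{0,1\}$. The case $z = 0$ is precisely the assertion that $\ket{\psi}$ and $\ket{\phi}$ are orthogonal, which is the permitted exception in the statement. The case $z = 1$ would, via the equality condition of the Cauchy--Schwarz inequality applied to the unit vectors $\ket{\psi}$ and $\ket{\phi}$, force $\ket{\psi} = \ket{\phi}$ as rays; since duplicating a single, fixed, known state is unproblematic, this case does not describe a genuine universal cloner either. Hence no operation $D$ of the stated form can exist unless $\ket{\psi}$ and $\ket{\phi}$ are orthogonal.

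The only genuinely delicate point is the justification that a $D$ specified merely by its action on the two vectors $\ket{0}\otimes\ket{\psi}$ and $\ket{0}\otimes\ket{\phi}$ must nevertheless preserve the inner product between them. This is exactly where the hypothesis that $D$ is a legitimate quantum operation — and thus extends to an inner-product-preserving (unitary) map on the whole state space — has to be invoked, and it is the assumption one should isolate and state carefully before running the two short computations above; everything past that point is a single line of algebra.
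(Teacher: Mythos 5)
The paper states this theorem as cited background material (attributing it to Wootters--Zurek) and gives no proof of its own, so there is nothing to compare your argument against line by line. Your proof is the standard and correct inner-product-preservation argument: from $\langle 0\otimes\psi \,|\, 0\otimes\phi\rangle = \langle\psi|\phi\rangle$ and $\langle \psi\otimes\psi \,|\, \phi\otimes\phi\rangle = \langle\psi|\phi\rangle^2$ you get $z = z^2$, hence $z\in\{0,1\}$, and the two cases are exactly orthogonality and identity of the rays. You are right to isolate the preservation of inner products as the load-bearing hypothesis; within this paper's conventions that is unproblematic, since the background chapter explicitly stipulates that transformations of a quantum system's state are unitary, and unitaries preserve inner products. (The historical Wootters--Zurek argument instead derives a contradiction from linearity alone by applying the would-be cloner to a superposition $\alpha\ket{\psi}+\beta\ket{\phi}$ and comparing with the required output; either route is acceptable here, and for genuinely general CPTP ``operations'' one would need the stronger no-broadcasting theorem, which is outside the scope of the statement as the paper uses it.) Your proof is correct.
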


The ancilla qubit $\ket{0}$ is sometimes not included in information-theoretic notation \cite{CP07,Abr10}, reducing $D$ to $D : \ket{\psi} \mapsto \ket{\psi} \otimes \ket{\psi}$. Another result \cite{PB00}, complementary to the no-cloning theorem, is commonly referred to as the \textit{no-deleting} theorem. In the original formulation of this result, Pati and Braunstein used an ancilla qubit $\ket{A}$ and included a qubit state in a standard state $\ket{\Sigma}$, making $E$ look more like $E : \ket{\psi} \otimes \ket{\psi} \otimes \ket{A} \mapsto \ket{\psi} \otimes \ket{\Sigma} \otimes \ket{A_\psi}$. The formulation we will use is more common in the information-theoretic literature \cite{CP07,Abr10}:

\begin{theorem}[No-deleting theorem]
There is no quantum operation $E$, such that
\[ E : \ket{\psi} \mapsto 1 \]
\[ E : \ket{\phi} \mapsto 1 \]
unless $\ket{\psi}$ and $\ket{\phi}$ are orthogonal.
\end{theorem}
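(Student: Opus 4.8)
The plan is to argue in exact parallel with the no-cloning theorem above — the two statements are $\dag$-duals of one another — and to read off orthogonality from an inner-product identity, exactly as one would for cloning. First I would extract from ``quantum operation'' the only two facts the argument needs. One: $E$ is linear, so its values on $\ket{\psi}$ and $\ket{\phi}$ determine it on the whole subspace $\mathrm{span}\{\ket{\psi},\ket{\phi}\}$. Two: being physically realisable, $E$ dilates to an isometry on a larger space, so it preserves inner products once the ancilla/record systems are carried along. These are precisely the images under $\dag$ of the two ingredients behind no-cloning (linearity of the copying map and inner-product preservation of its unitary implementation); here also ``$1$'' is to be read as the scalar $1 \in \mathbb{C} \cong I$, i.e. the fixed ``blank'' output.

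Then I would run the computation. Write $z = \langle\psi|\phi\rangle$ with both states normalised. Applying $E$ to $\ket{\psi}$, to $\ket{\phi}$, and to the normalised superposition $\tfrac{1}{\sqrt 2}(\ket{\psi}+\ket{\phi})$, and using that all three are sent to the \emph{same} fixed output $1$ while $E$ acts linearly, one obtains two expressions for the image of the superposition whose consistency — combined with inner-product preservation of the dilation — forces a polynomial identity of the shape $z^{2} = z$. Hence $z \in \{0,1\}$: either $\ket{\psi} = \ket{\phi}$, in which case the hypothesis ``$E$ deletes two \emph{distinct} states'' is not in force and there is nothing to prove, or $z = 0$, i.e. $\ket{\psi} \perp \ket{\phi}$, which is the claimed conclusion. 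Equivalently — and this is worth recording given the categorical setting — the whole argument can be phrased in a $\dag$-compact category: $E : A \to I$ with $E \circ \ket{\psi} = E \circ \ket{\phi} = \mathrm{id}_I$; take daggers to get states $E^{\dag} : I \to A$, $\ket{\psi}^{\dag}$, $\ket{\phi}^{\dag}$, and feed this configuration straight into the no-cloning theorem.

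The main obstacle is the superposition step. The hypotheses only constrain $E$ on $\ket{\psi}$ and $\ket{\phi}$ themselves, not on vectors such as $\ket{\psi}+\ket{\phi}$, and a careless invocation of linearity quietly assumes that the ``blank'' output is state-independent in a strong sense. Making this rigorous is exactly the content of the Pati--Braunstein analysis: one must track the record systems, observe that the cross terms $\ket{\psi}\otimes\ket{\phi}$ are \emph{not} pinned down by the defining equations, and show that unitarity of the dilation nonetheless constrains its action enough to close the loop and produce $z^{2}=z$. I expect that bookkeeping — rather than the final algebra or the Cauchy--Schwarz argument in the degenerate case — to be where the real work lies.
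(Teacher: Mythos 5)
First, a point of reference: the paper does not prove this statement at all --- it is imported as background from Pati--Braunstein \cite{PB00}, with only an informal remark about how the original two-copy formulation $E : \ket{\psi} \otimes \ket{\psi} \otimes \ket{A} \mapsto \ket{\psi} \otimes \ket{\Sigma} \otimes \ket{A_\psi}$ gets abbreviated to $E : \ket{\psi} \mapsto 1$ in the information-theoretic literature. So there is no proof in the paper to compare against, and your proposal has to stand on its own.

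On its own terms it has a genuine gap, and it is not the superposition step you flag. Your target identity $z^2 = z$ does not come out of the single-copy formulation you set up. If $E$ dilates to an isometry $V$ with $V\ket{\psi} = 1 \otimes \ket{A_\psi}$ and $V\ket{\phi} = 1 \otimes \ket{A_\phi}$, inner-product preservation gives $\langle\psi|\phi\rangle = \langle A_\psi | A_\phi\rangle$, which is \emph{linear} in $z$ on the left; you only get the quadratic $\langle\psi|\phi\rangle^2 = \langle\psi|\phi\rangle\,\langle A_\psi|A_\phi\rangle$ when the input carries \emph{two} copies of the state, as in the original formulation. Without the second copy, the only way to conclude anything is to additionally demand that the record state $\ket{A_\psi}$ be independent of $\psi$ (otherwise the trace map deletes everything and the ``theorem'' is false as literally stated), and then you get $z = 1$, not $z \in \{0,1\}$ --- the wrong conclusion. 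So the two-copy structure and the no-hiding requirement on the ancilla are not bookkeeping details to be deferred; they are where the quadratic, and hence the orthogonality dichotomy, comes from. Relatedly, the proposed shortcut of ``take daggers and feed into no-cloning'' does not work: the dagger of a deleting map $A \to I$ is a state preparation $I \to A$, not a cloning map $A \to A \otimes A$, so no-deleting is not literally the $\dag$-image of no-cloning and needs its own argument. The degenerate-case handling ($z = 1$ forcing $\ket{\psi} = \ket{\phi}$) is fine.
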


From these two theorems, we can deduce that the only cases of quantum states that we could treat as classical are those pertaining to orthogonal vectors. This seems to reinforce the notion of a classical basis spanning a complex Hilbert space.

\begin{definition}[Classical structure (categorical)]
A \textit{classical structure}\footnote[2]{N.B. Thanks to \cite{CPV08}, we now know that classical structures are in bijective correspondence to bases.} \cite{CP07, CD08, CPP10} is defined in terms of special $\dag$-Frobenius cocommutative comonoids, also referred to as spider monoids. These are represented by a triplet $(A,\delta,\epsilon)$, where $A$ is an object and $\delta$ and $\epsilon$ are two morphisms of the form:

where the $a_i$ are the orthogonal vectors to which copying and deleting take place. These two maps, also called \textit{copying} and \textit{deleting} maps, have to satisfy a number of conditions to ensure that they are well behaved and that their addition does not cause a collapse of the compact structure. More specifically, they must satisfy all of the conditions of a cocommutative comonoid, as well as the isometry and Frobenius conditions.
\end{definition}

\begin{example}
To help the reader become more accustomed to our new concepts, we will present a linear algebraic example with matrices. Suppose that our chosen basis consisted of $\ket{0}$ and $\ket{1}$, the computational basis. The appropriate $\delta$ to copy the basis vectors would be:
\begin{center}
    $\delta = \left(\begin{array}{cc}
        1 & 0 \\
        0 & 0 \\
        0 & 0 \\
        0 & 1
    \end{array}\right)$ and $\varepsilon = \left(\begin{array}{cc}
        1 & 1
    \end{array}\right)$
\end{center}
\noindent It should be easy to verify that $\delta \ket{0} = \ket{00}$, $\delta \ket{1} = \ket{11}$, $\varepsilon \ket{0} = 1$ and $\varepsilon \ket{1} = 1$. The Hermitian adjoints of those states, $\delta^\dag$ and $\varepsilon^\dag$, would be:
\begin{center}
    $\delta^\dag = \left(\begin{array}{cccc}
        1 & 0 & 0 & 0 \\
        0 & 0 & 0 & 1
    \end{array}\right)$ and $\varepsilon^\dag = \left(\begin{array}{c}
        1 \\
        1
    \end{array}\right)$
\end{center}
\noindent Once again we can verify that $\delta^\dag$, also known as \textit{fusion}, merges two basis states when they are the same: $\delta^\dag \ket{00} = \ket{0}$ and $\delta^\dag \ket{11} = \ket{1}$. The $\varepsilon^\dag$ operation ``creates'' a state by being itself equivalent up to a scalar multiple to the constant state $\ket{+}$.
\end{example}

When expressed more rigorously, the comonoid conditions require that the copying and deleting maps form an internal cocommutative comonoid:
\begin{align*}
(\delta \otimes id_A) \circ \delta   & = (id_A \otimes \delta) \circ \delta \\
(\epsilon \otimes id_A) \circ \delta & = (id_A \otimes \epsilon) \circ \delta  = id_A \\
\delta                               & = \sigma_{A,A} \circ \delta
\end{align*}

Because of the dagger functor, all of the comonoidal conditions can be reversed to give us the conditions required of a commutative monoid:
\begin{align*}
\delta^\dag \circ (\delta^\dag \otimes id_A)    & = \delta^\dag \circ (id_A \otimes \delta^\dag) \\
\delta^\dag \circ (\epsilon^\dag \otimes id_A)  & = \delta^\dag \circ (id_A \otimes \epsilon^\dag) = id_A \\
\delta^\dag                                     & = \delta^\dag \circ \sigma_{A,A}
\end{align*}

The graphical representation for all of these conditions corresponds to connected graphs. Furthermore, the inputs and outputs on all of the required equations match, so, as proved in \cite{CP06,CPP10,CD08}, we can equivalently define classical structures using the \textit{spider theorem}. This theorem states that if a graph generated by $\delta$ and $\epsilon$ is connected, then it is completely characterized by its domain and codomain. If the domain is $\underbrace{A \otimes \ldots \otimes A}_n$ and the codomain is $\underbrace{A \otimes \ldots \otimes A}_m$, then it can be reduced to a \textit{"spider"} with $n$ input and $m$ output wires.
\[
\scalebox{1} 
{
\begin{pspicture}(0,-1.9076562)(3.06,1.9076562)
\rput{-180.0}(3.06,2.2184374){\psarc[linewidth=0.06](1.53,1.1092187){1.0}{0.0}{180.0}}
\rput{-180.0}(3.06,3.2184374){\psarc[linewidth=0.06](1.53,1.6092187){1.5}{0.0}{180.0}}
\psdots[dotsize=0.12](0.33,1.4092188)
\psdots[dotsize=0.12](0.63,1.4092188)
\psdots[dotsize=0.12](0.93,1.4092188)
\psdots[dotsize=0.12](1.23,1.4092188)
\psdots[dotsize=0.12](1.53,1.4092188)
\psdots[dotsize=0.12](1.83,1.4092188)
\psdots[dotsize=0.12](2.13,1.4092188)
\psdots[dotsize=0.12](2.43,1.4092188)
\psdots[dotsize=0.12](2.73,1.4092188)
\psdots[dotsize=0.12](1.23,0.9092187)
\psdots[dotsize=0.12](1.53,0.9092187)
\psdots[dotsize=0.12](1.83,0.9092187)
\psarc[linewidth=0.06](1.53,-1.0907812){1.0}{0.0}{180.0}
\psarc[linewidth=0.06](1.53,-1.5907812){1.5}{0.0}{180.0}
\psdots[dotsize=0.12,dotangle=-180.0](2.73,-1.3907813)
\psdots[dotsize=0.12,dotangle=-180.0](2.43,-1.3907813)
\psdots[dotsize=0.12,dotangle=-180.0](2.13,-1.3907813)
\psdots[dotsize=0.12,dotangle=-180.0](1.83,-1.3907813)
\psdots[dotsize=0.12,dotangle=-180.0](1.53,-1.3907813)
\psdots[dotsize=0.12,dotangle=-180.0](1.23,-1.3907813)
\psdots[dotsize=0.12,dotangle=-180.0](0.93,-1.3907813)
\psdots[dotsize=0.12,dotangle=-180.0](0.63,-1.3907813)
\psdots[dotsize=0.12,dotangle=-180.0](0.33,-1.3907813)
\psdots[dotsize=0.12,dotangle=-180.0](1.83,-0.8907812)
\psdots[dotsize=0.12,dotangle=-180.0](1.53,-0.8907812)
\psdots[dotsize=0.12,dotangle=-180.0](1.23,-0.8907812)
\pscircle[linewidth=0.04,dimen=outer,fillstyle=solid,fillcolor=green](1.53,0.00921875){0.5}
\usefont{T1}{ptm}{m}{n}
\rput(1.5142188,1.7192187){$m$ outputs}
\usefont{T1}{ptm}{m}{n}
\rput(1.4642187,-1.6807812){$n$ inputs}
\end{pspicture}
}
\]

Arbitrary states that are points of $A$ (i.e. of the form $|\psi\rangle : I \longrightarrow A$) are denoted by "black dots":
\[
\scalebox{1} 
{
\begin{pspicture}(0,-0.945)(9.182813,0.915)
\psline[linewidth=0.04cm](0.4009375,0.485)(0.4009375,-0.115)
\pscircle[linewidth=0.04,dimen=outer,fillstyle=solid,fillcolor=black](0.4009375,-0.215){0.3}
\usefont{T1}{ptm}{m}{n}
\rput(0.39234376,-0.245){\textcolor{white}{\boldmath$\psi$\unboldmath}}
\usefont{T1}{ptm}{m}{n}
\rput(1.62,-0.135){\LARGE $:=$}
\usefont{T1}{ptm}{m}{n}
\rput(3.1275,0.495){A}
\psline[linewidth=0.06cm,arrowsize=0.05291667cm 2.0,arrowlength=1.4,arrowinset=0.4]{->}(2.8009374,-0.315)(2.8009374,0.885)
\rput{-180.0}(5.601875,-1.03){\pstriangle[linewidth=0.04,dimen=outer,fillstyle=solid,fillcolor=green](2.8009374,-0.915)(1.0,0.8)}
\usefont{T1}{ptm}{m}{n}
\rput(2.7923439,-0.405){$\psi$}
\usefont{T1}{ptm}{m}{n}
\rput(8.7275,-0.505){A}
\psline[linewidth=0.06cm,arrowsize=0.05291667cm 2.0,arrowlength=1.4,arrowinset=0.4]{<-}(8.400937,-0.915)(8.400937,0.485)
\pstriangle[linewidth=0.04,dimen=outer,fillstyle=solid,fillcolor=red](8.400937,0.085)(1.0,0.8)
\usefont{T1}{ptm}{m}{n}
\rput(8.332344,0.375){$\psi^\dag$}
\psline[linewidth=0.04cm](6.0009375,0.085)(6.0009375,-0.515)
\pscircle[linewidth=0.04,dimen=outer,fillstyle=solid,fillcolor=black](6.0009375,0.185){0.3}
\usefont{T1}{ptm}{m}{n}
\rput(6.012344,0.155){\textcolor{white}{\boldmath$\psi^\dag$\unboldmath}}
\usefont{T1}{ptm}{m}{n}
\rput(7.22,-0.135){\LARGE $:=$}
\usefont{T1}{ptm}{m}{n}
\rput(4.4035935,-0.18){\large and}
\end{pspicture}
}
\]

On any classical structure $(A,\delta,\epsilon)$, we define a map $K$ that lifts\footnote[3]{The notation used for this lifting by \cite{CD08} is $\Lambda^\mu(\psi)$. In our case, we avoid using the letter $\Lambda$, to prevent any confusion with the letter used for Currying.} any state $|\psi\rangle$ of $A$ to the endomorphism $K^\delta(\psi) := \delta^\dag \circ (|\psi\rangle \otimes id_A) : A \longrightarrow A$. Similarly, we can lift any bra $\langle\psi|$ by setting $K^\delta(\psi)^\dag = K^\delta(\psi_*) = (\psi^\dag \otimes id_A) \circ \delta : A \longrightarrow A$. We denote these graphically as:
\[
\scalebox{1} 
{
\begin{pspicture}(0,-2.67)(10.360937,2.67)
\psline[linewidth=0.04cm](3.2409375,2.65)(3.2409375,2.05)
\psarc[linewidth=0.04](3.2409375,1.23){0.6}{-0.0}{180.0}
\pscircle[linewidth=0.04,dimen=outer,fillstyle=solid,fillcolor=green](3.2409375,1.93){0.3}
\psline[linewidth=0.04cm](2.6409376,1.29)(2.6409376,0.75)
\pscircle[linewidth=0.04,dimen=outer,fillstyle=solid,fillcolor=black](2.6409376,0.65){0.3}
\usefont{T1}{ptm}{m}{n}
\rput(2.6323438,0.64){\textcolor{white}{\boldmath$\psi$\unboldmath}}
\psline[linewidth=0.04cm](3.8409376,1.27)(3.8409376,0.35)
\usefont{T1}{ptm}{m}{n}
\rput(1.7254688,1.455){\huge =}
\psline[linewidth=0.04cm](0.6409375,2.65)(0.6409375,0.35)
\pscircle[linewidth=0.04,dimen=outer,fillstyle=solid,fillcolor=green](0.6409375,1.51){0.3}
\usefont{T1}{ptm}{m}{n}
\rput(0.63234377,1.5){$\psi$}
\psline[linewidth=0.04cm](3.2409375,-2.05)(3.2409375,-2.65)
\psarc[linewidth=0.04](3.2409375,-1.25){0.6}{-180.0}{0.0}
\pscircle[linewidth=0.04,dimen=outer,fillstyle=solid,fillcolor=green](3.2409375,-1.95){0.3}
\psline[linewidth=0.04cm](2.6409376,-0.75)(2.6409376,-1.35)
\pscircle[linewidth=0.04,dimen=outer,fillstyle=solid,fillcolor=black](2.6409376,-0.65){0.3}
\usefont{T1}{ptm}{m}{n}
\rput(2.6523438,-0.68){\textcolor{white}{\boldmath$\psi^\dag$\unboldmath}}
\psline[linewidth=0.04cm](3.8409376,-0.35)(3.8409376,-1.35)
\psline[linewidth=0.04cm](7.1409373,-0.35)(7.1409373,-0.95)
\psarc[linewidth=0.04](7.1409373,-1.77){0.6}{-0.0}{180.0}
\pscircle[linewidth=0.04,dimen=outer,fillstyle=solid,fillcolor=green](7.1409373,-1.07){0.3}
\psline[linewidth=0.04cm](7.7409377,-1.73)(7.7409377,-2.65)
\psarc[linewidth=0.04](5.9409375,-1.73){0.6}{-180.0}{0.0}
\psline[linewidth=0.04cm](5.3409376,-1.15)(5.3409376,-1.75)
\pscircle[linewidth=0.04,dimen=outer,fillstyle=solid,fillcolor=black](5.3409376,-1.05){0.3}
\usefont{T1}{ptm}{m}{n}
\rput(5.3523436,-1.08){\textcolor{white}{\boldmath$\psi^\dag$\unboldmath}}
\psline[linewidth=0.04cm](9.740937,-0.35)(9.740937,-0.95)
\psarc[linewidth=0.04](9.740937,-1.77){0.6}{-0.0}{180.0}
\pscircle[linewidth=0.04,dimen=outer,fillstyle=solid,fillcolor=green](9.740937,-1.07){0.3}
\psline[linewidth=0.04cm](9.140938,-1.71)(9.140938,-2.25)
\pscircle[linewidth=0.04,dimen=outer,fillstyle=solid,fillcolor=black](9.140938,-2.35){0.3}
\usefont{T1}{ptm}{m}{n}
\rput(9.112344,-2.36){\textcolor{white}{\boldmath$\psi_*$\unboldmath}}
\psline[linewidth=0.04cm](10.340938,-1.73)(10.340938,-2.65)
\usefont{T1}{ptm}{m}{n}
\rput(4.525469,-1.545){\huge =}
\usefont{T1}{ptm}{m}{n}
\rput(8.325469,-1.545){\huge =}
\psline[linewidth=0.04cm](0.6409375,-0.35)(0.6409375,-2.65)
\pscircle[linewidth=0.04,dimen=outer,fillstyle=solid,fillcolor=green](0.6409375,-1.49){0.3}
\usefont{T1}{ptm}{m}{n}
\rput(0.6123437,-1.5){$\psi_*$}
\usefont{T1}{ptm}{m}{n}
\rput(1.7254688,-1.545){\huge =}
\end{pspicture}
}
\]

The monoid operation $\delta^\dag : A \otimes A \longrightarrow A$, also known as \textit{fusion}, can be used to combine arbitrary states. For two states $|\psi\rangle$ and $|\phi\rangle$, their fusion is written as $\psi \odot \phi := \delta^\dag \circ (\psi \otimes \phi)$. The same operation can be used to merge pairs of lifted states $K^\delta(\psi \odot \phi) := \delta^\dag (K^\delta(\psi) \otimes K^\delta(\phi)) = \delta^\dag (K^\delta(\phi) \otimes K^\delta(\psi))$. As per the definition of $\delta^\dag$, the fusion operation is associative and commutative. This concept of fusion can be represented diagrammatically as follows:
\[
\scalebox{1} 
{
\begin{pspicture}(0,-1.19)(5.0028124,1.19)
\psline[linewidth=0.04cm](1.0009375,1.17)(1.0009375,0.57)
\psarc[linewidth=0.04](1.0009375,-0.23){0.6}{-0.0}{180.0}
\pscircle[linewidth=0.04,dimen=outer,fillstyle=solid,fillcolor=green](1.0009375,0.47){0.3}
\psline[linewidth=0.04cm](0.4009375,-0.19)(0.4009375,-0.79)
\pscircle[linewidth=0.04,dimen=outer,fillstyle=solid,fillcolor=black](0.4009375,-0.89){0.3}
\usefont{T1}{ptm}{m}{n}
\rput(0.39234376,-0.9){\textcolor{white}{\boldmath$\psi$\unboldmath}}
\psline[linewidth=0.04cm](1.6009375,-0.19)(1.6009375,-0.79)
\pscircle[linewidth=0.04,dimen=outer,fillstyle=solid,fillcolor=black](1.6009375,-0.89){0.3}
\usefont{T1}{ptm}{m}{n}
\rput(1.5923438,-0.9){\textcolor{white}{\boldmath$\phi$\unboldmath}}
\usefont{T1}{ptm}{m}{n}
\rput(2.4854689,-0.045){\huge =}
\psline[linewidth=0.04cm](3.7009375,0.15)(3.7009375,0.95)
\psellipse[linewidth=0.04,dimen=outer,fillstyle=solid,fillcolor=black](3.7009375,0.05)(0.6,0.3)
\usefont{T1}{ptm}{m}{n}
\rput(3.6723437,0.06){\textcolor{white}{\boldmath$\psi\!\odot\!\phi$\unboldmath}}
\end{pspicture}
}
\]

\begin{example}
Let $\ket{+_\alpha}$ and $\ket{+_\beta}$ be states such that:
\begin{center}
    $\ket{+_\alpha} = \ket{0} + e^{i\alpha}\ket{1} = \left(\begin{array}{c}
            1 \\
            e^{i\alpha}
    \end{array}\right)$ and $\ket{+_\beta} = \ket{0} + e^{i\beta}\ket{1} = \left(\begin{array}{c}
            1 \\
            e^{i\beta}
    \end{array}\right)$
\end{center}
\noindent The fusion of those two states will be:
\[
\delta^\dag \circ (\ket{+_\alpha} \otimes \ket{+_\beta}) = \left(\begin{array}{cccc}
    1 & 0 & 0 & 0 \\
    0 & 0 & 0 & 1
\end{array}\right) \left(\begin{array}{c}
    1 \\
    e^{i\beta} \\
    e^{i\alpha} \\
    e^{i(\alpha+\beta)}
\end{array}\right) = \left(\begin{array}{c}
    1 \\
    e^{i(\alpha+\beta)}
\end{array}\right) = \ket{+_{\alpha+\beta}}
\]
\end{example}

After combining all of these definitions, we can prove \cite{CD08} that classical structures follow what is known as the generalized spider theorem:

\begin{theorem}[Spider theorem]
Any connected graph generated by the operations of the classical structure $(A,\delta,\epsilon)$, states $|\psi_i\rangle : I \longrightarrow A$ and the $\dag$-compact structure, is completely characterized by its domain, codomain and $\displaystyle K^\delta(\bigodot_i \psi_i)$. The graphical representation is that of a "decorated spider"
\[
\scalebox{1} 
{
\begin{pspicture}(0,-1.9076562)(3.47125,1.9076562)
\rput{-180.0}(3.370625,2.2184374){\psarc[linewidth=0.06](1.6853125,1.1092187){1.0}{-0.0}{180.0}}
\rput{-180.0}(3.370625,3.2184374){\psarc[linewidth=0.06](1.6853125,1.6092187){1.5}{-0.0}{180.0}}
\psdots[dotsize=0.12](0.4853125,1.4092188)
\psdots[dotsize=0.12](0.7853125,1.4092188)
\psdots[dotsize=0.12](1.0853125,1.4092188)
\psdots[dotsize=0.12](1.3853126,1.4092188)
\psdots[dotsize=0.12](1.6853125,1.4092188)
\psdots[dotsize=0.12](1.9853125,1.4092188)
\psdots[dotsize=0.12](2.2853124,1.4092188)
\psdots[dotsize=0.12](2.5853126,1.4092188)
\psdots[dotsize=0.12](2.8853126,1.4092188)
\psdots[dotsize=0.12](1.3853126,0.9092187)
\psdots[dotsize=0.12](1.6853125,0.9092187)
\psdots[dotsize=0.12](1.9853125,0.9092187)
\psarc[linewidth=0.06](1.6853125,-1.0907812){1.0}{-0.0}{180.0}
\psarc[linewidth=0.06](1.6853125,-1.5907812){1.5}{-0.0}{180.0}
\psdots[dotsize=0.12,dotangle=-180.0](2.8853126,-1.3907813)
\psdots[dotsize=0.12,dotangle=-180.0](2.5853126,-1.3907813)
\psdots[dotsize=0.12,dotangle=-180.0](2.2853124,-1.3907813)
\psdots[dotsize=0.12,dotangle=-180.0](1.9853125,-1.3907813)
\psdots[dotsize=0.12,dotangle=-180.0](1.6853125,-1.3907813)
\psdots[dotsize=0.12,dotangle=-180.0](1.3853126,-1.3907813)
\psdots[dotsize=0.12,dotangle=-180.0](1.0853125,-1.3907813)
\psdots[dotsize=0.12,dotangle=-180.0](0.7853125,-1.3907813)
\psdots[dotsize=0.12,dotangle=-180.0](0.4853125,-1.3907813)
\psdots[dotsize=0.12,dotangle=-180.0](1.9853125,-0.8907812)
\psdots[dotsize=0.12,dotangle=-180.0](1.6853125,-0.8907812)
\psdots[dotsize=0.12,dotangle=-180.0](1.3853126,-0.8907812)
\pscircle[linewidth=0.04,dimen=outer,fillstyle=solid,fillcolor=green](1.6853125,0.00921875){0.5}
\usefont{T1}{ptm}{m}{n}
\rput(1.6695312,1.7192187){$m$ outputs}
\usefont{T1}{ptm}{m}{n}
\rput(1.6195313,-1.6807812){$n$ inputs}
\usefont{T1}{ptm}{m}{n}
\rput(1.7009375,-0.04578125){\scriptsize $\displaystyle\bigodot_i \psi_i$}
\end{pspicture}
}
\]
\end{theorem}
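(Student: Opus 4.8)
The plan is to bootstrap from the ordinary (undecorated) spider theorem of \cite{CP06,CPP10,CD08} and to treat the states $|\psi_i\rangle$ as \emph{decorations} that can be merged into a single spider leg. The first move is to eliminate the ambient $\dag$-compact structure from the picture: a classical structure $(A,\delta,\epsilon)$ is in particular a special $\dag$-Frobenius algebra, and as such it carries its own self-dual compact structure, with ``cup'' $\delta \circ \epsilon^\dag : I \to A \otimes A$ and ``cap'' $\epsilon \circ \delta^\dag : A \otimes A \to I$. Using uniqueness of duals up to canonical isomorphism (from the definition of a compact closed category), one identifies this with the given $d_A, e_A$ up to the canonical iso $A \cong A^*$, so every occurrence of $d_A$ or $e_A$ inside the graph can be rewritten using $\delta, \epsilon$ and their adjoints together with symmetries. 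Hence, without loss of generality, the graph is built solely from $\delta, \epsilon, \delta^\dag, \epsilon^\dag$, symmetries, identities, and the state boxes $|\psi_i\rangle : I \to A$.

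The second move separates the graph's \emph{skeleton} from its decorations. Erasing every box $|\psi_i\rangle$ but keeping its output wire dangling yields a graph generated purely by $\delta, \epsilon$ and their adjoints; because the original graph is connected, so is this skeleton, and the undecorated spider theorem collapses it to a single spider with $n + k$ inputs and $m$ outputs, where $k$ is the number of boxes and the $k$ extra legs are precisely the wires that fed the $\psi_i$. Re-attaching the boxes, the whole morphism equals that spider with $|\psi_1\rangle, \dots, |\psi_k\rangle$ plugged into $k$ of its inputs. Now cocommutativity of the comultiplication, i.e.\ $\delta^\dag = \delta^\dag \circ \sigma_{A,A}$, together with associativity of fusion, lets those $k$ decorated legs be fused --- in any order --- into one leg carrying the single state $\bigodot_i \psi_i$; what is left is the spider with $n{+}1$ inputs and $m$ outputs, one input fed by $\bigodot_i\psi_i$, which is by definition the decorated spider recorded by $n$, $m$ and $K^\delta(\bigodot_i\psi_i)$. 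The two trivial corners are $k = 0$ (the undecorated theorem) and $n = m = 1$ (the defining equation of $K^\delta$).

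It remains to argue that $n$, $m$ and $K^\delta(\bigodot_i\psi_i)$ determine the morphism \emph{uniquely}. Injectivity of $K^\delta$ on states is immediate: post-composing $K^\delta(\psi) = \delta^\dag \circ (|\psi\rangle \otimes id_A)$ with $\epsilon^\dag : I \to A$ and using the monoid unit law $\delta^\dag \circ (id_A \otimes \epsilon^\dag) = id_A$ recovers $|\psi\rangle$, so equal $K^\delta$-values force equal fused states; and any $(n,m)$-spider with one $\psi$-decorated input is obtained from the $(1,1)$-spider decorated by $\psi$ by pre- and post-composition with the canonical undecorated spiders, so two graphs meeting the hypotheses reduce to the same normal form and hence are equal. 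The step I expect to be the real obstacle is not the combinatorics of the collapse but making the very first reduction watertight: one must check that the self-dual compact structure induced by $(A,\delta,\epsilon)$ is genuinely \emph{compatible} with --- not merely abstractly isomorphic to --- the category's $\dag$-compact structure, so that freely mixing $d_A, e_A$ with $\delta, \epsilon$ in one connected diagram can never escape the spider family. This compatibility is exactly where the isometry and $\dag$-Frobenius conditions built into the definition of a classical structure are needed, and it is where I would concentrate most of the proof's care.
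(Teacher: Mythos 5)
The paper does not actually prove this statement: it is recalled as background and attributed wholesale to \cite{CD08}, so there is no in-text argument to compare yours against. Judged on its own terms, your reconstruction follows the standard route from the literature and is essentially sound: reduce the ambient compact structure to the self-dual one induced by the Frobenius algebra (cup $\delta\circ\epsilon^\dag$, cap $\epsilon\circ\delta^\dag$, which satisfy the snake equations precisely because of the Frobenius condition); erase the state boxes, which are degree-one vertices, so connectedness of the skeleton is preserved and the undecorated spider theorem of \cite{CP06,CPP10,CD08} collapses it to an $(n+k,m)$-spider; then use cocommutativity and associativity of $\delta^\dag$ to fuse the $k$ re-attached legs into a single decoration $\bigodot_i\psi_i$, and note that $K^\delta$ is recoverable from (indeed injective on) the fused state via the unit law. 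Your own diagnosis of the weak point is the right one, and it deserves to be made sharper than ``compatible rather than merely abstractly isomorphic'': uniqueness of duals only gives you a canonical isomorphism $A\cong A^*$, so after rewriting each $d_A$, $e_A$ in terms of $\delta$, $\epsilon$ you are left with instances of that isomorphism and its inverse scattered through the diagram, and you need an argument that they all cancel or get absorbed into the spider; in the self-dual setting of \cite{CD08} (and in \textit{FDHilb}, where the basis structure's induced cups coincide with the canonical ones, cf.\ \cite{CPV08}) this is automatic, but for a genuinely non-self-dual ambient compact structure the theorem as stated needs exactly this coherence as a hypothesis. One further small degenerate case worth recording: when $k=0$ the empty fusion must be read as $\epsilon^\dag$, so that $K^\delta(\bigodot_\emptyset\psi_i)=id_A$ and the statement specialises to the undecorated theorem rather than becoming vacuous.
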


\begin{definition}[Unbiasedness (categorical)]
A point $\alpha : I \longrightarrow A$ is \textit{unbiased} relative to $(A,\delta,\epsilon)$ iff $K^\delta(\alpha)$ is unitary. In other words, there needs to be a scalar $s : I \longrightarrow I$ such that $s \cdot \alpha \odot \alpha^\dag = \epsilon^\dag$, or graphically:
\[
\scalebox{1} 
{
\begin{pspicture}(0,-2.2)(8.640938,2.22)
\psline[linewidth=0.04cm](2.1409376,2.2)(2.1409376,1.3)
\psline[linewidth=0.04cm](6.2409377,-0.4)(6.2409377,1.1)
\pscircle[linewidth=0.04,dimen=outer,fillstyle=solid,fillcolor=green](6.2409377,-0.5){0.5}
\usefont{T1}{ptm}{m}{n}
\rput(6.2765627,-0.555){\scriptsize $\alpha\!\odot\!\alpha_*$}
\psline[linewidth=0.04cm](1.5409375,0.1)(1.5409375,-0.5)
\psarc[linewidth=0.04](1.5409375,-1.32){0.6}{-0.0}{180.0}
\pscircle[linewidth=0.04,dimen=outer,fillstyle=solid,fillcolor=green](1.5409375,-0.62){0.3}
\psline[linewidth=0.04cm](0.9409375,-1.26)(0.9409375,-1.8)
\pscircle[linewidth=0.04,dimen=outer,fillstyle=solid,fillcolor=black](0.9409375,1.9){0.3}
\usefont{T1}{ptm}{m}{n}
\rput(0.95234376,1.89){\textcolor{white}{\boldmath$\alpha^\dag$\unboldmath}}
\psline[linewidth=0.04cm](2.1409376,-1.28)(2.1409376,-1.8)
\usefont{T1}{ptm}{m}{n}
\rput(3.0254688,0.0050){\huge =}
\psline[linewidth=0.04cm](1.5409375,0.6)(1.5409375,0.0)
\psarc[linewidth=0.04](1.5409375,1.4){0.6}{-180.0}{0.0}
\pscircle[linewidth=0.04,dimen=outer,fillstyle=solid,fillcolor=green](1.5409375,0.7){0.3}
\psline[linewidth=0.04cm](0.9409375,1.9)(0.9409375,1.3)
\pscircle[linewidth=0.04,dimen=outer,fillstyle=solid,fillcolor=black](0.9409375,-1.9){0.3}
\usefont{T1}{ptm}{m}{n}
\rput(0.95234376,-1.93){\textcolor{white}{\boldmath$\alpha$\unboldmath}}
\psline[linewidth=0.04cm](3.9409375,2.2)(3.9409375,-0.46)
\pscircle[linewidth=0.04,dimen=outer,fillstyle=solid,fillcolor=green](3.9409375,0.7){0.3}
\usefont{T1}{ptm}{m}{n}
\rput(3.9323437,0.69){$\alpha_*$}
\psline[linewidth=0.04cm](3.9409375,-0.16)(3.9409375,-2.0)
\pscircle[linewidth=0.04,dimen=outer,fillstyle=solid,fillcolor=green](3.9409375,-0.6){0.3}
\usefont{T1}{ptm}{m}{n}
\rput(3.9523437,-0.61){$\alpha$}
\pscircle[linewidth=0.04,dimen=outer,fillstyle=solid,fillcolor=green](3.9409375,-1.9){0.3}
\usefont{T1}{ptm}{m}{n}
\rput(4.8254685,0.0050){\huge =}
\psline[linewidth=0.04cm](8.340938,0.6)(8.340938,0.0)
\pscircle[linewidth=0.04,dimen=outer,fillstyle=solid,fillcolor=green](8.340938,-0.1){0.3}
\usefont{T1}{ptm}{m}{n}
\rput(7.425469,0.0050){\huge =}
\pscircle[linewidth=0.04,dimen=outer,fillstyle=solid,fillcolor=green](2.1409376,-1.9){0.3}
\end{pspicture}
}
\]
\end{definition}

\begin{example}
Recall the state $\ket{+_\alpha}$ from our previous example. States of the form $\ket{+_\theta}$ will always be unbiased with respect to the classical basis structure we defined for copying $\ket{0}$ and $\ket{1}$. That is because $\delta^\dag \circ (\ket{+_\theta} \otimes \ket{+_{-\theta}})$ will always be equal to $\ket{+_0} = \varepsilon^\dag$.
\end{example}

\begin{definition}[Complementarity (categorical)]
Two classical structures $(A,\delta_G,\epsilon_G)$ and $(A,\delta_R,\epsilon_R)$ in a $\dag$-compact category are called \textit{complementary} if the points that are classical for one are unbiased for the other and vice versa. $\delta_G$ and $\epsilon_G$ are depicted using green dots; the points they copy and delete are drawn in red and are unbiased for the second classical structure. $\delta_R$ and $\epsilon_R$ are depicted using red dots; they copy and delete green points, which are unbiased for the green classical structure. Graphically this condition is depicted as:
\[
\scalebox{1} 
{
\begin{pspicture}(0,-2.7)(9.742812,2.7)
\usefont{T1}{ptm}{m}{n}
\rput(2.6854687,-1.715){\huge =}
\psarc[linewidth=0.04](1.2009375,-1.84){0.6}{-0.0}{180.0}
\psline[linewidth=0.04cm](1.2009375,-0.22)(1.2009375,-1.08)
\pscircle[linewidth=0.04,dimen=outer,fillstyle=solid,fillcolor=red](1.2009375,-1.14){0.3}
\psline[linewidth=0.04cm](0.6009375,-1.78)(0.6009375,-2.32)
\pscircle[linewidth=0.04,dimen=outer,fillstyle=solid,fillcolor=red](0.6009375,-2.4){0.3}
\usefont{T1}{ptm}{m}{n}
\rput(0.6123437,-2.41){$\alpha$}
\psline[linewidth=0.04cm](1.8009375,-1.8)(1.8009375,-2.32)
\pscircle[linewidth=0.04,dimen=outer,fillstyle=solid,fillcolor=red](1.8009375,-2.4){0.3}
\usefont{T1}{ptm}{m}{n}
\rput(1.8323437,-2.41){$\alpha_*$}
\psline[linewidth=0.04cm](3.6009376,-0.62)(3.6009376,-1.48)
\pscircle[linewidth=0.04,dimen=outer,fillstyle=solid,fillcolor=red](3.6009376,-1.62){0.3}
\usefont{T1}{ptm}{m}{n}
\rput(7.885469,-1.715){\huge =}
\psarc[linewidth=0.04](6.4009376,-1.84){0.6}{-0.0}{180.0}
\psline[linewidth=0.04cm](6.4009376,-0.22)(6.4009376,-1.08)
\pscircle[linewidth=0.04,dimen=outer,fillstyle=solid,fillcolor=green](6.4009376,-1.14){0.3}
\psline[linewidth=0.04cm](5.8009377,-1.78)(5.8009377,-2.32)
\pscircle[linewidth=0.04,dimen=outer,fillstyle=solid,fillcolor=green](5.8009377,-2.4){0.3}
\usefont{T1}{ptm}{m}{n}
\rput(5.7723436,-2.41){$\beta$}
\psline[linewidth=0.04cm](7.0009375,-1.8)(7.0009375,-2.32)
\pscircle[linewidth=0.04,dimen=outer,fillstyle=solid,fillcolor=green](7.0009375,-2.4){0.3}
\usefont{T1}{ptm}{m}{n}
\rput(6.992344,-2.41){$\beta_*$}
\psline[linewidth=0.04cm](8.800938,-0.62)(8.800938,-1.48)
\pscircle[linewidth=0.04,dimen=outer,fillstyle=solid,fillcolor=green](8.800938,-1.62){0.3}
\psline[linewidth=0.04cm](1.2009375,1.88)(1.2009375,1.28)
\psarc[linewidth=0.04](1.2009375,2.68){0.6}{-180.0}{0.0}
\pscircle[linewidth=0.04,dimen=outer,fillstyle=solid,fillcolor=green](1.2009375,1.98){0.3}
\psline[linewidth=0.04cm](3.2009375,2.02)(3.2009375,1.48)
\pscircle[linewidth=0.04,dimen=outer,fillstyle=solid,fillcolor=red](3.2009375,1.4){0.3}
\usefont{T1}{ptm}{m}{n}
\rput(3.2123437,1.39){$\alpha$}
\psline[linewidth=0.04cm](4.0009375,2.02)(4.0009375,1.48)
\pscircle[linewidth=0.04,dimen=outer,fillstyle=solid,fillcolor=red](4.0009375,1.4){0.3}
\usefont{T1}{ptm}{m}{n}
\rput(4.012344,1.39){$\alpha$}
\psline[linewidth=0.04cm](1.2009375,1.42)(1.2009375,0.88)
\pscircle[linewidth=0.04,dimen=outer,fillstyle=solid,fillcolor=red](1.2009375,0.8){0.3}
\usefont{T1}{ptm}{m}{n}
\rput(1.2123437,0.79){$\alpha$}
\usefont{T1}{ptm}{m}{n}
\rput(2.2854688,1.485){\huge =}
\psline[linewidth=0.04cm](6.4009376,1.42)(6.4009376,0.88)
\pscircle[linewidth=0.04,dimen=outer,fillstyle=solid,fillcolor=green](6.4009376,0.8){0.3}
\usefont{T1}{ptm}{m}{n}
\rput(6.3723435,0.79){$\beta$}
\psline[linewidth=0.04cm](9.200937,2.02)(9.200937,1.48)
\pscircle[linewidth=0.04,dimen=outer,fillstyle=solid,fillcolor=green](9.200937,1.4){0.3}
\usefont{T1}{ptm}{m}{n}
\rput(9.172344,1.39){$\beta$}
\psline[linewidth=0.04cm](8.400937,2.02)(8.400937,1.48)
\pscircle[linewidth=0.04,dimen=outer,fillstyle=solid,fillcolor=green](8.400937,1.4){0.3}
\usefont{T1}{ptm}{m}{n}
\rput(8.372344,1.39){$\beta$}
\psline[linewidth=0.04cm](6.4009376,1.88)(6.4009376,1.28)
\psarc[linewidth=0.04](6.4009376,2.68){0.6}{-180.0}{0.0}
\pscircle[linewidth=0.04,dimen=outer,fillstyle=solid,fillcolor=red](6.4009376,1.98){0.3}
\usefont{T1}{ptm}{m}{n}
\rput(7.485469,1.485){\huge =}
\end{pspicture}
}
\]
\end{definition}

\begin{example}
Let $(A,\delta_G,\epsilon_G)$ be the classical structure that we described earlier for copying and deleting $\ket{0}$ and $\ket{1}$. Let $(A,\delta_R,\epsilon_R)$ be a similarly constructed classical structure that can copy and delete $\ket{+}$ and $\ket{-}$ (i.e. state vectors orthogonal to the computational basis), such as:
\begin{center}
    $\delta_R = \frac{1}{\sqrt{2}}\left(\begin{array}{cc}
        1 & 0 \\
        0 & 1 \\
        0 & 1 \\
        1 & 0
    \end{array}\right)$ and $\varepsilon_R = \left(\begin{array}{cc}
        1 & 0
    \end{array}\right)$
\end{center}
\noindent As one can easily verify ,the two classical structures are an example of structures that are complementary to one another.
\end{example}

\chapter{Linear logic}
\label{Chapter:Linear logic}
This section provides an overview of the structures found in some flavours of linear logic, building up to the corresponding logic for compact closed categories, while illustrating how some of these concepts relate to category theoretic notions and properties.

Linear logic is a resource sensitive logic, first introduced in \cite{Gir87}. Whereas other logics provide the structural rules of weakening and contraction in order to facilitate predicate re-use or non-use in proving theorems, linear logic drops the indiscriminate use of these rules and treats predicates as resources that need to be expended in order to produce proofs. In terms of the Gentzen sequent calculus, the rules of weakening and contraction would be represented as:
\begin{center}
Weakening $\displaystyle\frac{\vdash \Gamma}{\vdash \Gamma, A}$
\hskip 1cm
Contraction $\displaystyle\frac{\vdash \Gamma, A, A}{\vdash \Gamma, A}$
\end{center}
Due to its resource sensitivity, linear logic finds many applications in computer science, such as in type theory, the semantics of programming languages, and the study of concurrency. The definitions and presentation of this section are largely based on \cite{Abr93} and \cite{AT10}. The sequent rules for linear logic are as follows:
\begin{center}
Axiom $\displaystyle\frac{}{\vdash A^\perp, A}$ \hskip 1cm Exchange $\displaystyle\frac{\vdash \Gamma, A, B, \Delta}{\vdash \Gamma, B, A, \Delta}$ \hskip 1cm Cut $\displaystyle\frac{\vdash \Gamma, A \;\;\;\; \vdash \Delta, A^\perp}{\vdash \Gamma, \Delta}$\\
\vskip 0.5cm
Unit $\displaystyle\frac{}{\vdash 1}$ \hskip 1cm Perp $\displaystyle\frac{\vdash \Gamma}{\vdash \Gamma, -}$\\
\vskip 0.5cm
Times $\displaystyle\frac{\vdash \Gamma, A \;\;\;\; \vdash \Delta, B}{\vdash \Gamma, \Delta, A \otimes B}$ \hskip 1cm Par $\displaystyle\frac{\vdash \Gamma, A, B}{\vdash \Gamma, A \logicpar B}$\\
\vskip 0.5cm
With $\displaystyle\frac{\vdash \Gamma, A \;\;\;\; \vdash \Gamma, B}{\vdash \Gamma, A \& B}$ \hskip 1cm {Plus (i)} $\displaystyle\frac{\vdash \Gamma, A}{\vdash \Gamma, A \oplus B}$ \hskip 1cm {Plus (ii)} $\displaystyle\frac{\vdash \Gamma, B}{\vdash \Gamma, A \oplus B}$
\end{center}

In order to recover the structural rules of weakening and contraction, we can introduce an exponential operator called \textit{bang}, which is denoted by $!$
\begin{center}
Dereliction $\displaystyle\frac{\vdash \Gamma, A}{\vdash \Gamma, !A}$ \hskip 1cm Of Course $\displaystyle\frac{\vdash !\Gamma, A}{\vdash !\Gamma, !A}$\\
\vskip 0.5cm
Weakening $\displaystyle\frac{\vdash \Gamma}{\vdash \Gamma, !A}$ \hskip 1cm Contraction $\displaystyle\frac{\vdash \Gamma, !A, !A}{\vdash \Gamma, !A}$
\end{center}

\section{Multiplicative quantum logic}
\label{Section:Multiplicative quantum logic}
We will now focus on and extend the multiplicative fragment of linear logic by choosing to ignore the rules for \textit{additives} and \textit{exponentials}. The reader is referred to \cite{AT10} for a comprehensive, yet still accessible, overview of multiplicative linear logic. That logic was extended by \cite{AD06}, to yield a multiplicative quantum logic that simulates the compact structure found in compact closed categories. The key idea lies in the definition of linear negation, whereby multiplicative conjunction (tensor) is equated with multiplicative disjunction (par), by trivializing the notion of De Morgan duality. This is the most natural way of introducing compactness to linear logic and uses $A^\perp$ to represent the dual object $A^*$ in our category. Linear negation in this case is characterized by the following laws:
\begin{eqnarray}
A^{\perp\perp} & = & A \nonumber \\
1^\perp & = & 1 \nonumber \\
(A \otimes B)^\perp & = & A^\perp \otimes B^\perp \nonumber \\
A \multimap B & = & A^\perp \otimes B \nonumber
\end{eqnarray}

A categorical interpretation will be given for each of the rules presented, in order to better illustrate the Curry-Howard parallelism. The proof rules for this logic are as follows:
\begin{center}
\renewcommand{\arraystretch}{3}
\begin{tabular}{|c|c|c|}
  \hline
    & Logic & Categories \\ \hline

  Id &
  $\displaystyle\frac{}{A \vdash A}$ &
  $\displaystyle\frac{}{id_A: A \longrightarrow A}$ \\ \hline

  $\bigotimes R$ &
  $\displaystyle\frac{\Gamma \vdash A \;\;\;\; \Delta \vdash B}{\Gamma, \Delta \vdash A \otimes B}$ &
  $\displaystyle\frac{f: \Gamma \longrightarrow A \;\;\;\; g: \Delta \longrightarrow B}{f \otimes g: \Gamma \otimes \Delta \longrightarrow A \otimes B}$ \\ \hline

  $\bigotimes L$ &
  $\displaystyle\frac{\Gamma, A, B \vdash C}{\Gamma, A \otimes B \vdash C }$ &
  $\displaystyle\frac{f: (\Gamma \otimes A) \otimes B \longrightarrow C}{f \circ a_{A,B,\Gamma}: \Gamma \otimes (A \otimes B) \longrightarrow C }$ \\ \hline

  Cut &
  $\displaystyle\frac{\Gamma \vdash A \;\;\;\; A, \Delta \vdash B}{\Gamma, \Delta \vdash B}$ &
  $\displaystyle\frac{f: \Gamma \longrightarrow A \;\;\;\; g: A \otimes \Delta \longrightarrow B}{g \circ (f \otimes id_\Delta): \Gamma \otimes \Delta \longrightarrow B}$ \\ \hline

  $\multimap E$ &
  $\displaystyle\frac{\Gamma \vdash A \multimap B \;\;\;\; \Delta \vdash A}{\Gamma, \Delta \vdash B}$ &
  $\displaystyle\frac{f: \Gamma \longrightarrow (A \multimap B) \;\;\;\; g: \Delta \longrightarrow A}{ev_{A,B} \circ (f \otimes g): \Gamma \otimes \Delta \longrightarrow B}$ \\ \hline

  $\multimap R$ &
  $\displaystyle\frac{\Gamma, A \vdash B}{\Gamma \vdash A \multimap B}$ &
  $\displaystyle\frac{f: \Gamma \otimes A \longrightarrow B}{\Lambda(f): \Gamma \longrightarrow (A \multimap B)}$ \\ \hline
\end{tabular}
\end{center}

At this point, it is interesting to compare the two worlds and see how some notions translate from one to the other. The identity rule corresponds to identity arrows in our categories. The Cut rule defines function composition. The right tensor rule ($\bigotimes R$) defines tensoring, while the left tensor rule ($\bigotimes L$) defines the associativity of the tensor. Linear implication ($\multimap$) is a notion equivalent to a category's exponential objects, so naturally, implication elimination ($\multimap E$) and the right implication rule ($\multimap R$) respectively define the category's Evaluation and Currying functions. The only structural rule in this logic is the rule known as \textit{exchange}:
\begin{center}
\small
$\displaystyle\frac{\Gamma, A, B, \Delta \vdash C}{\Gamma, B, A, \Delta \vdash C}$
\hskip 1cm
$\displaystyle\frac{f: \Gamma \otimes A \otimes B \otimes \Delta \longrightarrow C}{f \circ (id_\Gamma \otimes s_{A,B} \otimes id_\Delta): \Gamma \otimes B \otimes A \otimes \Delta \longrightarrow C}$
\normalsize
\end{center}

This rule corresponds to the symmetry isomorphism for the tensor. All we need in order to have a fully fledged representation of symmetric monoidal closed categories is a monoidal unit. That is denoted by $1$ and comes with the simple rule of $\overline{\vdash 1}$. We use the right implication rule to transform our identity rule to $\vdash A \multimap A$ which, when translated via the linear negation laws, becomes $\vdash A^\perp \otimes A$, or the equivalent of $d_A: I \longrightarrow A^* \otimes A$ in our category. Thus, we can now represent compact structure in this flavour of linear logic.

\chapter{The linear typed lambda calculus}
\label{Chapter:The linear typed lambda calculus}
This section will provide an overview of the linear typed lambda calculus of \cite{AT10}, to prepare the ground for our extension to \textit{dagger compact categories} in the next section. We will define well formed formulas for terms, types and sequents and provide Gentzen-style inference rules for deriving these formulas. This lambda calculus provides a computational interpretation for symmetric monoidal closed categories and is in direct correspondence with the multiplicative fragment of intuitionistic linear logic.

\begin{definition}[Variables and terms in the lambda calculus]
The fundamental building blocks of our language are \textit{variables}. They are denoted by single letters and are traditionally represented using the later letters of the alphabet (i.e. $x,y,z$). These variables can then be combined with each other to form composite \textit{terms}, denoted by different combinations of the following forms:
\[ t \;\;\;\; ::= \;\;\;\; x \;\; | \;\; t_1 \otimes t_2 \;\; | \;\; \text{ let } \psi \text{ be } x \otimes y \text{ in } c \;\; | \;\; \lambda x.t  \;\; | \;\; fa \]
\end{definition}

\vskip 0.1cm

\begin{definition}[Types in the lambda calculus]
Every term in our language, regardless of whether it is a variable or composite, has a \textit{type}. Types can appear as any combination of the following forms:
\[ \text{type} \;\;\;\; ::= \;\;\;\; A \;\;\; | \;\;\; A \otimes B \;\;\; | \;\;\; A \multimap B \]
\end{definition}

\vskip 0.1cm

\begin{definition}[Typing judgements in the lambda calculus]
The typing judgements, or sequents, of our language are composed of terms and their respective types. They are always of the form:
\[ x_1:A_1, \; x_2:A_2, \; \ldots, \; x_n:A_n \vdash t:B \]
\end{definition}

A set of typing rules is used to produce typing judgements. Now that we know which formulas are well formed, we can continue our language exposition by providing these rules in the form of a Gentzen-style Sequent Calculus. The set of inference rules corresponds to the rules that were presented in the section on linear logic. The rules and their respective correspondences in category theory are as follows:

\begin{center}
\renewcommand{\arraystretch}{3}
\begin{tabular}{|c|c|c|}
  \hline
    & Lambda Calculus & Categories \\ \hline

  Id &
  $\displaystyle\frac{}{x:A \vdash x:A}$ &
  $\displaystyle\frac{}{id_A: A \longrightarrow A}$ \\ \hline

  $\bigotimes R$ &
  $\displaystyle\frac{\Gamma \vdash a:A \;\;\;\; \Delta \vdash b:B}{\Gamma, \Delta \vdash a \otimes b : A \otimes B}$ &
  $\displaystyle\frac{f: \Gamma \longrightarrow A \;\;\;\; g: \Delta \longrightarrow B}{f \otimes g: \Gamma \otimes \Delta \longrightarrow A \otimes B}$ \\ \hline

  $\bigotimes L$ &
  $\displaystyle\frac{\Gamma, x:A, y:B \vdash c:C}{\Gamma, \psi : A \otimes B \vdash \textrm{ let } \psi \textrm{ be } x \otimes y \textrm{ in } c:C}$ &
  $\displaystyle\frac{f: (\Gamma \otimes A) \otimes B \longrightarrow C}{f \circ a_{A,B,\Gamma}: \Gamma \otimes (A \otimes B) \longrightarrow C }$ \\ \hline

  Cut &
  $\displaystyle\frac{\Gamma \vdash c:A \;\;\;\; x:A, \Delta \vdash b:B}{\Gamma, \Delta \vdash b[c/x]:B}$ &
  $\displaystyle\frac{f: \Gamma \longrightarrow A \;\;\;\; g: A \otimes \Delta \longrightarrow B}{g \circ (f \otimes id_\Delta): \Gamma \otimes \Delta \longrightarrow B}$ \\ \hline

  $\multimap R$ &
  $\displaystyle\frac{\Gamma, x:A \vdash b:B}{\Gamma \vdash \lambda x.b : A \multimap B}$ &
  $\displaystyle\frac{f: \Gamma \otimes A \longrightarrow B}{\Lambda(f): \Gamma \longrightarrow (A \multimap B)}$ \\ \hline

  $\multimap E$ &
  $\displaystyle\frac{\Gamma \vdash g: A \multimap B \;\;\;\; \Delta \vdash a:A}{\Gamma, \Delta \vdash ga:B}$ &
  $\displaystyle\frac{f: \Gamma \longrightarrow (A \multimap B) \;\;\;\; g: \Delta \longrightarrow A}{ev_{A,B} \circ (f \otimes g): \Gamma \otimes \Delta \longrightarrow B}$ \\ \hline
\end{tabular}
\end{center}

Applying these rules a number of times can lead us to composite but reducible forms. These are reduced using a process called $\beta$-reduction as follows:
\begin{eqnarray}
(\lambda x.b)a & \stackrel{\beta}{\longrightarrow} & b[a/x] \nonumber \\
\textrm{ let } a \otimes b \textrm{ be } x \otimes y \textrm{ in } c & \stackrel{\beta}{\longrightarrow} & c[a/x, b/y] \nonumber
\end{eqnarray}

The notation $b[c/x]$ is referred to as substitution and means "take $b$ and replace all free occurrences of $x$ in it with $c$". Substitution is a meta operation that takes place outside of the language. In defining an operational semantics for substitution, care has to be taken to prevent us from violating the capture of free variables\footnote[4]{In the cases where $b$ is a $\lambda$-abstraction, certain provisions have to be in place to prevent us from violating the capture of free variables. These provisions are that $x \neq z$ and $z \not\in FV(c)$.} of terms. A more rigorous way of defining substitution is given by an induction on the structure of $b$:
\begin{align*}
    y\,[c/x]            & :=
    \begin{cases}
        \;c & \text{for $x = y$,} \\
        \;y & \text{for $x \neq y$} \\
    \end{cases} \\
    (pq)[c/x]           & := \; (p[c/x])(q[c/x]) \\
    (\lambda z.q)[c/x]  & := \; \lambda z.(q[c/x])
\end{align*}

A number of interesting translations take place between category theoretic and lambda calculus notions. One of them occurs in $\bigotimes L$, the rule that defines tensor associativity in our category, whereby we gain a rule on how to type a term when, instead of using two variables for its derivation, we plug in a composite term in their place. Another interesting case is Cut, the rule that defines composition of functions, which in the lambda calculus corresponds to a type rule for substitution. Implication elimination ($\multimap E$), the rule responsible for the evaluation operation in closed categories, turns out to be function application. Consequently, the related notion of Currying that comes with the right implication rule ($\multimap R$), corresponds to lambda abstraction.

Similarly to the way we defined rules in the section on Logic, we need to add the structural rule of Exchange, in order to account for the symmetry isomorphism of the tensor:
\begin{center}
\small
$\displaystyle\frac{\Gamma, x:A, y:B, \Delta \vdash c:C}{\Gamma, y:B, x:A, \Delta \vdash c:C}$
\hskip 1cm
$\displaystyle\frac{f: \Gamma \otimes A \otimes B \otimes \Delta \longrightarrow C}{f \circ (id_\Gamma \otimes s_{A,B} \otimes id_\Delta): \Gamma \otimes B \otimes A \otimes \Delta \longrightarrow C}$
\normalsize
\end{center}

\clearpage
\part{Quantum programming and classical control}
\label{Part:Quantum programming and classical control}
\chapter{The dagger lambda calculus}
\label{Chapter:The dagger lambda calculus}
Dagger compact categories were first introduced in \cite{ABP99}, albeit under a different name, using some of the terminology of \cite{DR89}. They were later proposed by \cite{AC04} and \cite{Sel05} as an axiomatic framework for the study of quantum protocols. Though a lot of work has been done on categorically driven quantum programming languages \cite{SV06}, \cite{SV08} and \cite{SV10}, these lambda calculi did not provide a way of modelling the dagger functor of dagger compact categories. The work of \cite{BS10} highlighted the importance of dagger compact categories for the semantics of quantum computation; it presented a rough correspondence between quantum computation, logic and the lambda calculus, yet its type theory fell short of providing a correspondence to the entire structure of dagger compact categories. This section fills this gap by presenting the \textit{dagger lambda calculus}: a computational interpretation for dagger compact categories.

\section{Language construction}
\label{Section:Language construction}
We will now construct a language for \textit{dagger compact categories} by defining well formed formulas for terms, types and sequents. The rules for deriving these formulas will be given in the form of Gentzen-style inference rules. In order to give computational meaning to our language, we will begin our presentation of the typing dynamics by reformalising the linear typed lambda calculus of \cite{AT10} with the explicit substitution used by the linear chemical abstract machine of \cite{Abr93}. The linear negation we will be using causes a significant collapse between conjunction and disjunction, extends tensor to a (potentially) binding operator, and provides us with a semantics similar to that of the proof nets in \cite{AD06}. The set of rules we use to define this language is kept at a minimum, allowing for clean proofs of the various desired properties. Many familiar computational notions do not appear as primitives, but they do arise as constructed notions in good time.

\begin{definition}[Variables, constants and terms in the dagger lambda calculus]
The fundamental building blocks of our language are \textit{variables}; they are denoted by single letters and are traditionally represented using the later letters of the alphabet (i.e. $x,y,z$). We also allow for the use of \textit{constant terms} (i.e. $c_1,c_2,c_3$); these are terms with an inherent value, that cannot serve as placeholders for substitution. These primitives can then be combined with each other to form composite \textit{terms}, denoted by different combinations of the following forms:
\[ \langle term \rangle \;\;\;\; ::= \;\;\;\; variable \;\; | \;\; \langle term \rangle_* \;\; | \;\; \langle term \rangle \otimes \langle term \rangle \;\; | \;\; constant \]
\end{definition}

\vskip 0.1cm

\begin{definition}[Types in the dagger lambda calculus]
Every term in our language, regardless of whether it is a variable, a constant or composite, has a \textit{type}. We will first start by defining a set of \textit{atomic types}; these are traditionally represented using capital letters (i.e. $A,B,C$). Atomic types can then be combined to give us types of the following forms:
\[ \langle type \rangle \;\;\;\; ::= \;\;\;\; atomic \;\;\; | \;\;\; \langle type \rangle^* \;\;\; | \;\;\; \langle type \rangle \otimes \langle type \rangle \]
\end{definition}

\vskip 0.1cm

The star operator that we use is not a repetition operator; instead, it corresponds to a particular form of \textit{linear negation}. As one would expect from a negation operation, the star operator is involutive $(a_*)_* \equiv a$ and $(A^*)^* \equiv A$. Abramsky \cite{Abr93} proposed using linear negation as the passageway between Intuitionistic Linear Logic and Classical Linear Logic. The linear negation used in \cite{AD06} "trivialized" the notion of De Morgan duality of \cite{Abr93} by setting $(A \otimes B)^* := A^* \otimes B^*$. The linear negation that we use is similar to the one used in \cite{CPP08}; it distributes differently over tensor by performing a swap of the terms/types at hand and allows for a more "planar" representation.

\begin{definition}[Linear negation]
The star operator is a form of linear negation whose De Morgan duality is defined by:
\begin{align*}
(a \otimes b)_* := & b_* \otimes a_* && \text{on terms and}\\
(A \otimes B)^* := & B^* \otimes A^* && \text{on types.}
\end{align*}
\end{definition}

\begin{definition}[Scalars]
One of the language's atomic types, denoted by $I$, acts as the tensor unit. One of the very important properties of the type $I$ is \textit{negation invariance}, whereby $I \equiv I^*$. We say that a term $i$ is a \textit{scalar} iff it is of type $I$.
\end{definition}

\begin{definition}[Dimensions]
For every type $A$, we will define a scalar constant $D_A : I$, referring to it as the \textit{dimension} of type $A$. The dimension of $I$ is defined to be $D_I = 1 : I$, where $1 = 1_* : I \equiv I^*$.
\end{definition}

\begin{definition}[Soup]
All of the computation in our language is performed inside a relational \textit{soup}. The soup  is a set, consisting of pairs of equityped terms, connecting them to each other in a form of explicit substitution. A \textit{soup connection} between two terms of type $A$ is written as $t_1 :_A t_2$ or $t_1 : t_2$; to simplify our notation in the soup, we omit writing the type whenever there is no ambiguity about the type of the connected terms. The resulting soup is of the form $S = \{v_1:v_2, \ldots, v_{m-1}:v_m\}$. We use the following property $\{a_1:a_2\} \equiv \{a_{2*}:a_{1*}\}$ to equate some soup terms by collapsing them into the same congruence class. We also define \textit{soup negation} as $(S \cup S^\prime)_* := S_* \cup S^\prime_*$, where $\{t:u\}_* := \{t_*:u_*\}$.
\end{definition}

\begin{definition}[Typing judgements in the dagger lambda calculus]
The \textit{typing judgements}, or \textit{sequents}, of our language are composed of terms, their respective types and a relational soup. A typing judgement is thus represented by:
\[ t_1:A_1, \; t_2:A_2, \; \ldots, \; t_n:A_n \vdash_S t:B \]
\end{definition}

\begin{example}
In the following typing judgement, the types of $t_1$ and $t_2$ are both known to be $A$. Similarly, we know that both $D_C$ and $1$ are scalars, so their type is $I$. We omit writing the types for soup connections $t_1:t_2$ and $D_C:1$ but, to prevent ambiguity, we have to write it for $x :_B x$, because we have no other way of deducing it from the sequent:
\[ t_1:A \vdash_{\{ t_1 : t_2, x :_B x, D_C : 1 \}} t_2 : A \]
\end{example}

Now that we know which formulas are well formed in our language, we can proceed by defining a notion of binding. Contrary to what we are used to from the lambda calculus, where the notion of binding is restricted in scope to the confines of a single term, the dagger lambda calculus supports a binding that is global and whose scope spans the entire typing judgement. The computational interpretation of classical linear logic, which was provided by \cite{Abr93} in his linear chemical abstract machine, views two occurrences of the same variable as two ends of a communication channel. Adhering to the spirit of that definition, we define binding as follows:

\begin{definition}[Bound variables and terms in the dagger lambda calculus]
For any variable $x$, we say that it is a \textit{bound variable} when it appears twice within a given sequent. As such, variable capture is not limited to the scope of a single term but spans the entire sequent. For any term $t$ that does not contain any occurrences of constants, we say that term is captured when it consists entirely of variables that are captured within the scope of the current sequent. We use the phrases \textit{bound term} and \textit{bundle of bound variables} interchangeably when referring to captured terms. Trivially, a bound variable is also a bound term.
\end{definition}
\begin{example}
In the following sequent, $x_1$, $x_2$, $y_1$, $y_2$ and $f$ are all bound variables. The individual variables may be free when looking at subterms $x_{1*} \otimes y_1$ and $x_{2*} \otimes y_2$ but, when considering the scope of the entire sequent, they are captured by other occurrences of themselves in the soup. Moreover, the terms $x_{1*} \otimes y_1$ and $x_{2*} \otimes y_2$ are both bound terms because they contain no constants and they consist solely of variables that are captured by variables in the soup:
\[ x_{1*} \otimes y_1 : A^* \otimes B \vdash_{\{x_{1*} \otimes y_1:f, f:x_{2*} \otimes y_2\}} x_{2*} \otimes y_2 : A^* \otimes B \]

In the following sequent, $f$, $y$, $x_1$ and $x_2$ are bound variables; they can also be viewed as bound terms since a single variable is a term and since they do not contain any constants. The term $x_{1}* \otimes x_1$ is a bundle of bound variables because it contains no constants and consists solely of bound variables. The term $c_* \otimes x_2$, however, is not a bundle of bound variables because it contains a constant called $c$:
\[ f : A^* \otimes B \vdash_{\{ x_{1}* \otimes x_1:c_* \otimes x_2, f:x_{2*} \otimes y \}} y : B \]
\end{example}
\begin{remark}
As will become obvious from our language's sequent rules, which will impose linearity constraints on the introduction of variables, the nature of linearity in our language mandates that all of the variables within a given sequent occur exactly twice. This means that all of the free variables in a given term will occur once more in the sequent within which they reside, hence becoming captured in the scope of that sequent. Within that scope, all terms will essentially consist of captured variables and constants.
\end{remark}

\begin{definition}[$\alpha$-renaming on variables in the dagger lambda calculus]
A bound variable $x$ can be $\alpha$-renamed by replacing all of its instances, in a given sequent, with a bundle of bound variables $t$. The term $t$ has to be of the same type as $x$, must not contain any constants (since it will be a bundle of bound variables), and it must consist of variables that do not already appear in the sequent.
\end{definition}

\noindent We can now extend the operation of $\alpha$-renaming to operate on captured terms:

\begin{definition}[$\alpha$-renaming on terms in the dagger lambda calculus]
A bound term $t$ can be $\alpha$-renamed by either $\alpha$-renaming its constituent variables or, in cases where $t$ appears twice in a given sequent, by replacing all of its instances with a variable $x$. The variable $x$ has to be of the same type as $t$ and it must not already appear in the sequent.
\end{definition}

\begin{definition}[$\alpha$-equivalence in the dagger lambda calculus]
We define a notion of \textit{$\alpha$-equivalence} as the reflexive, symmetric and transitive closure of $\alpha$-renaming. In other words, we say that two sequents are \textit{$\alpha$-equivalent}, or \textit{equivalent up to $\alpha$-renaming}, when one can be transformed to the other by $\alpha$-renaming zero or more terms.
\end{definition}

\begin{example}
Going back to the examples we used earlier, the sequent
\[ x_{1*} \otimes y_1 : A^* \otimes B \vdash_{\{x_{1*} \otimes y_1:f, f:x_{2*} \otimes y_2\}} x_{2*} \otimes y_2 : A^* \otimes B \]
\noindent is $\alpha$-equivalent to
\[ g : A^* \otimes B \vdash_{\{g:f, f:x_{2*} \otimes y_2\}} x_{2*} \otimes y_2 : A^* \otimes B \]
\noindent because we can $\alpha$-rename the bound term $x_{1*} \otimes y_1$ into the variable $g$. Similarly, the sequent
\[ f : A^* \otimes B \vdash_{\{ x_{1}* \otimes x_1:c_* \otimes x_2, f:x_{2*} \otimes y \}} y : B \]
\noindent is $\alpha$-equivalent to
\[ x_{3*} \otimes y_2 : A^* \otimes B \vdash_{\{ x_{1}* \otimes x_1:c_* \otimes x_2, x_{3*} \otimes y_2:x_{2*} \otimes y_1 \}} y_1 : B \]
\noindent because we can $\alpha$-rename the bound variable $y$ into $y_1$ and also $\alpha$-rename the bound variable $f$ into the term $x_{3*} \otimes y_2$.
\end{example}

\begin{definition}[Typing contexts in the dagger lambda calculus]
The left-hand-side of a typing judgement is actually a list of typed terms. We use the letters $\Gamma$ and $\Delta$ as shorthand for arbitrary (possibly empty) lists of such terms. Let $\Delta$ be the list $t_1:T_1, t_2:T_2, \ldots, t_n:T_n$. We define $\bigotimes\Delta$ to be the term $\left(\left(\left(t_1 \otimes t_2\right) \otimes \ldots\right) \otimes t_n\right) : \left(\left(\left(T_1 \otimes T_2\right) \otimes \ldots\right) \otimes T_n\right)$, referring to it as \textit{$\Delta$ in tensor form}.
\end{definition}

Our language exposition features a Gentzen-style Sequent Calculus, which provides us with the inference rules used to produce judgements. Rules with a double line are bidirectional; sequents matching the top of the rule can be used to derive sequents matching the bottom and vice versa. The rules are formed in a way that allows composite terms to appear to the left of the turnstile. The set of sequent rules is:

\begin{prooftree}
\AxiomC{}
\RightLabel{\small Id,}
\UnaryInfC{$x:A \vdash x:A$}
\end{prooftree}

\begin{prooftree}
\AxiomC{$a:A \vdash_S b:B$}
\RightLabel{\small Negation,}
\UnaryInfC{$a_*:A^* \vdash_{S_*} b_*:B^*$}
\end{prooftree}

\begin{prooftree}
\AxiomC{$\Gamma \vdash_{S_1} a:A$}
\AxiomC{$\Delta \vdash_{S_2} b:B$}
\RightLabel{\small $\otimes R$,}
\BinaryInfC{$\Gamma, \bigotimes\Delta \vdash_{S_1 \cup S_2} a \otimes b : A \otimes B$}
\end{prooftree}

\begin{prooftree}
\AxiomC{$\Gamma, a:A, b:B \vdash_S c:C$}
\doubleLine\RightLabel{\small $\otimes L$,}
\UnaryInfC{$\Gamma, a \otimes b : A \otimes B \vdash_S c:C$}
\end{prooftree}

\begin{prooftree}
\AxiomC{$\Gamma \vdash_{S_1} a:A$}
\AxiomC{$a^\prime : A, \Delta \vdash_{S_2} b:B$}
\RightLabel{\small Cut,}
\BinaryInfC{$\Gamma, \Delta \vdash_{S_1 \cup S_2 \cup \{a:a^\prime\}} b:B$}
\end{prooftree}

\begin{prooftree}
\AxiomC{$a:A, \Gamma \vdash_S b:B$}
\doubleLine\RightLabel{\small Curry.}
\UnaryInfC{$\Gamma \vdash_S a_* \otimes b : A^* \otimes B$}
\end{prooftree}

\vskip 0.2cm

\noindent\textit{Linearity constraints:} The identity axiom (Id) is the only inference rule we have for introducing variables into our expressions. Consequently, variables are always introduced as bound pairs. We should note at this point that $\otimes R$ and Cut are the only two rules that can be used to merge two typing judgements. In order to preserve linearity, we have to impose a very important condition on such a merge; in doing so, we will prevent the appearance of more than two instances of a variable in a given sequent. The condition required, in order to merge two sequents, is that they do not share any common variables. Whenever we want to use the $\otimes R$ and Cut rules to merge two sequents whose variables overlap, we have to $\alpha$-rename them first to ensure that the linearity condition is satisfied.

\vskip 0.25cm

We sometimes use sequents with an empty right-hand-side, like $a:A, \Gamma \vdash$, as shorthand for $a:A, \Gamma \vdash 1:I$. Such sequents are easy to produce by using \textit{Uncurrying}, the inverse of the \textit{Curry} rule, together with the constant $1:I$:
\begin{prooftree}
\AxiomC{$\Gamma \vdash a_*:A^*$}
\AxiomC{$\vdash 1:I$}
\RightLabel{\small $\otimes R$}
\BinaryInfC{$\Gamma \vdash a_* \otimes 1 : A^* \otimes I$}
\RightLabel{\small Uncurry}
\UnaryInfC{$a:A, \Gamma \vdash 1:I$}
\end{prooftree}

The language has a structural exchange rule that can be used to swap terms on the left hand side of a sequent. When navigating through a proof tree, instances of the exchange rule can be used to keep track of which terms were swapped and at which points during a derivation:
\begin{prooftree}
\AxiomC{$\Gamma, a:A, b:B, \Delta \vdash c:C$}
\doubleLine\RightLabel{\small Exchange.}
\UnaryInfC{$\Gamma, b:B, a:A, \Delta \vdash c:C$}
\end{prooftree}

Our language also has two unit rules, $\lambda_\Gamma$ and $\rho_\Gamma$, that are used to more accurately represent scalars:
\begin{prooftree}
\AxiomC{$\Gamma \vdash_{S \cup \{i_*:1\}} b:B$}
\doubleLine\RightLabel{\small $\lambda_\Gamma$,}
\UnaryInfC{$i:I, \Gamma \vdash_S b:B$}
\end{prooftree}

\begin{prooftree}
\AxiomC{$\Gamma \vdash_{S \cup \{i_*:1\}} b:B$}
\doubleLine\RightLabel{\small $\rho_\Gamma$.}
\UnaryInfC{$\Gamma, i:I \vdash_S b:B$}
\end{prooftree}

Our language dynamics are defined through soup rules. These rules explain how the relational connections propagate within the soup, giving rise to an operational semantics for a form of "global substitution" that resembles pattern matching on terms. The soup propagation rules, called \textit{bifunctoriality}, \textit{trace} and \textit{cancellation} respectively, are:
\begin{align*}
    S \cup \{a \otimes b : c \otimes d\}    \longrightarrow & \; S \cup \{a:c, b:d\} \\
    S \cup \{x :_A x\}                      \longrightarrow & \; S \cup \{ D_A : 1 \} \\
    S \cup \{1 : 1\}                        \longrightarrow & \; S
\end{align*}

\noindent where $\psi$ is a constant and $x$ is a variable. Our soup rules also contain a \textit{consumption rule}. This rule uses up a relational connection between $\{t : u\}$ to perform a substitution in the typing judgement. Note, however, that the term we are substituting for has to be one that was captured in the scope of the sequent:
\begin{align*}
    \Gamma \vdash_{S \cup \{t : u\}} b:B  \longrightarrow \bigg(\Gamma \vdash_S b:B\bigg) \bigg[ t / u \bigg] && \text{if $u$ does not contain constants,}\\
    \Gamma \vdash_{S \cup \{t : u\}} b:B  \longrightarrow \bigg(\Gamma \vdash_S b:B\bigg) \bigg[ u / t \bigg] && \text{if $t$ does not contain constants.}
\end{align*}

If $t$ and $u$ are both without constants, linearity implies that their constituent variables were all captured in the scope of the original sequent. In such a case, we can choose the way in which we want to substitute. This gives us a symmetric notion of substitution, where our choice of substitution does not affect the typing judgement, as the sequents will be equivalent up to alpha renaming.

\begin{example}
\noindent Consider the following sequent:
\[ f : A^* \otimes B \vdash_{\{ f:c_* \otimes y \}} y : B \]
\noindent The variable $f$ is captured within the scope of the sequent. As such, we can use the \textit{consumption rule} to consume the connection in the soup and substitute $c_* \otimes y$ for $f$ in the rest of the sequent. This will change the sequent to:
\[ c_* \otimes y : A^* \otimes B \vdash y : B \]

\noindent Alternatively, if we had $\alpha$-renamed the original sequent to:
\[ x_{1*} \otimes y_1 : A^* \otimes B \vdash_{\{ x_{1*} \otimes y_1:c_* \otimes y_2 \}} y_2 : B \]
\noindent we could have then used the \textit{bifunctoriality} rule to split the soup connection:
\[ x_{1*} \otimes y_1 : A^* \otimes B \vdash_{\{ x_{1*}:c_*, y_1:y_2 \}} y_2 : B \]
\noindent The first connection of the resulting soup is only consumable in one way, since $c$ is a constant, by substituting $c_*$ for $x_{1*}$. The second soup connection, however, presents us with a choice, since both $y_1$ and $y_2$ are captured in the sequent. One choice will give us
\[ c_* \otimes y_2 : A^* \otimes B \vdash y_2 : B \]
\noindent while the other choice will give us
\[ c_* \otimes y_1 : A^* \otimes B \vdash y_1 : B \]

\noindent Upon closer inspection, one will notice that all three of the resulting sequents are $\alpha$-equivalent.
\end{example}

\begin{definition}[Soup reduction]
We use the term \textit{soup reduction} to refer to the binary relation that extends $\alpha$-equivalence with the sequent transformations that are caused by applying one of the soup rules. Thus, for two sequents $\Gamma \vdash_{S_1} t : T$ and $\Gamma \vdash_{S_2} t : T$, if the soup $S_1$ is transformed into $S_2$ through the application of one of the soup propagation rules, $S_1 \rightarrow S_2$, then we say that one sequent reduces to the other via \textit{soup reduction}. Similarly, if a sequent $J_1$ is transformed into $J_2$ by using the consumption rule to perform a substitution, we say that $J_1$ reduces to $J_2$ via \textit{soup reduction}.
\end{definition}

\begin{definition}[Soup equivalence]
We define a notion of \textit{soup equivalence} as the reflexive, symmetric and transitive closure of soup reduction. In other words, we say that two sequents $J_1$ and $J_2$ are \textit{soup-equivalent}, or \textit{equivalent up to soup-reduction}, when we can convert one to the other by using zero or more instances of $\alpha$-renaming and soup reduction.
\end{definition}

\noindent We can now use the rules that we have defined so far in order to express the computational notion of application:
\begin{definition}[Application in the dagger lambda calculus]
Let $t$ and $f$ be terms such that $t:A$ and $f:A^* \otimes B$ for some types $A$ and $B$. We define the \textit{application} $ft$ as representing a variable $x:B$, along with a connection in our soup. The origins of the application affect the structure of its corresponding soup connection:
\begin{align*}
ft:B, \Gamma \vdash c:C \;\;    & := \;\; x:B, \Gamma \vdash_{\{f:t_* \otimes x\}_*} c:C \\
\Gamma \vdash ft:B \;\;         & := \;\; \Gamma \vdash_{\{f:t_* \otimes x\}} x:B
\end{align*}
For an application originating inside our soup, we have:
\begin{align*}
\{ ft:c \} & := \{ x:c \} \cup \{ f:t_* \otimes x \} \\
\{ c:ft \} & := \{ c:x \} \cup \{ f:t_* \otimes x \}_*
\end{align*}
\end{definition}

\begin{corollary}[Beta reduction]
This immediately allows us to represent a form of \textit{beta reduction}. Instead of relying on an implicit meta concept of substitution, our beta reduction is going to express the binding and reduction of terms by connecting them in the context soup by setting $(a_* \otimes b)t \stackrel{\beta}{\longrightarrow} b$ while causing $\{t:a\}$ or $\{t:a\}_*$ to be added to the relational soup.
\end{corollary}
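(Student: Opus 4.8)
The plan is to unfold the definition of \emph{application} with $f := a_* \otimes b$ (where $a:A$ and $b:B$, so that $a_* \otimes b : A^* \otimes B$ as required) in each of the four positions an application can occupy — to the right of the turnstile, to the left of the turnstile, as the left component of a soup connection, and as the right component — and then normalise the resulting soup connection using the \emph{bifunctoriality} soup rule, the soup congruence $\{a_1:a_2\} \equiv \{a_{2*}:a_{1*}\}$, and the \emph{consumption} rule.

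Concretely, take the case $\Gamma \vdash ft : B$. By the definition of application this equals $\Gamma \vdash_{\{(a_* \otimes b)\,:\,t_* \otimes x\}} x : B$ for a fresh variable $x$. Applying bifunctoriality to the single soup connection gives $\Gamma \vdash_{\{a_* : t_*,\; b : x\}} x : B$, and the congruence rule rewrites $\{a_* : t_*\}$ as $\{t : a\}$ (take $a_1 = a_*$, $a_2 = t_*$ and use involutivity of $(-)_*$). The variable $x$ now occurs exactly twice in the sequent, once on the right of the turnstile and once in $\{b : x\}$, so it is a bound variable; since $x$ contains no constants the consumption rule lets us consume $\{b:x\}$ and substitute $b$ for $x$, leaving $\Gamma \vdash_{\{t : a\}} b : B$, which is exactly $b$ with $\{t:a\}$ added to the soup.

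For the remaining positions the computation is the same modulo one layer of linear negation. For $ft:B, \Gamma \vdash c:C$ the definition gives $x:B, \Gamma \vdash_{\{(a_* \otimes b)\,:\,t_* \otimes x\}_*} c:C$; using $(-)_*$ on terms, $(a_* \otimes b)_* = b_* \otimes a$ and $(t_* \otimes x)_* = x_* \otimes t$, so bifunctoriality produces $\{b_* : x_*,\; a : t\}$, the congruence turns $\{b_* : x_*\}$ into $\{x:b\}$ and $\{a:t\}$ into $\{t:a\}_*$, and consuming $\{x:b\}$ substitutes $b$ for $x$, giving $b:B, \Gamma \vdash_{\{t:a\}_*} c:C$. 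The two soup-internal cases $\{ft:c\}$ and $\{c:ft\}$ unfold to $\{x:c\} \cup \{f:t_* \otimes x\}$ and $\{c:x\} \cup \{f:t_* \otimes x\}_*$ respectively and reduce identically — once the soup is read inside an ambient sequent so that the fresh $x$ counts as captured — yielding $\{b:c\} \cup \{t:a\}$ and $\{c:b\} \cup \{t:a\}_*$.

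The main obstacle is purely bookkeeping rather than conceptual: one must keep the orientation of every soup connection straight through the De~Morgan rewrites, and, more importantly, justify at each step that the auxiliary variable $x$ introduced by the application is \emph{captured} in the relevant scope, so that the side condition of the consumption rule is met. This last point is where the linearity remark (every variable of a sequent occurs exactly twice) does the real work; everything else is a mechanical application of the soup rules, and the $\{t:a\}$ versus $\{t:a\}_*$ alternative in the statement is simply the shadow of whether the application sits on the input or the output side of the sequent.
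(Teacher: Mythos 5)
Your proposal is correct and follows essentially the same route as the paper's own proof: unfold the application to get the connection $\{a_* \otimes b : t_* \otimes x\}$ (or its negation), split it by bifunctoriality into $\{a_*:t_*, b:x\}$, rewrite $\{a_*:t_*\}$ to $\{t:a\}$ via the congruence $\{a_1:a_2\} \equiv \{a_{2*}:a_{1*}\}$, and consume $\{b:x\}$ to turn $x$ into $b$. Your treatment is somewhat more careful than the paper's about the four positions an application can occupy and about verifying that $x$ is captured, but the underlying argument is identical.
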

\begin{proof}
This is derived from our definition of application because $(a_* \otimes b)t$ represents a variable $x$ along with one of two possible connections in our soup. The soup connection can be manipulated into:
\[ \{a_* \otimes b : t_* \otimes x\}   \rightarrow \{a_*:t_*, b:x\}    \rightarrow \{t:a\} \cup \{b:x\} \]
\[ \{a_* \otimes b : t_* \otimes x\}_* \rightarrow \{a_*:t_*, b:x\}_*  \rightarrow \{t:a\}_* \cup \{x:b\} \]
The connection between $b$ and $x$ can then be consumed to change the variable $x$ into a $b$. All that remains is $\{t:a\}$ or $\{t:a\}_*$.
\end{proof}

Now that all of the language's rules are in place, we can make up for its apparent lack of a $\lambda$ operator by defining it to be a notational shorthand:

\begin{definition}[Lambda abstraction in the dagger lambda calculus]
$\lambda a.b := a_* \otimes b$ and $A \multimap B := A^* \otimes B$
\end{definition}

\noindent The following combinators are used in the rest of this dissertation:
\begin{align*}
    id_A    & := \lambda a.a \textrm{ (where $a:A$)}\\
    \bar{b} & := \lambda g.\lambda f.\lambda a.g(fa)\\
    \bar{s} & := \lambda (a \otimes b).(b \otimes a)\\
    \bar{t} & := \lambda f.\lambda g.\lambda(x_1 \otimes x_2).(fx_1 \otimes gx_2)
\end{align*}

\begin{theorem}[Admissibility of $\multimap E$]
We can also use the definition of application to demonstrate that an implication elimination rule ($\multimap E$) is admissible within our set of rules:
\end{theorem}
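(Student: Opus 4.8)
The plan is to show that the rule
\[ \frac{\Gamma \vdash_{S_1} f : A \multimap B \quad\quad \Delta \vdash_{S_2} a : A}{\Gamma, \Delta \vdash_{?} fa : B} \]
is derivable from the primitive rules, using the already-established definition of \emph{application}. First I would unfold $A \multimap B = A^* \otimes B$, so the left premise is really $\Gamma \vdash_{S_1} f : A^* \otimes B$. The idea is to build, from the two premises, a sequent in which a fresh variable $x : B$ appears on the right together with the soup connection $\{f : a_* \otimes x\}$ (up to negation), which is exactly what the definition of $\Gamma \vdash fa : B$ abbreviates.

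The key steps, in order: (1) Start from the right premise $\Delta \vdash_{S_2} a : A$ and apply the \textbf{Negation} rule to obtain $a_* : A^* \vdash_{(S_2)_*} \ldots$; more usefully, keep $\Delta \vdash_{S_2} a:A$ and combine it via $\otimes R$ with an instance of \textbf{Id}, namely $x : B \vdash x : B$, to get a sequent whose right-hand term is $a \otimes x : A \otimes B$ over the soup $S_2$. (2) Apply \textbf{Negation} to this, turning $a \otimes x : A \otimes B$ into $(a\otimes x)_* = x_* \otimes a_* : B^* \otimes A^*$ — I will instead arrange the $\otimes R$ so that the tensor comes out as $a_* \otimes x : A^* \otimes B$ after a negation/exchange, matching the shape needed in the application connection. (3) Now feed $f$ in: use \textbf{Cut} between the left premise $\Gamma \vdash_{S_1} f : A^* \otimes B$ and the sequent just built (which has a variable of type $A^* \otimes B$ on its left after an appropriate $\otimes L$ and renaming), producing $\Gamma, \Delta \vdash$ with right-hand side $x : B$ and soup $S_1 \cup S_2 \cup \{f : a_* \otimes x\}$. (4) Recognise this sequent, by the \textbf{Definition of application}, as precisely $\Gamma, \Delta \vdash_{S_1 \cup S_2} fa : B$. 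Throughout, the \emph{linearity constraint} on $\otimes R$ and \textbf{Cut} is discharged by $\alpha$-renaming $\Delta$ and the fresh variable $x$ so that no variable is shared.

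The main obstacle I expect is bookkeeping the negation and the swap in the linear-negation convention $(a \otimes b)_* := b_* \otimes a_*$: getting the tensor to land as $a_* \otimes x$ rather than $x_* \otimes a_*$ requires inserting \textbf{Exchange} and \textbf{Negation} steps in the right spots, and one must check that the soup $S_2$ transforms correctly ($S_2$ versus $(S_2)_*$) at each \textbf{Negation}. A secondary point to verify is that the variable $x$ introduced by \textbf{Id} is genuinely \emph{captured} in the final sequent, so that the connection $\{f : a_* \otimes x\}$ is of the right form for the application abbreviation; this follows because $x$ occurs once in the soup and once on the right of the turnstile, hence twice in the sequent, so it is a bound variable. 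Once the shapes match, the conclusion is immediate from the definition, with no appeal to soup reduction needed — the rule is admissible purely structurally.
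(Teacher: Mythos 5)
Your overall architecture is sound and matches the paper's in spirit: build an auxiliary sequent that exposes the type $A^* \otimes B$ in antecedent position, Cut it against the function premise so that the connection $\{f : a_* \otimes x\}$ lands in the soup, and read the result off as $fa$ by the definition of application. The gap is at your step (2), which is exactly the step you flag as the ``main obstacle'' and then leave unresolved: Negation and Exchange cannot do what you need there. Negation sends $a:A \vdash_S b:B$ to $a_*:A^* \vdash_{S_*} b_*:B^*$ --- it dualises the \emph{entire} sequent (so it would also turn $S_2$ into $(S_2)_*$ and dualise $\Delta$), it keeps succedents as succedents, and as stated it applies only to single-antecedent sequents. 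Applied to anything built from $\Delta, x{:}B \vdash_{S_2} a \otimes x : A \otimes B$ it can at best leave $x_* \otimes a_* : B^* \otimes A^*$ on the \emph{right} of the turnstile; it will never move the argument term, negated, into antecedent position. Exchange only permutes antecedents. So the combination you propose genuinely fails at the pivotal step.

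The rule that does the job is the bidirectional Curry rule read upward: since $a \otimes x \equiv (a_*)_* \otimes x$ and $A \otimes B \equiv (A^*)^* \otimes B$, Uncurry applied to $\Delta, x{:}B \vdash_{S_2} a \otimes x : A \otimes B$ yields $a_* : A^*, \Delta, x{:}B \vdash_{S_2} x : B$; then Exchange and $\otimes L$ give $a_* \otimes x : A^* \otimes B, \Delta \vdash_{S_2} x : B$, and the Cut with $\Gamma \vdash_{S_1} f : A^* \otimes B$ produces exactly the connection $\{f : a_* \otimes x\}$. With that repair your derivation closes, and it is in fact slightly slicker than the paper's, which instead Cuts the function premise against an $\eta$-expanded identity $a_* \otimes b : A^* \otimes B \vdash a_* \otimes b : A^* \otimes B$ built from fresh Id axioms, Uncurries, Cuts with the argument premise to obtain the extra connection $\{t:a\}$, and then must invoke a soup consumption step to merge $\{t:a,\; f:a_* \otimes b\}$ into $\{f : t_* \otimes b\}$. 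Your closing claim that no appeal to soup reduction is needed is therefore true only of the repaired Uncurry route, not of the route you actually describe --- and not of the paper's own proof.
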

\begin{prooftree}
    \AxiomC{$\Gamma \vdash_{S_1} t:A$}
    \AxiomC{$\Delta \vdash_{S_2} f:A^* \otimes B$}
    \AxiomC{}
    \UnaryInfC{$a:A \vdash a : A$}
    \UnaryInfC{$a_*:A^* \vdash a_*:A^*$}
    \AxiomC{}
    \UnaryInfC{$b:B \vdash b : B$}
    \BinaryInfC{$a_*:A^*, b:B \vdash a_* \otimes b:A^* \otimes B$}
    \UnaryInfC{$a_* \otimes b:A^* \otimes B \vdash a_* \otimes b:A^* \otimes B$}
    \RightLabel{\footnotesize Cut}
    \BinaryInfC{$\Delta \vdash_{S_2 \cup \{f:a_* \otimes b\}} a_* \otimes b:A^* \otimes B$}
    \RightLabel{\footnotesize Uncurry}
    \UnaryInfC{$a:A, \Delta \vdash_{S_2 \cup \{f:a_* \otimes b\}} b:B$}
    \RightLabel{\footnotesize Cut}
    \BinaryInfC{$\Gamma, \Delta \vdash_{S_1 \cup S_2 \cup \{t:a, f:a_* \otimes b\}} b:B$}
    \UnaryInfC{$\Gamma, \Delta \vdash_{S_1 \cup S_2 \cup \{f:t_* \otimes b\}} b:B$}
    \UnaryInfC{$\Gamma, \Delta \vdash_{S_1 \cup S_2} ft:B$}
\end{prooftree}

We define some additional notational conventions, so that we can more easily describe the reversal in the causal order of computation:
\begin{definition}[Complex conjugation]
Let $f: A^* \otimes B$ be an arbitrary function. As a notational convention, we set $f^* := \bar{s} f : B \otimes A^*$.
\end{definition}

\begin{theorem}[Admissibility of $\dag$-flip]
We can use the language's rules and definitions in order to admit a new structural rule called the $\dag$-flip. This rule contains all the computational symmetry that we will later need in order to model the \textit{dagger functor}:
\end{theorem}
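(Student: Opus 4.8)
The plan is to give, for an arbitrary derivable sequent, an explicit derivation of its $\dag$-flipped form using only the primitive rules (Id, Negation, $\otimes R$, $\otimes L$, Cut, Curry) together with the soup rules, in exactly the style of the derivation tree already used above for the admissibility of $\multimap E$. The conceptual guide is the factorisation of the adjoint in a dagger compact category as the transpose of the conjugate, $f^\dag = \bar{f}^{\,T}$ modulo the identifications $A^{**}\cong A$: the conjugate $\bar{f}$ is precisely what the Negation rule $\frac{a:A\vdash_S b:B}{a_*:A^*\vdash_{S_*}b_*:B^*}$ computes, while the transpose is nothing but the compact (cup/cap) structure, which — as the $\multimap E$ derivation already demonstrates — is recoverable from Id together with Curry and Uncurry.

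Concretely I would proceed in three moves. First, put the hypothesis sequent into \emph{name form}: repeatedly apply the bidirectional $\otimes L$ rule (interleaved with Exchange, to re-bracket) so that the whole context becomes a single term $\bigotimes\Gamma$ of some type $U$, then apply Curry, obtaining $\vdash_S (\bigotimes\Gamma)_*\otimes t : U^*\otimes T$. Second, apply the Negation rule; using the De Morgan law $(a\otimes b)_* := b_*\otimes a_*$, involutivity $(a_*)_*\equiv a$, and $I\equiv I^*$, this rewrites the sequent to $\vdash_{S_*} t_*\otimes\bigotimes\Gamma : T^*\otimes U$ — the name of $f^\dag$. Third, Uncurry to move $t$ back across the turnstile, landing on the flipped sequent with $t$ on the left and a re-bracketing of $\Gamma$ on the right, now carrying the soup $S_*$; the final step is to check that this $S_*$ is soup-equivalent to the soup named in the rule's conclusion, which follows from the soup propagation rules together with the congruence $\{a_1:a_2\}\equiv\{a_{2*}:a_{1*}\}$.

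The hard part, I expect, is bookkeeping rather than anything conceptual: one has to track how the $*$-operator threads through the nested tensor $\bigotimes\Gamma$ — so that $(\bigotimes\Gamma)_*$ is genuinely the reverse-order tensor of the starred components, which needs $\otimes L$ and Exchange applied in the right order — and one has to verify that the $\alpha$-renamings forced by the linearity constraint whenever Cut re-merges sequents leave the judgement unchanged up to the already-defined notions of $\alpha$-equivalence and soup-equivalence. A secondary subtlety concerns scalars: if one prefers to realise the transpose via explicit cups and caps rather than the direct Curry--Negation--Uncurry route, a cap may meet a cup over some type $A$, so that the \emph{trace} soup rule $\{x:_A x\}\longrightarrow\{D_A:1\}$ injects the dimension constant $D_A$, and one must then confirm that the constant produced is exactly the one predicted by the dagger compact axiom $\sigma_{A,A^*}\circ\varepsilon_A^\dag=\eta_A$, so that the flipped sequent comes out \emph{equal} to the intended one and not merely equal up to an unwanted scalar.
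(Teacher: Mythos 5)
Your proposal is correct and is essentially the paper's own argument: the $\dag$-flip is obtained by composing the Negation rule with the bidirectional Curry rule (plus Exchange), exactly the ``transpose of the conjugate'' factorisation you describe, the only difference being that the paper applies Negation first and then Uncurries, Exchanges and Curries, whereas you Curry to name form first and then negate. Since the rule as stated has a single antecedent $a:A$, the general-context bookkeeping and the cup/cap scalar worry you raise do not arise in the paper's derivation.
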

\begin{prooftree}
    \AxiomC{$a:A \vdash_{S} b:B$}
    \RightLabel{\footnotesize Negation}
    \UnaryInfC{$a_*:A^* \vdash_{S_*} b_*:B^*$}
    \RightLabel{\footnotesize Uncurry}
    \UnaryInfC{$b:B, a_*:A^* \vdash_{S_*}$}
    \RightLabel{\footnotesize Exchange}
    \UnaryInfC{$a_*:A^*, b:B \vdash_{S_*}$}
    \RightLabel{\footnotesize Curry}
    \UnaryInfC{$b:B \vdash_{S_*} a:A$}
\end{prooftree}

\begin{theorem}[Interchangeability of $\dag$-flip and Negation]
Alternatively, we could have defined the language by including $\dag$-flip in our initial set of sequent rules. That would have allowed us to admit the Negation rule as a derived rule:
\end{theorem}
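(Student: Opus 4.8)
The plan is to exploit the near-symmetry between this statement and the previous theorem (Admissibility of $\dag$-flip), where $\dag$-flip was obtained from Negation by the chain Negation, Uncurry, Exchange, Curry. Here I want to run an analogous chain in the opposite direction and recover Negation from $\dag$-flip. Concretely, given a derivable sequent $a:A \vdash_S b:B$, I would build $a_*:A^* \vdash_{S_*} b_*:B^*$ as follows. First apply Curry with empty left context to get $\vdash_S a_* \otimes b : A^* \otimes B$; this is the crucial move, since it exposes a tensor on the right so that $\dag$-flip followed by $\otimes L$ can rotate the stars onto both sides. Next insert a trivial unit on the left — using $\lambda_\Gamma$ together with the cancellation of $\{1:1\}$ in the soup — to obtain $1:I \vdash_S a_* \otimes b : A^* \otimes B$. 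Now apply $\dag$-flip to reach $a_* \otimes b : A^* \otimes B \vdash_{S_*} 1:I$, then $\otimes L$ to split the left term into $a_*:A^*, b:B \vdash_{S_*} 1:I$, then Exchange to get $b:B, a_*:A^* \vdash_{S_*} 1:I$, and finally Curry once more, peeling off $b:B$, to obtain $a_*:A^* \vdash_{S_*} b_* \otimes 1 : B^* \otimes I$, which is $a_*:A^* \vdash_{S_*} b_*:B^*$ up to the right-unit identification $X \otimes I \equiv X$ — the same identification already used silently in the previous theorem's derivation.

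Then I would verify that this derivation really produces the Negation conclusion. The types are read directly off the chain above: each use of Curry, $\otimes L$ and Exchange performs exactly the bookkeeping on $A^*$, $B^*$ and the unit that is claimed. For the soup, the only rule in the chain that touches it is $\dag$-flip, which turns $S$ into $S_*$; every other step leaves the soup alone, the $\lambda_\Gamma$/cancellation manoeuvre only adding and then removing $\{1:1\}$. Hence the end sequent carries the soup $S_*$, exactly as the Negation rule demands, and the fact that $\dag$-flip is used exactly once guarantees we do not overshoot to $S_{**} = S$.

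The main obstacle is not depth but getting the small structural details right: (i) seeing that one must Curry the sequent first rather than trying to negate in place — without a tensor on the right there is no handle for $\dag$-flip plus $\otimes L$ to act on, and every naive attempt merely cycles within the $\dag$-flip orbit $\{(a:A\vdash b:B),(b:B\vdash a:A)\}$; (ii) the monoidal-unit bureaucracy, i.e.\ justifying the insertion of $1:I$ on the left via $\lambda_\Gamma$ and the trivial soup connection, and the right-unit collapse $b_*\otimes 1\equiv b_*$ at the end; and (iii) confirming $(S_*)_* = S$ and that $\dag$-flip is applied exactly once, so the soup lands on $S_*$. An alternative, more pedestrian route is to prove that Negation is admissible by induction on the derivation of $a:A\vdash_S b:B$, handling each remaining primitive rule (Id, $\otimes R$, $\otimes L$, Cut, Curry, Exchange, $\lambda_\Gamma$, $\rho_\Gamma$, the soup rules and $\dag$-flip) in turn; there the $\dag$-flip case is immediate — apply $\dag$-flip to the inductive hypothesis and use $(S_*)_* = S$ — but the $\otimes R$ and Cut cases need bookkeeping with context negation, so the direct derivation above is preferable.
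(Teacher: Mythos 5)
Your derivation works and lands on the right sequent with the right soup, but it is longer than it needs to be, and your obstacle (i) is not actually an obstacle. The paper's proof is the exact mirror of the preceding theorem: apply $\dag$-flip directly to $a:A \vdash_S b:B$ to get $b:B \vdash_{S_*} a:A$, then Uncurry (reading the right-hand side $a:A$ as $(a_*)_* \otimes 1 : (A^*)^* \otimes I$) to obtain $a_*:A^*,\, b:B \vdash_{S_*}$, then Exchange and Curry to peel $b$ back to the right as $b_*:B^*$. So one does not need to ``expose a tensor on the right'' before flipping --- the Uncurry/Exchange/Curry tail is precisely what breaks out of what you call the $\dag$-flip orbit. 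Your alternative of Currying first, inserting $1:I$ on the left, flipping, and then using $\otimes L$ reaches the same place, at the cost of the extra $\lambda_\Gamma$ manoeuvre; note that this step requires running the cancellation rule backwards (deriving $\vdash_{S\cup\{1:1\}} a_*\otimes b$ from $\vdash_S a_*\otimes b$), which is only licensed up to soup \emph{equivalence} rather than by a forward reduction, so the paper's route is also cleaner on that point. Your soup bookkeeping --- a single application of $\dag$-flip, hence $S_*$ and not $S_{**}$ --- is exactly right and is the substance of the theorem in both versions.
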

\begin{prooftree}
    \AxiomC{$a:A \vdash_{S} b:B$}
    \RightLabel{\footnotesize $\dag$-flip}
    \UnaryInfC{$b:B \vdash_{S_*} a:A$}
    \RightLabel{\footnotesize Uncurry}
    \UnaryInfC{$a_*:A^*, b:B \vdash_{S_*}$}
    \RightLabel{\footnotesize Exchange}
    \UnaryInfC{$b:B, a_*:A^* \vdash_{S_*}$}
    \RightLabel{\footnotesize Curry}
    \UnaryInfC{$a_*:A^* \vdash_{S_*} b_*:B^*$}
\end{prooftree}

\subsection{Scalars}
\label{Subsection:Scalars}
Similarly to the attachable monoid that is described in \cite{Abr05} for multiplying scalars, we can optionally define a multiplication operation for the scalars in the dagger lambda calculus. This is not part of the structure that is necessary to model dagger compact categories computationally, hence the designation \textit{optional}, but it does provide a good example of how connections propagate in the soup:
\begin{definition}[Scalar multiplication]
For any two scalars $m:I$ and $n:I$, we define a multiplication operation $m \cdot n : I$ such that:
\[ m \cdot 1 = 1 \cdot m = m \]
\noindent and
\[ \{ m \cdot p : n \cdot q \} := \{ m:n, p:q \} \]
\end{definition}

\noindent The operation features a number of properties. To help the reader get more accustomed to the way things propagate in the soup, we will demonstrate some of them as an example. First of all, scalar multiplication is \textit{associative}:
\begin{lemma}[Associativity of multiplication]
$ (a \cdot b) \cdot c = a \cdot (b \cdot c) $
\end{lemma}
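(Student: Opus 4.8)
The plan is to prove the equality by manipulating a soup connection between the two sides and showing it reduces, via the soup rules, to a configuration that is symmetric in the two bracketings. Concretely, I would take an arbitrary sequent in which $(a \cdot b) \cdot c$ occurs, connect it in the soup to $a \cdot (b \cdot c)$, i.e.\ form $\{(a \cdot b)\cdot c : a \cdot (b\cdot c)\}$, and argue that this connection is soup-equivalent to one built only from the ``atomic'' connections $\{a : a\}$, $\{b : b\}$, $\{c : c\}$ together with copies of $\{1 : 1\}$. Since the latter is manifestly insensitive to how the product was bracketed, the consumption rule then lets us substitute one bracketing for the other in any sequent, which is exactly the desired equality of terms.

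The decomposition uses only the two defining facts about scalar multiplication. First, using the unit law $m = m \cdot 1 = 1 \cdot m$, I rewrite each factor so that the two sides become products of matching shape: the left side $(a \cdot b) \cdot c$ stays as it is, while the right side $a \cdot (b \cdot c)$ is massaged to $(a \cdot 1) \cdot (b \cdot c)$, inserting a unit wherever a split on one side would otherwise lack a partner on the other. Second, I apply the splitting rule $\{m \cdot p : n \cdot q\} \longrightarrow \{m : n,\ p : q\}$ repeatedly to peel the products apart. Carried out symmetrically on both bracketings, this produces in each case the same multiset of connections $\{a : a,\ b : b,\ c : c\}$ plus some trivial $\{1 : 1\}$ connections, the latter being discarded by the cancellation rule $S \cup \{1 : 1\} \longrightarrow S$. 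Because both reductions land on the identical soup, the two sequents are soup-equivalent, hence $(a\cdot b)\cdot c$ and $a\cdot(b\cdot c)$ are interchangeable and therefore equal.

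The step I expect to be the main obstacle is the bookkeeping that makes the two bracketings genuinely meet: one has to place the unit factors so that the splitting rule fires in lockstep on both sides and leaves exactly matching residues, rather than, say, $\{a : a\cdot 1\}$ on one side and $\{a\cdot 1 : a\}$ on the other. Two minor points need care here. Since $a$, $b$, $c$ are arbitrary scalar \emph{terms} and not necessarily variables, I cannot appeal to the trace rule $\{x :_A x\} \longrightarrow \{D_A : 1\}$ to clear a residual $\{a : a\}$; instead the argument must be arranged so that whatever residue survives is literally identical on both sides. And the soup congruence $\{t : u\} \equiv \{u_* : t_*\}$ (with $1 = 1_*$) is what lets a connection such as $\{1 : b\}$ arising from one grouping be identified with the $\{b : 1\}$ arising from the other. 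Once these alignments are pinned down, the rest is the routine application of the splitting and cancellation rules sketched above.
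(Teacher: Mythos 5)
Your computation uses exactly the right ingredients --- the unit law to pad the products, the splitting rule $\{m\cdot p : n\cdot q\}\to\{m:n,\,p:q\}$ to peel them apart, and cancellation to discard $\{1:1\}$ --- but you route it through the cross-connection $\{(a\cdot b)\cdot c : a\cdot(b\cdot c)\}$, whereas the paper pairs each bracketing with the \emph{unit} instead: since $1=(1\cdot1)\cdot1=1\cdot(1\cdot1)$, both $\{(a\cdot b)\cdot c:1\}$ and $\{a\cdot(b\cdot c):1\}$ split down to the identical flat soup $\{a:1,\,b:1,\,c:1\}$, and the equality of soups is the conclusion. The paper's choice of partner dissolves the alignment problem you flag as the main obstacle: against $1$ the two tree shapes never have to be matched against each other, so no mismatched residues like $\{b:1\}$ versus $\{1:b\}$ arise (in your version these do arise and must be reconciled via the congruence $\{1:b\}\equiv\{b_*:1\}$ or by re-fusing $\{b:1,1:b\}$ into $\{b:b\}$). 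It also removes the need for your final appeal to the consumption rule, which is the one genuinely fragile step in your argument: consumption only fires when one side of the connection is constant-free, and the scalars $a,b,c$ here are arbitrary terms, possibly constants, so you cannot in general consume $\{a:a\}$ or the cross-connection to effect the substitution. Stating the result as an equality of the soups $\{(a\cdot b)\cdot c:1\}=\{a\cdot(b\cdot c):1\}$, as the paper does, is both what the surrounding lemmas (commutativity, sesquilinearity) use and what your reduction actually establishes once rearranged.
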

\begin{proof}
\begin{align*}
    \{ (a \cdot b) \cdot c : 1 \}   & = \{ (a \cdot b) \cdot c : (1 \cdot 1) \cdot 1 \} \\
                                    & = \{ a:1, b:1, c:1 \} \\
                                    & = \{ a \cdot (b \cdot c) : 1 \cdot (1 \cdot 1) \} \\
                                    & = \{ a \cdot (b \cdot c) : 1 \}
\end{align*}
\end{proof}

\noindent The multiplication operation is also \textit{commutative}:
\begin{lemma}[Commutativity of multiplication]
$m \cdot n = n \cdot m$
\end{lemma}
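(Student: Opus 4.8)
The plan is to mimic the proof of associativity that precedes it: rather than manipulating the scalars $m \cdot n$ and $n \cdot m$ directly, I would compare the soup connections they induce with the unit scalar $1$, and show that both reduce to the very same soup. First I would record the auxiliary identity $1 \cdot 1 = 1$, obtained by specialising the unit law $m \cdot 1 = 1 \cdot m = m$ to $m = 1$; this is what lets me rewrite the lone $1$ on the right of a connection as $1 \cdot 1$, the shape required before the defining rule $\{ m \cdot p : n \cdot q \} := \{ m:n, p:q \}$ can fire.

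The computation itself is then immediate:
\begin{align*}
\{ m \cdot n : 1 \}  & = \{ m \cdot n : 1 \cdot 1 \} \\
                     & = \{ m : 1, \; n : 1 \} \\
                     & = \{ n : 1, \; m : 1 \} \\
                     & = \{ n \cdot m : 1 \cdot 1 \} \\
                     & = \{ n \cdot m : 1 \}.
\end{align*}
The middle equality is nothing more than the observation that the soup is a \emph{set} of connections, so the order in which its two pairs are listed carries no information. Since $m \cdot n$ and $n \cdot m$ therefore induce the same connection to $1$ up to soup-equivalence, they denote the same scalar, which is exactly $m \cdot n = n \cdot m$.

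The only genuinely delicate point — and one already taken for granted in the associativity proof — is the move that reads equality of scalars off equality of their soup connections to $1$: concretely, one needs that $\{ s : 1 \}$ determines $s$ up to the congruence generated by $\alpha$-renaming and the soup rules (including $\{a_1:a_2\} \equiv \{a_{2*}:a_{1*}\}$, which requires $1 = 1_*$, i.e.\ negation invariance of $I$). Everything else is a routine application of the defining equations for scalar multiplication, so I expect no real obstacle beyond making that bookkeeping step explicit.
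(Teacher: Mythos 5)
Your proposal is correct and follows essentially the same route as the paper: rewrite $1$ as $1\cdot 1$, apply the defining rule $\{m\cdot p : n\cdot q\} := \{m:n, p:q\}$, use the fact that the soup is a set, and reassemble. The extra remarks on the unit law and on reading scalar equality off soup connections are sensible bookkeeping but do not change the argument.
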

\begin{proof}
\begin{align*}
    \{ m \cdot n : 1 \} & = \{ m \cdot n : 1 \cdot 1 \} \\
                        & = \{ m:1, n:1 \} \\
                        & = \{ n:1, m:1 \} \\
                        & = \{ n \cdot m : 1 \cdot 1 \} \\
                        & = \{ n \cdot m : 1 \}
\end{align*}
\end{proof}

\noindent It is \textit{sesquilinear}:
\begin{lemma}[Sesquilinearity of scalar connections]
$ \{ m : n \} = \{ m \cdot n_* : 1 \} $
\end{lemma}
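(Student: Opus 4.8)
The plan is to argue exactly in the style of the two preceding lemmas (Associativity and Commutativity of multiplication), manipulating a single soup connection by repeated application of defining equations. The ingredients I would use are: the defining rewrite for scalar multiplication on connections, $\{m \cdot p : n \cdot q\} := \{m:n, p:q\}$, applied in \emph{both} directions; the unit law $1 \cdot m = m \cdot 1 = m$; the fact that the scalar $1$ is negation-invariant, so $1_* = 1$; and the soup congruence $\{a_1:a_2\} \equiv \{a_{2*}:a_{1*}\}$ built into the definition of the soup.

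First I would insert units into the left-hand connection, writing $\{m:n\} = \{m \cdot 1 : 1 \cdot n\}$, and then read the multiplication rule in the contracting direction (matching $a \mapsto m$, $p \mapsto 1$, $b \mapsto 1$, $q \mapsto n$) to split it into the two connections $\{m:1,\ 1:n\}$. Next I would apply the soup congruence to the second of these, $\{1:n\} \equiv \{n_*:1_*\} = \{n_*:1\}$, leaving $\{m:1,\ n_*:1\}$. Finally I would recombine these via the multiplication rule once more, now with $m$ and $n_*$ in the two left slots and $1$, $1$ in the right slots, obtaining $\{m \cdot n_* : 1 \cdot 1\}$, which simplifies to $\{m \cdot n_* : 1\}$. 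In display form:
\begin{align*}
\{m:n\} &= \{m \cdot 1 : 1 \cdot n\} = \{m:1,\ 1:n\} \\
&= \{m:1,\ n_*:1\} = \{m \cdot n_* : 1 \cdot 1\} = \{m \cdot n_* : 1\}.
\end{align*}

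Since each step is a literal instance of a defining equation, there is no genuine difficulty here; the only thing that needs attention is the bookkeeping around the term-level negation — in particular remembering that $1_*$ collapses to $1$ because $I \equiv I^*$ — and being careful to line up the three positions of the multiplication rule with the intended arguments when it is used in the contracting direction.
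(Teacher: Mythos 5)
Your proposal is correct and follows exactly the same chain of equalities as the paper's own proof: insert units, split with the multiplication rule, flip $\{1:n\}$ to $\{n_*:1\}$ via the soup congruence and negation invariance of $1$, then recombine and simplify. Nothing to add.
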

\begin{proof}
\begin{align*}
    \{ m : n \} & = \{ m \cdot 1 : 1 \cdot n \} \\
                & = \{ m:1, 1:n \} \\
                & = \{ m:1, n_*:1 \} \\
                & = \{ m \cdot n_* : 1 \cdot 1 \} \\
                & = \{ m \cdot n_* : 1 \} \\
\end{align*}
\end{proof}

\noindent Finally, it is easy to deduce that the dimension of a tensor of types distributes into a product of dimensions:
\begin{corollary}[Dimension multiplication]
$ \{ D_A \cdot D_B : 1 \} = \{ D_{A \otimes B} : 1 \} $
\end{corollary}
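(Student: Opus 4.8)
The plan is to run the trace rule backwards: rather than building $D_{A\otimes B}$ up from $D_A$ and $D_B$, I would introduce a fresh variable $x$ of type $A\otimes B$ and note that the soup fragment $\{x :_{A\otimes B} x\}$ collapses, by the \emph{trace} rule, to $\{D_{A\otimes B} : 1\}$, while the \emph{same} fragment — after $\alpha$-renaming $x$ to a composite — also reduces to $\{D_A \cdot D_B : 1\}$.

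First I would $\alpha$-rename the bound variable $x$ into a bundle of bound variables $a\otimes b$ with $a:A$ and $b:B$ both fresh. This is a legitimate application of $\alpha$-renaming on variables, since $a\otimes b$ has type $A\otimes B$, contains no constants, and its constituent variables do not already appear in the sequent; it must of course be performed on both occurrences of $x$ in the connection. Hence $\{x :_{A\otimes B} x\}$ is $\alpha$-equivalent to $\{a\otimes b : a\otimes b\}$.

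Next I would apply \emph{bifunctoriality} to split this into $\{a :_A a,\; b :_B b\}$, and then apply the \emph{trace} rule to each of the two self-connections to obtain $\{D_A : 1,\; D_B : 1\}$. Finally, the defining equation of scalar multiplication, $\{m\cdot p : n\cdot q\} := \{m:n,\; p:q\}$, instantiated at $m=D_A$, $p=D_B$, $n=q=1$, together with the unit law $1\cdot 1 = 1$, identifies $\{D_A : 1,\; D_B : 1\}$ with $\{D_A\cdot D_B : 1\cdot 1\} = \{D_A\cdot D_B : 1\}$. Chaining these moves — all of them either $\alpha$-renaming or soup-reduction steps — gives $\{D_{A\otimes B} : 1\} = \{D_A\cdot D_B : 1\}$.

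There is no real obstacle here, only two points of care: the trace rule is one-directional, so the claimed equality must be understood up to soup-equivalence (the symmetric, transitive closure of soup reduction), which is exactly the setting in which the earlier scalar lemmas are already phrased; and the step $D_I = 1$ is used implicitly via the scalar unit law. I would present the final proof as a short chain of equalities in the style of the preceding lemmas on scalar multiplication.
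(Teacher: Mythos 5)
Your proposal is correct and follows essentially the same route as the paper: both arguments pass through the chain $\{D_A\cdot D_B : 1\} = \{D_A:1, D_B:1\} = \{a:_A a,\ b:_B b\} = \{a\otimes b :_{A\otimes B} a\otimes b\} = \{D_{A\otimes B}:1\}$ using the definition of scalar multiplication, the trace rule, and bifunctoriality. Your extra care about rule directionality and the $\alpha$-renaming of $x$ into $a\otimes b$ is a presentational refinement of the same argument, not a different proof.
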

\begin{proof}
\begin{align*}
    \{ D_A \cdot D_B : 1 \} & = \{ D_A:1, D_B:1 \}                              & = \{ a :_A a, b :_B b \} \\
                            & = \{ a \otimes b :_{A \otimes B} a \otimes b \}   & = \{ D_{A \otimes B} : 1 \}
\end{align*}
\end{proof}

\section{Proofs of properties}
\label{Section:Proofs of properties}
Many lambda calculi suffer from being complicated, which makes it hard to prepare and follow proofs about their properties. Our language is tractable and consists of a minimal set of rules. As a result, most of the language's properties are easy to prove by structural inductions. Throughout the rest of this section, we prove that our lambda calculus satisfies most of the really important properties a calculus can have, namely subject reduction, confluence, strong normalisation and consistency.

\subsection{Subject reduction}
\label{Subsection:Subject reduction}
The first thing we have to prove, in order to demonstrate that our typing system is well defined, is the consistency of our typing dynamics. In other words, we have to verify that the way in which relational connections propagate through our soup preserves type assignments. This is easy to observe because our soup only connects \textit{equityped} terms. Pair consumption substitutes a term for another of the same type, thus preserving types.

\begin{theorem}[Subject reduction]
Let $J_1$ and $J_2$ be two typing judgements such that $J_1 = \Gamma \vdash_S t_1:A_1$ and $J_2 = \Delta \vdash_{S^\prime} t_2:A_2$. Suppose that these two judgements are such that we can use a soup reduction rule $S \longrightarrow S^\prime$ to reduce one to the other: $J_1 \longrightarrow J_2$. Then, the reduction will not alter type assignments in any way: $types(\Gamma) = types(\Delta)$ and $A_1 \equiv A_2$.
\end{theorem}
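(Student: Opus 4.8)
The plan is to argue by case analysis on the single soup reduction step witnessing $J_1 \longrightarrow J_2$. Recall that soup reduction is generated by $\alpha$-renaming together with the three soup propagation rules (bifunctoriality, trace and cancellation) and the consumption rule; since equality of types is an equivalence relation, and the type function on terms is compositional --- $type(s_1 \otimes s_2) \equiv type(s_1) \otimes type(s_2)$ and $type(s_*) \equiv type(s)^*$, with $*$ involutive and distributing over $\otimes$ by De Morgan --- it is enough to check that each of these generating steps preserves the types of the terms in $\Gamma$ and the type of the term on the right of the turnstile. I would also carry along, as a simultaneous invariant, the fact that the soup only ever relates equityped terms, since this is exactly what makes the soup rules meaningful.

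For an $\alpha$-renaming step, the side conditions in the definitions of $\alpha$-renaming on variables and on terms already require the replacing variable or term to have the same type as the thing it replaces; by compositionality of $type$, every term in the sequent keeps its type, so $types(\Gamma) = types(\Delta)$ and $A_1 \equiv A_2$. For bifunctoriality, trace and cancellation the situation is even simpler: these rules touch only the soup, so $\Gamma$, $t_1$ and $A_1$ are literally unchanged, and all that remains is to verify that the invariant is maintained --- $\{a \otimes b : c \otimes d\}$ being equityped forces $type(a) \equiv type(c)$ and $type(b) \equiv type(d)$ by unique decomposability of a tensor type, so $\{a:c\}$ and $\{b:d\}$ are again equityped; $\{D_A :_I 1\}$ relates two scalars of type $I$; and deleting $\{1:1\}$ only removes a connection.

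The one case with real content is the consumption rule, say $J_1 = \Gamma \vdash_{S \cup \{t:u\}} b:B$ reducing to $J_2 = (\Gamma \vdash_S b:B)[t/u]$ with $u$ free of constants (the symmetric variant, with $t$ constant-free, is identical). Here $t$ and $u$ share a common type $U$ because $\{t:u\}$ is a soup connection, and $[t/u]$ replaces every occurrence of the subexpression $u$ in the terms of the sequent by $t$. I would prove a small substitution sub-lemma by structural induction on terms: replacing, at any position in the tensor/star tree, a subexpression of type $U$ by another expression of type $U$ leaves the type of the enclosing term unchanged (base case trivial; the $\otimes$-case uses compositionality; the $*$-case uses $type(s_*) \equiv type(s)^*$ together with the De Morgan rule). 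Applying this termwise to $\Gamma$ and to $b$ gives $types(\Delta) = types(\Gamma)$ and $A_2 \equiv B \equiv A_1$, and the residual soup $S[t/u]$ stays equityped, so the invariant persists.

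I expect the substitution sub-lemma in the consumption case to be the main --- indeed essentially the only --- obstacle, and even there the genuine subtlety is bookkeeping: making the notion of ``an occurrence of $u$ inside a term'' precise, and checking its interaction with the De Morgan law for $*$ over $\otimes$, so that a pattern-matched occurrence of $u$ may legitimately appear with its tensor factors reversed inside a starred context. Once that is pinned down, type preservation is immediate from the fact that $type$ is syntax-directed and the soup relates only equityped terms, so that pair consumption always substitutes a term for one of the same type.
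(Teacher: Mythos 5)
Your proposal is correct and follows essentially the same route as the paper: a case analysis on the four soup reduction rules, observing that bifunctoriality, trace and cancellation leave everything outside the soup untouched, and that consumption substitutes one term for another of the same type because the soup only connects equityped terms. You simply flesh out the substitution step (and the $\alpha$-renaming case) in more detail than the paper, which treats these points as immediate.
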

\begin{proof}
Due to the way the dagger lambda calculus was designed, the proof of subject reduction will be trivial. We will prove this by induction on the rules of the soup reduction. There are four different rules that could be used when performing a soup reduction:
\begin{itemize}
  \item $S \cup \{a \otimes b : c \otimes d\} \longrightarrow \; S \cup \{a:c, b:d\}$ (the \textit{bifunctoriality} rule);
  \item $S \cup \{x :_A x\} \longrightarrow \; S \cup \{D_A : 1\}$ (the \textit{trace} rule);
  \item $S \cup \{1 : 1\} \longrightarrow \; S$ (the \textit{cancellation} rule);
  \item The \textit{consumption} rule.
\end{itemize}
If the reduction is an instance of one of the first three rules, then the theorem holds trivially; $types(\Gamma) = types(\Delta)$ and $A_1 \equiv A_2$ since the bifunctoriality, trace and cancellation rules do not alter anything outside of the soup. If the soup reduction is an instance of the consumption rule, then a soup connection will be consumed to substitute a term at the other end of a bounded pair of variables. The substitution may be global in scope, but it does not affect the sequent's typing, since it is substituting one term for another one of the same type. The act of consumption itself does not affect the typing of the sequent either, since it removes a connection from the soup without affecting the terms outside. Therefore, regardless of the soup rule used, soup reduction has no effect on the typing of terms outside of the soup. This ensures that the typing dynamics of the dagger lambda calculus will be consistent.
\end{proof}

\subsection{Normalisation}
\label{Subsection:Normalisation}
Strong normalisation is a highly sought after property for lambda calculi, primarily because of the implications it has on the practical implementation of the language. A reduction that is strongly normalising implies that every sequent has a normal form. Furthermore, it requires that the normal form is attained after a finite number of steps, without any chance of running into an infinite reduction loop. We now prove that the dagger lambda calculus has this property:

\begin{theorem}[Strong normalisation]
Every sequence of soup reduction steps is finite and ends with a typing judgement that is in normal form.
\end{theorem}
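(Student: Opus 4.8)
The plan is to exhibit a strictly decreasing termination measure on sequents that drops with every soup reduction step. Looking at the four soup rules, three of them (\emph{bifunctoriality}, \emph{trace}, \emph{cancellation}) act entirely inside the soup, while \emph{consumption} deletes one soup connection and performs a global substitution. The natural first attempt is to assign to each sequent $\Gamma \vdash_S t:T$ a measure built from the multiset of soup connections $S$: roughly, the \emph{bifunctoriality} rule replaces one connection $\{a\otimes b : c\otimes d\}$ by two smaller ones $\{a:c\},\{b:d\}$ whose total syntactic size is strictly smaller (we lose two $\otimes$ symbols); \emph{cancellation} deletes a connection outright; \emph{trace} replaces $\{x:_A x\}$ by $\{D_A:1\}$, which is fine because $D_A$ is a constant and $1$ a constant, so this connection can never be split further and can only be removed by \emph{cancellation} (or, if $A$ is itself a tensor type, by \emph{bifunctoriality} followed eventually by cancellations — but this descent is bounded by the fixed structure of the type $A$). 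So for the first three rules, a lexicographic measure such as $\big(\text{number of connections in } S,\ \text{total number of } \otimes\text{-symbols occurring in } S\big)$ decreases.

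The subtle case is \emph{consumption}, which removes one connection but substitutes a term $t$ (or $u$) globally into the rest of the sequent, potentially duplicating occurrences of whatever $t$ contains elsewhere in $S$ and in the conclusion. I would handle this by observing that consumption is only applicable when one side of the connection contains no constants, i.e. is a bundle of bound variables; by the linearity remark in the excerpt, every variable occurs exactly twice in a sequent, so the variable (or bundle) being eliminated occurs exactly once more, and the substitution replaces that single other occurrence — there is no blow-up. Hence consumption strictly decreases the \emph{number of soup connections} (the first component of the measure), which dominates the lexicographic order regardless of what happens to the second component. Combining the two analyses: every soup reduction step strictly decreases the lexicographic measure $\big(|S|,\ \#_\otimes(S)\big)$ into $\mathbb{N}^2$ with its well-order, so no infinite reduction sequence exists. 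I would also fold $\alpha$-renaming into the argument by noting it leaves the measure unchanged and only finitely many $\alpha$-equivalence representatives are relevant at each step, or more cleanly by defining the measure on $\alpha$-equivalence classes.

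Finally, I would argue that a sequent on which no soup rule applies is in normal form in the desired sense: its soup contains no connection of the form $a\otimes b:c\otimes d$, no self-loop $x:_A x$, no $1:1$, and no connection with a constant-free side (else consumption applies) — so the residual soup consists only of connections between constants (such as scalar connections $D_A:1$ that cannot be cancelled), which is exactly the expected stable form. Combined with subject reduction (already proved), this shows every reduction sequence is finite and terminates in a well-typed normal form. The main obstacle I anticipate is making the consumption case airtight: one must use the linearity invariant carefully to rule out any duplication, and check that the invariant is preserved by all the rules (so that the measure argument can be re-applied at every step); establishing that "exactly twice" invariant as a lemma first would be the cleanest route.
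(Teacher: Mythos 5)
Your overall strategy---a well-founded termination measure that every soup rule strictly decreases---is a legitimate alternative to the paper's argument (the paper instead runs a nested structural induction: first an induction on the length of the soup for the case where all connected terms are atomic, then an induction on term structure to handle tensored connections via bifunctoriality). However, the specific measure you propose does not work, and the central claim that ``every soup reduction step strictly decreases the lexicographic measure $(|S|,\ \#_\otimes(S))$'' is false on two counts. First, the \emph{bifunctoriality} rule replaces the single connection $\{a\otimes b : c\otimes d\}$ by \emph{two} connections $\{a:c\}$ and $\{b:d\}$, so the number of connections $|S|$ \emph{increases} by one; since you placed $|S|$ as the dominant component, the measure goes up under exactly the rule you were trying to tame with the $\otimes$-count. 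At minimum the components must be swapped so that the loss of $\otimes$-symbols dominates. Second, even after swapping, the \emph{trace} rule $\{x :_A x\} \longrightarrow \{D_A : 1\}$ changes neither the number of connections nor the number of $\otimes$-symbols in the soup: it is a one-for-one replacement of a tensor-free connection by another tensor-free connection. Your remark that the resulting pair ``can never be split further'' explains why it causes no trouble \emph{later}, but it does not make the trace step itself decrease your measure, so the termination argument has a hole there.

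Both defects are repairable within your framework: take the lexicographic measure $\bigl(\#_\otimes(S),\ \#\mathrm{var}(S),\ |S|\bigr)$, where $\#\mathrm{var}(S)$ counts variable occurrences in the soup. Then bifunctoriality strictly decreases the first component; trace leaves the first unchanged and decreases the second by two; cancellation leaves the first two unchanged and decreases the third; and consumption decreases the second component (by your correct linearity observation, the eliminated constant-free side $u$ has at least one variable occurrence and the substitution replaces only the single other occurrence, so there is no duplication, and one checks $\#_\otimes$ does not increase). Your treatment of the consumption case and your characterisation of normal forms are otherwise sound, and the suggestion to first establish the ``every variable occurs exactly twice'' invariant as a lemma is a good one---but as written, the proof does not go through.
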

\begin{proof}
We begin by proving a simplified form of the theorem, where the terms connected in our soup are all of atomic type. Obviously, in a setting like this, the first reduction rule for our soup would never be used since there would be no tensored terms. The simplified theorem can be proved by induction on the length of the soup. In judgements where the soup is empty, or only contains pairs of constants that are not usable, it is obvious that reduction cannot proceed any further and that we have already reached a normal form in a finite (actually zero) number of steps. For a reduction soup with at least one usable connection, there are three possible ways this could go:
\begin{enumerate}[(a)]
  \item A pair of the form $\{x :_A x\}$ can be transformed into $\{ D_A : 1 \}$, a scalar reference of the dimension of $A$. This will either give us an unusable soup connection or, if the type $A$ is $I$, it will give us $\{1:1\}$, which can later be thrown away by using the cancellation rule. Either way, since our soup has a finite length, we will be left with a smaller usable soup. Hence, by the induction hypothesis, a normal form is attainable after finitely many steps.
  \item A pair of the form $\{1:1\}$ can be thrown out of the soup as it does not contribute anything to our sequent. In doing so, since our soup has a finite length, we are left with a smaller soup. Hence, by the induction hypothesis, a normal form is attainable after finitely many steps.
  \item A pair of the form $\{t:u\}$, where $t$ or $u$ is bound, can be consumed to perform a substitution. In doing so, since our soup has a finite length, we are again left with a smaller soup. By the induction hypothesis, we can attain a normal form after finitely many steps.
\end{enumerate}

This completes our proof of the simplified form of the theorem. In order to prove the full theorem, we perform an induction on the structure of the relational soup:
\begin{enumerate}[(i)]
  \item A relational soup that does not contain connections between tensored terms leads to a normal form after finitely many steps. This was already proved in the simplified version of this theorem.
  \item A connection between two tensors $\{a \otimes b : c \otimes d\}$ can be reduced into two distinct connections $\{a:c, b:d\}$. The resulting pairs consist of subterms of the original pair. Hence, by the induction hypothesis, we can attain a normal form after finitely many steps.
\end{enumerate}
\end{proof}

\subsection{Confluence}
\label{Subsection:Confluence}
Another very important property for our language is the Church-Rosser property. It ensures that we can end up with the same sequent regardless of the reduction path we choose to follow. A careful observation of our rewrite rules will reveal that the rules are all left-linear.
\begin{lemma}[Left-linearity]
All of our soup rewrite rules are left-linear.
\end{lemma}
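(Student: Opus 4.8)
The plan is to prove the lemma by a direct inspection of the (finite) list of soup rewrite rules, after first fixing what \emph{left-linear} is to mean here: a rewrite rule $l \to r$ is left-linear when no schematic (meta-)variable of the rule occurs more than once in its left-hand side $l$. The rules to check are \emph{bifunctoriality}, \emph{trace}, \emph{cancellation}, and the two directions of the \emph{consumption} rule. For each one I would list the schematic parameters appearing on the left and verify that each occurs exactly once.

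Three of these are immediate. In bifunctoriality, $S \cup \{a \otimes b : c \otimes d\} \to S \cup \{a:c,\, b:d\}$, the schematic parameters are the rest-of-soup $S$ and the term-metavariables $a,b,c,d$, each occurring once on the left. In cancellation, $S \cup \{1:1\} \to S$, the only parameter is $S$, since the two occurrences of $1$ are the constant $1:I$ rather than a metavariable. In consumption, $\Gamma \vdash_{S \cup \{t:u\}} b:B \to (\Gamma \vdash_S b:B)[t/u]$ (and symmetrically), the parameters are $\Gamma$, $S$, $b$, $B$, $t$, $u$, again each occurring once, with $\mathrm{Var}(r) \subseteq \mathrm{Var}(l)$ as required. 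In every case the only ``large'' schematic slot is the ambient soup $S$ (or context $\Gamma$), and it is never duplicated.

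The one case needing an argument, and the step I expect to be the main obstacle, is the trace rule $S \cup \{x :_A x\} \to S \cup \{D_A : 1\}$, whose left-hand side \emph{syntactically} displays the object variable $x$ twice. The hard part is to justify that this does not count as a repeated pattern variable. The point I would make is that object-level variables of the dagger lambda calculus are not meta-level pattern variables of the soup rewriting system: because soup connections are taken up to $\alpha$-equivalence together with the congruence $\{a_1:a_2\} \equiv \{a_{2*}:a_{1*}\}$, the self-loop ``$x :_A x$'' denotes a \emph{single} canonical soup element, determined entirely by the type $A$. Hence the rule is schematic only in $S$ and $A$ --- each occurring once --- and is therefore left-linear. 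As a robustness remark I would add that even under the stricter reading in which $x$ is treated as a meta-parameter, it ranges only over variables, which are irreducible normal forms; so the usual mechanism by which non-left-linearity breaks confluence (a matched subterm being copied and its copies later reducing to incompatible forms) cannot arise, and the rule behaves as a left-linear one for the critical-pair analysis of the next subsection. Assembling these observations over the four rules completes the proof.
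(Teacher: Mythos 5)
Your proof is correct and takes the same route as the paper's: a direct inspection of the (finite) list of soup rules, checking that no schematic parameter is duplicated on a left-hand side. The only place you go beyond the paper is the trace rule $S \cup \{x :_A x\}$, whose syntactically repeated $x$ the paper passes over by merely asserting that ``no variable appears more than twice''; your observation that $x$ ranges only over atomic object variables, which are not themselves reducible by any soup rule (so the repeated-subterm mechanism that normally breaks confluence for non-left-linear rules cannot arise), is a more careful justification than the one the paper actually gives.
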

\begin{proof}
In accordance with the linearity constraints of our language, no variable appears more than twice on the left hand side of any of our soup reduction rules:
\begin{align*}
    S \cup \{a \otimes b : c \otimes d\}    \longrightarrow & \; S \cup \{a:c, b:d\} \\
    S \cup \{x :_A x\}                      \longrightarrow & \; S \cup \{D_A : 1\} \\
    S \cup \{1 : 1\}                        \longrightarrow & \; S \\
    S \cup \{t : u\}                        \longrightarrow & \; S \;\;\; \text{(if $t$ or $u$ is a bound variable)}
\end{align*}
\end{proof}

One should note, at this point, that our soup rules do exhibit a form of "harmless" overlap. More specifically, the consumption rule ($S \cup \{t:u\} \longrightarrow S$) forms a critical pair with itself in cases where $t$ and $u$ are both bound. Fortunately, as we will see in the next lemma, these pairs can be proved to be \textit{trivial} as they correspond to sequents that are equivalent up to $\alpha$-renaming.

\begin{lemma}[Symmetry of substitution]
Let $J$ be a typing judgement of the form $J := \Gamma \vdash_{S \cup \{t:u\}} a:A$, where $t$ and $u$ are both bound. The connection $\{t:u\}$ can be consumed in either of two ways; one substitutes $t$ for $u$ and the other substitutes $u$ for $t$ in the typing judgement. Let's call these $J_1$ and $J_2$ respectively. $J_1$ will then be $\alpha$-equivalent to $J_2$.
\end{lemma}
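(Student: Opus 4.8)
The plan is to reduce to the case of single bound variables and then make a direct syntactic observation. If $t$ or $u$ is a bundle, I would first $\alpha$-rename so that $t$ and $u$ acquire a common tensor shape (e.g. if $u = u_1 \otimes u_2$ and $t$ is a single variable, $\alpha$-rename $t$ to a fresh $t_1 \otimes t_2$) and then apply the bifunctoriality rule repeatedly to replace $\{t:u\}$ by finitely many connections between single bound variables, pairing the leaves of $t$ with those of $u$. Crucially, all of this reshaping is done \emph{before} choosing a direction in which to consume, so the comparison is still between the two ways of consuming one and the same (reshaped) soup. It therefore suffices to handle $t = x$ and $u = y$ with $x,y$ single bound variables of a common type $A$ --- common because a soup connection joins equityped terms.

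By the linearity constraints (and the accompanying Remark) the variable $x$ occurs exactly twice in $J$: once inside the connection $\{x:y\}$ and once elsewhere, at some position $P$ of $J_0 := (\Gamma \vdash_S a:A)$; likewise $y$ occurs once in the connection and once at a position $Q$ of $J_0$. If $x = y$ then both consumptions are identity substitutions and $J_1 = J_2 = J_0$, so assume $x \neq y$. Consuming $\{x:y\}$ via $[x/y]$ --- legal, since $y$, being bound, carries no constant --- rewrites the occurrence of $y$ at $Q$ into $x$, yielding a sequent $J_1$ in which $x$ stands at both $P$ and $Q$ and $y$ no longer occurs. Symmetrically, $J_2$ has $y$ at both $P$ and $Q$ and no occurrence of $x$. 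Hence $J_1$ and $J_2$ are literally the same sequent except that one has $x$ at $P,Q$ where the other has $y$; since $y$ has the same type as $x$ and does not occur in $J_1$, $\alpha$-renaming the bound variable $x$ to $y$ transforms $J_1$ into $J_2$, so $J_1$ is $\alpha$-equivalent to $J_2$. For the decomposed bundle case one applies this argument to each leaf connection in turn, the renamings of distinct connections not interfering because their variable sets are disjoint.

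The main obstacle is the precise treatment of the compound case: the substitutions $[t/u]$ and $[u/t]$ for bundles $t,u$ must be given a definite meaning --- the pattern $u$ need not appear as a contiguous subterm of $J_0$, since its leaf variables can be scattered across the sequent --- and one has to check that reading such a consumption as ``reshape by $\alpha$-renaming, split by bifunctoriality, then substitute leaf-for-leaf'' is well defined and, more importantly, is carried out identically on the two branches. Because this is exactly the flavour of local-confluence reasoning that the calculus is being set up to support, the potential circularity must be sidestepped by performing all reshaping before any consumption step and invoking only the single-variable observation afterwards; modulo that point, the rest of the argument is routine.
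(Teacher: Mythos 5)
Your proof is correct, and its core is exactly the paper's argument: consuming the connection in either direction yields the same sequent except that one copy has $t$ sitting at both of the two positions where $t$ and $u$ used to occur while the other has $u$ there, and these are related by $\alpha$-renaming the resulting bound term. The only real difference is that the paper applies this observation directly to the (possibly compound) bound terms $t$ and $u$, leaning on its term-level notion of $\alpha$-renaming, whereas you first normalise the connection to single leaf variables via $\alpha$-renaming and bifunctoriality before consuming; the worry you raise about what substitution of a non-contiguous compound pattern even means is legitimate, but the paper's proof simply does not address it, so your extra reduction is a refinement rather than a departure.
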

\begin{proof}
We know that
\[ J \longrightarrow J_1 := \bigg(\Gamma \vdash_S a:A\bigg) \bigg[ t / u \bigg] \]
\[ J \longrightarrow J_2 := \bigg(\Gamma \vdash_S a:A\bigg) \bigg[ u / t \bigg] \]

Since $t$ and $u$ are both bound, by linearity, we know that they appear exactly once in $\Gamma \vdash_S a:A$. After substitution is performed, $J_1$ will have two occurrences of $t$ where $t$ and $u$ used to be, so $t$ will be a bound term in that judgement. Similarly, $J_2$ will have two occurrences of $u$ where $t$ and $u$ used to be, so $u$ will be a bound term in that judgement. These bound terms occur in the exact same spots, so we can $alpha$-rename $J_1$ to $J_2$ and vice versa.
\end{proof}

\begin{corollary}[No overlap]
The rewrite rules have no overlap up to $\alpha$-equality of typing judgements.
\end{corollary}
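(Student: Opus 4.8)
The plan is to perform a critical-pair analysis of the four soup rewrite rules---bifunctoriality, trace, cancellation and consumption---carried out throughout modulo $\alpha$-equivalence of typing judgements, invoking the \emph{Left-linearity} lemma to dispose of the ``harmless'' overlaps and the \emph{Symmetry of substitution} lemma for the one genuine critical pair. First I would note that the left-hand side of each rule is anchored on a single soup connection $\{t:u\}$ (with, for consumption, the surrounding judgement), so two rule instances acting on \emph{distinct} connections touch disjoint redexes; applying each rule in the context left by the other (the non-consumed redex being carried along the substitution that consumption performs on the rest of the judgement) closes the diamond in one step on each side. It therefore suffices to treat the cases where two instances target the same connection, up to $\alpha$-renaming.

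Next I would enumerate those cases. Each of bifunctoriality, trace and cancellation determines at most one redex on any given connection, so none of them self-overlaps; and they cannot match the same connection with two different rules, since a bifunctoriality redex is a tensor-to-tensor connection, a trace redex is a variable bound to itself, and a cancellation redex is literally $\{1:1\}$, and---noting that $\alpha$-renaming can never turn a variable into the constant $1$---these shapes are mutually exclusive apart from the self-loop coincidence discussed below. The only proper critical pair is consumption against consumption on a connection $\{t:u\}$ whose two ends are both constant-free, and this is precisely the configuration resolved by the \emph{Symmetry of substitution} lemma: the two reducts are $\alpha$-equivalent, so the pair is trivial.

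For the remaining overlaps I would appeal to left-linearity. When a connection $\{a \otimes b : c \otimes d\}$ is at once a bifunctoriality redex and a consumption redex (one side being constant-free), the bifunctoriality step happens strictly below an instance of one of the metavariables $t,u$ of the consumption pattern; since the system is left-linear in the sense of the \emph{Left-linearity} lemma (no variable occurring more than twice in a left-hand side), such variable overlaps are automatically joinable---and indeed splitting first and then consuming the components yields, up to $\alpha$-renaming, the same judgement as consuming the whole tensor, exactly as in the worked example following the consumption rule. There remain two degenerate connections. A self-loop of tensors $\{a \otimes b : a \otimes b\}$ is $\alpha$-equal to $\{z:z\}$, hence both a bifunctoriality and a trace redex; the two branches reconverge because splitting and applying trace twice gives $\{D_A : 1,\ D_B : 1\}$, which by the \emph{Dimension multiplication} corollary agrees with the $\{D_{A \otimes B} : 1\}$ produced by trace directly. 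A self-loop of a variable $\{x :_A x\}$ is only a trace redex: consumption is the rule that propagates a binding out of the soup into the rest of the judgement, and here $x$---by the linearity built into the calculus---has no occurrence outside the connection being consumed, so consumption has nothing to transport; the self-loop is eliminated by trace (followed, when $A \equiv I$, by cancellation).

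I expect the main obstacle to be precisely this bookkeeping: making ``overlap up to $\alpha$-equality'' precise, checking that every apparent clash factors either through a metavariable position of some rule (so that left-linearity applies) or through the consumption/consumption pair already shown trivial, and confirming that the $\alpha$-induced bifunctoriality/trace coincidences genuinely reconverge through the scalar and dimension identities rather than leaving a stranded normal connection such as $\{D_A : 1\}$ on only one branch. Once this is in place the corollary follows, and, combined with the \emph{Strong normalisation} theorem via Newman's lemma, it delivers confluence of soup reduction.
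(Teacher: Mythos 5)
Your proposal is correct and follows essentially the same route as the paper, which offers no proof body for this corollary beyond the preceding remarks: the only genuine critical pair is consumption against consumption on a connection with two constant-free ends, and it is trivial by the \emph{Symmetry of substitution} lemma, while every other apparent clash is a variable overlap dispatched by left-linearity. Your extra case analysis is sound, though note that your dismissal of the trace/consumption coincidence on a self-loop $\{x :_A x\}$ rests on reading into the consumption rule a side condition (that $u$ must occur \emph{outside} the consumed connection) which the paper never states explicitly; as literally written, consumption would delete the self-loop while trace retains the scalar $D_A$, so your observation actually surfaces a point the paper glosses over rather than a defect in your own argument.
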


\begin{theorem}[Confluence]
Our reduction rules have the Church-Rosser property.
\end{theorem}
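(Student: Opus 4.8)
The plan is to reduce the Church--Rosser property to \emph{local} confluence and then use termination, so that only one-step peaks need to be analysed. Since the Strong normalisation theorem tells us that every sequence of soup-rule reductions is finite, the soup-rule reduction relation (viewed on $\alpha$-equivalence classes of judgements) is strongly normalising, and by Newman's Lemma it is confluent as soon as it is locally confluent. The goal thus becomes: whenever a judgement $J$ rewrites in one step to $J_1$ and in one step to $J_2$, there is a judgement $J_3$ with $J_1 \longrightarrow^* J_3$ and $J_2 \longrightarrow^* J_3$, up to $\alpha$-equivalence. Alternatively, since we also have the Left-linearity lemma and the No overlap corollary, the system is orthogonal, and one could invoke the standard fact that a left-linear rewrite system without critical pairs is confluent with no appeal to termination at all; I would keep the Newman route as the primary argument, because strong normalisation is already in hand.

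To establish local confluence I would split on how the two contracted redexes meet. If they are \emph{disjoint} --- acting on distinct connections of the soup, or one in the soup and a consumption acting on the remainder of the judgement at an untouched position --- then by the Left-linearity lemma neither rule duplicates or erases a redex of the other, so the two rewrites commute: performing each on the residual of the other lands in the same judgement, closing the peak in one step on each side. If the redexes \emph{overlap}, the No overlap corollary guarantees that, up to $\alpha$-equality, the only genuine critical pair is the consumption rule with itself on a connection $\{t:u\}$ where $t$ and $u$ are both bound; and the Symmetry of substitution lemma is precisely the statement that the two resulting judgements are $\alpha$-equivalent, so this peak too is closed immediately. Hence every local peak is joinable, local confluence holds, and confluence follows.

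The step I expect to be the real obstacle is the bookkeeping forced by the fact that soup reduction is defined \emph{modulo} $\alpha$-renaming, and modulo the soup identification $\{a_1:a_2\} \equiv \{a_{2*}:a_{1*}\}$: strictly speaking we are rewriting $\alpha$-equivalence classes, so one must (i) justify the passage to classes --- $\alpha$-steps on their own do not terminate, so Newman's Lemma applies only after quotienting --- and (ii) verify the attendant coherence condition, namely that each soup rule is well defined on classes and that $\alpha$-renaming commutes with each rule, so that the choice of representative never blocks the closure of a peak. This is routine given our definitions of $\alpha$-renaming, soup negation and the soup rules, but it is where care is genuinely needed; it is also why several of the disjoint-redex cases close only up to $\alpha$ (for instance when a consumption substitution identifies two previously distinct bound subterms), a phenomenon the Symmetry of substitution lemma handles uniformly.
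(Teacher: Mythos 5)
Your proposal is correct and follows essentially the same route as the paper: left-linearity plus the triviality of the only critical pair (consumption against itself, closed by the symmetry-of-substitution lemma) gives weak orthogonality and hence local confluence, and Newman's lemma together with strong normalisation then yields the Church--Rosser property. Your additional remarks on quotienting by $\alpha$-equivalence make explicit a bookkeeping point the paper leaves implicit, but they do not change the argument.
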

\begin{proof}
Our set of rewrite rules is \textit{left-linear} and has no significant overlap, since it only gives rise to critical pairs that are \textit{trivial} up to $\alpha$-equivalence. Therefore, our rewrite rules constitute a \textit{weakly orthogonal} rewrite system, which is \textit{weakly confluent} according to \cite{Klo92}. Since the rewrite system is both strongly normalising and weakly confluent, we can use Newman's lemma to conclude that it also possesses the Church-Rosser property. See \cite{Klo92} for a more detailed explanation of the properties of orthogonal rewriting systems.
\end{proof}

\subsection{Consistency}
\label{Subsection:Consistency}
In order to show that our type theory is consistent, we have to show that our soup dynamics do not collapse all equityped terms to the same element.

\begin{theorem}[Consistency]
There exist two terms of the same type, henceforth referred to as $t_1$ and $t_2$, such that $\Gamma \vdash_{S_1} t_1:A$ and $\Gamma \vdash_{S_2} t_2:A$ could never reduce to the same typing judgement.
\end{theorem}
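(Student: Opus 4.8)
The plan is to prove consistency by producing an explicit pair of witnesses, rather than by any global argument. The key observation is that, having already established strong normalisation and confluence, Newman's lemma guarantees that any two sequents which reduce to a common typing judgement in fact have a common reduct; if both are already in normal form, this forces them to be soup-equivalent (indeed $\alpha$-equivalent). So it suffices to exhibit two sequents $\Gamma \vdash_{S_1} t_1 : A$ and $\Gamma \vdash_{S_2} t_2 : A$, each already in normal form, that lie in different soup-equivalence classes.

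Concretely, I would take an atomic type $A$ and two distinct constant terms $c_1 : A$ and $c_2 : A$ (the term grammar of the dagger lambda calculus admits such constants), and set $J_1 := {}\vdash_{\emptyset} c_1 : A$ and $J_2 := {}\vdash_{\emptyset} c_2 : A$. Both have empty soup, so none of the soup propagation rules (bifunctoriality, trace, cancellation) nor the consumption rule can fire; the only transformations available are $\alpha$-renamings, which act trivially here because $c_1$ and $c_2$ contain no variables. Hence each $J_i$ is already in normal form and is soup-equivalent only to itself, and since $c_1 \neq c_2$ we have $J_1 \neq J_2$; they therefore lie in different soup-equivalence classes and can never reduce to a common judgement. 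If one prefers a witness built purely from the sequent rules, without appealing to bare constants, the context $a:A,\,b:A$ together with the terms $a \otimes b$ and $b \otimes a$ works equally well: both yield normal-form sequents with empty soup, and no $\alpha$-renaming can permute the two bound variables, so the two sequents are again in distinct equivalence classes.

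For a more conceptual reading of the same fact, I would also point to a sound interpretation of the calculus in \textit{FDHilb} (the one underlying the correspondence result proved later in this part, though a direct soundness check suffices): sending $c_1$ and $c_2$ to two orthogonal vectors of the space interpreting $A$, and noting that soup reduction only ever implements categorical identities and hence preserves denotations, one sees immediately that two sequents with distinct denotations cannot share a reduct.

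The part that needs care — and is really the only content of the argument — is the step linking "no common reduct" to the chosen witnesses: verifying that a judgement with empty soup admits no reduction other than $\alpha$-renaming, and that $\alpha$-renaming cannot identify $c_1$ with $c_2$ (equivalently, cannot turn $a \otimes b$ into $b \otimes a$, since $\alpha$-renaming only substitutes variables/bundles for variables and never permutes positions). Strong normalisation together with confluence then does the rest: any two sequents with a common reduct have the same normal form up to $\alpha$-equivalence, so two distinct $\alpha$-inequivalent normal forms of the same type witness consistency.
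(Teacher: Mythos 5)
Your proposal is correct and takes essentially the same route as the paper: exhibit two sequents of the same type that are distinct normal forms, so neither can reduce to anything but itself. The paper's witnesses are the closed combinators $id_{A \otimes A}$ and $\bar{s}_{A \otimes A}$, which are exactly the Curried form of your $a:A,\,b:A \vdash a \otimes b$ versus $b \otimes a$ fallback; note that the paper deliberately uses closed combinators rather than your primary choice of two constants, since the latter presupposes that the signature actually contains two distinct constants of a common type.
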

\begin{proof}
Consider two combinators of the same type, $t_1 = id_{A \otimes A}$ and $t_2 = \bar{s}_{A \otimes A}$. Both terms are closed, containing no free variables or constants. The sequents $\vdash id_{A \otimes A} : (A \otimes A) \multimap (A \otimes A)$ and $\vdash \bar{s}_{A \otimes A} : (A \otimes A) \multimap (A \otimes A)$ are distinct normal forms: They are clearly distinct from one another and cannot be further reduced using any of our rules, thereby proving that they could never reduce to the same typing judgement.
\end{proof}

\section{Correspondence to dagger compact categories}
\label{Section:Correspondence to dagger compact categories}
The purpose of this section is to provide a full Curry-Howard-Lambek correspondence between the dagger lambda calculus and dagger compact categories. We start by defining a directed graph $\mathcal{G}$, representing a signature for dagger compact categories. We then show how that graph can be interpreted to define the free dagger compact category $\mathcal{C}_{Free}$ and the dagger lambda calculus $\dag\lambda$. An appropriate Cut-elimination procedure is defined to partition the sequents of the dagger lambda calculus into equivalence classes up to soup equivalence. The resulting equivalence classes are modular proof invariants represented by denotations. We show that the types and denotations can be used to form a syntactic category, $\mathcal{C}_{Synt}$, and prove that the category is dagger compact. The diagram below, fashioned to resemble the diagram at the bottom of page 49 in \cite{Mac98}, is provided to help visualise the Curry-Howard-Lambek correspondence. In this diagram, $U\mathcal{C}_{Free}$ and $U\mathcal{C}_{Synt}$ are the underlying graphs of their respective categories, where identities, composition, natural isomorphisms and other structural elements of the parent categories have been "forgotten" by applying the forgetful functor $U$. $F$ is the unique functor between the free and the syntactic category, that satisfies the rest of the conditions in the diagram.
\[
\scalebox{1} 
{
\begin{pspicture}(0,-1.7217188)(12.894688,1.7217188)
\usefont{T1}{ptm}{m}{n}
\rput(4.5242186,1.5007813){\large $\mathcal{C}_{Free}$}
\usefont{T1}{ptm}{m}{n}
\rput(4.604219,-1.3192188){\large $\mathcal{C}_{Synt}$}
\usefont{T1}{ptm}{m}{n}
\rput(1.2442187,-1.3192188){\large $\dag\lambda$}
\psline[linewidth=0.04cm,arrowsize=0.05291667cm 2.0,arrowlength=1.4,arrowinset=0.4]{->}(1.576875,-1.2542187)(3.976875,-1.2542187)
\usefont{T1}{ptm}{m}{n}
\rput(2.6482813,-1.5442188){$\ell$}
\psline[linewidth=0.04cm,arrowsize=0.05291667cm 2.0,arrowlength=1.4,arrowinset=0.4]{->}(4.476875,1.2457813)(4.476875,-1.0542188)
\usefont{T1}{ptm}{m}{n}
\rput(4.798281,0.05578125){$!F$}
\usefont{T1}{ptm}{m}{n}
\rput(8.174219,1.5007813){\large $U\mathcal{C}_{Free}$}
\usefont{T1}{ptm}{m}{n}
\rput(8.254219,-1.3192188){\large $U\mathcal{C}_{Synt}$}
\psline[linewidth=0.04cm,arrowsize=0.05291667cm 2.0,arrowlength=1.4,arrowinset=0.4]{->}(7.976875,1.2457813)(7.976875,-1.0542188)
\usefont{T1}{ptm}{m}{n}
\rput(8.368281,0.05578125){$UF$}
\usefont{T1}{ptm}{m}{n}
\rput(11.514218,1.5007813){\large $\mathcal{G}$}
\psline[linewidth=0.04cm,arrowsize=0.05291667cm 2.0,arrowlength=1.4,arrowinset=0.4]{->}(11.276875,1.5457813)(8.776875,1.5457813)
\psline[linewidth=0.04cm,arrowsize=0.05291667cm 2.0,arrowlength=1.4,arrowinset=0.4]{->}(11.276875,1.2457813)(8.276875,-1.0542188)
\end{pspicture}
}
\]

We will prove an equivalence between the free category and the syntactic category. We should note at this point that our typing conventions of an involutive negation ($A \equiv (A^*)^*$) and negation invariance of the tensor unit ($I \equiv I^*$) implicitly introduce equivalence classes on types. Our proof of equivalence will be achieved by fully exhibiting the correspondence in objects and arrows between the two categories, up to the equivalence classes that are induced by our typing conventions.

\subsection{A signature for dagger compact categories}
\label{Subsection:A signature for dagger compact categories}
The notion of signature we will be using combines that of the algebraic signature of \cite{Sel10} with the directed graph used by \cite{Mac98}. Consider a set of object variables $\Sigma_0$. Using the tensor operation, an associated tensor identity and the duality operator star, we can construct the free $(\otimes, I, \Box^*)$-algebra over $\Sigma_0$. This corresponds to the set of all object terms or vertices in a compact closed category and will be denoted by $Dagger(\Sigma_0)$. Now consider a set $\Sigma_1$ of morphism variables or edges between those vertices. Let $dom, cod$ be a pair of functions such that $dom,cod: \Sigma_1 \longrightarrow Dagger(\Sigma_0)$. Throughout the rest of this section, we will be referring to the graph $\mathcal{G}$ as the directed graph whose vertices and edges are defined by $Dagger(\Sigma_0)$ and $\Sigma_1$. This graph forms the signature upon which we will base both the dagger lambda calculus and our description of the free dagger compact category; it includes all of the symbols but none of the logic of the languages that we want to describe.

\subsection{The free dagger compact category}
\label{Subsection:The free dagger compact category}
We will now show how to define the free dagger compact category $\mathcal{C}_{Free}$ as an interpretation of the graph $\mathcal{G}$. A highly intuitive introduction to free categories and how they can be generated from directed graphs can be found in \cite{Mac98}. Furthermore, a more extensive presentation of the process of constructing of various kinds of free categories can be found in \cite{Sel10}. A more detailed presentation of the incremental buildup to the construction of free dagger compact categories can also be found in \cite{Abr05}.

The set of objects for the free category in this section will be the same as the set of vertices $Dagger(\Sigma_0)$ in the graph $\mathcal{G}$. The set of edges $\Sigma_1$ in the graph is used to generate morphisms for the free category. Thus, an edge of the form $f : A \rightarrow B$ generates an arrow in $\mathcal{C}_{Free}$ which we will denote as $\langle A,f,B \rangle$. The free category over a directed graph, also referred to as a path category, includes morphisms that correspond to the paths generated by combining adjoining edges in $\mathcal{G}$. These morphisms are formed using the free category's composition operation. Given two morphisms $\langle A,f,B \rangle$ and $\langle B,g,C \rangle$, we write their composition in $\mathcal{C}_{Free}$ as $\langle A,f,B,g,C \rangle$.

Since the free category is a monoidal category, it allows us to consider two of the graph's edges concurrently by bringing together their corresponding categorical morphisms using a monoidal tensor product. Given two morphisms $\langle A,f,B \rangle$ and $\langle C,h,D \rangle$, we write their tensor product as $\langle A \otimes C, f \otimes h, B \otimes D \rangle$.

The free category generated by the graph $\mathcal{G}$ also includes a number of morphisms that are part of the dagger compact logical structure. The identities are represented by:
\[ \langle A \rangle, \langle B \rangle, \langle C \rangle, \ldots \]

The monoidal natural isomorphisms are written as:
\[ \langle A \otimes (B \otimes C), \alpha_{A,B,C}, (A \otimes B) \otimes C \rangle \]
\[ \langle I \otimes A, \lambda_A, A \rangle \]
\[ \langle A \otimes I, \rho_A, A \rangle \]

The symmetry isomorphism is written as:
\[ \langle A \otimes B, \sigma_{A,B}, B \otimes A \rangle \]

And the units and counits are written as:
\[ \langle I, \eta_A, A^* \otimes A \rangle \]
\[ \langle A \otimes A^*, \varepsilon_A, I \rangle \]

For every map $\langle A,f,B \rangle$ in the free category, the dagger compact logical structure contains maps $f_*$ and $f^\dag$, represented by $\langle A^*,f_*,B^* \rangle$ and $\langle B,f^\dag,A \rangle$ respectively. When acting on compositions of paths, such as $\langle A,f,B,g,C,\ldots,X,h,Y,t,Z \rangle$, the dagger operator reverses the order of operations, yielding:
\[ \langle Z,t^\dag,Y,h^\dag,X,\ldots,C,g^\dag,B,f^\dag,A \rangle \]

\subsection{The dagger lambda calculus}
\label{Subsection:The dagger lambda calculus}
This section demonstrates how the graph signature $\mathcal{G}$ can be interpreted to derive the dagger lambda calculus. The set of types used by $\dag\lambda$ is precisely the set of vertices $Dagger(\Sigma_0)$ used in graph $\mathcal{G}$. Every edge
\[ f : A \rightarrow B \]
\noindent in $\Sigma_1$ is interpreted as a sequent
\[ a:A \vdash_{\{f:a_* \otimes b\}} b:B \]
\noindent up to alpha-equivalence. These interpretations essentially introduce constants, in our case $f:A^* \otimes B$, written as sequents that are reminiscent of $\eta$-expanded forms. The rest of the rules of the dagger lambda calculus can be used to process and combine sequents, yielding a richer logical structure.

\subsection{The syntactic category}
\label{Subsection:The syntactic category}
Following a method that is similar to \cite{Mel09}, we will define a process of Cut-elimination by using the soup reduction relation to partition the sequents of the dagger lambda calculus into equivalence classes. The resulting equivalence classes are modular proof invariants called \textit{denotations}. This section demonstrates how these denotations give rise to a dagger compact categorical structure $\mathcal{C}_{Synt}$ called the \textit{syntactic category}.

\begin{definition}[Denotations]
We will use the term \textit{denotations} to refer to the equivalence classes that are formed by partitioning the sequents of the lambda calculus according to soup equivalence. Hence, two sequents will correspond to the same denotation if and only if they are equivalent up to soup reduction.
\end{definition}

\begin{theorem}[The syntactic category]
The types of the lambda calculus and the denotations generated by soup equivalence form a category whose objects are types and whose arrows are denotations.
\end{theorem}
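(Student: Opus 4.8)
To prove that the types and denotations form a category, I would verify the three defining data of a category — composition, associativity, and identities with the unit law — directly from the sequent rules and the definition of soup equivalence. The objects are the types in $Dagger(\Sigma_0)$. An arrow from $A$ to $B$ is a denotation, i.e.\ a soup-equivalence class of sequents of the form $a:A \vdash_S b:B$ (where $a:A$ is a single typed term on the left). So the first task is to make precise that every arrow has a canonical such one-variable-on-the-left presentation: given any closed sequent derived in $\dag\lambda$, repeated use of $\otimes L$ collapses the context into a single term, and Curry/Uncurry let us normalise the right-hand side, so the representative $a:A \vdash_S b:B$ is available and unique up to $\alpha$- and soup-equivalence.

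\textbf{Composition.} Given denotations $[a:A \vdash_{S_1} b:B]$ and $[b':B \vdash_{S_2} c:C]$, I would define their composite using the Cut rule: after $\alpha$-renaming so the two sequents share no variables, Cut yields $a:A \vdash_{S_1 \cup S_2 \cup \{b:b'\}} c:C$, whose denotation is the composite. The key points to check are: (i) \emph{well-definedness} — if we replace either representative by a soup-equivalent one, the resulting composite sequent is still soup-equivalent (this follows because soup reduction rules act locally on the soup and commute with the extra connection $\{b:b'\}$, together with the confluence result already proved, which guarantees the equivalence class is insensitive to the order of reductions); (ii) \emph{independence of the $\alpha$-renaming} chosen to disambiguate variables, which is immediate since $\alpha$-equivalent sequents lie in the same denotation by construction.

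\textbf{Identities and associativity.} The identity arrow on $A$ is the denotation of $a:A \vdash a:A$ (the Id axiom), equivalently the combinator $id_A$. The unit law $id_B \circ f = f = f \circ id_A$ amounts to showing that Cutting against an Id-axiom sequent and then performing the resulting soup reduction returns a sequent soup-equivalent to the original: Cutting $a:A \vdash_S b:B$ with $b':B \vdash b':B$ produces $a:A \vdash_{S \cup \{b:b'\}} b':B$, and consuming the connection $\{b:b'\}$ substitutes $b$ for $b'$, recovering $a:A \vdash_S b:B$ up to $\alpha$-renaming; the other unit law is symmetric. For associativity, $(h \circ g)\circ f$ and $h \circ (g \circ f)$ both unfold, via two applications of Cut, to a sequent whose soup is $S_1 \cup S_2 \cup S_3 \cup \{b:b'\} \cup \{c:c'\}$ with the same right-hand term; since Cut merely takes unions of soups and adjoins connections, the two bracketings give literally the same soup (up to $\alpha$-renaming of the intermediate variables), hence the same denotation. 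Here I would lean on subject reduction to know all these sequents are well typed and on strong normalisation plus confluence to know the denotation is a genuine invariant.

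\textbf{Main obstacle.} The routine parts are associativity and the unit law; the delicate part is \emph{well-definedness of composition under soup equivalence}. The subtlety is that soup equivalence is generated not only by the local propagation rules (bifunctoriality, trace, cancellation) but also by the consumption rule, which performs a global substitution in the sequent — so one must check that a substitution carried out in one factor before composing agrees (up to soup equivalence) with the composite sequent in which that substitution is deferred. I expect this to reduce to a commutation lemma: soup reduction steps in $S_1$ (or $S_2$) can be replayed in $S_1 \cup S_2 \cup \{b:b'\}$ to reach soup-equivalent results, because the extra connections involve only the freshly shared variable $b$/$b'$ and the disjoint-variable convention prevents interference. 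Newman's lemma (already invoked for confluence) then closes the argument: any two reduction strategies on the composite soup converge, so the composite denotation is independent of the chosen representatives.
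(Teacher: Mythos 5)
Your proposal follows essentially the same route as the paper: composition is defined by Cut, identities come from the Id axiom, associativity follows because Cut merely takes unions of soups (so the connections ``float'' and the bracketing is immaterial), and the unit laws are verified by consuming the connection $\{b:b'\}$ introduced when Cutting against an identity sequent. Your additional attention to well-definedness of composition under the choice of soup-equivalent representatives is a point the paper passes over in silence, and your proposed resolution (soup reductions in a factor can be replayed in the composite soup, with confluence closing the argument) is a sound way to discharge it.
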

\begin{proof}
As we noticed during the proof of the subject reduction property, soup reduction rules do not affect our language's type assignments. Consequently, the type of the premises used by a sequent will be the same across all sequents in a given denotation. Similarly, the type of the conclusion produced by a sequent will be the same across all sequents in a given denotation. For any sequent $\Gamma \vdash_S b:B$, corresponding to a denotation $[\pi_1]$, we will say that its \textit{domain} is $\Gamma$ and its \textit{codomain} is $B$, writing this as $[\pi_1] : \Gamma \rightarrow B$.

Let $[f]:A \rightarrow B$ and $[g]:B \rightarrow C$ be denotations representing the soup equivalent forms of some sequents $a:A \vdash_{S_1} b:B$ and $b^\prime:B \vdash_{S_2} c:C$ respectively. For any two such denotations, where the codomain of the first matches the domain of the second, we will define a \textit{composition} operator $\circ$ that can combine them into $[g] \circ [f] : A \rightarrow C$. The new denotation will represent all the soup equivalent forms of the sequent that is generated by combining the two sequents using the Cut rule:
\begin{prooftree}
    \AxiomC{$a:A \vdash_{S_1} b:B$}
    \AxiomC{$b^\prime:B \vdash_{S_2} c:C$}
    \RightLabel{\footnotesize Cut}
    \BinaryInfC{$a:A \vdash_{S_1 \cup S_2 \cup \{b : b^\prime\}} c:C$}
\end{prooftree}

The composition operation we just defined inherits associativity from the Cut rule; the order in which Cuts are performed does not matter since the connected terms are allowed to "float" freely within the soup. Therefore, $[h] \circ ([g] \circ [f]) = ([h] \circ [g]) \circ [f]$. Moreover, for every type $A$, there is a denotation $[id_A]$ that represents the sequent generated by the Identity axiom (Id): $x:A \vdash x:A$.

Composing a denotation $[f]:A \rightarrow B$ with an identity yields $[f] \circ [id_A]$ or $[id_B] \circ [f]$ depending on whether we compose with an identity on the right or on the left. The two resulting denotations represent

\begin{center}
\AxiomC{$x:A \vdash x:A$}
\AxiomC{$a:A \vdash_S b:B$}
\BinaryInfC{$x:A \vdash_{S \cup \{x : a\}} b:B$}
\DisplayProof
\hskip 0.3cm and \hskip 0.3cm
\AxiomC{$a:A \vdash_S b:B$}
\AxiomC{$x:B \vdash x:B$}
\BinaryInfC{$a:A \vdash_{S \cup \{b : x\}} x:B$}
\DisplayProof
\end{center}

\noindent both of which are soup equivalent to $a:A \vdash_S b:B$ and the rest of the sequents represented by $[f]$. Hence $[id_B] \circ [f] = [f] = [f] \circ [id_A]$
\end{proof}

We call this category the \textit{syntactic category}. We will now incrementally check that it satisfies the criteria needed to be a dagger compact category.

\begin{definition}[Syntactic category notational conventions]
For notational convenience, we define the following combinators:
\[ \alpha_{A,B,C} := \lambda\left(a \otimes (b \otimes c)\right).\left((a \otimes b) \otimes c\right) : \left(A \otimes (B \otimes C)\right) \multimap \left((A \otimes B) \otimes C\right) \]
\[ \alpha_{A,B,C}^{-1} := \lambda\left((a \otimes b) \otimes c\right).\left(a \otimes (b \otimes c)\right) : \left((A \otimes B) \otimes C\right) \multimap \left(A \otimes (B \otimes C)\right) \]
\[ \lambda_A := \lambda(1 \otimes a).a : (I \otimes A) \multimap A \]
\[ \lambda_A^{-1} := \lambda a.(1 \otimes a) : A \multimap (I \otimes A) \]
\[ \rho_A := \lambda(a \otimes 1).a : (A \otimes I) \multimap A \]
\[ \rho_A^{-1} := \lambda a.(a \otimes 1) : A \multimap (A \otimes I) \]
\[ \sigma_{A,B} := \lambda(a \otimes b).(b \otimes a) : (A \otimes B) \multimap (B \otimes A) \]
\[ \sigma_{A,B}^{-1} := \sigma_{B,A} = \lambda(b \otimes a).(a \otimes b) : (B \otimes A) \multimap (A \otimes B) \]
\[ \eta_A := \lambda 1.(x_* \otimes x) : I \multimap (A^* \otimes A) \]
\[ \varepsilon_A := \lambda(x \otimes x_*).1 : (A \otimes A^*) \multimap I \]
\end{definition}

\begin{theorem}[Monoidal category]
The syntactic category is a monoidal category
\end{theorem}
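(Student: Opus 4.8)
The plan is to establish the three ingredients of a monoidal category—an associative-up-to-isomorphism tensor bifunctor, a unit object with left and right unitors, and the coherence (pentagon and triangle) conditions—entirely inside the syntactic category, using the combinators just defined. First I would verify that $\otimes$ is a bifunctor on $\mathcal{C}_{Synt}$: on objects it is the tensor of types, and on arrows it sends denotations $[f]:A\to B$ and $[g]:C\to D$ to the denotation of $\bar{t}\,f\,g$ (equivalently $\lambda(x_1\otimes x_2).(fx_1\otimes gx_2)$), which has type $(A\otimes C)\multimap(B\otimes D)$. I would check functoriality by soup reduction: that $\bar{t}\,id_A\,id_C$ reduces to $id_{A\otimes C}$ (unfold the $\lambda$-abstractions into $\otimes$-terms, apply $\beta$-reduction, then use bifunctoriality and consumption on the soup), and that $\bar{t}(g\circ f)(g'\circ f')$ is soup-equivalent to $(\bar{t}\,g\,g')\circ(\bar{t}\,f\,f')$—here the Cut that defines $\circ$ just adds soup connections, and bifunctoriality lets the paired tensored terms split and re-pair so the two derivations collapse to the same denotation. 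Well-definedness on equivalence classes is automatic because soup-equivalent representatives yield soup-equivalent composites.

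Next I would confirm that $\alpha_{A,B,C}$, $\lambda_A$, $\rho_A$ and their inverses are genuine isomorphisms in $\mathcal{C}_{Synt}$, i.e.\ that composing each with its stated inverse gives the appropriate identity denotation. For $\alpha$, one Cuts $\lambda(a\otimes(b\otimes c)).((a\otimes b)\otimes c)$ with $\lambda((a\otimes b)\otimes c).(a\otimes(b\otimes c))$; $\beta$-reduction and repeated bifunctoriality on the nested tensor connections in the soup, followed by consumption, leaves exactly $x:A\otimes(B\otimes C)\vdash x$, the identity denotation. The same pattern handles $\lambda_A\circ\lambda_A^{-1}$ and $\rho_A\circ\rho_A^{-1}$, where the trace/cancellation rules additionally dispose of the $\{1:1\}$ connections produced by the scalar $1:I$. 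Naturality of $\alpha$, $\lambda$, $\rho$ amounts to showing, for instance, that $\alpha_{A',B',C'}\circ(f\otimes(g\otimes h))$ and $((f\otimes g)\otimes h)\circ\alpha_{A,B,C}$ are soup-equivalent; unfolding both sides into $\otimes$-terms and soup connections, the only difference is the bracketing of a triple tensor, which bifunctoriality flattens identically on both sides.

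The main obstacle will be the coherence conditions—the pentagon for $\alpha$ and the triangle relating $\alpha$, $\lambda$, $\rho$. Diagrammatically these are ``obvious'' because every arrow involved is built from pattern-matching $\lambda$-abstractions over tensors with no constants, so after $\beta$-reduction everything is a bundle of bound variables connected in the soup; bifunctoriality then splits all tensor connections down to atomic variable pairs, and the two composite paths around each coherence polygon induce literally the same set of atomic soup connections up to $\alpha$-renaming. The work is to carry out this bookkeeping carefully once (it is essentially a single lemma: any denotation between tensor types built solely from $\alpha^{\pm1},\lambda^{\pm1},\rho^{\pm1},\sigma,id,\otimes$ is determined by the induced permutation-with-reassociation of atomic wires), after which pentagon and triangle follow because both sides realise the identity reassociation. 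I would state and prove that normal-form lemma first, then read off all coherence equations as instances of it; the subject-reduction and confluence results already proved guarantee these normal forms are well-defined and unique, so no circularity arises.
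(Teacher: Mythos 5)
Your proposal is correct, and its core mechanism --- everything reduces to $\beta$-reduction, the bifunctoriality soup rule, and consumption of bound pairs --- is exactly the engine the paper's own proof runs on. The differences are organisational. The paper defines $[f]\otimes[g]$ directly via the $\otimes R$ sequent rule rather than through the combinator $\bar{t}$, establishes the interchange law $([g]\circ[f])\otimes([t]\circ[h])=([g]\otimes[t])\circ([f]\otimes[h])$ and preservation of identities just as you do, and then verifies the pentagon and the triangle by writing out the two soups for each diagram explicitly and observing that they are soup-equivalent; it does not separately check that $\alpha$, $\lambda$, $\rho$ are isomorphisms, nor that they are natural. You instead supply those two omitted verifications and then derive all coherence equations from a single normal-form lemma (structural denotations between tensor types are determined by their induced rewiring of atomic variables). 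Your route buys generality --- the same lemma would dispose of the symmetry coherence diagrams in the following theorem, and it makes explicit \emph{why} soup reduction trivialises coherence --- at the cost of having to state and prove the wire-permutation lemma carefully, including the degenerate handling of $I$-typed wires that get absorbed by the trace and cancellation rules. The paper's route is shorter for the two specific diagrams needed here but leaves the isomorphism and naturality checks implicit.
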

\begin{proof}
Let $[f]:A \rightarrow B$ and $[g]:C \rightarrow D$ be denotations representing the soup equivalent forms of some sequents $a:A \vdash_{S_1} b:B$ and $c:C \vdash_{S_2} d:D$ respectively. We define a monoidal product $\otimes$ that can combine them into $[f] \otimes [g] : A \otimes B \rightarrow C \otimes D$. The new denotation will represent all the soup equivalent forms of the sequent that is generated by combining the two sequents using the right tensor rule:
\begin{prooftree}
    \AxiomC{$a:A \vdash_{S_1} b:B$}
    \AxiomC{$c:C \vdash_{S_2} d:D$}
    \RightLabel{\footnotesize $\bigotimes R$}
    \BinaryInfC{$a:A, c:C \vdash_{S_1 \cup S_2} b \otimes d:B \otimes D$}
\end{prooftree}

Let $[f]:A \rightarrow B$, $[g]:B \rightarrow P$, $[h]:C \rightarrow D$ and $[t]:D \rightarrow Q$ be denotations in the syntactic category. Using composition and tensor, we can combine these denotations to form $([g] \circ [f]) \otimes ([t] \circ [h])$, which represents the soup equivalent sequents of:
\begin{prooftree}
    \AxiomC{$a:A \vdash_{S_1} b:B$}
    \AxiomC{$b^\prime:B \vdash_{S_3} p:P$}
    \RightLabel{\footnotesize Cut}
    \BinaryInfC{$a:A \vdash_{S_1 \cup S_3 \cup \{b:b^\prime\}} p:P$}
    \AxiomC{$c:C \vdash_{S_2} d:D$}
    \AxiomC{$d^\prime:D \vdash_{S_4} q:Q$}
    \RightLabel{\footnotesize Cut}
    \BinaryInfC{$c:C \vdash_{S_2 \cup S_4 \cup \{d:d^\prime\}} q:Q$}
    \RightLabel{\footnotesize $\bigotimes R$}
    \BinaryInfC{$a:A, c:C \vdash_{S_1 \cup S_2 \cup S_3 \cup S_4 \cup \{b:b^\prime, d:d^\prime\}} p \otimes q:P \otimes Q$}
\end{prooftree}

\noindent We can also combine the same denotations to form $([g] \otimes [t]) \circ ([f] \otimes [h])$, which represents the soup equivalent sequents of:
\begin{prooftree}
    \AxiomC{$a:A \vdash_{S_1} b:B$}
    \AxiomC{$c:C \vdash_{S_2} d:D$}
    \RightLabel{\footnotesize $\bigotimes R$}
    \BinaryInfC{$a:A, c:C \vdash_{S_1 \cup S_2} b \otimes d:B \otimes D$}
    \AxiomC{$b^\prime:B \vdash_{S_3} p:P$}
    \AxiomC{$d^\prime:D \vdash_{S_4} q:Q$}
    \RightLabel{\footnotesize $\bigotimes R$}
    \BinaryInfC{$b^\prime:B, d^\prime:D \vdash_{S_3 \cup S_4} p \otimes q:P \otimes Q$}
    \RightLabel{\footnotesize $\bigotimes L$}
    \UnaryInfC{$b^\prime \otimes d^\prime:B \otimes D \vdash_{S_3 \cup S_4} p \otimes q:P \otimes Q$}
    \RightLabel{\footnotesize Cut}
    \BinaryInfC{$a:A, c:C \vdash_{S_1 \cup S_2 \cup S_3 \cup S_4 \cup \{b \otimes d:b^\prime \otimes d^\prime\}} p \otimes q:P \otimes Q$}
\end{prooftree}

\noindent Applying our soup's bifunctoriality rule, $\{b \otimes d : b^\prime \otimes d^\prime\} \rightarrow \{b:b^\prime, d:d^\prime\}$, reduces one of these sequents to the other, which means that they belong to the same equivalence class. Hence, the two sequents are represented by the same denotation:
\[ ([g] \circ [f]) \otimes ([t] \circ [h]) = ([g] \otimes [t]) \circ ([f] \otimes [h]) \]

\noindent The tensor product also preserves identities since $a \otimes b : A \otimes B \vdash a \otimes b : A \otimes B$ is $\alpha$-equivalent to $x : A \otimes B \vdash x : A \otimes B$.

\vskip 0.1cm \noindent The denotation $[\alpha_{A \otimes B, C, D}] \circ [\alpha_{A,B,C \otimes D}]$ represents the soup equivalent sequents of:
\[ a:A, b \otimes (c \otimes d) : B \otimes (C \otimes D) \vdash_{S_1} \left((a_3 \otimes b_3) \otimes c_3\right) \otimes d_3 : \left((A \otimes B) \otimes C\right) \otimes D \]

\noindent where the soup $S_1$ is:
\[
\left\{
\begin{array}{l l}
    \alpha_{A,B,C \otimes D}    & : \lambda\left(a \otimes \left(b \otimes (c \otimes d)\right)\right).\left((a_2 \otimes b_2) \otimes (c_2 \otimes d_2)\right),\\
    \alpha_{A \otimes B,C,D}    & : \lambda\left((a_2 \otimes b_2) \otimes (c_2 \otimes d_2)\right).\left(\left((a_3 \otimes b_3) \otimes c_3\right) \otimes d_3\right)
\end{array}
\right\}
\]

\noindent Similarly, the denotation $([\alpha_{A,B,C}] \otimes [id_D]) \circ [\alpha_{A,B \otimes C,D}] \circ ([id_A] \otimes [\alpha_{B,C,D}])$ represents the soup equivalent sequents of:
\[ a:A, b \otimes (c \otimes d) : B \otimes (C \otimes D) \vdash_{S_2} \left((a_3 \otimes b_3) \otimes c_3\right) \otimes d_3 : \left((A \otimes B) \otimes C\right) \otimes D \]

\noindent where the soup $S_2$ is:
\[
\left\{
\begin{array}{l l}
    \alpha_{B,C,D}              & : \lambda\left(b \otimes (c \otimes d)\right).\left((b_1 \otimes c_1) \otimes d_1\right),\\
    \alpha_{A, B \otimes C,D}   & : \lambda\left(a \otimes \left((b_1 \otimes c_1) \otimes d_1\right)\right).\left(\left(a_2 \otimes (b_2 \otimes c_2)\right) \otimes d_3\right),\\
    \alpha_{A,B,C}              & : \lambda\left(a_2 \otimes (b_2 \otimes c_2)\right).\left((a_3 \otimes b_3) \otimes c_3\right)
\end{array}
\right\}
\]

\noindent We can use our soup's bifunctoriality and substitution rules to show that $S_1$ and $S_2$ are equivalent. Hence, the two denotations we constructed are equal, which corresponds to the associativity pentagon for monoidal categories:
\[ [\alpha_{A \otimes B, C, D}] \circ [\alpha_{A,B,C \otimes D}] = ([\alpha_{A,B,C}] \otimes [id_D]) \circ [\alpha_{A,B \otimes C,D}] \circ ([id_A] \otimes [\alpha_{B,C,D}]) \]
\[
\scalebox{1} 
{
\begin{pspicture}(4,0)(10.582812,4.382969)
\usefont{T1}{ptm}{m}{n}
\rput(7.3723435,4.1945314){$(A \otimes B) \otimes (C \otimes D)$}
\usefont{T1}{ptm}{m}{n}
\rput(2.9723437,1.9945313){$A \otimes (B \otimes (C \otimes D))$}
\usefont{T1}{ptm}{m}{n}
\rput(11.772344,1.9945313){$((A \otimes B) \otimes C) \otimes D$}
\usefont{T1}{ptm}{m}{n}
\rput(2.9723437,-0.80546874){$A \otimes ((B \otimes C) \otimes D)$}
\usefont{T1}{ptm}{m}{n}
\rput(11.772344,-0.80546874){$(A \otimes (B \otimes C)) \otimes D$}
\psline[linewidth=0.04cm,arrowsize=0.05291667cm 2.0,arrowlength=1.4,arrowinset=0.4]{->}(3.0209374,1.6845312)(3.0209374,-0.5154688)
\psline[linewidth=0.04cm,arrowsize=0.05291667cm 2.0,arrowlength=1.4,arrowinset=0.4]{<-}(11.820937,1.6845312)(11.820937,-0.5154688)
\psline[linewidth=0.04cm,arrowsize=0.05291667cm 2.0,arrowlength=1.4,arrowinset=0.4]{->}(4.8209376,-0.8154687)(10.020938,-0.8154687)
\usefont{T1}{ptm}{m}{n}
\rput(4.4623437,3.1945312){$a$}
\usefont{T1}{ptm}{m}{n}
\rput(10.262343,3.1945312){$a$}
\usefont{T1}{ptm}{m}{n}
\rput(7.262344,-0.60546875){$a$}
\usefont{T1}{ptm}{m}{n}
\rput{-270.0}(3.2314062,-2.0314062){\rput(2.6123438,0.59453124){$id \otimes a$}}
\usefont{T1}{ptm}{m}{n}
\rput{-90.0}(11.631406,12.831407){\rput(12.212344,0.59453124){$a \otimes id$}}
\psline[linewidth=0.04cm,arrowsize=0.05291667cm 2.0,arrowlength=1.4,arrowinset=0.4]{->}(3.0209374,2.2845314)(6.6209373,3.8845313)
\psline[linewidth=0.04cm,arrowsize=0.05291667cm 2.0,arrowlength=1.4,arrowinset=0.4]{<-}(11.820937,2.2845314)(8.220938,3.8845313)
\end{pspicture}
}
\]

\noindent Now consider the denotation $([\rho_A] \otimes [id_B]) \circ [\alpha_{A,I,B}]$, which represents the soup equivalent sequents of:
\[ a:A, i \otimes b : I \otimes B \vdash_{S_3} a_3 \otimes b_2 : A \otimes B \]

\noindent where the soup $S_3$ is:
\[
\left\{
\begin{array}{l l}
    \alpha_{A,I,B}  & : \lambda(a \otimes (i \otimes b)).((a_2 \otimes i_2) \otimes b_2),\\
    \rho_A          & : \lambda(a_2 \otimes i_2).a_3
\end{array}
\right\}
\]

\noindent and the denotation $[id_A] \otimes [\lambda_B]$, which corresponds to the soup equivalent sequents of:
\[ a:A, i \otimes b : I \otimes B \vdash_{S_4} a_3 \otimes b_2 : A \otimes B \]

\noindent where the soup $S_4$ is:
\[
\left\{
\begin{array}{l l}
    \lambda_B   & : \lambda(i \otimes b).b_2,\\
    a           & : a_3
\end{array}
\right\}
\]

\noindent The two soups, $S_3$ and $S_4$, are equivalent up to soup reduction, which means that the two denotations representing them are equal:
\[ ([\rho_A] \otimes [id_B]) \circ [\alpha_{A,I,B}] = [id_A] \otimes [\lambda_B] \]

The syntactic category, therefore, satisfies all of the requirements and coherence conditions of a monoidal category.
\end{proof}

\begin{theorem}[Symmetric monoidal category]
The syntactic category is a symmetric monoidal category
\end{theorem}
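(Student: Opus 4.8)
The plan is to reuse the template from the proof that $\mathcal{C}_{Synt}$ is monoidal: take the symmetry to be the denotation $[\sigma_{A,B}] : A \otimes B \to B \otimes A$ obtained by uncurrying the combinator $\sigma_{A,B} := \lambda(a \otimes b).(b \otimes a)$ defined above (equivalently, the denotation of the sequent $a \otimes b : A \otimes B \vdash b \otimes a : B \otimes A$), and then verify each coherence requirement by writing the relevant composite denotations as explicit sequents-with-soup and reducing the two soups to a common form using bifunctoriality, trace, cancellation and the consumption rule. Since the previous theorem already established that composition, the tensor product, and the unit/associativity morphisms behave correctly in $\mathcal{C}_{Synt}$, what remains is: (1) naturality of $\sigma$; (2) that $\sigma_{B,A} \circ \sigma_{A,B} = id_{A \otimes B}$, so that $\sigma$ is a natural isomorphism; (3) the symmetry hexagon relating $\sigma_{A \otimes B, C}$ to the associators and the ``localized'' swaps $\sigma_{A,C} \otimes id_B$ and $id_A \otimes \sigma_{B,C}$; and (4) the unit triangle $\lambda_A \circ \sigma_{A,I} = \rho_A$.

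For naturality, given $[f] : A \to B$ and $[g] : C \to D$ I would form the two denotations $[\sigma_{B,D}] \circ ([f] \otimes [g])$ and $([g] \otimes [f]) \circ [\sigma_{A,C}]$ using the composition and $\bigotimes R$ constructions, present each as a single sequent over the union of the soups of $f$, $g$ and the $\sigma$-combinator involved, and then apply bifunctoriality to the connection created by the Cut between the swap and the tensored pair. In both directions the connections rearrange so that $f$ consumes the first component and $g$ the second while their outputs are delivered in the swapped order; after a handful of reductions the two soups coincide, giving the diagrammatic identity $\sigma_{B,D} \circ (f \otimes g) = (g \otimes f) \circ \sigma_{A,C}$. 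For (2), the composite $[\sigma_{B,A}] \circ [\sigma_{A,B}]$ is the denotation of a sequent whose soup links the two swap terms through a Cut; bifunctoriality splits each swap into its component connections, the consumption rule collapses the resulting chain, and what is left is $\alpha$-equivalent to $x : A \otimes B \vdash x : A \otimes B = [id_{A \otimes B}]$.

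For the unit triangle (4), the two composites $[\lambda_A] \circ [\sigma_{A,I}]$ and $[\rho_A]$ unfold into sequents in which a scalar constant $1$ is threaded through an $I$-typed wire; pushing $1$ along using the $\lambda_\Gamma$ and $\rho_\Gamma$ unit rules together with the cancellation rule reduces both to the same denotation $A \otimes I \to A$. For the hexagon (3), I would write out both composite denotations exactly as the pentagon was handled in the monoidal-category proof: a sequent $a : A,\ b \otimes c : B \otimes C \vdash \ldots : (C \otimes A) \otimes B$ with components suitably indexed, together with a soup listing each associator and each localized swap alongside its defining $\lambda$-term, and then reduce both soups to a common normal form using bifunctoriality and the consumption rule. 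The coherence theorem for symmetric monoidal categories then guarantees that no further diagrams need to be checked.

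I expect the hexagon to be the main obstacle, purely on grounds of bookkeeping rather than conceptual difficulty: three associators and two localized swaps have to be threaded through nested tensors, and one must be careful to $\alpha$-rename before merging sequents so that the linearity constraints stay satisfied and no variable acquires a third occurrence. Naturality, the self-inverse law, and the unit triangle are each short soup computations of the same flavour as those already performed, so the bulk of the effort lies in organising the hexagon's reductions cleanly.
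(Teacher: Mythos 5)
Your proposal is correct and follows essentially the same route as the paper: the symmetry is taken to be the denotation of the $\sigma_{A,B}$ combinator, and each coherence condition (self-inverse law, unit triangle, hexagon) is verified by writing both composites as sequents-with-soup and reducing the two soups to a common form. If anything you are slightly more thorough than the paper, which does not explicitly check naturality of $\sigma$ but only the three coherence diagrams.
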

\begin{proof}
Consider the denotation $[\sigma_{B,A}] \circ [\sigma_{A,B}]$, representing the soup equivalent sequents of:
\[ a:A, b:B \vdash_{\{ \sigma_{A,B} : \lambda(a \otimes b).(b_2 \otimes a_2),\; \sigma_{B,A} : \lambda(b_2 \otimes a_2).(a_3 \otimes b_3)\}} a_3 \otimes b_3 : A \otimes B \]

\noindent These sequents are soup equivalent to $x:A \otimes B \vdash x:A \otimes B$, which means that the denotation used as a symmetry isomorphism is involutive:
\[ [\sigma_{B,A}] \circ [\sigma_{A,B}] = [id_{A \otimes B}] \]

\noindent Moreover, if we consider $[\lambda_A] \circ [\sigma_{A,I}]$, the denotation corresponding to the sequent:
\[ a:A, i:I \vdash_{\{ \sigma_{A,I} : \lambda(a \otimes i).(i_2 \otimes a_2), \lambda_A : \lambda(i_2 \otimes a_2).a_3 \}} a_3:A \]

\noindent and $[\rho_A]$, the denotation corresponding to:
\[ a:A, i:I \vdash_{\{ \rho_A : \lambda(a \otimes i).a_3 \}} a_3:A \]

\noindent Since the two sequents are soup equivalent, we can say that:
\[ [\rho_A] = [\lambda_A] \circ [\sigma_{A,I}] \]

\noindent Finally, consider the denotation $[\alpha_{C,A,B}] \circ [\sigma_{A \otimes B,C}] \circ [\alpha_{A,B,C}]$, which represents the soup equivalent sequents of:
\[ a:A, b \otimes c : B \otimes C \vdash_{S_5} (c_3 \otimes a_3) \otimes b_3 : (C \otimes A) \otimes B \]

\noindent where the soup $S_5$ is:
\[
\left\{
\begin{array}{l l}
    \alpha_{A,B,C}          & : \lambda\left(a \otimes (b \otimes c)\right).\left((a_1 \otimes b_1) \otimes c_1\right),\\
    \sigma_{A \otimes B, C} & : \lambda\left((a_1 \otimes b_1) \otimes c_1\right).\left(c_2 \otimes (a_2 \otimes b_2)\right),\\
    \alpha_{C,A,B}          & : \lambda\left(c_2 \otimes (a_2 \otimes b_2)\right).\left((c_3 \otimes a_3) \otimes b_3\right)
\end{array}
\right\}
\]

\noindent and the denotation $([\sigma_{A,C}] \otimes [id_B]) \circ [\alpha_{A,C,B}] \circ ([id_A] \otimes [\sigma_{B,C}])$, which represents the soup equivalent sequents of:
\[ a:A, b \otimes c : B \otimes C \vdash_{S_6} (c_3 \otimes a_3) \otimes b_3 : (C \otimes A) \otimes B \]

\noindent where the soup $S_6$ is:
\[
\left\{
\begin{array}{l l}
    \sigma_{B, C}   & : \lambda(b \otimes c).(c_1 \otimes b_1),\\
    \alpha_{A,C,B}  & : \lambda\left(a \otimes (c_1 \otimes b_1)\right).\left((a_2 \otimes c_2) \otimes b_3\right),\\
    \sigma_{A, C}   & : \lambda(a_2 \otimes c_2).(c_3 \otimes a_3)
\end{array}
\right\}
\]

\noindent The two sequents are soup equivalent, so the denotations representing them are equal:
\[ [\alpha_{C,A,B}] \circ [\sigma_{A \otimes B,C}] \circ [\alpha_{A,B,C}] = ([\sigma_{A,C}] \otimes [id_B]) \circ [\alpha_{A,C,B}] \circ ([id_A] \otimes [\sigma_{B,C}]) \]

The syntactic category, therefore, satisfies all of the requirements and coherence conditions of a symmetric monoidal category.
\end{proof}

\begin{theorem}[Compact closure]
The syntactic category is a compact closed category
\end{theorem}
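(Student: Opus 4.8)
The syntactic category has already been established to be symmetric monoidal, so what remains is to equip every object with a dual together with a unit and counit satisfying the triangle (``yanking'') identities. For each type $A$ I take the dual object to be $A^*$; recall that the star is strictly involutive, $(A^*)^* \equiv A$, and that $I^* \equiv I$ and $(A \otimes B)^* \equiv B^* \otimes A^*$, so this assignment is already compatible with the monoidal structure. For the unit and counit I take the denotations $[\eta_A]$ and $[\varepsilon_A]$ of the combinators $\eta_A := \lambda 1.(x_* \otimes x)$ and $\varepsilon_A := \lambda(x \otimes x_*).1$ introduced in the syntactic category's notational conventions, whose types are $I \multimap (A^* \otimes A)$ and $(A \otimes A^*) \multimap I$ respectively. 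Uniqueness of the dual up to canonical isomorphism is then automatic, and by the Corollary on closure of compact closed categories we moreover obtain closedness of the syntactic category for free once compact closure is in place.

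The plan is to verify the two identities $(\varepsilon_A \otimes id_A) \circ (id_A \otimes \eta_A) = id_A$ and $(id_{A^*} \otimes \eta_A) \circ (\varepsilon_A \otimes id_{A^*}) = id_{A^*}$ --- read in the syntactic category, hence with the appropriate associator and unitor denotations inserted --- by direct soup computation, in exactly the style of the monoidal-category proof above. First I would form the composite denotation for the left-hand side of the first identity; its soup is the union of the defining soups of $[\eta_A]$, $[\varepsilon_A]$, the associator and the unitors involved, together with the Cut connections produced by $\circ$ and the $\alpha$-renamings forced by the linearity constraint (in particular the bound variable named $x$ occurring in both $\eta_A$ and $\varepsilon_A$ must be renamed apart before the component sequents are merged by $\otimes R$ and Cut). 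I would then reduce this soup: repeatedly apply \textit{bifunctoriality}, $\{s \otimes t : u \otimes v\} \longrightarrow \{s:u,\, t:v\}$, to atomise connections between tensored terms, and then the \textit{consumption} rule to eliminate each connection whose endpoints are bound, thereby performing the induced global substitutions. The outcome is the sequent $x:A \vdash x:A$ up to $\alpha$-equivalence, that is, $[id_A]$. What keeps the calculation clean is that the wire traced out by the snake never closes into a loop, so the \textit{trace} rule --- the only rule that would manufacture a dimension scalar $D_A$ --- is never triggered. The second identity follows either by running the same computation with the roles of $A$ and $A^*$ interchanged, or, more economically, by applying the Negation rule (equivalently the admissible $\dag$-flip) to the first, since Negation carries $[\eta_A]$ and $[\varepsilon_A]$ to the symmetry-twisted forms of $[\varepsilon_A]$ and $[\eta_A]$.

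I expect the main obstacle to be organisational rather than conceptual: the soups involved are sizeable, and one must be disciplined about which occurrences of which variables are bound at each stage, about the $\alpha$-renamings required so that $\otimes R$ and Cut may legally merge the component sequents, and about threading the associator and unitor combinators correctly, since the syntactic tensor is only weakly associative and unital. None of this introduces new difficulties beyond the bookkeeping already exercised in the monoidal and symmetric-monoidal theorems. Once both triangle identities have been checked in this way, the syntactic category is compact closed, and hence also closed by the earlier Corollary.
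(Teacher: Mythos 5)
Your proposal is correct and follows essentially the same route as the paper: the paper also takes $A^*$ as the dual with $[\eta_A]$ and $[\varepsilon_A]$ as unit and counit, forms the composites $[\lambda_A] \circ ([\varepsilon_A] \otimes [id_A]) \circ [\alpha_{A,A^*,A}] \circ ([id_A] \otimes [\eta_A]) \circ [\rho_A]^{-1}$ and its $A^*$ counterpart with the unitors and associators inserted exactly as you describe, writes out the resulting soups, and reduces them to the identity sequents by soup reduction. The only cosmetic difference is that the paper verifies the second yanking identity by a second direct computation rather than by your (equally valid) shortcut via Negation/$\dag$-flip.
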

\begin{proof}
Consider the denotation $[\lambda_A] \circ ([\varepsilon_A] \otimes [id_A]) \circ [\alpha_{A,A^*,A}] \circ ([id_A] \otimes [\eta_A]) \circ [\rho_A]^{-1}$, which represents the soup equivalent sequents of $x_1:A \vdash_{S_7} x_8:A$ where the soup $S_7$ is:
\[
\left\{
\begin{array}{l l}
    \rho_A^{-1}         & : \lambda x_1.(x_2 \otimes i_1),\\
    \eta_A              & : \lambda i_1.(x_{3*} \otimes x_4),\\
    \alpha_{A,A^*,A}    & : \lambda\left(x_2 \otimes (x_3 \otimes x_4)\right).\left((x_5 \otimes x_{6*}) \otimes x_7\right),\\
    \varepsilon_A       & : \lambda(x_5 \otimes x_{6*}).i_2,\\
    \lambda_A           & : \lambda(i_2 \otimes x_7).x_8
\end{array}
\right\}
\]

Using our soup reduction rules, we can reduce the above sequent to $x:A \vdash x:A$, hence showing that:
\[ [\lambda_A] \circ ([\varepsilon_A] \otimes [id_A]) \circ [\alpha_{A,A^*,A}] \circ ([id_A] \otimes [\eta_A]) \circ [\rho_A]^{-1} = [id_A] \]

Now, let us consider $[\rho_{A^*}] \circ ([id_{A^*}] \otimes [\varepsilon_A]) \circ [\alpha_{A^*,A,A^*}]^{-1} \circ ([\eta_A] \otimes [id_{A^*}]) \circ [\lambda_{A^*}]^{-1}$; the denotation that corresponds to the soup equivalent sequents of $x_{1*}:A^* \vdash_{S_8} x_{8*}:A^*$, where the soup $S_8$ is:
\[
\left\{
\begin{array}{l l}
    \lambda_{A^*}^{-1}      & : \lambda x_{1*}.(i_1 \otimes x_{2*}),\\
    \eta_A                  & : \lambda i_1.(x_{3*} \otimes x_4),\\
    \alpha_{A^*,A,A^*}^{-1} & : \lambda\left((x_{3*} \otimes x_4) \otimes x_{2*}\right).\left(x_{5*} \otimes (x_6 \otimes x_{7})\right),\\
    \varepsilon_A           & : \lambda(x_6 \otimes x_{7*}).i_2,\\
    \rho_{A*}               & : \lambda(x_{5*} \otimes i_2).x_{8*}
\end{array}
\right\}
\]

Using our soup reduction rules, we can reduce the above sequent to $x_*:A^* \vdash x_*:A^*$, hence showing that:
\[ [\rho_{A^*}] \circ ([id_{A^*}] \otimes [\varepsilon_A]) \circ [\alpha_{A^*,A,A^*}]^{-1} \circ ([\eta_A] \otimes [id_{A^*}]) \circ [\lambda_{A^*}]^{-1} = [id_{A^*}] \]

The syntactic category thus satisfies both of the yanking conditions that are required of a compact closed category.
\end{proof}

\begin{theorem}[Dagger compact closure]
The syntactic category is a dagger compact category
\end{theorem}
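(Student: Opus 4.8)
The plan is to equip the syntactic category, which we have just shown to be compact closed, with a dagger built directly from the admissible $\dag$-flip rule. On objects the dagger is the identity. On a denotation $[f] : A \to B$ with representative sequent $a:A \vdash_S b:B$, set $[f]^\dag : B \to A$ to be the denotation of the sequent $b:B \vdash_{S_*} a:A$ produced by the $\dag$-flip rule; arrow-reversal and identity-on-objects are then immediate. The first point to settle is well-definedness: soup negation $(\cdot)_*$ is an involution that commutes with $\alpha$-renaming and with every soup reduction rule — bifunctoriality commutes with $(\cdot)_*$ because $(a \otimes b)_* = b_* \otimes a_*$ applies uniformly to both sides of the rule, trace because negation fixes $1$ and identifies $D_{A^*}$ with $D_A$, cancellation because $\{1:1\}_* = \{1:1\}$, and consumption because substitution commutes with $(\cdot)_*$. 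Hence soup-equivalent representatives of $[f]$ have soup-equivalent $\dag$-flips, so $[f]^\dag$ is a well-defined denotation.

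Next I would verify that $\dag$ is an involutive contravariant functor. Functoriality on identities is clear since $\emptyset_* = \emptyset$, so the $\dag$-flip of $x:A\vdash x:A$ is literally itself. For contravariance, the composite $[g]\circ[f]$ of $[f]:A\to B$ and $[g]:B\to C$ is represented by the Cut sequent $a:A \vdash_{S_1\cup S_2\cup\{b:b'\}} c:C$; its $\dag$-flip is $c:C \vdash_{S_{1*}\cup S_{2*}\cup\{b_*:b'_*\}} a:A$, and since $\{b_*:b'_*\} \equiv \{b':b\}$ by the soup congruence $\{t:u\}\equiv\{u_*:t_*\}$, this is exactly the Cut of the $\dag$-flips of $[g]$ and $[f]$ taken in the opposite order, i.e.\ a representative of $[f]^\dag \circ [g]^\dag$. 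Involutivity, $[f]^{\dag\dag} = [f]$, follows because applying $\dag$-flip twice uses $(S_*)_* = S$ and returns the domain and codomain to their original positions; alternatively one invokes the earlier interchangeability result, by which $\dag$-flip factors as Curry after Exchange after Uncurry after Negation, with Negation itself an involution.

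Finally I would check monoidality and the defining compact compatibility law. For $[f]:A\to B$ and $[g]:C\to D$ the tensor $[f]\otimes[g]$ is represented by $a\otimes c : A\otimes C \vdash_{S_1\cup S_2} b\otimes d : B\otimes D$; its $\dag$-flip is $b\otimes d : B\otimes D \vdash_{S_{1*}\cup S_{2*}} a\otimes c : A\otimes C$, and since $(S_1\cup S_2)_* = S_{1*}\cup S_{2*}$ and tensored contexts are interchanged with their components by the bidirectional $\otimes L$ rule, this is a representative of $[f]^\dag\otimes[g]^\dag$, so $\dag$ preserves the monoidal product. One also checks that the coherence denotations are unitary: the $\dag$-flip of the sequent defining $\sigma_{A,B}$ soup-reduces to the one defining $\sigma_{B,A}=\sigma_{A,B}^{-1}$, and similarly $[\alpha_{A,B,C}]^\dag = [\alpha_{A,B,C}]^{-1}$, $[\lambda_A]^\dag = [\lambda_A]^{-1}$, $[\rho_A]^\dag = [\rho_A]^{-1}$. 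It then remains to establish $[\sigma_{A,A^*}] \circ [\varepsilon_A]^\dag = [\eta_A]$: unfolding $\varepsilon_A = \lambda(x\otimes x_*).1$ and $\eta_A = \lambda 1.(x_*\otimes x)$, the $\dag$-flip of the sequent representing $[\varepsilon_A] : A\otimes A^* \to I$ gives a representative of $[\varepsilon_A]^\dag : I \to A\otimes A^*$, and composing it with $[\sigma_{A,A^*}]$ and running the bifunctoriality and consumption rules produces a sequent soup-equivalent to the one representing $[\eta_A] : I \to A^*\otimes A$. The main obstacle is exactly this last bundle of computations — the unitarity of the coherence isomorphisms together with the $\eta$/$\varepsilon$ compatibility — which is conceptually routine but demands care in tracking how the linear negation $(\cdot)_*$ permutes the tensor factors of the terms appearing in the soup.
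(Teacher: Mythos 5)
Your proposal is correct and follows essentially the same route as the paper: the dagger is defined on a denotation by applying the $\dag$-flip (soup negation) to a representative sequent, involutivity comes from $(S_{*})_{*} = S$, and the compact compatibility law is verified by reducing the soup of $[\sigma_{A,A^*}] \circ [\varepsilon_A]^\dag$ to that of $[\eta_A]$. You additionally spell out well-definedness on equivalence classes, contravariant functoriality, preservation of the tensor, and unitarity of the coherence isomorphisms, all of which the paper leaves implicit.
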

\begin{proof}
Let $[f] : A \rightarrow B$ be a denotation representing the soup equivalent sequents of $a:A \vdash_{S_9} b:B$. For every such denotation $[f]$, we define its dagger $[f]^\dag : B \rightarrow A$ such that it represents the soup equivalent sequents of the $\dag$-flipped version of the original sequent: $b:B \vdash_{S_{9*}} a:A$. The $\dag$-flip rule, however, is involutive since $(S_{9*})_* = S_9$, hence $([f]^\dag)^\dag = f$.

Now consider the denotation $[\sigma_{A,A^*}] \circ [\varepsilon_A]^\dag$, which represents the soup equivalent sequents of:
\[ i:I \vdash_{S_{10}} x_{3*} \otimes x_4 : A^* \otimes A \]

\noindent where the soup $S_{10}$ is:
\[
\left\{
\begin{array}{l l}
    \varepsilon_{A*}    & : \left(\lambda(x_1 \otimes x_{2*}).1\right)_*,\\
    \sigma_{A,A^*}      & : \lambda(x_1 \otimes x_{2*}).(x_{3*} \otimes x_4)
\end{array}
\right\}
\]

\noindent By using soup reduction, we get $S_{10} \rightarrow \{ \eta_A : \lambda i.(x_{3*} \otimes x_4) \}$. But the sequent $i:I \vdash_{\{ \eta_A : \lambda i.(x_{3*} \otimes x_4) \}} x_{3*} \otimes x_4 : A^* \otimes A$ is represented by the denotation $[\eta_A]$, which means that
\[ [\sigma_{A,A^*}] \circ [\varepsilon_A]^\dag = [\eta_A] \]

The syntactic category, therefore, satisfies all of the requirements of a dagger compact category.
\end{proof}

\subsection{Proof of equivalence}
\label{Subsection:Proof of equivalence}
We will now prove that the free dagger compact category $\mathcal{C}_{Free}$ is equivalent to the syntactic category $\mathcal{C}_{Synt}$.

\begin{lemma}[Essentially surjective on objects]
The set of objects in the free category and the set of objects in the syntactic category are surjective, up to isomorphism.
\end{lemma}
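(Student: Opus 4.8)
The plan is to exploit the fact that both categories are built from exactly the same generating data. By construction the objects of $\mathcal{C}_{Free}$ are the elements of the free $(\otimes, I, \Box^*)$-algebra $Dagger(\Sigma_0)$, and the types of $\dag\lambda$ — hence, after passing to denotations, the objects of $\mathcal{C}_{Synt}$ — are drawn from the very same set $Dagger(\Sigma_0)$. The only discrepancy is that $\dag\lambda$ bakes the involutivity of negation ($A \equiv (A^*)^*$) and the negation-invariance of the unit ($I \equiv I^*$) into its syntax as strict equalities on types, whereas in $\mathcal{C}_{Free}$ these hold only up to canonical isomorphism. So on underlying terms the object assignment of $F$ is the identity, and the content of the lemma is that quotienting by the equivalence classes induced by these two identifications does not change the object collection up to isomorphism.

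First I would make the object map of $F$ explicit: it sends a type $A \in Dagger(\Sigma_0)$ to its $\equiv$-class, which is an object of $\mathcal{C}_{Synt}$. This map is literally surjective, since every object of $\mathcal{C}_{Synt}$ is the class of one of its representatives and any such representative is an object of $\mathcal{C}_{Free}$; conversely every object of $\mathcal{C}_{Free}$ is already a type of $\dag\lambda$, so it is hit by $F$ on the nose. Hence surjectivity — in fact more than essential surjectivity — holds in both directions once we remember that the two object sets share the same underlying set $Dagger(\Sigma_0)$.

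Next I would check the one point that actually uses the categorical structure: that $F$ is well defined on $\equiv$-classes up to isomorphism, i.e. that if $A \equiv B$ in $\dag\lambda$ then the corresponding objects of $\mathcal{C}_{Free}$ are isomorphic. Since $\equiv$ is generated by the rewrites $A^{**} \to A$ and $I^* \to I$, it suffices to produce in $\mathcal{C}_{Free}$ the canonical isomorphisms $A^{**} \cong A$ and $I^* \cong I$. The first is compact closure: duals are unique up to canonical isomorphism, so $(A^*)^*$ and $A$ are canonically isomorphic, as witnessed by the $\dag$-compact structure (the units and counits). The second follows from the unit laws together with $\eta_I$ and $\varepsilon_I$. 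Composing these along the rewrite sequence witnessing $A \equiv B$ yields the required isomorphism in $\mathcal{C}_{Free}$, which also shows that the image of $F$ meets every isomorphism class.

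I do not expect a genuine obstacle: the statement is essentially bookkeeping about how $\dag\lambda$'s typing conventions interact with the canonical coherence isomorphisms of a dagger compact category. The only point that needs care — and the one I would foreground in the write-up — is to resist treating $A \equiv (A^*)^*$ and $I \equiv I^*$ as equalities on the free side; they must be realised as the canonical isomorphisms supplied by compact closure and the unit laws, so that $\equiv$-classes of types correspond to isomorphism classes of objects of $\mathcal{C}_{Free}$ rather than to strict equalities.
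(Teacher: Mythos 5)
Your proposal is correct and follows essentially the same route as the paper: both object collections are the free $(\otimes, I, \Box^*)$-algebra $Dagger(\Sigma_0)$, and the only thing to account for is the quotient induced by the typing conventions $A \equiv (A^*)^*$ and $I \equiv I^*$. The paper's proof is a two-line remark to this effect, while you additionally take care to realise those identifications as the canonical isomorphisms supplied by compact closure on the free side — a point the paper leaves implicit, so your write-up is, if anything, the more complete one.
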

\begin{proof}
Recall $Dagger(\Sigma_0)$; the free $(\otimes, I, \Box^*)$-algebra over the set of object variables $\Sigma_0$. The sets of objects in $\mathcal{C}_{Free}$ and $\mathcal{C}_{Synt}$ both correspond to $Dagger(\Sigma_0)$, up to the equivalence classes induced by $(A^*)^* \equiv A$ and $I^* \equiv I$.
\end{proof}

\begin{lemma}[Equal arrows correspond to equal denotations]
If two arrows, $\langle A,f,B \rangle$ and $\langle A,f^\prime,B \rangle$ are equal in the free category, then they will also be equal in the syntactic category: $[f] = [f^\prime] : A \rightarrow B$.
\end{lemma}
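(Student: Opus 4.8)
The plan is to deduce the statement from the universal property of $\mathcal{C}_{Free}$, using the fact --- already secured by the four theorems showing the syntactic category to be monoidal, symmetric monoidal, compact closed and dagger compact --- that $\mathcal{C}_{Synt}$ is a dagger compact category. First I would recall the canonical interpretation of the signature $\mathcal{G}$ into $\mathcal{C}_{Synt}$: objects, being elements of $Dagger(\Sigma_0)$, are shared by the two categories (up to the identifications $(A^*)^* \equiv A$ and $I \equiv I^*$, by the essential-surjectivity lemma), and each generating edge $f : A \rightarrow B$ of $\mathcal{G}$ is sent to the denotation $[f]$ of the sequent $a:A \vdash_{\{f : a_* \otimes b\}} b:B$. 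This data is precisely a graph homomorphism $\mathcal{G} \rightarrow U\mathcal{C}_{Synt}$, so freeness of $\mathcal{C}_{Free}$ yields a unique dagger compact functor $F : \mathcal{C}_{Free} \rightarrow \mathcal{C}_{Synt}$ --- the functor labelled in the Curry-Howard-Lambek diagram --- with $F\langle A,f,B\rangle = [f]$. Since functors preserve equalities, $\langle A,f,B\rangle = \langle A,f^\prime,B\rangle$ in $\mathcal{C}_{Free}$ immediately gives $[f] = F\langle A,f,B\rangle = F\langle A,f^\prime,B\rangle = [f^\prime]$ in $\mathcal{C}_{Synt}$.

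To make this rigorous I would spell out why $F$ is well defined, that is, why the assignment extends to a functor at all. Reading $\mathcal{C}_{Free}$ concretely, its arrows are morphism expressions built from generators, the structural arrows $\langle A\rangle,\alpha,\lambda,\rho,\sigma,\eta,\varepsilon$, composition, $\otimes$, star and dagger, quotiented by the smallest congruence containing the dagger compact axioms; I would send each such expression to the evident denotation (structural arrows to the denotations $[id_A],[\alpha_{A,B,C}],[\lambda_A],[\rho_A],[\sigma_{A,B}],[\eta_A],[\varepsilon_A]$ of the combinators introduced above, composition to iterated Cut, $\otimes$ to $\bigotimes R$, star to the Negation rule, dagger to $\dag$-flip). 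By subject reduction each of these syntactic operations respects soup equivalence, so they are genuine operations on denotations; hence it only remains to check, by induction on the derivation of an equation, that every dagger compact axiom holds between the corresponding denotations. But each such axiom is an instance of something already proved for $\mathcal{C}_{Synt}$: associativity and the unit laws for Cut, bifunctoriality and identity-preservation of $\otimes$, the pentagon and the triangle; the symmetry laws $[\sigma_{B,A}]\circ[\sigma_{A,B}] = [id_{A\otimes B}]$, $[\rho_A] = [\lambda_A]\circ[\sigma_{A,I}]$ and the hexagon; the two yanking equations; and involutivity and contravariance of the dagger, its compatibility with $\otimes$, and $[\sigma_{A,A^*}]\circ[\varepsilon_A]^\dag = [\eta_A]$. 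Any naturality square not displayed explicitly in those proofs (naturality of $\alpha,\lambda,\rho,\sigma$, or functoriality of star) is verified by the same routine soup calculation used throughout: write out the two soups, split tensored connections via bifunctoriality, consume the resulting bound pairs, and observe that the two sequents become $\alpha$-equivalent.

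The main obstacle --- or rather the only place calling for genuine care --- is the bookkeeping around the object-level identifications and the behaviour of the star operation on arrows. Both categories impose $(A^*)^* \equiv A$, $I \equiv I^*$ and the ``planar'' rule $(A\otimes B)^* \equiv B^*\otimes A^*$ on $Dagger(\Sigma_0)$, and the argument must confirm that the star map $\langle A,f,B\rangle \mapsto \langle A^*,f_*,B^*\rangle$ of $\mathcal{C}_{Free}$, which relabels endpoints \emph{without} reversing direction, is the one matched by the Negation rule --- and not by the complex-conjugation combinator $\bar{s}f$, nor by $\dag$-flip, which does reverse direction. Once this matching is fixed, the remaining base cases are mechanical consequences of the preceding theorems, the induction closes, and the lemma follows from functoriality of $F$ as above.
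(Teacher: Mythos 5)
Your proposal is correct and follows essentially the same route as the paper: both arguments rest on the fact that $\mathcal{C}_{Free}$ imposes only the dagger compact axioms on expressions over the common signature $\mathcal{G}$, and that the preceding theorems establish every such axiom as an equality of denotations in $\mathcal{C}_{Synt}$, so the interpretation (the functor $F$ of the Curry--Howard--Lambek diagram) preserves equality of arrows. Your version is considerably more explicit than the paper's brief sketch --- in particular about the well-definedness of $F$ and the matching of the star operation to the Negation rule rather than to $\dag$-flip --- but the underlying idea is the same.
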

\begin{proof}
The structure of the free category $\mathcal{C}_{Free}$ imposes the minimum number of equalities for a category to be dagger compact. Moreover, both the free category and the syntactic category derive their symbols from the same signature graph $\mathcal{G}$. Since we have already shown that $\mathcal{C}_{Synt}$ is dagger compact, the same steps can be used to show that any arrows $\langle A,f,B \rangle$ and $\langle A,f^\prime,B \rangle$ that are equal in the free category, correspond to equal denotations $[f] = [g]$ in the syntactic category.
\end{proof}

\begin{lemma}[Equal denotations correspond to equal arrows]
Any denotations that are equal in the syntactic category, correspond to equal arrows in the free category.
\end{lemma}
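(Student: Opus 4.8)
The plan is to build an interpretation of the sequents of $\dag\lambda$ directly into the free category $\mathcal{C}_{Free}$, show it is invariant under soup reduction, and hence obtain a well-defined function $\Phi$ on denotations that sends each denotation to the $\mathcal{C}_{Free}$-arrow out of which it is assembled. The lemma is then immediate: if $[f]=[f']$ in $\mathcal{C}_{Synt}$, then $\langle A,f,B\rangle = \Phi([f]) = \Phi([f']) = \langle A,f',B\rangle$ in $\mathcal{C}_{Free}$; equivalently, $\Phi$ exhibits the canonical functor $F:\mathcal{C}_{Free}\to\mathcal{C}_{Synt}$ as faithful on arrows. Together with the two preceding lemmas this yields the equivalence $\mathcal{C}_{Free}\simeq\mathcal{C}_{Synt}$.

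First I would define $\Phi$ by induction on a derivation of $\Gamma \vdash_S t:B$, reading off an arrow $\bigotimes\Gamma \to B$ in $\mathcal{C}_{Free}$. The base cases are the Id axiom, interpreted as an identity, and the constant-introducing sequents $a:A \vdash_{\{f:a_*\otimes b\}} b:B$, interpreted as the generator $\langle A,f,B\rangle$. The inductive clauses use the dagger compact structure of $\mathcal{C}_{Free}$: $\otimes R$ gives the monoidal product, $\otimes L$ the associator, Cut gives composition, Negation gives $f \mapsto f_*$, Curry gives the compact-closed transpose, and $\lambda_\Gamma,\rho_\Gamma$ give the unitors; each soup connection $\{u:v\}$ contributes a cut, i.e. a composition with the appropriate $\eta_A$ or $\varepsilon_A$ inserted, mirroring how Cut and $\otimes R$ accumulate soup pairs. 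I would then check that this reading does not depend on the chosen derivation — in particular on the order in which connections are discharged and in which exchange and $\otimes L$ are applied — using the coherence and naturality equations that already hold in $\mathcal{C}_{Free}$.

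The heart of the proof is soup-invariance: for each soup reduction step $J_1 \to J_2$ one must show $\Phi(J_1)=\Phi(J_2)$. $\alpha$-renaming is invisible to $\Phi$, since bound names carry no information in $\mathcal{C}_{Free}$. Bifunctoriality, $\{a\otimes b : c\otimes d\} \to \{a:c,\,b:d\}$, translates to the interchange law $(p\otimes q)\circ(r\otimes s) = (p\circ r)\otimes(q\circ s)$ combined with the way the two resulting cuts sit against the $\eta/\varepsilon$ pair. The trace rule $\{x:_A x\} \to \{D_A:1\}$ is precisely the statement that the categorical trace of $id_A$ is the scalar $\dim(A)$, which is how $D_A$ is interpreted in $\mathcal{C}_{Free}$, and cancellation $\{1:1\}\to\emptyset$ is the unit law for the scalar monoid. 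The genuine obstacle is the consumption rule: a global substitution of a bound term $u$ for $v$ across the entire judgement must leave the interpreted arrow fixed. The hard part will be organising this into a clean induction on the position of the bound variable, where each step is an instance of associativity of cut or naturality of the structural isomorphisms; this is exactly where the sequent-wide nature of binding, together with the freedom of connected terms to ``float'' within the soup, does the real work.

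Finally, since soup equivalence is generated by $\alpha$-renaming and soup reduction, invariance of $\Phi$ under those generators makes it descend to a function on denotations. By construction it agrees with $[f]\mapsto\langle A,f,B\rangle$ on generator sequents and, by the inductive clauses, with the corresponding composite, tensor and dagger of such arrows on all sequents, so it is a left inverse to $F$ on arrows. Injectivity of $F$ on arrows — which is exactly the statement of the lemma — follows at once.
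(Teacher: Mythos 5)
Your proposal is correct and, at its core, runs on the same engine as the paper's proof: an induction over the four soup rules in which \emph{bifunctoriality} is discharged by the interchange law $(p\otimes q)\circ(r\otimes s)=(p\circ r)\otimes(q\circ s)$ in $\mathcal{C}_{Free}$, \emph{trace} and \emph{cancellation} by the scalar/dimension and unit laws, and \emph{consumption} by a case split on where the bound variable sits, each case collapsing to composition with an identity. Where you genuinely diverge is in the packaging: the paper works directly with denotations, decomposing $[f]$ and $[g]$ ad hoc into pieces like $([\pi_2]\circ[\pi_1])\otimes([\pi_4]\circ[\pi_3])$ and asserting the corresponding free-category equalities, whereas you first build an explicit interpretation $\Phi$ of sequents into $\mathcal{C}_{Free}$ by induction on derivations, prove derivation-independence via coherence, and only then prove soup-invariance so that $\Phi$ descends to denotations. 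This buys you something the paper leaves implicit: a verification that the assignment of a free-category arrow to a denotation is well defined at all, and an explicit left inverse to $F$ on arrows, which makes faithfulness immediate rather than argued case by case on the target side. The cost is that your two hardest steps --- derivation-independence of $\Phi$, and invariance under the consumption rule --- are still sketched rather than carried out; to complete the proof you would need to spell out the consumption case with the same three-way split (bound variable left of the turnstile, right of it, or elsewhere in the soup) that the paper performs, since that is where the global, sequent-wide substitution has to be reconciled with the purely local composition structure of $\mathcal{C}_{Free}$.
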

\begin{proof}
Let $[f] : \Gamma \rightarrow B$ and $[g] : \Gamma \rightarrow B$ be denotations in the syntactic category such that $[f] = [g]$. Since the two denotations are equal, the sequents they represent in the dagger lambda calculus must be equivalent up to soup reduction. Without loss of generality, let's assume that $[f]$ represents a sequent $J_1$ and that $[g]$ represents a sequent $J_2$, where $J_1 \rightarrow J_2$. The soup reduction relation consists of four soup rules: \textit{bifunctoriality}, \textit{trace}, \textit{cancellation} and \textit{consumption}. We prove this theorem by induction on the structure of the soup reduction linking $J_1$ and $J_2$:
\begin{itemize}
  \item If  we use a \textit{bifunctoriality} rule, then we will be reducing a sequent of the form $\Gamma \vdash_{S \cup \{ x_1 \otimes x_2 : x_3 \otimes x_4 \}} b:B$ to one of the form $\Gamma \vdash_{S \cup \{ x_1:x_3, x_2:x_4 \}} b:B$. By carefully separating the appropriate connections, we can break down $[g]$ into $([\pi_2] \circ [\pi_1]) \otimes ([\pi_4] \circ [\pi_3])$ where: $[\pi_1]$ represents $\Gamma_1 \vdash_{S_1} x_1:T_1$; $[\pi_2]$ represents $x_3:T_1 \vdash_{S_2} b_1:B_1$; $[\pi_3]$ represents $\Gamma_2 \vdash_{S_3} x_2:T_2$; $[\pi_4]$ represents $x_4:T_2 \vdash_{S_4} b_2:B_2$; and $\Gamma = \Gamma_1, \Gamma_2$, $S = S_1 \cup S_2 \cup S_3 \cup S_4$, $b = b_1 \otimes b_2$ and $B = B_1 \otimes B_2$. The individual $[\pi_i]$ denotations can be reconstructed in a different way to form $([\pi_2] \otimes [\pi_4]) \circ ([\pi_1] \otimes [\pi_3])$, which actually forms $[f]$. But $\langle \Gamma_1, \pi_1, T_1, \pi_2, B_1 \rangle \otimes \langle \Gamma_2, \pi_3, T_2, \pi_4, B_2 \rangle = \langle \Gamma, \pi_1 \otimes \pi_3, T_1 \otimes T_2, \pi_2 \otimes \pi_4, B \rangle$ because $\mathcal{C}_{Free}$ is a \textit{dagger compact category}, so  $\langle \Gamma, f, B \rangle = \langle \Gamma, g, B \rangle$.
  \item If we use a \textit{trace} rule on $J_1$, we do not in any way affect the information that is contained in the soup connection, we are simply rewriting it using different notation, so we are in no way affecting the derivation of $J_2$ from sequents represented by smaller denotations.
  \item If we use a \textit{cancellation} rule on $J_1$, the information contained on the connection we are striking out is nil, so we are in no way affecting the derivation of $J_2$ from sequents represented by smaller denotations. In this case, $[f]$ represents $\langle \Gamma, f, B \rangle = \langle \Gamma, g, B \rangle \otimes 1$ in the free category, which is the same as $\langle \Gamma, g, B \rangle$.
  \item If  we use a \textit{consumption} rule on $J_1$, then the bound variable we are substituting for will either appear to the left or to the right of the turnstile, or it will appear elsewhere in the soup.
      \begin{itemize}
        \item If the bound variable appears to the \textit{left} of the turnstile then the general form of the sequent $J_1$ can be written as $t_1:T_1, x:T_2, t_3:T_3 \vdash_{S \cup \{x:t_2\}} b:B$. Let $\Gamma = T_1 \otimes T_2 \otimes T_3$. Since $J_2$ is the result of consuming a soup connection that was created by performing a Cut with the identity, the sequent can be written as $t_1:T_1, t_2:T_2, t_3:T_3 \vdash_S b:B$. This means that $[f] = [g] \circ [id_\Gamma]$, which causes $\langle \Gamma, f, B \rangle = \langle \Gamma, g, B \rangle \circ \langle \Gamma \rangle = \langle \Gamma, g, B \rangle$.
        \item If the bound variable appears to the \textit{right} of the turnstile then the general form of the sequent $J_1$ can be written as $\Gamma \vdash_{S \cup \{t_2:x\}} t_1 \otimes x \otimes t_3:B$. Since $J_2$ is the result of consuming a soup connection that was created by performing a Cut with the identity, the sequent can be written as $\Gamma \vdash_S t_1 \otimes t_2 \otimes t_3:B$. This means that $[f] = [id_B] \circ [g]$, which causes $\langle \Gamma, f, B \rangle = \langle B \rangle \circ \langle \Gamma, g, B \rangle = \langle \Gamma, g, B \rangle$.
        \item If the bound variable appears \textit{elsewhere in the soup}, then $J_1$ will be the result of applying the Cut rule twice on $J_2$, each time with an instance of the identity. The denotation representing $J_1$ will either be $[f] = ([id_B] \circ [id_B]) \circ [g]$ or $[f] = [g] \circ ([id_\Gamma] \circ [id_\Gamma])$, both of which cause $\langle \Gamma, f, B \rangle = \langle \Gamma, g, B \rangle$.
      \end{itemize}
\end{itemize}
We have, therefore, shown that in all cases, $\langle \Gamma, f, B \rangle = \langle \Gamma, g, B \rangle$.
\end{proof}

\begin{theorem}[Equivalence between the free category and the syntactic category]
The free dagger compact category $\mathcal{C}_{Free}$ and the syntactic category $\mathcal{C}_{Synt}$ are equivalent.
\end{theorem}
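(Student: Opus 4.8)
The plan is to exhibit the canonical functor $F: \mathcal{C}_{Free} \to \mathcal{C}_{Synt}$ --- the map labelled $!F$ in the correspondence diagram --- and to show it is essentially surjective on objects, full, and faithful, which together make it an equivalence of categories. The functor arises from the universal property of the free dagger compact category: interpreting the signature graph $\mathcal{G}$ inside $\mathcal{C}_{Synt}$, by sending each object variable to its type and each edge $f : A \to B$ to the denotation $[f]$ of the sequent $a:A \vdash_{\{f : a_* \otimes b\}} b:B$, extends uniquely to a dagger-compact-structure-preserving functor $F$. Well-definedness of $F$ on arrows --- that arrows identified in $\mathcal{C}_{Free}$ go to the same denotation --- is precisely the lemma \emph{Equal arrows correspond to equal denotations}.

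First I would record essential surjectivity on objects: by the lemma \emph{Essentially surjective on objects} both object collections are $Dagger(\Sigma_0)$, up to the congruences $(A^*)^* \equiv A$ and $I^* \equiv I$, so $F$ is bijective on objects up to isomorphism. Faithfulness is then immediate from the lemma \emph{Equal denotations correspond to equal arrows}: if $[f] = [f']$ in $\mathcal{C}_{Synt}$ then $\langle A, f, B \rangle = \langle A, f', B \rangle$ in $\mathcal{C}_{Free}$.

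The remaining obligation is fullness: every denotation $[h]: A \to B$ must lie in the image of $F$. I would prove this by induction on a derivation of a representative sequent $\Gamma \vdash_S h:B$. The base cases are the identity axiom, whose denotation is $F\langle A \rangle$, and the constant-introducing sequents attached to edges of $\Sigma_1$, whose denotations are the $F\langle A, f, B \rangle$. Every other rule of the calculus --- Negation, $\otimes R$, $\otimes L$, Cut, Curry, the unit rules $\lambda_\Gamma$ and $\rho_\Gamma$, Exchange, and the admissible $\multimap E$ and $\dag$-flip --- corresponds to a named operation available in any dagger compact category (duality, tensor, the associator $\alpha$, composition, currying through $A \multimap B = A^* \otimes B$, the unitors, the symmetry $\sigma$, and the $\eta_A$, $\varepsilon_A$ maps), so by the induction hypothesis the denotation of the conclusion equals $F$ applied to the corresponding composite or tensor of $\mathcal{C}_{Free}$-arrows; the structural theorems that $\mathcal{C}_{Synt}$ is monoidal, symmetric, compact closed and dagger guarantee the result is independent of which derivation and which representative we chose. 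With fullness established, $F$ is full, faithful and essentially surjective, hence an equivalence; since it preserves the dagger compact structure, it is an equivalence of dagger compact categories, which is the statement.

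I expect the main obstacle to be exactly this independence-of-representative issue inside the fullness argument: a denotation is an equivalence class, not a single derivation, so the induction must be organised so that the $\mathcal{C}_{Free}$-arrow it builds does not depend on the chosen representative --- and this is where the Cut-elimination / soup-equivalence analysis and the lemma \emph{Equal arrows correspond to equal denotations} carry the load, rather than in any further calculation.
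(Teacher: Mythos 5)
Your proposal is correct and follows the same overall skeleton as the paper: invoke the lemma on essential surjectivity of objects, use \emph{Equal arrows correspond to equal denotations} for well-definedness of $F$ and \emph{Equal denotations correspond to equal arrows} for faithfulness, and conclude that $F$ is an equivalence. The one place where you genuinely diverge is fullness. The paper's proof simply asserts that the two arrow-equality lemmas make $F$ ``full and faithful,'' but those lemmas only give well-definedness and injectivity on hom-sets; surjectivity on hom-sets --- that every denotation $[h] : A \to B$ is the image of some arrow of $\mathcal{C}_{Free}$ --- is never argued explicitly in the paper. Your induction on derivations, sending each sequent rule (Negation, $\otimes R$, $\otimes L$, Cut, Curry, the unit rules, Exchange, and the admissible $\multimap E$ and $\dag$-flip) to the corresponding dagger compact construction in the free category, is exactly the missing argument, and your observation that independence of the chosen representative is discharged by the soup-equivalence analysis and the \emph{Equal arrows} lemma is the right way to organise it. In short, your route is the paper's route made complete: what you buy with the extra induction is an actual proof of fullness rather than an assertion of it.
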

\begin{proof}
The two categories derive their symbols from a common signature graph $\mathcal{G}$. As we have already shown, bearing in mind the equivalence classes that we have induced on types, the categories are essentially surjective on objects. Moreover, arrows that are equal in the free category are equal in the syntactic category and vice versa. This means that the functor $F$ is \textit{full} and \textit{faithful}, causing the notions of equality between arrows overlap in these two categories. Consequently, the categories are equivalent.
\end{proof}

\begin{corollary}[Internal language]
The dagger lambda calculus is an internal language for dagger compact categories.
\end{corollary}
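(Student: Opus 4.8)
The plan is to deduce this corollary from the equivalence $\mathcal{C}_{Free} \simeq \mathcal{C}_{Synt}$ just established, together with the universal property of the free dagger compact category. Saying that $\dag\lambda$ is an \emph{internal language} for dagger compact categories amounts to two claims: \textbf{(i)} soundness and completeness, namely that two sequents are identified by soup equivalence if and only if they denote the same arrow in every dagger compact category under every interpretation of the constants occurring in them; and \textbf{(ii)} that every dagger compact category is equivalent to the syntactic category of some $\dag\lambda$-theory. Both are established by threading everything through the signature $\mathcal{G}$.

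First I would note that the construction of $\mathcal{C}_{Synt}$, and hence the full and faithful functor $F$ witnessing the equivalence with $\mathcal{C}_{Free}$, was carried out for an \emph{arbitrary} signature $\mathcal{G} = (Dagger(\Sigma_0), \Sigma_1, dom, cod)$, so it is available uniformly in $\mathcal{G}$. Next I would invoke the universal property of $\mathcal{C}_{Free}$: for any dagger compact category $\mathcal{D}$ together with an interpretation of $\mathcal{G}$ in $\mathcal{D}$ — an assignment of objects of $\mathcal{D}$ to vertices and arrows of $\mathcal{D}$ to edges, respecting $dom$ and $cod$ — there is a unique dagger compact functor $\mathcal{C}_{Free} \to \mathcal{D}$ extending it. Precomposing with a quasi-inverse of $F$ yields a dagger compact functor $\mathcal{C}_{Synt} \to \mathcal{D}$, i.e. a model of $\dag\lambda$ in $\mathcal{D}$. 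Soundness is then immediate, since soup-equivalent sequents represent the same denotation of $\mathcal{C}_{Synt}$ and are therefore sent to the same arrow of $\mathcal{D}$. Completeness follows by taking $\mathcal{D} = \mathcal{C}_{Free}$ with the tautological interpretation that sends each edge to itself: if two sequents are identified in this model then, since $F$ is faithful, the denotations they represent in $\mathcal{C}_{Synt}$ must already be equal, so the sequents are soup-equivalent.

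For claim \textbf{(ii)} I would run the correspondence in the other direction. Given an arbitrary dagger compact category $\mathcal{D}$, take $\Sigma_0$ to be its collection of objects and $\Sigma_1$ its collection of morphisms (with the evident $dom, cod$); the identity-on-generators interpretation induces a dagger compact functor $\mathcal{C}_{Free} \to \mathcal{D}$ which is surjective on objects and full by construction, and which becomes faithful once one passes to the quotient imposing exactly the equations valid in $\mathcal{D}$ — but that quotient is precisely the one already taken in moving from the path category to $\mathcal{C}_{Synt}$, via the lemmas ``equal arrows correspond to equal denotations'' and ``equal denotations correspond to equal arrows''. Composing with a quasi-inverse of $F$ then exhibits $\mathcal{D}$ as equivalent to $\mathcal{C}_{Synt}$, i.e. as the syntactic category of a $\dag\lambda$-theory. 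Together, claims \textbf{(i)} and \textbf{(ii)} say that the passages \emph{dagger compact category} $\mapsto$ \emph{its internal theory} and \emph{$\dag\lambda$-theory} $\mapsto$ \emph{its syntactic category} are mutually pseudo-inverse, which is the precise content of being an internal language.

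The main obstacle I expect is the bookkeeping in \textbf{(ii)}: checking that the equations one needs to quotient the free path category by, in order to recover a given $\mathcal{D}$, are exactly those realised by soup equivalence — so that no identifications beyond the dagger compact axioms are ever required — and arranging this coherently for all of $\mathcal{D}$ simultaneously rather than one edge at a time. This is where the weight of the argument rests on the two equal-arrows/equal-denotations lemmas; once those are in hand, everything else is assembling the universal property of $\mathcal{C}_{Free}$ with the equivalence $\mathcal{C}_{Free} \simeq \mathcal{C}_{Synt}$.
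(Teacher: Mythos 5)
Your part \textbf{(i)} is essentially the paper's own (unstated) argument: the corollary appears with no proof because it is meant as an immediate rereading of the preceding theorem that $\mathcal{C}_{Free}$ and $\mathcal{C}_{Synt}$ are equivalent, and your soundness/completeness unpacking via the universal property of $\mathcal{C}_{Free}$ together with the full and faithful functor $F$ is a legitimate way of spelling that out.

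The gap is in part \textbf{(ii)}. You assert that, for an arbitrary dagger compact category $\mathcal{D}$ with $\Sigma_0$ and $\Sigma_1$ taken to be all of its objects and morphisms, the quotient needed to make the canonical functor $\mathcal{C}_{Free} \to \mathcal{D}$ faithful ``is precisely the one already taken in moving from the path category to $\mathcal{C}_{Synt}$.'' It is not. The lemma ``equal denotations correspond to equal arrows'' says exactly the opposite: soup equivalence identifies two sequents \emph{only} when the corresponding arrows are already equal in $\mathcal{C}_{Free}$, i.e.\ only when the identification is forced by the dagger compact axioms. An arbitrary $\mathcal{D}$ satisfies many equations that are not consequences of those axioms --- most basically, if $h = g \circ f$ holds in $\mathcal{D}$, then the edge $\langle A,h,C \rangle$ and the formal composite $\langle A,f,B,g,C \rangle$ are distinct arrows of $\mathcal{C}_{Free}$, hence distinct denotations of $\mathcal{C}_{Synt}$, yet they must be identified to recover $\mathcal{D}$. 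So $\mathcal{C}_{Synt}$ over the tautological signature of $\mathcal{D}$ is equivalent to $\mathcal{C}_{Free}$, not to $\mathcal{D}$. Repairing \textbf{(ii)} would require a notion of $\dag\lambda$-\emph{theory} admitting additional postulated soup equalities beyond the structural ones, together with a further quotient of $\mathcal{C}_{Synt}$; the paper does not develop this machinery, and the corollary as the paper intends it carries only the content of your claim \textbf{(i)}.
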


\clearpage
\chapter{Classical control in the $\dag\lambda$-calculus}
\label{Chapter:Classical control in the dagger lambda calculus}
In this section we will see how the dagger lambda calculus can be imbued with classical control structures, similar to the ones in \cite{CD11}. We will begin by defining classical structures within the dagger lambda calculus, hence allowing it to support the Frobenius algebras of \cite{CP06}, \cite{CP07} and \cite{CPP10}. We will then demonstrate how our extension of the lambda calculus can be used to represent the notion of dualiser that was introduced in \cite{CPP08}, as well as how this notion now makes the Currying rule admissible in our language. Finally, we will further extend the dagger lambda calculus with complementary classical structures, by defining a notion of complementarity that extends that of \cite{CD11}, with the dualisers of \cite{CPP08}, to allow for non self-dual $\dag$-compact structures.

Throughout this entire section, we will be using a version of the dagger lambda calculus that has been modified to simplify our notation. In order to do this, we will be restricting our attention to \textit{strict dagger compact categories}; categories whose monoidal natural isomorphisms, $\alpha_{A,B,C}$, $\lambda_\Gamma$ and $\rho_\Gamma$, are identities. As a result, instances of the $\otimes L$, $\lambda_\Gamma$ and $\rho_\Gamma$ rules will be equated with the identity, allowing us to be more relaxed when it comes to tensor identities and parenthesising terms.

\section{Classical structures}
\label{Section:Classical structures}
As we have seen in chapter \ref{Chapter:Quantum computing} and section \ref{Section:Representation of classical structures}, of part \ref{Part:Background material} of this dissertation, the classical world is a lot less "restrictive" than the quantum world in that it allows us to freely copy and delete data. Classical states can be thought of as a basis that spans a vector space of quantum states and it is in this subset of the quantum world that the linearity restrictions can be relaxed.

Linear logic \cite{Gir87} achieves its resource sensitivity by dropping the rules of weakening and contraction. In place of those rules, Girard introduced the exponential connective $!A$, denoting an infinite supply of the type $A$, as a more tightly controlled way of breaching linearity. In designing the dagger lambda calculus, we have also dropped the rules of weakening and contraction. Instead of replacing them, however, with an exponential connective, we will extend our language by providing a stricter and yet more controlled way of relaxing resource sensitivity. This will be done by supporting the Frobenius algebras of \cite{CP06}, \cite{CP07} and \cite{CPP10} with copying and deleting maps and will allow us to model the behaviour of a classical basis.

For every type $A$ in the dagger lambda calculus, we will be introducing two constants; $G_1^2:A \multimap A \otimes A$ and $G_1^0:A \multimap I$. This allows us to represent the copying and deleting operations as sequents in our language:

\[
\scalebox{1} 
{
\begin{pspicture}(0,-0.72)(7.52,0.72)
\psline[linewidth=0.04cm](7.22,-0.1)(7.22,-0.7)
\pscircle[linewidth=0.04,dimen=outer,fillstyle=solid,fillcolor=green](7.22,0.0){0.3}
\psline[linewidth=0.04cm](0.62,-0.1)(0.62,-0.7)
\psarc[linewidth=0.04](0.62,0.7){0.6}{-180.0}{0.0}
\pscircle[linewidth=0.04,dimen=outer,fillstyle=solid,fillcolor=green](0.62,0.0){0.3}
\end{pspicture}
}
\]

\begin{center}
$x_1:A \vdash_{\{ G_1^2 : x_{1*} \otimes x_2 \otimes x_3 \}} x_2 \otimes x_3 : A \otimes A$
\hskip 1cm
$\vdash G_1^0 : A^*$
\end{center}

\begin{definition}[Notational conventions for classical structures in the dagger lambda calculus]
For notational convenience, we will use $A^{\otimes n}$ as shorthand for $\underbrace{A \otimes \ldots \otimes A}_{n \text{ times}}$. We also define $G_1^1$, $G_2^1$ and $G_0^1$ as follows:
\begin{center}
$G_1^1 := id_A$, \hskip 1.2cm $G_2^1 := (G_1^2)_*$, \hskip 1.2cm $G_0^1 := (G_1^0)_*$
\end{center}
\end{definition}

We will define the copying and deleting maps by describing all of the conditions that we require them to satisfy in the dagger lambda calculus. The first such condition is the comonoidal coassociativity condition:
\[
\scalebox{1} 
{
\begin{pspicture}(0,-1.39)(6.64,1.379375)
\usefont{T1}{ptm}{m}{n}
\rput(2.3045313,0.055){\huge =}
\psline[linewidth=0.04cm](0.62,0.55)(0.62,-0.05)
\psarc[linewidth=0.04](0.62,1.35){0.6}{180.0}{0.0}
\pscircle[linewidth=0.04,dimen=outer,fillstyle=solid,fillcolor=green](0.62,0.65){0.3}
\psline[linewidth=0.04cm](1.82,1.35)(1.82,-0.05)
\psline[linewidth=0.04cm](1.22,-0.75)(1.22,-1.35)
\psarc[linewidth=0.04](1.22,0.05){0.6}{-180.0}{0.0}
\pscircle[linewidth=0.04,dimen=outer,fillstyle=solid,fillcolor=green](1.22,-0.65){0.3}
\psline[linewidth=0.04cm](2.72,1.35)(2.72,0.65)
\psline[linewidth=0.04cm](3.92,1.35)(3.92,0.65)
\psline[linewidth=0.04cm](4.82,1.35)(4.82,-0.05)
\usefont{T1}{ptm}{m}{n}
\rput(4.4045315,0.055){\huge =}
\psline[linewidth=0.04cm](3.32,1.35)(3.32,-1.35)
\psarc[linewidth=0.04](3.32,0.75){0.6}{-180.0}{0.0}
\pscircle[linewidth=0.04,dimen=outer,fillstyle=solid,fillcolor=green](3.32,0.05){0.3}
\psline[linewidth=0.04cm](6.02,0.55)(6.02,-0.05)
\psarc[linewidth=0.04](6.02,1.35){0.6}{180.0}{0.0}
\pscircle[linewidth=0.04,dimen=outer,fillstyle=solid,fillcolor=green](6.02,0.65){0.3}
\psline[linewidth=0.04cm](5.42,-0.75)(5.42,-1.35)
\psarc[linewidth=0.04](5.42,0.05){0.6}{-180.0}{0.0}
\pscircle[linewidth=0.04,dimen=outer,fillstyle=solid,fillcolor=green](5.42,-0.65){0.3}
\usefont{T1}{ptm}{m}{n}
\rput(1.0565625,-1.265){\scriptsize 1}
\usefont{T1}{ptm}{m}{n}
\rput(3.1565626,-1.265){\scriptsize 1}
\usefont{T1}{ptm}{m}{n}
\rput(5.5565624,-1.265){\scriptsize 1}
\usefont{T1}{ptm}{m}{n}
\rput(0.78125,0.035){\scriptsize 2}
\usefont{T1}{ptm}{m}{n}
\rput(5.88125,0.035){\scriptsize 2}
\usefont{T1}{ptm}{m}{n}
\rput(1.6728125,0.535){\scriptsize 3}
\usefont{T1}{ptm}{m}{n}
\rput(4.0728126,1.235){\scriptsize 3}
\usefont{T1}{ptm}{m}{n}
\rput(6.4728127,1.235){\scriptsize 3}
\usefont{T1}{ptm}{m}{n}
\rput(0.18296875,1.235){\scriptsize 4}
\usefont{T1}{ptm}{m}{n}
\rput(2.8829687,1.235){\scriptsize 4}
\usefont{T1}{ptm}{m}{n}
\rput(4.982969,0.535){\scriptsize 4}
\usefont{T1}{ptm}{m}{n}
\rput(0.9740625,1.235){\scriptsize 5}
\usefont{T1}{ptm}{m}{n}
\rput(3.4740624,1.235){\scriptsize 5}
\usefont{T1}{ptm}{m}{n}
\rput(5.5740623,1.235){\scriptsize 5}
\end{pspicture}
}
\]

\noindent which requires the following two soups, $S_1$ and $S_2$, to be equal in the lambda calculus:
\[ S_1 = \{ G_1^2 : x_{1*} \otimes x_2 \otimes x_3, G_1^2 : x_{2*} \otimes x_4 \otimes x_5 \} \]
\[ S_2 = \{ G_1^2 : x_{1*} \otimes x_4 \otimes x_2, G_1^2 : x_{2*} \otimes x_5 \otimes x_3 \} \]

\noindent This allows us to relate the sequents for $x_1:A \vdash_{S_1} x_4 \otimes x_5 \otimes x_3 : A^{\otimes 3}$ and $x_1:A \vdash_{S_2} x_4 \otimes x_5 \otimes x_3 : A^{\otimes 3}$, matching the equality of arrows that is required in the categorical setting. Since it does not matter whether we copy one or the other part of a copied pair, we can write both cases as:
\[ x_1:A \vdash_{\{ G_1^3 : x_{1*} \otimes x_4 \otimes x_5 \otimes x_3 \}} x_4 \otimes x_5 \otimes x_3 : A^{\otimes 3} \]

The next condition we require of the dagger lambda calculus is the comonoidal identity condition:
\[
\scalebox{1} 
{
\begin{pspicture}(0,-1.19)(5.0,1.179375)
\psline[linewidth=0.04cm](2.5,-1.15)(2.5,1.15)
\psline[linewidth=0.04cm](4.7,0.75)(4.7,0.15)
\psline[linewidth=0.04cm](3.5,0.15)(3.5,1.15)
\psline[linewidth=0.04cm](0.3,0.75)(0.3,0.15)
\psline[linewidth=0.04cm](0.9,-0.55)(0.9,-1.15)
\psarc[linewidth=0.04](0.9,0.25){0.6}{-180.0}{0.0}
\pscircle[linewidth=0.04,dimen=outer,fillstyle=solid,fillcolor=green](0.9,-0.45){0.3}
\psline[linewidth=0.04cm](1.5,0.15)(1.5,1.15)
\pscircle[linewidth=0.04,dimen=outer,fillstyle=solid,fillcolor=green](0.3,0.85){0.3}
\usefont{T1}{ptm}{m}{n}
\rput(1.9845313,0.055){\huge =}
\usefont{T1}{ptm}{m}{n}
\rput(2.9845312,0.055){\huge =}
\psline[linewidth=0.04cm](4.1,-0.55)(4.1,-1.15)
\psarc[linewidth=0.04](4.1,0.25){0.6}{-180.0}{0.0}
\pscircle[linewidth=0.04,dimen=outer,fillstyle=solid,fillcolor=green](4.1,-0.45){0.3}
\pscircle[linewidth=0.04,dimen=outer,fillstyle=solid,fillcolor=green](4.7,0.85){0.3}
\usefont{T1}{ptm}{m}{n}
\rput(1.0365624,-1.065){\scriptsize 1}
\usefont{T1}{ptm}{m}{n}
\rput(3.9365625,-1.065){\scriptsize 1}
\usefont{T1}{ptm}{m}{n}
\rput(1.36125,1.035){\scriptsize 2}
\usefont{T1}{ptm}{m}{n}
\rput(3.66125,1.035){\scriptsize 2}
\end{pspicture}
}
\]

\noindent which requires the following soups, $S_3$, $S_4$ and $S_5$, to be equal to each other:
\begin{center}
\begin{tabular}{ l l }
  $S_3 =$   & $\{ G_1^2 : x_{1*} \otimes G_0^1 \otimes x_2 \}$ \\
            & \\
  $S_4 =$   & $\{ G_1^1 : x_{1*} \otimes x_2 \} \longrightarrow \{ x_1 : x_2 \}$ \\
            & \\
  $S_5 =$   & $\{ G_1^2 : x_{1*} \otimes x_2 \otimes G_0^1 \}$ \\
\end{tabular}
\end{center}

\noindent This allows us to relate the sequents for $x_1:A \vdash_{S_3} x_2 : A$, $x_1:A \vdash_{S_4} x_2 : A$ and $x_1:A \vdash_{S_5} x_2 : A$, matching the equality of arrows that is required in the categorical setting, to get an identity sequent $x:A \vdash x:A$.

\vskip 0.3cm

We will also require the copy map to satisfy a cocommutativity requirement:
\[
\scalebox{1} 
{
\begin{pspicture}(0,-1.34)(3.24,2.529375)
\psline[linewidth=0.04cm](0.02,1.3)(0.02,0.0)
\psline[linewidth=0.04cm](1.22,1.3)(1.22,0.0)
\usefont{T1}{ptm}{m}{n}
\rput(1.6045313,-0.095){\huge =}
\psline[linewidth=0.04cm](0.62,-0.7)(0.62,-1.3)
\psarc[linewidth=0.04](0.62,0.1){0.6}{-180.0}{0.0}
\pscircle[linewidth=0.04,dimen=outer,fillstyle=solid,fillcolor=green](0.62,-0.6){0.3}
\psline[linewidth=0.04cm](2.62,-0.7)(2.62,-1.3)
\psarc[linewidth=0.04](2.62,0.1){0.6}{-180.0}{0.0}
\pscircle[linewidth=0.04,dimen=outer,fillstyle=solid,fillcolor=green](2.62,-0.6){0.3}
\psline[linewidth=0.04](2.02,1.3)(2.02,0.9)(3.22,0.3)(3.22,0.0)
\psline[linewidth=0.04](3.22,1.3)(3.22,0.9)(2.02,0.3)(2.02,0.0)
\usefont{T1}{ptm}{m}{n}
\rput(0.7565625,-1.215){\scriptsize 1}
\usefont{T1}{ptm}{m}{n}
\rput(2.7565625,-1.215){\scriptsize 1}
\usefont{T1}{ptm}{m}{n}
\rput(0.18125,1.185){\scriptsize 2}
\usefont{T1}{ptm}{m}{n}
\rput(2.18125,1.185){\scriptsize 2}
\usefont{T1}{ptm}{m}{n}
\rput(3.08125,0.085){\scriptsize 2}
\usefont{T1}{ptm}{m}{n}
\rput(1.0728126,1.185){\scriptsize 3}
\usefont{T1}{ptm}{m}{n}
\rput(3.0728126,1.185){\scriptsize 3}
\usefont{T1}{ptm}{m}{n}
\rput(2.1728125,0.085){\scriptsize 3}
\end{pspicture}
}
\]

\noindent which requires the following soups, $S_6$ and $S_7$, to be equal in the dagger lambda calculus:
\[ S_6 = \{ G_1^2 : x_{1*} \otimes x_2 \otimes x_3 \} \]
\[ S_7 = \{ G_1^2 : x_{1*} \otimes x_3 \otimes x_2 \} \]

\noindent This allows us to relate the sequents for $x_1:A \vdash_{S_6} x_2 \otimes x_3 : A \otimes A$ and $x_1:A \vdash_{S_7} x_2 \otimes x_3 : A \otimes A$, matching the equality of arrows that is required in the categorical setting.

\vskip 0.3cm

Since all spider monoids have to be \textit{special}, we require the copying map to satisfy the isometry condition:
\[
\scalebox{1} 
{
\begin{pspicture}(0,-1.39)(2.44,1.37)
\psline[linewidth=0.04cm](0.62,-0.75)(0.62,-1.35)
\psarc[linewidth=0.04](0.62,0.05){0.6}{-180.0}{0.0}
\pscircle[linewidth=0.04,dimen=outer,fillstyle=solid,fillcolor=green](0.62,-0.65){0.3}
\psline[linewidth=0.04cm](0.62,1.35)(0.62,0.75)
\psarc[linewidth=0.04](0.62,-0.05){0.6}{0.0}{180.0}
\pscircle[linewidth=0.04,dimen=outer,fillstyle=solid,fillcolor=green](0.62,0.65){0.3}
\psline[linewidth=0.04cm](2.42,-1.37)(2.42,1.33)
\usefont{T1}{ptm}{m}{n}
\rput(1.8045312,-0.065){\huge =}
\usefont{T1}{ptm}{m}{n}
\rput(0.7565625,-1.185){\scriptsize 1}
\usefont{T1}{ptm}{m}{n}
\rput(0.78296876,1.215){\scriptsize 4}
\usefont{T1}{ptm}{m}{n}
\rput(0.18125,0.015){\scriptsize 2}
\usefont{T1}{ptm}{m}{n}
\rput(1.0728126,0.015){\scriptsize 3}
\end{pspicture}
}
\]

\noindent which requires the following soups to be equal:
\begin{center}
\begin{tabular}{ l l }
  $S_8 =$   & $\left\{ G_1^2 : x_{1*} \otimes x_2 \otimes x_3, \; G_2^1 : (x_2 \otimes x_3)_* \otimes x_4 \right\}$ \\
            & \\
  $S_9 =$   & $\{ G_1^1 : x_{1*} \otimes x_4 \} \longrightarrow \{x_1:x_4\}$
\end{tabular}
\end{center}
\noindent and allows us to match $x_1:A \vdash_{S_8} x_4:A$ with $x_1:A \vdash_{S_9} x_4:A$ and the identity sequent $x:A \vdash x:A$.

\vskip 0.3cm

Finally, we require the copying and deleting maps to satisfy the Frobenius condition:
\[
\scalebox{1} 
{
\begin{pspicture}(0,-1.4)(10.64,1.38)
\psline[linewidth=0.04cm](3.02,-0.74)(3.02,-1.34)
\psarc[linewidth=0.04](3.02,0.06){0.6}{-180.0}{0.0}
\pscircle[linewidth=0.04,dimen=outer,fillstyle=solid,fillcolor=green](3.02,-0.64){0.3}
\psline[linewidth=0.04cm](4.22,1.36)(4.22,0.76)
\psarc[linewidth=0.04](4.22,-0.04){0.6}{0.0}{180.0}
\pscircle[linewidth=0.04,dimen=outer,fillstyle=solid,fillcolor=green](4.22,0.66){0.3}
\psline[linewidth=0.04cm](2.42,-0.06)(2.42,1.34)
\psline[linewidth=0.04cm](4.82,0.04)(4.82,-1.36)
\psline[linewidth=0.04cm](10.02,-0.74)(10.02,-1.34)
\psarc[linewidth=0.04](10.02,0.06){0.6}{-180.0}{0.0}
\pscircle[linewidth=0.04,dimen=outer,fillstyle=solid,fillcolor=green](10.02,-0.64){0.3}
\psline[linewidth=0.04cm](8.82,1.36)(8.82,0.76)
\psarc[linewidth=0.04](8.82,-0.04){0.6}{0.0}{180.0}
\pscircle[linewidth=0.04,dimen=outer,fillstyle=solid,fillcolor=green](8.82,0.66){0.3}
\psline[linewidth=0.04cm](8.22,0.04)(8.22,-1.36)
\psline[linewidth=0.04cm](10.62,-0.06)(10.62,1.34)
\psline[linewidth=0.04cm](6.52,0.56)(6.52,-0.04)
\psarc[linewidth=0.04](6.52,1.36){0.6}{-180.0}{0.0}
\pscircle[linewidth=0.04,dimen=outer,fillstyle=solid,fillcolor=green](6.52,0.66){0.3}
\psline[linewidth=0.04cm](6.52,0.06)(6.52,-0.54)
\psarc[linewidth=0.04](6.52,-1.34){0.6}{0.0}{180.0}
\pscircle[linewidth=0.04,dimen=outer,fillstyle=solid,fillcolor=green](6.52,-0.64){0.3}
\psarc[linewidth=0.04](0.62,0.66){0.6}{-180.0}{0.0}
\psarc[linewidth=0.04](0.62,-0.74){0.6}{0.0}{180.0}
\pscircle[linewidth=0.04,dimen=outer,fillstyle=solid,fillcolor=green](0.62,-0.04){0.3}
\psline[linewidth=0.04cm](0.02,0.54)(0.02,1.34)
\psline[linewidth=0.04cm](1.22,0.54)(1.22,1.34)
\psline[linewidth=0.04cm](1.22,-0.66)(1.22,-1.36)
\psline[linewidth=0.04cm](0.02,-0.66)(0.02,-1.36)
\usefont{T1}{ptm}{m}{n}
\rput(5.4045315,-0.155){\huge =}
\usefont{T1}{ptm}{m}{n}
\rput(7.6045313,-0.155){\huge =}
\usefont{T1}{ptm}{m}{n}
\rput(1.8045312,-0.155){\huge =}
\usefont{T1}{ptm}{m}{n}
\rput(0.1565625,-1.275){\scriptsize 1}
\usefont{T1}{ptm}{m}{n}
\rput(3.1565626,-1.275){\scriptsize 1}
\usefont{T1}{ptm}{m}{n}
\rput(6.0565624,-1.275){\scriptsize 1}
\usefont{T1}{ptm}{m}{n}
\rput(8.356563,-1.275){\scriptsize 1}
\usefont{T1}{ptm}{m}{n}
\rput(1.08125,-1.275){\scriptsize 2}
\usefont{T1}{ptm}{m}{n}
\rput(4.68125,-1.275){\scriptsize 2}
\usefont{T1}{ptm}{m}{n}
\rput(6.98125,-1.275){\scriptsize 2}
\usefont{T1}{ptm}{m}{n}
\rput(10.18125,-1.275){\scriptsize 2}
\usefont{T1}{ptm}{m}{n}
\rput(3.4728124,0.025){\scriptsize 3}
\usefont{T1}{ptm}{m}{n}
\rput(6.3728123,0.025){\scriptsize 3}
\usefont{T1}{ptm}{m}{n}
\rput(9.572812,0.025){\scriptsize 3}
\usefont{T1}{ptm}{m}{n}
\rput(0.18296875,1.225){\scriptsize 4}
\usefont{T1}{ptm}{m}{n}
\rput(2.5829687,1.225){\scriptsize 4}
\usefont{T1}{ptm}{m}{n}
\rput(6.0829687,1.225){\scriptsize 4}
\usefont{T1}{ptm}{m}{n}
\rput(8.982968,1.225){\scriptsize 4}
\usefont{T1}{ptm}{m}{n}
\rput(1.0740625,1.225){\scriptsize 5}
\usefont{T1}{ptm}{m}{n}
\rput(4.0740623,1.225){\scriptsize 5}
\usefont{T1}{ptm}{m}{n}
\rput(6.9740624,1.225){\scriptsize 5}
\usefont{T1}{ptm}{m}{n}
\rput(10.474063,1.225){\scriptsize 5}
\usefont{T1}{ptm}{m}{n}
\rput(1.4582813,-0.055){\huge :}
\end{pspicture}
}
\]

\noindent which requires the following soups to be equal to each other in the dagger lambda calculus:
\[ S_{10} = \{ G_1^2 : x_{1*} \otimes x_4 \otimes x_3, G_2^1 : (x_3 \otimes x_2)_* \otimes x_5 \} \]
\[ S_{11} = \{ G_2^1 : (x_1 \otimes x_2)_* \otimes x_3, G_1^2 : x_{3*} \otimes x_4 \otimes x_5 \} \]
\[ S_{12} = \{ G_1^2 : x_{2*} \otimes x_3 \otimes x_5, G_2^1 : (x_1 \otimes x_3)_* \otimes x_4 \} \]
\noindent and allows us to relate the sequents for
\begin{align*}
                & x_1:A, x_2:A \vdash_{S_{10}} x_4 \otimes x_5 : A \otimes A,   \\
                & x_1:A, x_2:A \vdash_{S_{11}} x_4 \otimes x_5 : A \otimes A    \\
    \text{and } & x_1:A, x_2:A \vdash_{S_{12}} x_4 \otimes x_5 : A \otimes A
\end{align*}
\noindent matching the equality of arrows that is required in the categorical setting.

\vskip 0.3cm

All of these soup equality conditions can be rewritten as equalities between terms to make them more readable. The following table lists the term equalities corresponding to each of the conditions:

\begin{center}
\renewcommand{\arraystretch}{1.3}
\begin{tabular}{|l|l|}
  \hline
  \multirow{2}{*}{Comonoidal associativity} & $\bar{b} (\bar{t} G_1^2 id_A) G_1^2 =$ \\
                                            & $\bar{b} (\bar{t} id_A G_1^2) G_1^2$ \\
  \hline
  \multirow{2}{*}{Comonoidal identity}  & $\bar{b} (\bar{t} G_1^0 id_A) G_1^2 =$ \\
                                        & $G_1^1 =$ \\
                                        & $\bar{b} (\bar{t} id_A G_1^0) G_1^2$ \\
  \hline
  \multirow{2}{*}{Cocommutativity}  & $G_1^2 =$ \\
                                    & $\bar{b} \sigma_{A,A} G_1^2$ \\
  \hline
  \multirow{2}{*}{Isometry} & $\bar{b} G_2^1 G_1^2 =$ \\
                            & $G_1^1$ \\
  \hline
  \multirow{2}{*}{Frobenius}    & $\bar{b} (\bar{t} id_A G_2^1) (\bar{t} G_1^2 id_A) =$ \\
                                & $\bar{b} G_1^2 G_2^1 =$ \\
                                & $\bar{b} (\bar{t} G_2^1 id_A) (\bar{t} id_A G_1^2)$ \\
  \hline
\end{tabular}
\end{center}

\section{Dualisers}
\label{Section:Dualisers}
In this subsection, we will demonstrate how the \textit{dagger lambda calculus} with \textit{classical structures} can be used to present the notion of a dualiser \cite{CPP08}; an explicit witness of the passage from one object to its dual or, in our case, from one type to its linear negation. While \cite{CPP10} and \cite{CD11} do not use this form of dualiser, we consider them very important from a programming language perspective as they allow us to differentiate inputs from outputs and make the flow of information explicit in the language. After defining the dualiser, the rest of the subsection will show how this notion makes the Currying rule admissible in our language.

\begin{definition}[Dualiser]
For every type $A$, we define a constant $d_A : A \multimap A^*$, the \textit{dualiser}, as a shorthand for $G_2^0 : A^* \otimes A^*$. In other words, the dualiser will be $d_A := (G_0^2)_* = G_2^0 : A^* \otimes A^*$. The sequents $[d_A]$ and $[d_A]^\dag$ that represent the $\eta$-expanded form of the dualiser and its dual are:
\begin{center}
$x_1 : A \vdash_{\{ d_A : x_{1*} \otimes x_{2*} \}}  x_{2*} : A^*$ \hskip 1cm $x_{2*} : A^*    \vdash_{\{ d_{A*} : x_2 \otimes x_1 \}}     x_1 : A$
\end{center}
\end{definition}
\[
\scalebox{1} 
{
\begin{pspicture}(0,-1.62)(9.42,1.62)
\psline[linewidth=0.04cm,arrowsize=0.05291667cm 2.0,arrowlength=1.4,arrowinset=0.4]{->}(1.7,-1.6)(1.7,1.0)
\psarc[linewidth=0.04](2.3,0.9){0.6}{0.0}{180.0}
\psline[linewidth=0.04cm](3.5,-0.7)(3.5,-1.0)
\pscircle[linewidth=0.04,dimen=outer,fillstyle=solid,fillcolor=green](3.5,-1.2){0.3}
\psline[linewidth=0.04cm,arrowsize=0.05291667cm 2.0,arrowlength=1.4,arrowinset=0.4]{->}(3.5,-0.4)(3.5,-0.8)
\psarc[linewidth=0.04](3.5,0.5){0.6}{-180.0}{0.0}
\psline[linewidth=0.04cm,arrowsize=0.05291667cm 2.0,arrowlength=1.4,arrowinset=0.4]{<-}(4.1,0.4)(4.1,0.9)
\psline[linewidth=0.04cm,arrowsize=0.05291667cm 2.0,arrowlength=1.4,arrowinset=0.4]{<-}(2.9,0.4)(2.9,0.9)
\pscircle[linewidth=0.04,dimen=outer,fillstyle=solid,fillcolor=green](3.5,-0.2){0.3}
\psline[linewidth=0.04cm](4.1,0.9)(4.1,1.6)
\psline[linewidth=0.04cm](0.0,-1.6)(0.0,-0.5)
\psline[linewidth=0.04cm](0.0,0.7)(0.0,1.6)
\psline[linewidth=0.04cm,linecolor=green,arrowsize=0.05291667cm 2.0,arrowlength=1.4,arrowinset=0.4]{->}(0.0,-0.5)(0.0,0.1)
\psline[linewidth=0.04cm,linecolor=green,arrowsize=0.05291667cm 2.0,arrowlength=1.4,arrowinset=0.4]{<-}(0.0,0.1)(0.0,0.7)
\usefont{T1}{pcr}{m}{n}
\rput(0.36609375,0.155){\Huge :}
\usefont{T1}{pcr}{m}{n}
\rput(0.92078125,0.055){\Huge =}
\usefont{T1}{ptm}{m}{n}
\rput(0.1365625,-1.415){\scriptsize 1}
\usefont{T1}{ptm}{m}{n}
\rput(0.16125,1.385){\scriptsize 2}
\usefont{T1}{ptm}{m}{n}
\rput(1.8365625,-1.415){\scriptsize 1}
\usefont{T1}{ptm}{m}{n}
\rput(3.96125,1.385){\scriptsize 2}
\psline[linewidth=0.04cm,arrowsize=0.05291667cm 2.0,arrowlength=1.4,arrowinset=0.4]{<-}(7.0,-1.6)(7.0,1.0)
\psarc[linewidth=0.04](7.6,0.9){0.6}{0.0}{180.0}
\psline[linewidth=0.04cm,arrowsize=0.05291667cm 2.0,arrowlength=1.4,arrowinset=0.4]{<-}(8.8,-0.6)(8.8,-1.0)
\pscircle[linewidth=0.04,dimen=outer,fillstyle=solid,fillcolor=green](8.8,-1.2){0.3}
\psline[linewidth=0.04cm](8.8,-0.3)(8.8,-0.7)
\psarc[linewidth=0.04](8.8,0.5){0.6}{180.0}{0.0}
\psline[linewidth=0.04cm,arrowsize=0.05291667cm 2.0,arrowlength=1.4,arrowinset=0.4]{->}(9.4,0.4)(9.4,1.6)
\psline[linewidth=0.04cm,arrowsize=0.05291667cm 2.0,arrowlength=1.4,arrowinset=0.4]{->}(8.2,0.4)(8.2,1.0)
\pscircle[linewidth=0.04,dimen=outer,fillstyle=solid,fillcolor=green](8.8,-0.2){0.3}
\psline[linewidth=0.04cm](5.3,-1.6)(5.3,-0.4)
\psline[linewidth=0.04cm](5.3,0.6)(5.3,1.6)
\psline[linewidth=0.04cm,linecolor=green,arrowsize=0.05291667cm 2.0,arrowlength=1.4,arrowinset=0.4]{<-}(5.3,-0.5)(5.3,0.1)
\psline[linewidth=0.04cm,linecolor=green,arrowsize=0.05291667cm 2.0,arrowlength=1.4,arrowinset=0.4]{->}(5.3,0.1)(5.3,0.7)
\usefont{T1}{pcr}{m}{n}
\rput(5.666094,0.155){\Huge :}
\usefont{T1}{pcr}{m}{n}
\rput(6.2207813,0.055){\Huge =}
\usefont{T1}{ptm}{m}{n}
\rput(5.4365625,-1.415){\scriptsize 1}
\usefont{T1}{ptm}{m}{n}
\rput(5.46125,1.385){\scriptsize 2}
\usefont{T1}{ptm}{m}{n}
\rput(7.1365623,-1.415){\scriptsize 1}
\usefont{T1}{ptm}{m}{n}
\rput(9.26125,1.385){\scriptsize 2}
\end{pspicture}
}
\]

\begin{theorem}[Unitarity of the dualiser]
The sequent representing the $\eta$-expanded form of the dualiser is unitary. In other words, we can compose a dualiser sequent $[d_A]$ with its dagger $[d_A]^\dag$, via a Cut, and the result will be an identity sequent:
\begin{prooftree}
    \AxiomC{$x_1 : A            \vdash_{\{ d_A : x_{1*} \otimes x_{2*} \}}  x_{2*} : A^*$}
    \AxiomC{$x_3 : A            \vdash_{\{ d_A : x_{3*} \otimes x_{2*} \}}  x_{2*} : A^*$}
    \RightLabel{\footnotesize $\dag$-flip}
    \UnaryInfC{$x_{2*} : A^*    \vdash_{\{ d_{A*} : x_2 \otimes x_3 \}}     x_3 : A$}
    \RightLabel{\footnotesize Cut}
    \BinaryInfC{$x_1:A \vdash_{\{ d_A : x_{1*} \otimes x_{2*}, d_{A*} : x_2 \otimes x_3 \}} x_3 : A$}
\end{prooftree}
whose soup is equal to $\{ G_1^1 : x_{1*} \otimes x_3 \}$.
\end{theorem}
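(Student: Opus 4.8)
The plan is to take the soup produced by the Cut in the statement and reduce it, using only the five defining conditions on the classical structure (comonoidal associativity, comonoidal identity, cocommutativity, isometry, Frobenius) together with the soup rules, down to $\{G_1^1 : x_{1*} \otimes x_3\}$, which is exactly the identity denotation $[id_A]$. First I would pin down the composed soup precisely: after $\alpha$-renaming the second copy of $[d_A]$ so that the two premises of the Cut share no variables (the symmetry-of-substitution lemma guarantees the particular renaming chosen does not matter), the Cut adds a single connection that, once consumed, leaves $x_1 : A \vdash_S x_3 : A$ with $S = \{ d_A : x_{1*} \otimes x_{2*},\ d_{A*} : x_2 \otimes x_3 \}$, as claimed; it is worth checking here that no spurious dimension scalar $D_{A^*}$ is introduced, which is precisely why the renaming step is not optional.

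Next I would unfold the dualiser. By its definition $d_A = G_2^0$ and, by the De Morgan rule for $*$, $d_{A*} = (d_A)_* = G_0^2$. Realising the higher spiders as $G_2^0 = \bar{b}\, G_1^0\, G_2^1$ (fuse, then delete) and $G_0^2 = \bar{b}\, G_1^2\, G_0^1$ (create, then copy), with $G_2^1 = (G_1^2)_*$ and $G_0^1 = (G_1^0)_*$, and re-deriving the $\eta$-expanded sequents for these composites via Cut and Curry (using $\otimes L$ freely, since we are in the strict setting), I can rewrite $S$ as a soup mentioning only the primitives $G_1^2, G_1^0$ and their negations. The picture it describes is the ``snake'': $G_0^1$ feeds one leg of a copy $G_1^2$; one output of that copy is the wire $x_3$; the other output is fused by $G_2^1$ with the input $x_1$; and the result of that fusion is capped by $G_1^0$. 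Now the computation is the usual argument that a special Frobenius algebra has a nondegenerate form. Applying the Frobenius condition (in the direction $S_{12} \rightarrow S_{11}$, i.e. ``fuse-then-copy on one branch $=$ copy-then-fuse on the other'') turns the ``$G_1^2$ above $G_2^1$'' pattern into ``$G_2^1$ above $G_1^2$''. The monoid unit law for $G_2^1$ --- the $\dag$-flipped (equivalently negated, via the interchangeability of $\dag$-flip and Negation up to Curry/Exchange) form of the comonoidal identity $S_3 = S_4 = S_5$ --- then deletes $G_0^1$ together with the $G_2^1$ sitting on that leg, and the comonoidal identity itself (the counit law, $S_5 \rightarrow S_4$) deletes $G_1^0$ together with the $G_1^2$. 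What remains is $\{G_1^1 : x_{1*} \otimes x_3\}$; one more bifunctoriality step plus a consumption turns this into $\{x_1 : x_3\}$, whose sequent is the identity $x_1 : A \vdash x_3 : A$ up to $\alpha$-renaming.

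The conceptual content here is minimal --- it is exactly nondegeneracy of the Frobenius pairing --- so the main obstacle is bookkeeping rather than mathematics: keeping track of which variables acquire or lose a $*$ after each Negation / De Morgan / $\dag$-flip step, and making the spider decompositions $G_2^0 = \bar{b}\, G_1^0\, G_2^1$ and $G_0^2 = \bar{b}\, G_1^2\, G_0^1$ rigorous. The cleanest way to handle the latter is to take these equalities as the definitions of the two higher spiders that occur, consistently with the conventions already fixed for $G_2^1$ and $G_0^1$; alternatively one would first prove the single instance of spider fusion needed here directly from the five conditions, but that reproves part of what the categorical spider theorem already gives and is not worth the detour.
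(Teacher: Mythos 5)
Your proposal follows essentially the same route as the paper's own proof: unfold $d_A$ and $d_{A*}$ into $G_2^0$ and $G_0^2$, realise these as $G_2^1$ capped by deletion and $G_1^2$ fed by the unit, apply the Frobenius condition once to swap the order of fusion and copying, and then use the (co)unit identity laws to collapse the soup to $\{ G_1^1 : x_{1*} \otimes x_2, G_1^1 : x_{2*} \otimes x_3 \}$ and hence to $\{ x_1 : x_3 \}$. The extra care you take over $\alpha$-renaming before the Cut (to avoid a spurious $D_{A^*}$ scalar) and over fixing the higher-spider decompositions as definitions is sound and only makes explicit what the paper leaves implicit.
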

\begin{proof}
Consider the following soup reduction steps:
\begin{align*}
\{ d_A : x_{1*} \otimes x_{2*}, d_{A*} : x_2 \otimes x_3 \} & \rightarrow \\
\{ G_2^0 : x_{1*} \otimes x_{2*}, G_0^2 : x_2 \otimes x_3 \} & \rightarrow \\
\{ G_2^1 : (x_1 \otimes x_2)_* \otimes G_0^1, G_1^2 : (G_0^1)_* \otimes x_2 \otimes x_3 \} & \rightarrow \text{(Frobenius)} \\
\{ G_2^1 : (x_1 \otimes G_0^1)_* \otimes x_2, G_1^2 : x_{2*} \otimes G_0^1 \otimes x_3 \} & \rightarrow \text{(Identity)}\\
\{ G_1^1 : x_{1*} \otimes x_2, G_1^1 : x_{2*} \otimes x_3 \} & \rightarrow \\
\{ x_1 : x_2, x_2 : x_3 \} & \rightarrow \\
\{ x_1 : x_3 \}
\end{align*}
\end{proof}

Having properly defined dualisers in the dagger lambda calculus, we proceed to show how classical structures and dualisers can be used to reconstruct the Currying rule.

\begin{theorem}[Admissibility of the Currying rule]
The Currying rule is admissible in a dagger lambda calculus with classical structures and dualisers.
\end{theorem}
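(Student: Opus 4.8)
The plan is to exhibit, for an arbitrary derivable sequent $a:A, \Gamma \vdash_S b:B$, a derivation of $\Gamma \vdash_S a_* \otimes b : A^* \otimes B$ that never invokes the Curry rule, so that — once the classical structures and the dualiser constant are present — Curry can be dropped from the primitive rule set. The overall shape copies the way one Curries in a $\dag$-compact category: bend the $A$-typed assumption across the turnstile using a compact unit/counit. The catch is that in the base calculus the unit $\eta_A$ and counit $\varepsilon_A$ were themselves written with $\lambda$-abstraction, i.e.\ via Curry, so they are no longer available for free; the first real task is to re-manufacture a usable $\varepsilon_A$ (equivalently $\eta_A$). This is precisely where the dualiser earns its place: the Frobenius algebra on $A$ supplies only a \emph{self-dual} cup — a Curry-free derivable state of type $A \otimes A$, assembled from the copying constant $G_1^2$ and the (co)unit constants $G_1^0$, $G_0^1$ — whereas honest Currying in a non-self-dual setting needs the genuine $\eta_A : I \longrightarrow A^* \otimes A$, which one obtains by routing one leg of the self-dual cup through $d_A : A \multimap A^*$.

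Concretely, I would first record a Curry-free derivation of a sequent $[\eta_A]$ of the form $\vdash_{S_\eta} x_{1*} \otimes x_2 : A^* \otimes A$ (and its $\dag$-flip $[\varepsilon_A]$), where $S_\eta$ collects the soup connections introduced by $G_1^2$, $G_0^1$ and $d_A$, and then verify — using the comonoidal-identity, isometry and Frobenius conditions together with the \emph{Unitarity of the dualiser} theorem, which lets $[d_A]$ cancel against $[d_A]^\dag$ down to an identity denotation — that this sequent really does behave as a unit, i.e.\ satisfies the yanking equation up to soup reduction. Next, given the premise $a:A, \Gamma \vdash_S b:B$, form the $\otimes R$ of $z_*:A^* \vdash z_*:A^*$ (the Negation of $\mathrm{Id}$) with it, use $\otimes L$ (an identity in the strict setting of this chapter) to separate the context, obtaining $z_*:A^*, a:A, \Gamma \vdash_S z_* \otimes b : A^* \otimes B$, and then Cut this against $[\varepsilon_A]$ so as to consume the pair $z_*:A^*,\, a:A$. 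Finally, running the soup-propagation rules (bifunctoriality, trace, cancellation, consumption) collapses the auxiliary connections, identifies $z$ with $a$, and leaves exactly $\Gamma \vdash_S a_* \otimes b : A^* \otimes B$ — the conclusion of Curry with its soup untouched.

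The main obstacle I anticipate is bookkeeping, on two fronts. First, showing that the constructed $[\eta_A]$/$[\varepsilon_A]$ genuinely satisfies the yanking identities inside the soup calculus: this is where the classical-structure axioms and the unitarity of the dualiser have to interlock cleanly, and keeping the variable indices and the \emph{planar} order imposed by the linear negation $(x \otimes y)_* = y_* \otimes x_*$ consistent is delicate. Second, confirming that after the Cut and the cascade of soup reductions the resulting soup is literally $S$ again, up to $\alpha$-equivalence, with no residual dimension scalars $D_A:1$ or dangling $\{1:1\}$ pairs left behind; the trace and cancellation rules must be tracked carefully here, and one should treat separately the easy case where $a$ is a constant so that the consumption-rule side condition (the substituted end being constant-free) is still met. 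Everything past the construction of the unit is the standard compact-closed bending argument transcribed into the soup, so no further surprises are expected.
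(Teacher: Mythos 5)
Your proposal is correct and follows essentially the same route as the paper: manufacture a Curry-free cup of type $A^* \otimes A$ by composing the Frobenius self-dual state $G_0^2$ with the dualiser $d_A$, reduce its soup to an identity connection via the Frobenius and comonoidal-identity conditions, and then Cut that cup against the tensor of a negated identity axiom with the premise sequent so that soup reduction yields $\Gamma \vdash_S a_* \otimes b : A^* \otimes B$. The only cosmetic differences are that the paper does not separately verify a yanking equation for the constructed cup (the soup reduction to $\{x_1 : x_3\}$ already makes it literally the cup sequent) and does not need the isometry condition or the unitarity theorem in this particular derivation.
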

\begin{proof}
We can construct a big sequent following the steps outlined in this proof tree:
\vskip .1cm
\hskip -2cm
    \AxiomC{$\vdash_{\{ G_0^2 : x_2 \otimes x_3 \}} x_2 \otimes x_3 : A \otimes A$}
    \UnaryInfC{$\vdash_{\{ G_1^2 : (G_0^1)_* \otimes x_2 \otimes x_3 \}} x_2 \otimes x_3 : A \otimes A$}
    \AxiomC{$x_2 : A \vdash_{\{ d_A : x_{2*} \otimes x_{1*} \}} x_{1*} : A^*$}
    \UnaryInfC{$x_2 : A \vdash_{\{ G_2^0 : x_{2*} \otimes x_{1*} \}} x_{1*} : A^*$}
    \UnaryInfC{$x_2 : A \vdash_{\{ G_2^1 : x_{2*} \otimes x_{1*} \otimes G_0^1 \}} x_{1*} : A^*$}
    \AxiomC{$x_3 : A \vdash x_3 : A$}
    \BinaryInfC{$x_2 : A, x_3 : A \vdash_{\{ G_2^1 : x_{2*} \otimes x_{1*} \otimes G_0^1 \}} x_{1*} \otimes x_3 : A^* \otimes A$}
    \BinaryInfC{$\vdash_{\{ G_1^2 : (G_0^1)_* \otimes x_2 \otimes x_3, G_2^1 : (x_1 \otimes x_2)_* \otimes G_0^1 \}} x_{1*} \otimes x_3 : A^* \otimes A$}
\DisplayProof \\
\vskip .1cm

\noindent We can then use the following reduction steps on the resulting sequent's soup:
\begin{align*}
\{ G_1^2 : (G_0^1)_* \otimes x_2 \otimes x_3, G_2^1 : (x_1 \otimes x_2)_* \otimes G_0^1 \} & \rightarrow (Frobenius) \\
\{ G_2^1 : (x_1 \otimes G_0^1)_* \otimes x_2, G_1^2 : x_{2*} \otimes G_0^1 \otimes x_3 \} & \rightarrow (Identity) \\
\{ G_1^1 : x_{1*} \otimes x_2, G_1^1 : x_{2*} \otimes x_3 \} & \rightarrow \\
\{ x_1 : x_2, x_2 : x_3 \} & \rightarrow \\
\{ x_1 : x_3 \}
\end{align*}
This gives us $\vdash x_* \otimes x : A^* \otimes A$, which is called a \textit{cup}. Once we have this term, we can Cut it with any sequent of the form $a:A \vdash_S b:B$ to reconstruct the Currying rule:
\begin{prooftree}
    \AxiomC{$\vdash x_* \otimes x : A^* \otimes A$}
    \AxiomC{$x_{1*} : A^* \vdash x_{1*} : A^*$}
    \AxiomC{$a:A \vdash_S b:B$}
    \RightLabel{\footnotesize $\bigotimes R$}
    \BinaryInfC{$x_{1*} : A^*, a:A \vdash_S x_{1*} \otimes b : A^* \otimes B$}
    \BinaryInfC{$\vdash_{S \cup \{ x_* \otimes x : x_{1*} \otimes a \}} x_{1*} \otimes b : A^* \otimes B$}
    \UnaryInfC{$\vdash_S a_* \otimes b : A^* \otimes B$}
\end{prooftree}
\end{proof}

\section{Monoidal product of terms and phase shifts}
\label{Section:Monoidal product of terms and phase shifts}
This section shows how the monoidal operation defined by $G_2^1$ can be used to fuse together the terms of the dagger lambda calculus, or lift them into \textit{phase shifts} in a sequent. This usage of the monoidal operation, described in \cite{CD08} and \cite{CD11}, makes our language more expressive by allowing us to introduce rotations on quantum states.

\begin{definition}[Monoidal product]
We define the notation $G[\phi \odot \psi]_0^1 : A$ to represent the product generated by the action of the monoidal operation $G_2^1 : A \otimes A \multimap A$ on two terms $\phi$ and $\psi$. More specifically, we set $G[\phi \odot \psi]_0^1 := G_2^1(\phi \otimes \psi)$.
\end{definition}
\[
\scalebox{1} 
{
\begin{pspicture}(0,-1.19)(8.317187,1.19)
\psline[linewidth=0.04cm](2.659375,1.17)(2.659375,0.57)
\psarc[linewidth=0.04](2.659375,-0.23){0.6}{0.0}{180.0}
\pscircle[linewidth=0.04,dimen=outer,fillstyle=solid,fillcolor=green](2.659375,0.47){0.3}
\psline[linewidth=0.04cm](2.059375,-0.19)(2.059375,-0.79)
\pscircle[linewidth=0.04,dimen=outer,fillstyle=solid,fillcolor=black](2.059375,-0.89){0.3}
\psline[linewidth=0.04cm](3.259375,-0.19)(3.259375,-0.79)
\pscircle[linewidth=0.04,dimen=outer,fillstyle=solid,fillcolor=black](3.259375,-0.89){0.3}
\usefont{T1}{ptm}{m}{n}
\rput(3.2679687,-0.9){\color{white}\boldmath$\psi$\unboldmath}
\usefont{T1}{pcr}{m}{n}
\rput(4.180156,0.0050){\Huge =}
\psline[linewidth=0.04cm](5.359375,0.15)(5.359375,0.95)
\psellipse[linewidth=0.04,dimen=outer,fillstyle=solid,fillcolor=black](5.359375,0.05)(0.6,0.3)
\usefont{T1}{ptm}{m}{n}
\rput(5.3679686,0.06){\color{white}\boldmath$\phi\!\odot\!\psi$\unboldmath}
\usefont{T1}{pcr}{m}{n}
\rput(3.6254687,0.105){\Huge :}
\usefont{T1}{ptm}{m}{n}
\rput(2.0679688,-0.9){\color{white}\boldmath$\phi$\unboldmath}
\end{pspicture}
}
\]

\begin{corollary}[Associativity and commutativity of $\odot$]
The $\odot$ operator inherits associativity and commutativity from the monoidal associativity and commutativity conditions of $G_2^1$. This allows us to introduce the following notation for the fusion of multiple terms:
\[ \bigodot \psi_i := \psi_1 \odot \ldots \odot \psi_n \]
\end{corollary}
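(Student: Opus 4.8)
The statement to prove is the Corollary on associativity and commutativity of $\odot$, which asserts that the binary operation $G[\phi \odot \psi]_0^1 := G_2^1(\phi \otimes \psi)$ is associative and commutative, thereby licensing the $n$-ary notation $\bigodot \psi_i$.

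The plan is to reduce both claims to the already-established conditions on the classical structure maps. For associativity, I would unfold $G[(\phi \odot \psi) \odot \chi]_0^1 = G_2^1(G_2^1(\phi \otimes \psi) \otimes \chi)$ and $G[\phi \odot (\psi \odot \chi)]_0^1 = G_2^1(\phi \otimes G_2^1(\psi \otimes \chi))$ as sequents-with-soup, and then apply the \textbf{comonoidal associativity} condition in its dagger-flipped (monoidal) form — i.e. the equation $\bar{b}(\bar{t}\, G_1^2\, id_A)\, G_1^2 = \bar{b}(\bar{t}\, id_A\, G_1^2)\, G_1^2$ from the table, read through $G_2^1 = (G_1^2)_*$. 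Concretely, the soups generated by the two bracketings will be of the form $\{G_1^2 : \ldots\}$ pairs related exactly by the coassociativity soup equality $S_1 = S_2$ proved earlier (the one relating the two $A^{\otimes 3}$ sequents), after taking negations. Soup reduction via \emph{bifunctoriality} and \emph{consumption} then identifies the two sequents, so the corresponding denotations are equal. For commutativity, I would similarly unfold $G[\phi \odot \psi]_0^1$ and $G[\psi \odot \phi]_0^1$ and invoke the \textbf{cocommutativity} condition $G_1^2 = \bar{b}\,\sigma_{A,A}\, G_1^2$, dagger-flipped to $G_2^1 = G_2^1 \circ \sigma_{A,A}$; since $\sigma_{A,A}(\phi \otimes \psi) = \psi \otimes \phi$ by the definition of $\sigma$ as the swap combinator, the two terms coincide up to soup equivalence.

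The cleanest way to organize this is to first record the monoid facts dual to the comonoid conditions: $(A, G_2^1, G_0^1)$ is a commutative monoid, meaning $G_2^1 \circ (G_2^1 \otimes id_A) = G_2^1 \circ (id_A \otimes G_2^1)$ and $G_2^1 = G_2^1 \circ \sigma_{A,A}$, both of which are immediate by applying Negation / $\dag$-flip to the already-verified comonoidal associativity and cocommutativity soup equalities. Then associativity and commutativity of $\odot$ on \emph{points} $\phi, \psi, \chi : I \to A$ follow by precomposing these equalities with $\phi \otimes \psi \otimes \chi$ (respectively $\phi \otimes \psi$) and using that soup reduction is a congruence with respect to Cut and $\otimes R$ — a fact guaranteed by the soup rules being applicable globally within a sequent, as used throughout the syntactic-category construction. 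Once associativity and commutativity hold, the $n$-ary notation $\bigodot \psi_i := \psi_1 \odot \ldots \odot \psi_n$ is well-defined independently of parenthesization and ordering, which is the content of the corollary.

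The main obstacle, such as it is, is bookkeeping rather than mathematics: one must be careful that the fresh-variable ($\alpha$-renaming) conditions are respected when the various $G_1^2$ / $G_2^1$ constants are combined via $\bigotimes R$ and Cut, so that the \emph{linearity constraints} are not violated and the soup equalities $S_1 = S_2$, $S_6 = S_7$ can genuinely be instantiated. I expect the proof to be short — essentially ``unfold the definition of $\odot$, recognize the soup as an instance of the coassociativity (resp. cocommutativity) condition dagger-flipped, and conclude by soup equivalence'' — with the only subtlety being to state explicitly that the monoid identities are obtained from the comonoid ones by the Negation rule together with the involutivity of the star operator and $\dag$-flip.
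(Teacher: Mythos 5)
Your proposal is correct and follows exactly the route the paper intends: the corollary's justification is precisely that the monoid laws for $G_2^1$ (obtained from the comonoid conditions on $G_1^2$ via Negation/$\dag$-flip) transfer to $\odot$ by precomposition with points, and the paper simply leaves these details implicit. Your elaboration — including the remark about $\alpha$-renaming and the linearity constraints — is a faithful filling-in of that one-line argument rather than a different proof.
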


Given a classical structure, we now define a way of performing rotations against the axis defined by that observable structure. These rotations, also known as \textit{phase shifts}, are more easily visualised as rotations of a qubit's vector in the Bloch sphere \cite{JNN12}:
\begin{center}
    \includegraphics[width=5cm]{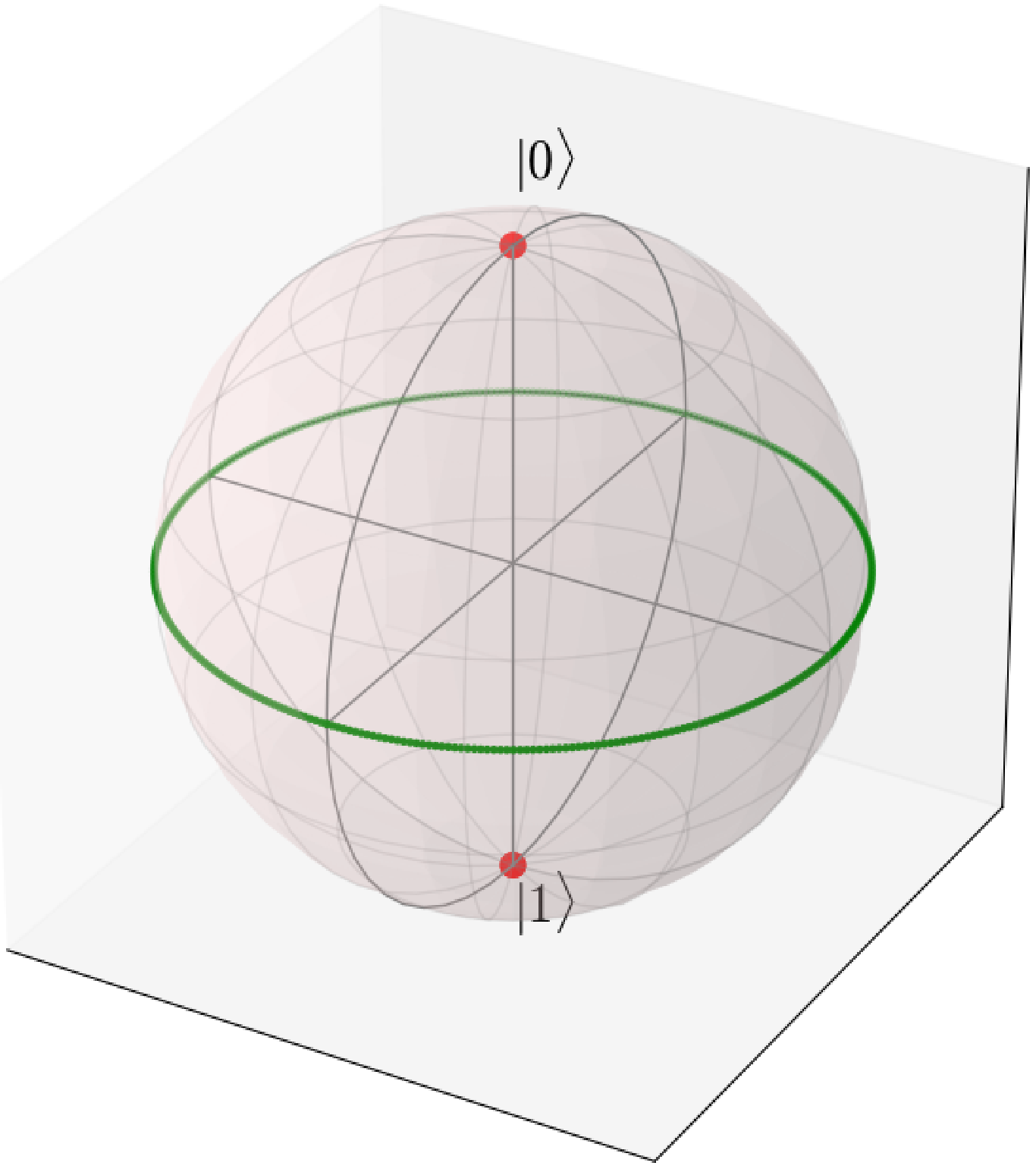}
    \includegraphics[width=5cm]{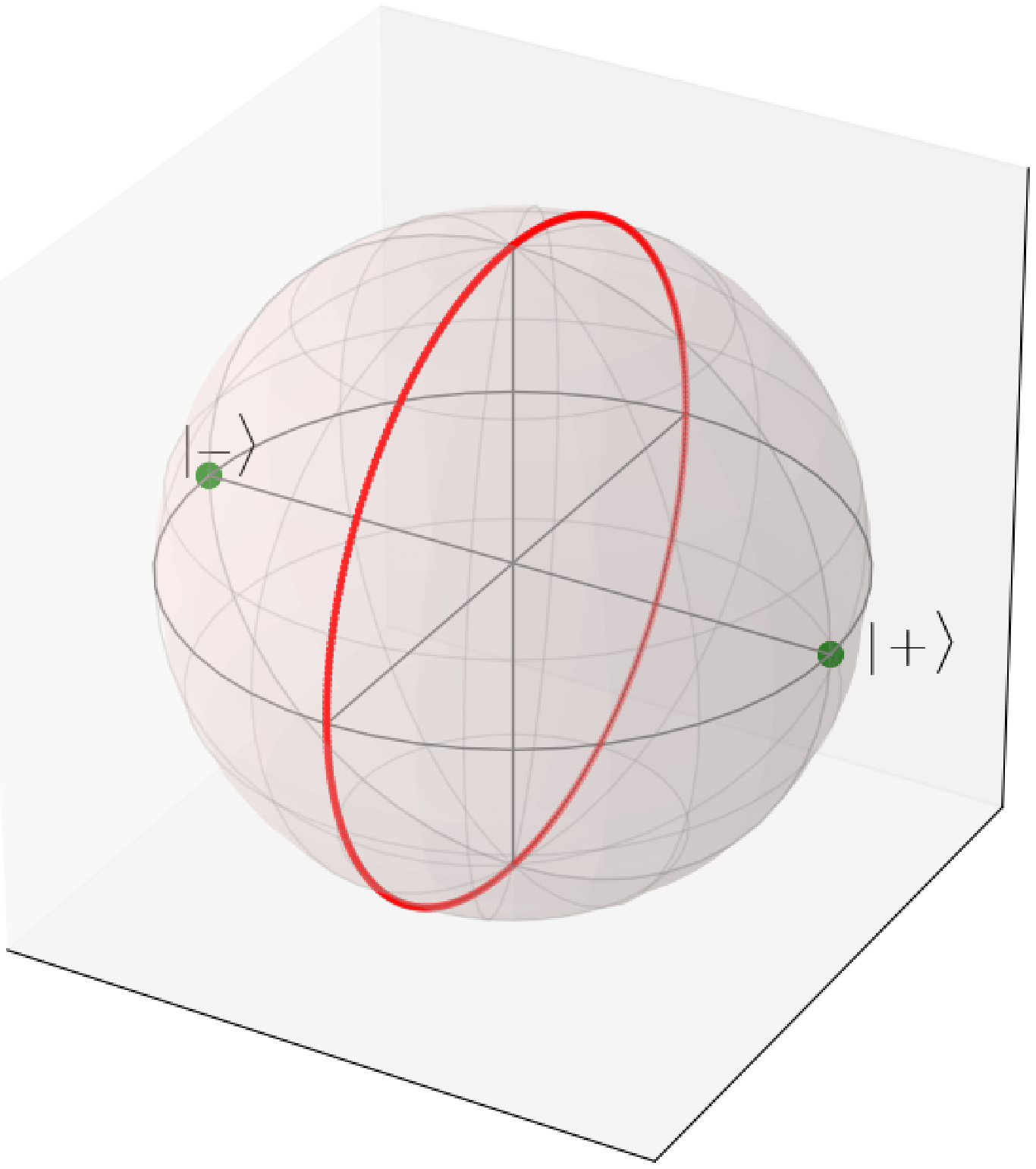}
\end{center}

\begin{definition}[Phase shift]
The \textit{phase shift} generated by the action of $G_2^1$ on a term $\psi$ is represented by the $\eta$-expanded form of the term $G[\psi]_1^1 := \lambda x.G_2^1(\psi \otimes x)$:
\[ x_1:A \vdash_{\{ G[\psi]_1^1 : x_{1*} \otimes x_2 \}} x_2:A \;\;\; \rightarrow x:A \vdash G[\psi \otimes x]_0^1 : A \]
\end{definition}
\[
\scalebox{1} 
{
\begin{pspicture}(0,-1.17)(6.1371875,1.17)
\psline[linewidth=0.04cm](4.659375,1.15)(4.659375,0.55)
\psarc[linewidth=0.04](4.659375,-0.27){0.6}{0.0}{180.0}
\pscircle[linewidth=0.04,dimen=outer,fillstyle=solid,fillcolor=green](4.659375,0.43){0.3}
\psline[linewidth=0.04cm](4.059375,-0.21)(4.059375,-0.75)
\pscircle[linewidth=0.04,dimen=outer,fillstyle=solid,fillcolor=black](4.059375,-0.85){0.3}
\usefont{T1}{ptm}{m}{n}
\rput(4.047969,-0.86){\color{white}\boldmath$\psi$\unboldmath}
\psline[linewidth=0.04cm](5.259375,-0.23)(5.259375,-1.15)
\usefont{T1}{pcr}{m}{n}
\rput(3.1801562,0.0050){\Huge =}
\psline[linewidth=0.04cm](2.059375,1.15)(2.059375,-1.15)
\pscircle[linewidth=0.04,dimen=outer,fillstyle=solid,fillcolor=green](2.059375,0.01){0.3}
\usefont{T1}{ptm}{m}{n}
\rput(2.0479689,0.0){\boldmath$\psi$\unboldmath}
\usefont{T1}{pcr}{m}{n}
\rput(2.6254687,0.105){\Huge :}
\end{pspicture}
}
\]

\noindent The composition, under Cut, of the phase shifts generated by two terms $\phi$ and $\psi$ is:
\begin{prooftree}
    \AxiomC{$x_1:A \vdash_{\{ G[\phi]_1^1 : x_{1*} \otimes x_2 \}} x_2:A$}
    \AxiomC{$x_2:A \vdash_{\{ G[\psi]_1^1 : x_{2*} \otimes x_3 \}} x_3:A$}
    \RightLabel{\footnotesize Cut}
    \BinaryInfC{$x_1:A \vdash_{\{ G_2^1 : (\phi \otimes x_1)_* \otimes x_2, G_2^1 : (\psi \otimes x_2)_* \otimes x_3 \}} x_3:A$}
    \RightLabel{\footnotesize Monoidal associativity}
    \UnaryInfC{$x_1:A \vdash_{\{ G_2^1 : (\phi \otimes \psi)_* \otimes x_2, G_2^1 : (x_2 \otimes x_1)_* \otimes x_3 \}} x_3:A$}
\end{prooftree}

\noindent The resulting sequent can be rewritten as
\[ x_1:A \vdash_{\{ G_2^1 : (G[\phi \odot \psi]_0^1 \otimes x_1)_* \otimes x_3 \}} x_3:A \]

\noindent Which is actually a lifting of the monoidal product of those terms
\[ x_1:A \vdash_{\{ G[G[\phi \odot \psi]_0^1]_1^1 : x_{1*} \otimes x_3 \}} x_3:A \]

\begin{definition}[Lifting of the monoidal product of terms]
We define a shorthand for the lifting of the monoidal product of terms:
\[ G[\phi \odot \psi]_1^1 := G[G[\phi \odot \psi]_0^1]_1^1 \]
\end{definition}

\begin{corollary}[Phase shift commutativity]
Phase shifts generated by the action of $G_2^1$ are commutative under Cut:
\[ G[\phi \odot \psi]_1^1 = G[\psi \odot \phi]_1^1 \]
\end{corollary}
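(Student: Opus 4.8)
The plan is to reduce the statement to the cocommutativity of the copying map $G_1^2$ imposed in the previous section, transported across the dagger to its adjoint $G_2^1$, together with the monoidal associativity step that was already used in the derivation immediately preceding this corollary.

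First I would recall that derivation: composing, under Cut, the phase shift $G[\phi]_1^1$ with the phase shift $G[\psi]_1^1$ produces, after one application of monoidal associativity, the sequent $x_1:A \vdash_{S} x_3:A$ with soup $S = \{ G_2^1 : (\phi \otimes \psi)_* \otimes x_2,\; G_2^1 : (x_2 \otimes x_1)_* \otimes x_3 \}$, and this sequent was identified with $G[\phi \odot \psi]_1^1$. Running the same derivation with the two phase shifts composed in the opposite order yields instead the soup $S' = \{ G_2^1 : (\psi \otimes \phi)_* \otimes x_2,\; G_2^1 : (x_2 \otimes x_1)_* \otimes x_3 \}$, which is identified with $G[\psi \odot \phi]_1^1$. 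So it suffices to show that the sequents carrying $S$ and $S'$ are soup-equivalent.

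Next I would invoke the cocommutativity condition. In the classical-structures section this was stated as the equality of soups $\{G_1^2 : x_{1*} \otimes x_2 \otimes x_3\}$ and $\{G_1^2 : x_{1*} \otimes x_3 \otimes x_2\}$; applying the Negation rule (equivalently $\dag$-flip) and using $(G_1^2)_* = G_2^1$ turns this into the corresponding statement for the fusion map, namely $\{G_2^1 : (x_2 \otimes x_3)_* \otimes x_1\} = \{G_2^1 : (x_3 \otimes x_2)_* \otimes x_1\}$. Instantiating the bound variables $x_2, x_3$ by the captured terms $\phi, \psi$ gives exactly $\{G_2^1 : (\phi \otimes \psi)_* \otimes x_2\} = \{G_2^1 : (\psi \otimes \phi)_* \otimes x_2\}$, whence $S$ and $S'$ coincide as soups. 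Therefore the two sequents are the same up to soup equivalence, their denotations are equal, and $G[\phi \odot \psi]_1^1 = G[\psi \odot \phi]_1^1$. Equivalently, this is just the commutativity half of the Corollary on associativity and commutativity of $\odot$ pushed through the lifting $G[-]_1^1$, which is well defined on terms up to soup equivalence by subject reduction.

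The only point that needs care, and hence the main obstacle, is the bookkeeping in transporting cocommutativity of $G_1^2$ to commutativity of $G_2^1$: one must check that linear negation distributes over the relevant soup connection as $\{G_1^2 : x_{1*} \otimes x_2 \otimes x_3\}_* = \{G_2^1 : (x_2 \otimes x_3)_* \otimes x_1\}$, using $(a \otimes b)_* = b_* \otimes a_*$ on terms and involutivity of the star, and that the substitution of the bound variables $x_2, x_3$ by $\phi$ and $\psi$ is legitimate since both are captured in the scope of the sequent. Everything else is a direct appeal to results already in place: the phase-shift composition derivation, the cocommutativity condition, and the definition of soup equivalence.
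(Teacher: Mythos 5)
Your proof is correct and follows essentially the same route as the paper: compose the two phase shifts in each order, observe that the resulting soups differ only in $G_2^1 : (\phi \otimes \psi)_* \otimes x_2$ versus $G_2^1 : (\psi \otimes \phi)_* \otimes x_2$, and conclude by commutativity of the monoid operation. The only difference is that the paper simply cites ``the monoidal commutativity condition'' where you explicitly derive it from the cocommutativity of $G_1^2$ via negation of the soup connection --- a worthwhile piece of bookkeeping, and your computation of $\{G_1^2 : x_{1*} \otimes x_2 \otimes x_3\}_*$ is right.
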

\begin{proof}
\noindent The composition of the phase shift corresponding to a term $\phi$, with that of a term $\psi$, produces:
\[ x_1:A \vdash_{\{ G_2^1 : (\phi \otimes \psi)_* \otimes x_2, G_2^1 : (x_2 \otimes x_1)_* \otimes x_3 \}} x_3:A \]

\noindent Composing the phase shifts the other way around results in:
\[ x_1:A \vdash_{\{ G_2^1 : (\psi \otimes \phi)_* \otimes x_2, G_2^1 : (x_2 \otimes x_1)_* \otimes x_3 \}} x_3:A \]

\noindent The monoidal commutativity condition allows us to transform $G_2^1 : (\phi \otimes \psi)_* \otimes x_2$ into $G_2^1 : (\psi \otimes \phi)_* \otimes x_2$, in our soup, which makes the two sequents equal. It therefore follows that the resulting sequents, $G[\phi \odot \psi]_1^1$ and $G[\psi \odot \phi]_1^1$, are also equal.
\[
\scalebox{1} 
{
\begin{pspicture}(0,-1.92)(10.377188,1.92)
\psline[linewidth=0.04cm](5.179375,1.9)(5.179375,0.88)
\psarc[linewidth=0.04](5.179375,0.06){0.6}{0.0}{180.0}
\pscircle[linewidth=0.04,dimen=outer,fillstyle=solid,fillcolor=green](5.179375,0.76){0.3}
\psline[linewidth=0.04cm](5.779375,0.14)(5.779375,-1.9)
\psline[linewidth=0.04cm](2.059375,1.88)(2.059375,-0.42)
\pscircle[linewidth=0.04,dimen=outer,fillstyle=solid,fillcolor=green](2.059375,0.74){0.3}
\usefont{T1}{ptm}{m}{n}
\rput(2.0479689,0.73){\boldmath$\psi$\unboldmath}
\psline[linewidth=0.04cm](4.579375,-0.66)(4.579375,0.14)
\psellipse[linewidth=0.04,dimen=outer,fillstyle=solid,fillcolor=black](4.579375,-0.76)(0.6,0.3)
\psline[linewidth=0.04cm](2.059375,0.4)(2.059375,-1.9)
\pscircle[linewidth=0.04,dimen=outer,fillstyle=solid,fillcolor=green](2.059375,-0.74){0.3}
\usefont{T1}{ptm}{m}{n}
\rput(2.0679688,-0.75){\boldmath$\phi$\unboldmath}
\usefont{T1}{pcr}{m}{n}
\rput(3.3001564,0.015){\Huge =}
\usefont{T1}{ptm}{m}{n}
\rput(4.587969,-0.75){\color{white}\boldmath$\phi\!\odot\!\psi$\unboldmath}
\usefont{T1}{pcr}{m}{n}
\rput(6.900156,0.015){\Huge =}
\psline[linewidth=0.04cm](8.259375,1.88)(8.259375,-0.42)
\pscircle[linewidth=0.04,dimen=outer,fillstyle=solid,fillcolor=green](8.259375,0.74){0.3}
\usefont{T1}{ptm}{m}{n}
\rput(8.267969,0.73){\boldmath$\phi$\unboldmath}
\psline[linewidth=0.04cm](8.259375,0.4)(8.259375,-1.9)
\pscircle[linewidth=0.04,dimen=outer,fillstyle=solid,fillcolor=green](8.259375,-0.74){0.3}
\usefont{T1}{ptm}{m}{n}
\rput(8.247969,-0.75){\boldmath$\psi$\unboldmath}
\end{pspicture}
}
\]
\end{proof}

\section{Unbiased and classical constants}
\label{Section:Unbiased and classical constants}
This section establishes the properties of unbiasedness and classicality for constants, with respect to a given observable structure. The notion of unbiasedness differs slightly from the one presented in \cite{CD08} and \cite{CD11}, as it uses the dualisers of \cite{CPP08} to allow for non self-dual structures. The properties of unbiasedness and classicality will later be used in defining the interaction between complementary observables.

\begin{definition}[Unbiasedness (dagger lambda calculus)]
We say that a constant $k:A$ is \textit{unbiased} with respect to an observable structure $(A, G_1^2, G_1^0)$ when there exists a scalar $s:I$ such that we can match the sequents:
\[ \vdash_{\left\{ d_{A*} : k \otimes x_1, G_2^1 : (k \otimes x_1)_* \otimes x_2, s:1 \right\}} x_2:A \text{ and } \vdash_{\left\{G_0^1 : x_2\right\}} x_2:A \]
\noindent by equating their soups:
\[ \left\{ d_{A*} : k \otimes x_1, G_2^1 : (k \otimes x_1)_* \otimes x_2, s:1 \right\} = \left\{G_0^1 : x_2\right\}\]

\[
\scalebox{1} 
{
\begin{pspicture}(0,-1.5)(5.9171877,1.52)
\psline[linewidth=0.04cm,arrowsize=0.05291667cm 2.0,arrowlength=1.4,arrowinset=0.4]{<-}(3.199375,1.5)(3.199375,0.8)
\psarc[linewidth=0.04](3.199375,-0.02){0.6}{0.0}{180.0}
\pscircle[linewidth=0.04,dimen=outer,fillstyle=solid,fillcolor=green](3.199375,0.68){0.3}
\psline[linewidth=0.04cm,arrowsize=0.05291667cm 2.0,arrowlength=1.4,arrowinset=0.4]{<-}(2.599375,0.04)(2.599375,-1.0)
\pscircle[linewidth=0.04,dimen=outer,fillstyle=solid,fillcolor=black](2.599375,-1.2){0.3}
\usefont{T1}{ptm}{m}{n}
\rput(2.6279688,-1.21){\color{white}\boldmath$k$\unboldmath}
\psline[linewidth=0.04cm,linecolor=green,arrowsize=0.05291667cm 2.0,arrowlength=1.4,arrowinset=0.4]{->}(3.799375,-0.4)(3.799375,0.0)
\psline[linewidth=0.04cm](3.799375,-0.56)(3.799375,-1.1)
\pscircle[linewidth=0.04,dimen=outer,fillstyle=solid,fillcolor=black](3.799375,-1.2){0.3}
\usefont{T1}{ptm}{m}{n}
\rput(3.8079689,-1.21){\color{white}\boldmath$k_*$\unboldmath}
\psline[linewidth=0.04cm,linecolor=green,arrowsize=0.05291667cm 2.0,arrowlength=1.4,arrowinset=0.4]{<-}(3.799375,-0.7)(3.799375,-0.3)
\usefont{T1}{pcr}{m}{n}
\rput(4.5201564,0.055){\Huge =}
\psline[linewidth=0.04cm,arrowsize=0.05291667cm 2.0,arrowlength=1.4,arrowinset=0.4]{<-}(5.499375,1.5)(5.499375,-0.1)
\pscircle[linewidth=0.04,dimen=outer,fillstyle=solid,fillcolor=green](5.499375,-0.22){0.3}
\usefont{T1}{ptm}{m}{n}
\rput(3.5359375,0.185){\scriptsize 1}
\usefont{T1}{ptm}{m}{n}
\rput(3.360625,1.185){\scriptsize 2}
\usefont{T1}{ptm}{m}{n}
\rput(5.660625,0.885){\scriptsize 2}
\rput{-43.291065}(0.5168585,1.3024106){\psframe[linewidth=0.04,dimen=outer,fillstyle=solid,fillcolor=black](2.099375,0.2)(1.699375,-0.2)}
\usefont{T1}{ptm}{m}{n}
\rput(1.9079688,-0.01){\color{white}\boldmath$s$\unboldmath}
\end{pspicture}
}
\]
\end{definition}

\noindent Intuitively, a constant is said to be \textit{classical}, with respect to a given observable structure, when it is copied and deleted by that structure. More formally, this is defined as:

\begin{definition}[Classicality (dagger lambda calculus)]
We say that a constant $k:A$ is \textit{classical}, with respect to an observable structure $(A, G_1^2, G_1^0)$, when we can match the following pairs of sequents:
\[ \vdash_{\left\{ G_1^2 : k_* \otimes x_1 \otimes x_2 \right\}} x_1 \otimes x_2 : A \otimes A \text{ with } \vdash_{\left\{ k:x_1, k:x_2 \right\}} x_1 \otimes x_2 : A \otimes A \]
\noindent and
\[ \vdash_{\left\{ G_1^0 : k_* \right\}} \text{ with } \vdash \]
\noindent by equating their respective soups:
\[ \left\{ G_1^2 : k_* \otimes x_1 \otimes x_2 \right\} = \left\{ k:x_1, k:x_2 \right\} \]
\noindent and
\[ \left\{ G_1^0 : k_* \right\} = \emptyset \]
\[
\scalebox{1} 
{
\begin{pspicture}(0,-1.42)(9.1171875,1.44)
\usefont{T1}{pcr}{m}{n}
\rput(2.8801563,-0.065){\Huge =}
\psline[linewidth=0.04cm](1.859375,0.08)(1.859375,-0.52)
\psarc[linewidth=0.04](1.859375,0.88){0.6}{-180.0}{0.0}
\pscircle[linewidth=0.04,dimen=outer,fillstyle=solid,fillcolor=green](1.859375,0.18){0.3}
\psline[linewidth=0.04cm,arrowsize=0.05291667cm 2.0,arrowlength=1.4,arrowinset=0.4]{<-}(1.859375,-0.32)(1.859375,-0.92)
\pscircle[linewidth=0.04,dimen=outer,fillstyle=solid,fillcolor=black](1.859375,-1.12){0.3}
\usefont{T1}{ptm}{m}{n}
\rput(1.8879688,-1.13){\color{white}\boldmath$k$\unboldmath}
\psline[linewidth=0.04cm,arrowsize=0.05291667cm 2.0,arrowlength=1.4,arrowinset=0.4]{<-}(1.259375,1.38)(1.259375,0.88)
\psline[linewidth=0.04cm,arrowsize=0.05291667cm 2.0,arrowlength=1.4,arrowinset=0.4]{<-}(2.459375,1.42)(2.459375,0.88)
\psline[linewidth=0.04cm,arrowsize=0.05291667cm 2.0,arrowlength=1.4,arrowinset=0.4]{<-}(3.759375,0.88)(3.759375,-0.32)
\pscircle[linewidth=0.04,dimen=outer,fillstyle=solid,fillcolor=black](3.759375,-0.52){0.3}
\usefont{T1}{ptm}{m}{n}
\rput(3.7879686,-0.53){\color{white}\boldmath$k$\unboldmath}
\psline[linewidth=0.04cm,arrowsize=0.05291667cm 2.0,arrowlength=1.4,arrowinset=0.4]{<-}(4.859375,0.88)(4.859375,-0.32)
\pscircle[linewidth=0.04,dimen=outer,fillstyle=solid,fillcolor=black](4.859375,-0.52){0.3}
\usefont{T1}{ptm}{m}{n}
\rput(4.8879685,-0.53){\color{white}\boldmath$k$\unboldmath}
\usefont{T1}{ptm}{m}{n}
\rput(1.3959374,0.965){\scriptsize 1}
\usefont{T1}{ptm}{m}{n}
\rput(2.320625,0.965){\scriptsize 2}
\usefont{T1}{ptm}{m}{n}
\rput(3.8959374,0.465){\scriptsize 1}
\usefont{T1}{ptm}{m}{n}
\rput(4.720625,0.465){\scriptsize 2}
\psline[linewidth=0.04cm](7.159375,0.68)(7.159375,0.08)
\pscircle[linewidth=0.04,dimen=outer,fillstyle=solid,fillcolor=green](7.159375,0.78){0.3}
\psline[linewidth=0.04cm,arrowsize=0.05291667cm 2.0,arrowlength=1.4,arrowinset=0.4]{<-}(7.159375,0.28)(7.159375,-0.32)
\pscircle[linewidth=0.04,dimen=outer,fillstyle=solid,fillcolor=black](7.159375,-0.52){0.3}
\usefont{T1}{ptm}{m}{n}
\rput(7.1879687,-0.53){\color{white}\boldmath$k$\unboldmath}
\usefont{T1}{pcr}{m}{n}
\rput(8.080156,-0.065){\Huge =}
\end{pspicture}
}
\]
\end{definition}

\section{Complementary observables}
\label{Section:Complementary observables}
This section introduces two interacting observable structures to our lambda calculus, a Green one and a Red one, as the final step towards axiomatising basis structures in the dagger lambda calculus. The two observable structures are similar to the ones described in the later parts of \cite{CD11}, with the added support for non self-dual $\dag$-compact structures and the dualisers of \cite{CPP08}. The interaction between the two observable structures is defined by requiring that they satisfy a \textit{Bi-algebra} and a \textit{Hopf law} condition. We then show that the interaction between Green and Red makes them \textit{complementary}, as they possess certain properties with regards to classical and unbiased constants. Finally, we define a function that can be used to transform the Green sequents into Red ones and vice versa.

\begin{definition}[Complementary observable structures (dagger lambda calculus)]
For every type $A$ in the dagger lambda calculus, we define two observable structures: A \textit{Green} structure $(A, G_1^2, G_1^0)$ and a \textit{Red} structure $(A, R_1^2, R_1^0)$. Each of these observable structures comes with a dualiser; we will use $d_A^Z : A \multimap A^*$ to refer to the dualiser generated by the Green observable structure and $d_A^X : A \multimap A^*$ to refer to the one generated by Red.
\end{definition}

We define the interaction between these two observable structures by describing the conditions that we require them to satisfy in the dagger lambda calculus. The first such condition is the \textit{Bi-algebra} condition:
\[
\scalebox{1} 
{
\begin{pspicture}(0,-2.42)(8.908125,2.42)
\usefont{T1}{ptm}{m}{n}
\rput(3.29625,-2.275){\scriptsize 1}
\usefont{T1}{ptm}{m}{n}
\rput(5.7009373,-2.275){\scriptsize 2}
\psline[linewidth=0.04cm](3.1596875,-1.62)(3.1596875,-2.22)
\psarc[linewidth=0.04](3.1596875,-0.82){0.6}{-180.0}{0.0}
\pscircle[linewidth=0.04,dimen=outer,fillstyle=solid,fillcolor=green](3.1596875,-1.52){0.3}
\psline[linewidth=0.04cm,arrowsize=0.05291667cm 2.0,arrowlength=1.4,arrowinset=0.4]{<-}(3.1596875,2.4)(3.1596875,1.6)
\psarc[linewidth=0.04](3.1596875,0.78){0.6}{0.0}{180.0}
\pscircle[linewidth=0.04,dimen=outer,fillstyle=solid,fillcolor=red](3.1596875,1.48){0.3}
\psline[linewidth=0.04cm,arrowsize=0.05291667cm 2.0,arrowlength=1.4,arrowinset=0.4]{<-}(2.5596876,0.02)(2.5596876,-0.84)
\psline[linewidth=0.04cm,arrowsize=0.05291667cm 2.0,arrowlength=1.4,arrowinset=0.4]{<-}(5.5396876,2.4)(5.5396876,1.58)
\psarc[linewidth=0.04](5.5596876,0.78){0.6}{0.0}{180.0}
\pscircle[linewidth=0.04,dimen=outer,fillstyle=solid,fillcolor=red](5.5596876,1.48){0.3}
\psline[linewidth=0.04cm](5.5596876,-1.62)(5.5596876,-2.22)
\psarc[linewidth=0.04](5.5596876,-0.82){0.6}{-180.0}{0.0}
\pscircle[linewidth=0.04,dimen=outer,fillstyle=solid,fillcolor=green](5.5596876,-1.52){0.3}
\psarc[linewidth=0.04](4.3596873,0.58){0.6}{-180.0}{-90.0}
\psarc[linewidth=0.04](4.3596873,-0.62){0.6}{0.0}{90.0}
\psarc[linewidth=0.04](4.3596873,-0.62){0.6}{111.03751}{180.0}
\psarc[linewidth=0.04](4.3596873,0.58){0.6}{-71.56505}{0.0}
\psline[linewidth=0.04cm,arrowsize=0.05291667cm 2.0,arrowlength=1.4,arrowinset=0.4]{<-}(5.5596876,-2.0)(5.5596876,-2.4)
\psline[linewidth=0.04cm,arrowsize=0.05291667cm 2.0,arrowlength=1.4,arrowinset=0.4]{<-}(3.1596875,-2.0)(3.1596875,-2.4)
\psline[linewidth=0.04cm](2.5596876,0.8)(2.5596876,-0.06)
\psline[linewidth=0.04cm,arrowsize=0.05291667cm 2.0,arrowlength=1.4,arrowinset=0.4]{<-}(6.1596875,0.02)(6.1596875,-0.84)
\psline[linewidth=0.04cm](6.1596875,0.82)(6.1596875,-0.06)
\psline[linewidth=0.04cm,arrowsize=0.05291667cm 2.0,arrowlength=1.4,arrowinset=0.4]{<-}(3.7596874,0.84)(3.7596874,0.56)
\psline[linewidth=0.04cm,arrowsize=0.05291667cm 2.0,arrowlength=1.4,arrowinset=0.4]{<-}(4.9596877,0.84)(4.9596877,0.56)
\psline[linewidth=0.04cm,arrowsize=0.05291667cm 2.0,arrowlength=1.4,arrowinset=0.4]{<-}(3.7596874,-0.56)(3.7596874,-0.84)
\psline[linewidth=0.04cm,arrowsize=0.05291667cm 2.0,arrowlength=1.4,arrowinset=0.4]{<-}(4.9596877,-0.56)(4.9596877,-0.84)
\usefont{T1}{ptm}{m}{n}
\rput(2.6925,-0.875){\scriptsize 3}
\usefont{T1}{ptm}{m}{n}
\rput(3.6026564,-0.875){\scriptsize 4}
\usefont{T1}{ptm}{m}{n}
\rput(5.09375,-0.875){\scriptsize 5}
\usefont{T1}{ptm}{m}{n}
\rput(5.998125,-0.875){\scriptsize 6}
\usefont{T1}{ptm}{m}{n}
\rput(6.298125,1.025){\scriptsize 6}
\usefont{T1}{ptm}{m}{n}
\rput(3.89375,1.025){\scriptsize 5}
\usefont{T1}{ptm}{m}{n}
\rput(4.802656,1.025){\scriptsize 4}
\usefont{T1}{ptm}{m}{n}
\rput(2.3925,1.025){\scriptsize 3}
\usefont{T1}{ptm}{m}{n}
\rput(3.2982812,1.925){\scriptsize 7}
\usefont{T1}{ptm}{m}{n}
\rput(5.693125,1.925){\scriptsize 8}
\usefont{T1}{pcr}{m}{n}
\rput(6.760469,-0.0050){\Huge =}
\psline[linewidth=0.04cm](8.059688,0.48)(8.059688,-0.12)
\psarc[linewidth=0.04](8.059688,1.28){0.6}{-180.0}{0.0}
\pscircle[linewidth=0.04,dimen=outer,fillstyle=solid,fillcolor=green](8.059688,0.58){0.3}
\psline[linewidth=0.04cm,arrowsize=0.05291667cm 2.0,arrowlength=1.4,arrowinset=0.4]{<-}(8.059688,0.0)(8.059688,-0.8)
\psarc[linewidth=0.04](8.059688,-1.62){0.6}{0.0}{180.0}
\pscircle[linewidth=0.04,dimen=outer,fillstyle=solid,fillcolor=red](8.059688,-0.92){0.3}
\usefont{T1}{ptm}{m}{n}
\rput(8.2925,-0.175){\scriptsize 3}
\usefont{T1}{ptm}{m}{n}
\rput(7.5982814,1.625){\scriptsize 7}
\psline[linewidth=0.04cm,arrowsize=0.05291667cm 2.0,arrowlength=1.4,arrowinset=0.4]{<-}(8.659687,1.96)(8.659687,1.26)
\usefont{T1}{ptm}{m}{n}
\rput(8.793125,1.625){\scriptsize 8}
\usefont{T1}{ptm}{m}{n}
\rput(7.59625,-1.975){\scriptsize 1}
\psline[linewidth=0.04cm,arrowsize=0.05291667cm 2.0,arrowlength=1.4,arrowinset=0.4]{<-}(7.4596877,-1.56)(7.4596877,-2.16)
\usefont{T1}{ptm}{m}{n}
\rput(8.800938,-1.975){\scriptsize 2}
\psline[linewidth=0.04cm,arrowsize=0.05291667cm 2.0,arrowlength=1.4,arrowinset=0.4]{<-}(8.659687,-1.56)(8.659687,-2.16)
\psline[linewidth=0.04cm,arrowsize=0.05291667cm 2.0,arrowlength=1.4,arrowinset=0.4]{<-}(7.4596877,1.96)(7.4596877,1.26)
\rput{-44.62365}(0.42312706,1.1310842){\psframe[linewidth=0.04,dimen=outer,fillstyle=solid,fillcolor=black](1.8396875,0.3)(1.3396875,-0.2)}
\usefont{T1}{ptm}{m}{n}
\rput(1.5639062,0.045){\tiny \color{white}\boldmath$\sqrt{D}$\unboldmath}
\end{pspicture}
}
\]

\noindent which requires the following two soups, $S_{13}$ and $S_{14}$, to be equal in the lambda calculus:

\begin{center}
\begin{tabular}{ l l }
  $S_{13} =$    & $\;   \left\{
                        \begin{array}{l l}
                            G_1^2 : x_{1*} \otimes x_3 \otimes x_4,     & G_1^2 : x_{2*} \otimes x_5 \otimes x_6, \\
                            R_2^1 : (x_3 \otimes x_5)_* \otimes x_7,    & R_2^1 : (x_4 \otimes x_6)_* \otimes x_8, \\
                            \sqrt{D} : 1                                &
                        \end{array}
                        \right\}$ \\
                & \\
  $S_{14} =$    & $\; \Big\{ R_2^1 : (x_1 \otimes x_2)_* \otimes x_3, G_1^2 : x_{3*} \otimes x_7 \otimes x_8 \Big\}$
\end{tabular}
\end{center}

\noindent This allows us to relate the sequents for $x_1:A, x_2:A \vdash_{S_{13}} x_7 \otimes x_8 : A \otimes A$ and $x_1:A, x_2:A \vdash_{S_{14}} x_7 \otimes x_8 : A \otimes A$, matching the equality of arrows that is required in the categorical setting.

The next condition that we require of the interaction between the Green and Red observable structures in the dagger lambda calculus is the \textit{Hopf law} condition:
\[
\scalebox{1} 
{
\begin{pspicture}(0,-2.84)(4.22,2.84)
\psline[linewidth=0.04cm,linecolor=green,arrowsize=0.05291667cm 2.0,arrowlength=1.4,arrowinset=0.4]{->}(1.12,-0.9)(1.12,-0.46)
\usefont{T1}{ptm}{m}{n}
\rput(1.8565625,-2.695){\scriptsize 1}
\psline[linewidth=0.04cm](1.72,-2.04)(1.72,-2.64)
\psarc[linewidth=0.04](1.72,-1.24){0.6}{-180.0}{0.0}
\pscircle[linewidth=0.04,dimen=outer,fillstyle=solid,fillcolor=green](1.72,-1.94){0.3}
\psline[linewidth=0.04cm,arrowsize=0.05291667cm 2.0,arrowlength=1.4,arrowinset=0.4]{<-}(1.72,-2.42)(1.72,-2.82)
\psline[linewidth=0.04cm,arrowsize=0.05291667cm 2.0,arrowlength=1.4,arrowinset=0.4]{<-}(2.32,-0.86)(2.32,-1.26)
\usefont{T1}{ptm}{m}{n}
\rput(1.2528125,-1.295){\scriptsize 3}
\usefont{T1}{ptm}{m}{n}
\rput(2.16125,0.105){\scriptsize 2}
\usefont{T1}{pcr}{m}{n}
\rput(2.9207811,-0.025){\Huge =}
\psline[linewidth=0.04cm,arrowsize=0.05291667cm 2.0,arrowlength=1.4,arrowinset=0.4]{<-}(1.72,2.82)(1.72,2.02)
\psarc[linewidth=0.04](1.72,1.2){0.6}{0.0}{180.0}
\pscircle[linewidth=0.04,dimen=outer,fillstyle=solid,fillcolor=red](1.72,1.9){0.3}
\psline[linewidth=0.04cm,arrowsize=0.05291667cm 2.0,arrowlength=1.4,arrowinset=0.4]{<-}(1.12,1.26)(1.12,0.66)
\psline[linewidth=0.04cm,arrowsize=0.05291667cm 2.0,arrowlength=1.4,arrowinset=0.4]{<-}(2.32,1.26)(2.32,-0.98)
\psline[linewidth=0.04cm,arrowsize=0.05291667cm 2.0,arrowlength=1.4,arrowinset=0.4]{<-}(1.12,-0.86)(1.12,-1.26)
\psline[linewidth=0.04cm,linecolor=green,arrowsize=0.05291667cm 2.0,arrowlength=1.4,arrowinset=0.4]{<-}(1.12,-0.48)(1.12,-0.08)
\psline[linewidth=0.04cm,linecolor=red,arrowsize=0.05291667cm 2.0,arrowlength=1.4,arrowinset=0.4]{<->}(1.12,-0.14)(1.12,0.7)
\usefont{T1}{ptm}{m}{n}
\rput(1.2629688,-0.195){\scriptsize 4}
\usefont{T1}{ptm}{m}{n}
\rput(1.2540625,0.905){\scriptsize 5}
\usefont{T1}{ptm}{m}{n}
\rput(1.8584375,2.405){\scriptsize 6}
\psellipse[linewidth=0.04,dimen=outer](0.35,-0.02)(0.35,0.6)
\usefont{T1}{ptm}{m}{n}
\rput(4.0565624,-1.295){\scriptsize 1}
\psline[linewidth=0.04cm](3.92,-0.64)(3.92,-1.24)
\pscircle[linewidth=0.04,dimen=outer,fillstyle=solid,fillcolor=green](3.92,-0.54){0.3}
\psline[linewidth=0.04cm,arrowsize=0.05291667cm 2.0,arrowlength=1.4,arrowinset=0.4]{<-}(3.92,-1.02)(3.92,-1.42)
\psline[linewidth=0.04cm,arrowsize=0.05291667cm 2.0,arrowlength=1.4,arrowinset=0.4]{<-}(3.92,1.42)(3.92,0.62)
\pscircle[linewidth=0.04,dimen=outer,fillstyle=solid,fillcolor=red](3.92,0.5){0.3}
\usefont{T1}{ptm}{m}{n}
\rput(4.0584373,1.005){\scriptsize 6}
\end{pspicture}
}
\]

\noindent which requires the following two soups, $S_{15}$ and $S_{16}$, to be equal in the lambda calculus:

\begin{center}
\begin{tabular}{ l l }
  $S_{15} =$    & $\;   \left\{
                        \begin{array}{l l}
                            G_1^2 : x_{1*} \otimes x_3 \otimes x_2, & R_2^1 : (x_5 \otimes x_2)_* \otimes x_6, \\
                            d_A^Z : x_{3*} \otimes x_{4*},          & d_{A^*}^X : x_4 \otimes x_5, \\
                            D : 1                                   &
                        \end{array}
                        \right\}$ \\
                & \\
  $S_{16} =$    & $\; \Big\{ G_1^0 : x_{1*}, R_0^1 : x_6 \Big\}$
\end{tabular}
\end{center}

\noindent This allows us to relate the sequents for $x_1 : A \vdash_{S_{15}} x_6 : A$ and $x_1 : A \vdash_{S_{16}} x_6 : A$, or in other words $x_1 : A \vdash_{S_{15}} x_6 : A$ with $G_0^1 : A \vdash R_0^1 : A$, matching the equality of arrows that is required in the categorical setting.

\vskip .2cm

We will now show an interesting property about the interaction that we have defined against the two observable structures. When a constant is classical against the Green structure, it is unbiased against the Red one. Conversely, when a constant is classical against the Red structure, it is unbiased against the Green one.

\begin{theorem}[Complementarity]
Let $k:A$ be a constant such that $\{ G_1^2 : k_* \otimes x_1 \otimes x_2 \} = \{ k:x_1, k:x_2 \}$ and $\{ G_1^0 : k_* \}$. (1) The interaction between the two observable structures causes $\{ d_{A^*}^X : k \otimes x_5, R_2^1 : (x_5 \otimes k)_* \otimes x_6, D:1 \}$ to be equal to $\{R_0^1 : x_6\}$. Similarly, let $\ell:A$ be a constant such that $\{ R_1^2 : \ell_* \otimes x_1 \otimes x_2 \} = \{ \ell:x_1, \ell:x_2 \}$ and $\{ R_1^0 : \ell_* \}$. (2) The interaction causes $\{ d_{A^*}^Z : \ell \otimes x_5, G_2^1 : (x_5 \otimes \ell)_* \otimes x_6, D:1 \}$ to be equal to $\{G_0^1 : x_6\}$.
\end{theorem}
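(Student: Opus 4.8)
The plan is to establish statement (1) by a direct soup-equality computation, transforming the soup $\{ d_{A^*}^X : k \otimes x_5,\ R_2^1 : (x_5 \otimes k)_* \otimes x_6,\ D:1 \}$ into $\{ R_0^1 : x_6 \}$ using only rewrites licensed by the Green-classicality hypothesis on $k$, the Hopf law, and the structural equations of the two observable structures (Frobenius, isometry, comonoidal identity) together with the definition $d_A^Z := (G_0^2)_*$ of the Green dualiser. Statement (2) is entirely symmetric and follows by repeating the argument with the roles of Green and Red exchanged.

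First I would feed the constant $k$ into the Hopf law. Starting from the Hopf-law soup equality $S_{15} = S_{16}$, substitute $k$ for the variable $x_1$ on both sides (legitimate because $x_1$ occurs exactly once per side, so the substitution is realised by a single Cut against the $\eta$-expanded constant for $k$). On the right this gives $\{ G_1^0 : k_*,\ R_0^1 : x_6 \}$, and the hypothesis $\{ G_1^0 : k_* \} = \emptyset$ collapses it to $\{ R_0^1 : x_6 \}$ --- precisely the right-hand side we want. On the left, the connection $G_1^2 : k_* \otimes x_3 \otimes x_2$ is, by the Green-classicality hypothesis, equal to $\{ k : x_3,\ k : x_2 \}$; consuming these (substituting the constant-bearing $k$ for the variables $x_3$ and $x_2$) reduces $S_{15}[k/x_1]$ to $\{ R_2^1 : (x_5 \otimes k)_* \otimes x_6,\ d_A^Z : k_* \otimes x_{4*},\ d_{A^*}^X : x_4 \otimes x_5,\ D:1 \}$.

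It then remains to fuse the two dualiser connections $\{ d_A^Z : k_* \otimes x_{4*},\ d_{A^*}^X : x_4 \otimes x_5 \}$ into the single connection $d_{A^*}^X : k \otimes x_5$ occurring in the target. This is the crux, and it is where Green-classicality of $k$ is used a second time: because $d_A^Z$ is built from the Green copying and deleting maps, its action on the Green-classical point $k$ simplifies --- by the same comonoidal-identity and isometry manipulations that appear in the proof of unitarity of the dualiser --- to the trivial passage $k \mapsto k_*$, so the $d_A^Z$-leg behaves as a bare negation and cancels against the $x_4/x_{4*}$ link, leaving exactly $d_{A^*}^X : k \otimes x_5$. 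Reassembling, $\{ d_{A^*}^X : k \otimes x_5,\ R_2^1 : (x_5 \otimes k)_* \otimes x_6,\ D:1 \} = \{ R_0^1 : x_6 \}$, which is the assertion that $k$ is Red-unbiased; statement (2) is obtained mutatis mutandis.

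The step I expect to be the main obstacle is the fusion of the $d_A^Z$ and $d_{A^*}^X$ legs: it requires tracking the linear-negation star carefully across several soup connections and choosing the correct substitution direction at each consumption (always feeding the constant-bearing side into the variable side), and it rests delicately on how $k$'s classicality lets the Green spider be absorbed into the dualiser. A further point to nail down for statement (2) is the availability of the Green/Red-swapped Hopf law: if it is not taken as primitive alongside the version stated above, I would first derive it (from the Bi-algebra law and the structural equations of the two structures) as a preliminary lemma before running the symmetric argument.
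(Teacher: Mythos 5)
Your proposal is correct and follows essentially the same route as the paper's proof: the paper runs the identical chain of soup equalities in the opposite direction (it rewrites $d_{A^*}^X : k \otimes x_5$ into the pair $\{d_A^Z : k_* \otimes x_{4*},\ d_{A^*}^X : x_4 \otimes x_5\}$ using the fact that the Green dualiser acts as a bare negation on the Green-classical point $k$, then uses classicality to split $k$ into a $G_1^2$-copy of itself, applies the Hopf law, and deletes the leftover $G_1^0 : k_*$), so your "crux" fusion step is exactly the paper's first rewrite read backwards and is justified the same way. Your closing remark about statement (2) is well taken: the paper invokes "the Hopf law" there in its Green/Red-swapped form without comment, so making that swapped version explicit (as a primitive or a derived lemma) is a gap in the paper's exposition that your proposal correctly identifies.
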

\begin{proof}
(1) We begin by trying to \textit{fuse} together $k_*$ and $k$, in a manner that is similar to the definition of unbiasedness. We, therefore, start with the following set of connections in our soup:
\[\{ d_{A^*}^X : k  \otimes x_5, R_2^1 : (x_5 \otimes k)_* \otimes x_6, D : 1 \}\]
By the definition of dualisers and since $k$ is classical under the Green observable structure, we can use the explicit witness of the passage between $k$ and $k_*$ to produce:
\[\{ d_{A}^Z : k_* \otimes x_{4*}, d_{A^*}^X : x_4  \otimes x_5, R_2^1 : (x_5 \otimes k)_* \otimes x_6, D : 1 \}\]
Since $k$ is classical under the Green observable structure, we can replace the two $k$'s with a Green copy map, operating on a single $k$:
\[\{ G_1^2 : k_* \otimes x_3 \otimes x_2, d_A^Z : x_{3*} \otimes x_{4*}, d_{A^*}^X : x_4 \otimes x_5, R_2^1 : (x_5 \otimes x_2)_* \otimes x_6, D : 1 \}\]
We can now use the \textit{Hopf law} to separate some of the connections so that we get:
\[\{ G_1^0 : k_*, R_0^1 : x_6 \}\]
Since $k$ is classical under the Green observable structure, we can safely remove the first soup connection as it deletes a $k$, which simplifies our soup to what we have been trying to prove:
\[\{ R_0^1 : x_6 \}\]

\[
\scalebox{1} 
{
\begin{pspicture}(0,-3.13)(14.357187,3.15)
\psline[linewidth=0.04cm,linecolor=green,arrowsize=0.05291667cm 2.0,arrowlength=1.4,arrowinset=0.4]{->}(9.619375,-0.59)(9.619375,-0.15)
\usefont{T1}{ptm}{m}{n}
\rput(10.355938,-2.385){\scriptsize 1}
\psline[linewidth=0.04cm](10.219375,-1.73)(10.219375,-2.33)
\psarc[linewidth=0.04](10.219375,-0.93){0.6}{-180.0}{0.0}
\pscircle[linewidth=0.04,dimen=outer,fillstyle=solid,fillcolor=green](10.219375,-1.63){0.3}
\psline[linewidth=0.04cm,arrowsize=0.05291667cm 2.0,arrowlength=1.4,arrowinset=0.4]{<-}(10.819375,-0.55)(10.819375,-0.95)
\usefont{T1}{ptm}{m}{n}
\rput(9.752188,-0.985){\scriptsize 3}
\usefont{T1}{ptm}{m}{n}
\rput(10.660625,0.415){\scriptsize 2}
\usefont{T1}{pcr}{m}{n}
\rput(11.4201565,0.685){\Huge =}
\psline[linewidth=0.04cm,arrowsize=0.05291667cm 2.0,arrowlength=1.4,arrowinset=0.4]{<-}(10.219375,3.13)(10.219375,2.33)
\psarc[linewidth=0.04](10.219375,1.51){0.6}{0.0}{180.0}
\pscircle[linewidth=0.04,dimen=outer,fillstyle=solid,fillcolor=red](10.219375,2.21){0.3}
\psline[linewidth=0.04cm,arrowsize=0.05291667cm 2.0,arrowlength=1.4,arrowinset=0.4]{<-}(9.619375,1.57)(9.619375,0.97)
\psline[linewidth=0.04cm,arrowsize=0.05291667cm 2.0,arrowlength=1.4,arrowinset=0.4]{<-}(10.819375,1.57)(10.819375,-0.67)
\psline[linewidth=0.04cm,arrowsize=0.05291667cm 2.0,arrowlength=1.4,arrowinset=0.4]{<-}(9.619375,-0.55)(9.619375,-0.95)
\psline[linewidth=0.04cm,linecolor=green,arrowsize=0.05291667cm 2.0,arrowlength=1.4,arrowinset=0.4]{<-}(9.619375,-0.17)(9.619375,0.23)
\psline[linewidth=0.04cm,linecolor=red,arrowsize=0.05291667cm 2.0,arrowlength=1.4,arrowinset=0.4]{<->}(9.619375,0.17)(9.619375,1.01)
\usefont{T1}{ptm}{m}{n}
\rput(9.762343,0.115){\scriptsize 4}
\usefont{T1}{ptm}{m}{n}
\rput(9.753437,1.215){\scriptsize 5}
\usefont{T1}{ptm}{m}{n}
\rput(10.357813,2.715){\scriptsize 6}
\psellipse[linewidth=0.04,dimen=outer](1.349375,0.89)(0.35,0.6)
\usefont{T1}{ptm}{m}{n}
\rput(12.555938,-0.585){\scriptsize 1}
\psline[linewidth=0.04cm](12.419375,0.07)(12.419375,-0.53)
\pscircle[linewidth=0.04,dimen=outer,fillstyle=solid,fillcolor=green](12.419375,0.17){0.3}
\psline[linewidth=0.04cm,arrowsize=0.05291667cm 2.0,arrowlength=1.4,arrowinset=0.4]{<-}(12.419375,-0.37)(12.419375,-1.07)
\psline[linewidth=0.04cm,arrowsize=0.05291667cm 2.0,arrowlength=1.4,arrowinset=0.4]{<-}(12.419375,2.13)(12.419375,1.33)
\pscircle[linewidth=0.04,dimen=outer,fillstyle=solid,fillcolor=red](12.419375,1.21){0.3}
\usefont{T1}{ptm}{m}{n}
\rput(12.557813,1.715){\scriptsize 6}
\psline[linewidth=0.04cm,arrowsize=0.05291667cm 2.0,arrowlength=1.4,arrowinset=0.4]{<-}(2.719375,3.13)(2.719375,2.33)
\psarc[linewidth=0.04](2.719375,1.51){0.6}{0.0}{180.0}
\pscircle[linewidth=0.04,dimen=outer,fillstyle=solid,fillcolor=red](2.719375,2.21){0.3}
\psline[linewidth=0.04cm,arrowsize=0.05291667cm 2.0,arrowlength=1.4,arrowinset=0.4]{<-}(2.119375,1.57)(2.119375,0.97)
\psline[linewidth=0.04cm,arrowsize=0.05291667cm 2.0,arrowlength=1.4,arrowinset=0.4]{<-}(3.319375,1.57)(3.319375,-0.67)
\psline[linewidth=0.04cm,arrowsize=0.05291667cm 2.0,arrowlength=1.4,arrowinset=0.4]{->}(2.119375,0.37)(2.119375,-0.23)
\psline[linewidth=0.04cm,linecolor=red,arrowsize=0.05291667cm 2.0,arrowlength=1.4,arrowinset=0.4]{<->}(2.119375,0.17)(2.119375,1.01)
\psline[linewidth=0.04cm](2.119375,-0.13)(2.119375,-0.73)
\pscircle[linewidth=0.06,linecolor=green,dimen=outer,fillstyle=solid,fillcolor=black](2.119375,-0.83){0.3}
\usefont{T1}{ptm}{m}{n}
\rput(2.1079688,-0.82){\color{white}\boldmath$k_*$\unboldmath}
\psline[linewidth=0.04cm,arrowsize=0.05291667cm 2.0,arrowlength=1.4,arrowinset=0.4]{<-}(3.319375,-0.13)(3.319375,-0.73)
\pscircle[linewidth=0.06,linecolor=green,dimen=outer,fillstyle=solid,fillcolor=black](3.319375,-0.83){0.3}
\usefont{T1}{ptm}{m}{n}
\rput(3.3279688,-0.82){\color{white}\boldmath$k$\unboldmath}
\psellipse[linewidth=0.04,dimen=outer](5.049375,0.89)(0.35,0.6)
\psline[linewidth=0.04cm,arrowsize=0.05291667cm 2.0,arrowlength=1.4,arrowinset=0.4]{<-}(6.419375,3.13)(6.419375,2.33)
\psarc[linewidth=0.04](6.419375,1.51){0.6}{0.0}{180.0}
\pscircle[linewidth=0.04,dimen=outer,fillstyle=solid,fillcolor=red](6.419375,2.21){0.3}
\psline[linewidth=0.04cm,arrowsize=0.05291667cm 2.0,arrowlength=1.4,arrowinset=0.4]{<-}(5.819375,1.57)(5.819375,0.97)
\psline[linewidth=0.04cm,arrowsize=0.05291667cm 2.0,arrowlength=1.4,arrowinset=0.4]{<-}(7.019375,1.57)(7.019375,-0.67)
\psline[linewidth=0.04cm,arrowsize=0.05291667cm 2.0,arrowlength=1.4,arrowinset=0.4]{<-}(7.019375,-0.53)(7.019375,-1.13)
\pscircle[linewidth=0.06,linecolor=green,dimen=outer,fillstyle=solid,fillcolor=black](7.019375,-1.23){0.3}
\usefont{T1}{ptm}{m}{n}
\rput(7.027969,-1.22){\color{white}\boldmath$k$\unboldmath}
\psline[linewidth=0.04cm,linecolor=green,arrowsize=0.05291667cm 2.0,arrowlength=1.4,arrowinset=0.4]{->}(5.819375,-0.71)(5.819375,-0.11)
\psline[linewidth=0.04cm,linecolor=green,arrowsize=0.05291667cm 2.0,arrowlength=1.4,arrowinset=0.4]{<-}(5.819375,-0.11)(5.819375,0.49)
\psline[linewidth=0.04cm,linecolor=red,arrowsize=0.05291667cm 2.0,arrowlength=1.4,arrowinset=0.4]{<->}(5.819375,0.29)(5.819375,1.09)
\psline[linewidth=0.04cm,arrowsize=0.05291667cm 2.0,arrowlength=1.4,arrowinset=0.4]{<-}(5.819375,-0.53)(5.819375,-1.13)
\pscircle[linewidth=0.06,linecolor=green,dimen=outer,fillstyle=solid,fillcolor=black](5.819375,-1.23){0.3}
\usefont{T1}{ptm}{m}{n}
\rput(5.8279686,-1.22){\color{white}\boldmath$k$\unboldmath}
\usefont{T1}{pcr}{m}{n}
\rput(3.9201562,0.685){\Huge =}
\usefont{T1}{pcr}{m}{n}
\rput(7.720156,0.685){\Huge =}
\pscircle[linewidth=0.06,linecolor=green,dimen=outer,fillstyle=solid,fillcolor=black](12.419375,-1.23){0.3}
\usefont{T1}{ptm}{m}{n}
\rput(12.427969,-1.22){\color{white}\boldmath$k$\unboldmath}
\psellipse[linewidth=0.04,dimen=outer](8.849375,0.89)(0.35,0.6)
\psline[linewidth=0.04cm,arrowsize=0.05291667cm 2.0,arrowlength=1.4,arrowinset=0.4]{<-}(10.219375,-2.17)(10.219375,-2.87)
\pscircle[linewidth=0.06,linecolor=green,dimen=outer,fillstyle=solid,fillcolor=black](10.219375,-2.83){0.3}
\usefont{T1}{ptm}{m}{n}
\rput(10.227969,-2.82){\color{white}\boldmath$k$\unboldmath}
\usefont{T1}{ptm}{m}{n}
\rput(6.5578127,2.715){\scriptsize 6}
\usefont{T1}{ptm}{m}{n}
\rput(2.9578125,2.715){\scriptsize 6}
\usefont{T1}{ptm}{m}{n}
\rput(5.9521875,-0.785){\scriptsize 3}
\usefont{T1}{ptm}{m}{n}
\rput(6.860625,0.415){\scriptsize 2}
\usefont{T1}{ptm}{m}{n}
\rput(5.9623437,0.315){\scriptsize 4}
\usefont{T1}{ptm}{m}{n}
\rput(5.9534373,1.215){\scriptsize 5}
\usefont{T1}{ptm}{m}{n}
\rput(3.060625,0.415){\scriptsize 2}
\usefont{T1}{ptm}{m}{n}
\rput(2.3534374,1.215){\scriptsize 5}
\usefont{T1}{ptm}{m}{n}
\rput(2.3623438,-0.085){\scriptsize 4}
\end{pspicture}
}
\]

(2) We begin by trying to \textit{fuse} together $\ell_*$ and $\ell$, in a manner that is similar to the definition of unbiasedness. We, therefore, start with the following set of connections in our soup:
\[\{ d_{A^*}^Z : \ell  \otimes x_5, G_2^1 : (x_5 \otimes \ell)_* \otimes x_6, D : 1 \}\]
By the definition of dualisers and since $\ell$ is classical under the Red observable structure, we can use the explicit witness of the passage between $\ell$ and $\ell_*$ to produce:
\[\{ d_{A}^X : \ell_* \otimes x_{4*}, d_{A^*}^Z : x_4  \otimes x_5, G_2^1 : (x_5 \otimes \ell)_* \otimes x_6, D : 1 \}\]
Since $\ell$ is classical under the Red observable structure, we can replace the two $\ell$'s with a Red copy map, operating on a single $\ell$:
\[\{ R_1^2 : \ell_* \otimes x_3 \otimes x_2, d_A^X : x_{3*} \otimes x_{4*}, d_{A^*}^Z : x_4 \otimes x_5, G_2^1 : (x_5 \otimes x_2)_* \otimes x_6, D : 1 \}\]
We can now use the \textit{Hopf law} to separate some of the connections so that we get:
\[\{ R_1^0 : \ell_*, G_0^1 : x_6 \}\]
Since $\ell$ is classical under the Red observable structure, we can safely remove the first soup connection as it deletes a $\ell$, which simplifies our soup to what we have been trying to prove:
\[\{ G_0^1 : x_6 \}\]

\[
\scalebox{1} 
{
\begin{pspicture}(0,-3.13)(14.317187,3.15)
\psline[linewidth=0.04cm,linecolor=red,arrowsize=0.05291667cm 2.0,arrowlength=1.4,arrowinset=0.4]{->}(9.579375,-0.59)(9.579375,-0.15)
\usefont{T1}{ptm}{m}{n}
\rput(10.315937,-2.385){\scriptsize 1}
\psline[linewidth=0.04cm](10.179375,-1.73)(10.179375,-2.33)
\psarc[linewidth=0.04](10.179375,-0.93){0.6}{-180.0}{0.0}
\pscircle[linewidth=0.04,dimen=outer,fillstyle=solid,fillcolor=red](10.179375,-1.63){0.3}
\psline[linewidth=0.04cm,arrowsize=0.05291667cm 2.0,arrowlength=1.4,arrowinset=0.4]{<-}(10.779375,-0.55)(10.779375,-0.95)
\usefont{T1}{ptm}{m}{n}
\rput(9.712188,-0.985){\scriptsize 3}
\usefont{T1}{ptm}{m}{n}
\rput(10.620625,0.415){\scriptsize 2}
\usefont{T1}{pcr}{m}{n}
\rput(11.3801565,0.685){\Huge =}
\psline[linewidth=0.04cm,arrowsize=0.05291667cm 2.0,arrowlength=1.4,arrowinset=0.4]{<-}(10.179375,3.13)(10.179375,2.33)
\psarc[linewidth=0.04](10.179375,1.51){0.6}{0.0}{180.0}
\pscircle[linewidth=0.04,dimen=outer,fillstyle=solid,fillcolor=green](10.179375,2.21){0.3}
\psline[linewidth=0.04cm,arrowsize=0.05291667cm 2.0,arrowlength=1.4,arrowinset=0.4]{<-}(9.579375,1.57)(9.579375,0.97)
\psline[linewidth=0.04cm,arrowsize=0.05291667cm 2.0,arrowlength=1.4,arrowinset=0.4]{<-}(10.779375,1.57)(10.779375,-0.67)
\psline[linewidth=0.04cm,arrowsize=0.05291667cm 2.0,arrowlength=1.4,arrowinset=0.4]{<-}(9.579375,-0.55)(9.579375,-0.95)
\psline[linewidth=0.04cm,linecolor=red,arrowsize=0.05291667cm 2.0,arrowlength=1.4,arrowinset=0.4]{<-}(9.579375,-0.17)(9.579375,0.23)
\psline[linewidth=0.04cm,linecolor=green,arrowsize=0.05291667cm 2.0,arrowlength=1.4,arrowinset=0.4]{<->}(9.579375,0.17)(9.579375,1.01)
\usefont{T1}{ptm}{m}{n}
\rput(9.722343,0.115){\scriptsize 4}
\usefont{T1}{ptm}{m}{n}
\rput(9.713437,1.215){\scriptsize 5}
\usefont{T1}{ptm}{m}{n}
\rput(10.317813,2.715){\scriptsize 6}
\psellipse[linewidth=0.04,dimen=outer](1.309375,0.89)(0.35,0.6)
\usefont{T1}{ptm}{m}{n}
\rput(12.515938,-0.585){\scriptsize 1}
\psline[linewidth=0.04cm](12.379375,0.07)(12.379375,-0.53)
\pscircle[linewidth=0.04,dimen=outer,fillstyle=solid,fillcolor=red](12.379375,0.17){0.3}
\psline[linewidth=0.04cm,arrowsize=0.05291667cm 2.0,arrowlength=1.4,arrowinset=0.4]{<-}(12.379375,-0.37)(12.379375,-1.07)
\psline[linewidth=0.04cm,arrowsize=0.05291667cm 2.0,arrowlength=1.4,arrowinset=0.4]{<-}(12.379375,2.13)(12.379375,1.33)
\pscircle[linewidth=0.04,dimen=outer,fillstyle=solid,fillcolor=green](12.379375,1.21){0.3}
\usefont{T1}{ptm}{m}{n}
\rput(12.517813,1.715){\scriptsize 6}
\psline[linewidth=0.04cm,arrowsize=0.05291667cm 2.0,arrowlength=1.4,arrowinset=0.4]{<-}(2.679375,3.13)(2.679375,2.33)
\psarc[linewidth=0.04](2.679375,1.51){0.6}{0.0}{180.0}
\pscircle[linewidth=0.04,dimen=outer,fillstyle=solid,fillcolor=green](2.679375,2.21){0.3}
\psline[linewidth=0.04cm,arrowsize=0.05291667cm 2.0,arrowlength=1.4,arrowinset=0.4]{<-}(2.079375,1.57)(2.079375,0.97)
\psline[linewidth=0.04cm,arrowsize=0.05291667cm 2.0,arrowlength=1.4,arrowinset=0.4]{<-}(3.279375,1.57)(3.279375,-0.67)
\psline[linewidth=0.04cm,arrowsize=0.05291667cm 2.0,arrowlength=1.4,arrowinset=0.4]{->}(2.079375,0.37)(2.079375,-0.23)
\psline[linewidth=0.04cm,linecolor=green,arrowsize=0.05291667cm 2.0,arrowlength=1.4,arrowinset=0.4]{<->}(2.079375,0.17)(2.079375,1.01)
\psline[linewidth=0.04cm](2.079375,-0.13)(2.079375,-0.73)
\pscircle[linewidth=0.06,linecolor=red,dimen=outer,fillstyle=solid,fillcolor=black](2.079375,-0.83){0.3}
\usefont{T1}{ptm}{m}{n}
\rput(2.0679688,-0.82){\color{white}\boldmath$k_*$\unboldmath}
\psline[linewidth=0.04cm,arrowsize=0.05291667cm 2.0,arrowlength=1.4,arrowinset=0.4]{<-}(3.279375,-0.13)(3.279375,-0.73)
\pscircle[linewidth=0.06,linecolor=red,dimen=outer,fillstyle=solid,fillcolor=black](3.279375,-0.83){0.3}
\usefont{T1}{ptm}{m}{n}
\rput(3.2879686,-0.82){\color{white}\boldmath$k$\unboldmath}
\psellipse[linewidth=0.04,dimen=outer](5.009375,0.89)(0.35,0.6)
\psline[linewidth=0.04cm,arrowsize=0.05291667cm 2.0,arrowlength=1.4,arrowinset=0.4]{<-}(6.379375,3.13)(6.379375,2.33)
\psarc[linewidth=0.04](6.379375,1.51){0.6}{0.0}{180.0}
\pscircle[linewidth=0.04,dimen=outer,fillstyle=solid,fillcolor=green](6.379375,2.21){0.3}
\psline[linewidth=0.04cm,arrowsize=0.05291667cm 2.0,arrowlength=1.4,arrowinset=0.4]{<-}(5.779375,1.57)(5.779375,0.97)
\psline[linewidth=0.04cm,arrowsize=0.05291667cm 2.0,arrowlength=1.4,arrowinset=0.4]{<-}(6.979375,1.57)(6.979375,-0.67)
\psline[linewidth=0.04cm,arrowsize=0.05291667cm 2.0,arrowlength=1.4,arrowinset=0.4]{<-}(6.979375,-0.53)(6.979375,-1.13)
\pscircle[linewidth=0.06,linecolor=red,dimen=outer,fillstyle=solid,fillcolor=black](6.979375,-1.23){0.3}
\usefont{T1}{ptm}{m}{n}
\rput(6.987969,-1.22){\color{white}\boldmath$k$\unboldmath}
\psline[linewidth=0.04cm,linecolor=red,arrowsize=0.05291667cm 2.0,arrowlength=1.4,arrowinset=0.4]{->}(5.779375,-0.71)(5.779375,-0.11)
\psline[linewidth=0.04cm,linecolor=red,arrowsize=0.05291667cm 2.0,arrowlength=1.4,arrowinset=0.4]{<-}(5.779375,-0.11)(5.779375,0.49)
\psline[linewidth=0.04cm,linecolor=green,arrowsize=0.05291667cm 2.0,arrowlength=1.4,arrowinset=0.4]{<->}(5.779375,0.29)(5.779375,1.09)
\psline[linewidth=0.04cm,arrowsize=0.05291667cm 2.0,arrowlength=1.4,arrowinset=0.4]{<-}(5.779375,-0.53)(5.779375,-1.13)
\pscircle[linewidth=0.06,linecolor=red,dimen=outer,fillstyle=solid,fillcolor=black](5.779375,-1.23){0.3}
\usefont{T1}{ptm}{m}{n}
\rput(5.7879686,-1.22){\color{white}\boldmath$k$\unboldmath}
\usefont{T1}{pcr}{m}{n}
\rput(3.8801563,0.685){\Huge =}
\usefont{T1}{pcr}{m}{n}
\rput(7.680156,0.685){\Huge =}
\pscircle[linewidth=0.06,linecolor=red,dimen=outer,fillstyle=solid,fillcolor=black](12.379375,-1.23){0.3}
\usefont{T1}{ptm}{m}{n}
\rput(12.387969,-1.22){\color{white}\boldmath$k$\unboldmath}
\psellipse[linewidth=0.04,dimen=outer](8.809375,0.89)(0.35,0.6)
\psline[linewidth=0.04cm,arrowsize=0.05291667cm 2.0,arrowlength=1.4,arrowinset=0.4]{<-}(10.179375,-2.17)(10.179375,-2.87)
\pscircle[linewidth=0.06,linecolor=red,dimen=outer,fillstyle=solid,fillcolor=black](10.179375,-2.83){0.3}
\usefont{T1}{ptm}{m}{n}
\rput(10.187969,-2.82){\color{white}\boldmath$k$\unboldmath}
\usefont{T1}{ptm}{m}{n}
\rput(6.5178127,2.715){\scriptsize 6}
\usefont{T1}{ptm}{m}{n}
\rput(2.9178126,2.715){\scriptsize 6}
\usefont{T1}{ptm}{m}{n}
\rput(5.9121876,-0.785){\scriptsize 3}
\usefont{T1}{ptm}{m}{n}
\rput(6.820625,0.415){\scriptsize 2}
\usefont{T1}{ptm}{m}{n}
\rput(5.9223437,0.315){\scriptsize 4}
\usefont{T1}{ptm}{m}{n}
\rput(5.9134374,1.215){\scriptsize 5}
\usefont{T1}{ptm}{m}{n}
\rput(3.020625,0.415){\scriptsize 2}
\usefont{T1}{ptm}{m}{n}
\rput(2.3134375,1.215){\scriptsize 5}
\usefont{T1}{ptm}{m}{n}
\rput(2.3223438,-0.085){\scriptsize 4}
\end{pspicture}
}
\]
\end{proof}

We can now define the notion of \textit{complementarity} to describe the property of the interaction that was proved in the previous theorem.

\begin{definition}[Complementarity (dagger lambda calculus)]
We say that two observable structures are \textit{complementary} when their interaction causes the constants that are \textit{classical} under one structure to be \textit{unbiased} against the other.
\end{definition}

Having defined the interaction between complementary observables, we will now define a function that can be used to transform Green sequents into Red ones and vice versa.

\begin{definition}[Hadamard (dagger lambda calculus)]
For every type $A$ in the dagger lambda calculus, we define a constant $H : A \multimap A$, called the \textit{Hadamard}. The sequent that represents the $\eta$-expansion of the \textit{Hadamard}, also known as the \textit{Hadamard gate}, is:
\[ x_1 : A \vdash_{\{ H : x_{1*} \otimes x_2 \}} x_2 : A \]
\noindent The \textit{Hadamard} must satisfy certain conditions; namely, the \textit{Hadamard gate} must be involutive under Cut, and the \textit{Hadamard} must allow us to transform Green sequents into Red ones.
\end{definition}

\noindent The first condition, \textit{involution} under Cut, means that we must be able to relate the following sequent:
\[ x_1 : A \vdash_{\{ H : x_{1*} \otimes x_2, H : x_{2*} \otimes x_3 \}} x_3 : A \]
\noindent to the identity sequent $x_1 : A \vdash_{\{ x_1 : x_3 \}} x_3 : A$, by equating their soups:
\[ \{ H : x_{1*} \otimes x_2, H : x_{2*} \otimes x_3 \} = \{ x_1 : x_3 \} \]
\[
\scalebox{1} 
{
\begin{pspicture}(0,-1.32)(1.83,1.32)
\psline[linewidth=0.04cm](0.21,0.1)(0.21,0.7)
\psline[linewidth=0.04cm,arrowsize=0.05291667cm 2.0,arrowlength=1.4,arrowinset=0.4]{->}(0.21,-0.3)(0.21,0.3)
\psline[linewidth=0.04cm](0.21,-0.9)(0.21,-0.3)
\psline[linewidth=0.04cm,arrowsize=0.05291667cm 2.0,arrowlength=1.4,arrowinset=0.4]{->}(0.21,-1.3)(0.21,-0.7)
\psframe[linewidth=0.02,dimen=outer,fillstyle=solid,fillcolor=yellow](0.38,-0.09)(0.0,-0.47)
\usefont{T1}{ptm}{m}{n}
\rput(0.213125,-0.31){H}
\psline[linewidth=0.04cm,arrowsize=0.05291667cm 2.0,arrowlength=1.4,arrowinset=0.4]{->}(0.21,0.7)(0.21,1.3)
\psframe[linewidth=0.02,dimen=outer,fillstyle=solid,fillcolor=yellow](0.38,0.91)(0.0,0.53)
\usefont{T1}{ptm}{m}{n}
\rput(0.213125,0.69){H}
\psline[linewidth=0.04cm,arrowsize=0.05291667cm 2.0,arrowlength=1.4,arrowinset=0.4]{->}(1.81,-1.3)(1.81,1.3)
\usefont{T1}{pcr}{m}{n}
\rput(1.0307813,-0.0050){\Huge =}
\usefont{T1}{ptm}{m}{n}
\rput(0.3665625,-1.075){\scriptsize 1}
\usefont{T1}{ptm}{m}{n}
\rput(0.39125,0.125){\scriptsize 2}
\usefont{T1}{ptm}{m}{n}
\rput(0.3828125,1.125){\scriptsize 3}
\end{pspicture}
}
\]

\noindent Moreover, the \textit{Hadamard} can be capable of transforming a Green copying sequent to a Red one. This is achieved by requiring that the following two soups, $S_{17}$ and $S_{18}$, be equal in our lambda calculus:

\begin{center}
\begin{tabular}{ l l l}
  $S_{17} =$    & $\;   \left\{
                        \begin{array}{l l}
                            H : x_{1*} \otimes x_2, & G_1^2 : x_{2*} \otimes x_3 \otimes x_4, \\
                            H : x_{3*} \otimes x_5, & H : x_{4*} \otimes x_6
                        \end{array}
                        \right\}$ \\
                & \\
  $S_{18} =$    & $\; \Big\{ R_1^2 : x_{1*} \otimes x_5 \otimes x_6 \Big\}$
\end{tabular}
\end{center}

\[
\scalebox{1} 
{
\begin{pspicture}(0,-2.22)(3.83,2.22)
\usefont{T1}{pcr}{m}{n}
\rput(2.0107813,-0.105){\Huge =}
\psline[linewidth=0.04cm](0.81,-0.12)(0.81,-0.72)
\psarc[linewidth=0.04](0.81,0.68){0.6}{-180.0}{0.0}
\pscircle[linewidth=0.04,dimen=outer,fillstyle=solid,fillcolor=green](0.81,-0.02){0.3}
\psline[linewidth=0.04cm,arrowsize=0.05291667cm 2.0,arrowlength=1.4,arrowinset=0.4]{->}(0.21,1.6)(0.21,2.2)
\psline[linewidth=0.04cm](0.21,1.0)(0.21,1.6)
\psline[linewidth=0.04cm,arrowsize=0.05291667cm 2.0,arrowlength=1.4,arrowinset=0.4]{->}(0.21,0.6)(0.21,1.2)
\psframe[linewidth=0.02,dimen=outer,fillstyle=solid,fillcolor=yellow](0.38,1.81)(0.0,1.43)
\usefont{T1}{ptm}{m}{n}
\rput(0.213125,1.59){H}
\psline[linewidth=0.04cm,arrowsize=0.05291667cm 2.0,arrowlength=1.4,arrowinset=0.4]{->}(1.41,1.6)(1.41,2.2)
\psline[linewidth=0.04cm](1.41,1.0)(1.41,1.6)
\psline[linewidth=0.04cm,arrowsize=0.05291667cm 2.0,arrowlength=1.4,arrowinset=0.4]{->}(1.41,0.6)(1.41,1.2)
\psframe[linewidth=0.02,dimen=outer,fillstyle=solid,fillcolor=yellow](1.58,1.81)(1.2,1.43)
\usefont{T1}{ptm}{m}{n}
\rput(1.413125,1.59){H}
\psline[linewidth=0.04cm,arrowsize=0.05291667cm 2.0,arrowlength=1.4,arrowinset=0.4]{->}(0.81,-1.2)(0.81,-0.6)
\psline[linewidth=0.04cm](0.81,-1.8)(0.81,-1.2)
\psline[linewidth=0.04cm,arrowsize=0.05291667cm 2.0,arrowlength=1.4,arrowinset=0.4]{->}(0.81,-2.2)(0.81,-1.6)
\psframe[linewidth=0.02,dimen=outer,fillstyle=solid,fillcolor=yellow](0.98,-0.99)(0.6,-1.37)
\usefont{T1}{ptm}{m}{n}
\rput(0.813125,-1.21){H}
\usefont{T1}{ptm}{m}{n}
\rput(3.3465624,-0.775){\scriptsize 1}
\psline[linewidth=0.04cm](3.21,-0.12)(3.21,-0.72)
\psarc[linewidth=0.04](3.21,0.68){0.6}{-180.0}{0.0}
\pscircle[linewidth=0.04,dimen=outer,fillstyle=solid,fillcolor=red](3.21,-0.02){0.3}
\psline[linewidth=0.04cm,arrowsize=0.05291667cm 2.0,arrowlength=1.4,arrowinset=0.4]{<-}(3.21,-0.5)(3.21,-0.9)
\psline[linewidth=0.04cm,arrowsize=0.05291667cm 2.0,arrowlength=1.4,arrowinset=0.4]{<-}(3.81,1.06)(3.81,0.66)
\psline[linewidth=0.04cm,arrowsize=0.05291667cm 2.0,arrowlength=1.4,arrowinset=0.4]{<-}(2.61,1.06)(2.61,0.66)
\usefont{T1}{ptm}{m}{n}
\rput(0.9465625,-1.975){\scriptsize 1}
\usefont{T1}{ptm}{m}{n}
\rput(0.97125,-0.575){\scriptsize 2}
\usefont{T1}{ptm}{m}{n}
\rput(0.3628125,0.825){\scriptsize 3}
\usefont{T1}{ptm}{m}{n}
\rput(1.1729687,0.825){\scriptsize 4}
\usefont{T1}{ptm}{m}{n}
\rput(0.3640625,2.025){\scriptsize 5}
\usefont{T1}{ptm}{m}{n}
\rput(1.1684375,2.025){\scriptsize 6}
\usefont{T1}{ptm}{m}{n}
\rput(2.7640624,0.825){\scriptsize 5}
\usefont{T1}{ptm}{m}{n}
\rput(3.5684376,0.825){\scriptsize 6}
\end{pspicture}
}
\]

\noindent Finally, the \textit{Hadamard} must be capable of transforming a Green deleting sequent to a Red one. This is achieved by requiring that the following two soups, $S_{19}$ and $S_{20}$, be equal in our lambda calculus:

\begin{center}
\begin{tabular}{ l l }
  $S_{19} =$    & $\; \Big\{ H : x_{1*} \otimes x_2, \; G_1^0 : x_{2*} \Big\}$ \\
                & \\
  $S_{20} =$    & $\; \Big\{ R_1^0 : x_{1*} \Big\}$
\end{tabular}
\end{center}

\[
\scalebox{1} 
{
\begin{pspicture}(0,-1.26)(2.0,1.24)
\usefont{T1}{ptm}{m}{n}
\rput(0.4365625,-0.815){\scriptsize 1}
\psline[linewidth=0.04cm](0.3,0.84)(0.3,0.24)
\pscircle[linewidth=0.04,dimen=outer,fillstyle=solid,fillcolor=green](0.3,0.94){0.3}
\psline[linewidth=0.04cm,arrowsize=0.05291667cm 2.0,arrowlength=1.4,arrowinset=0.4]{->}(0.3,-0.24)(0.3,0.36)
\psline[linewidth=0.04cm](0.3,-0.84)(0.3,-0.24)
\psline[linewidth=0.04cm,arrowsize=0.05291667cm 2.0,arrowlength=1.4,arrowinset=0.4]{->}(0.3,-1.24)(0.3,-0.64)
\psframe[linewidth=0.02,dimen=outer,fillstyle=solid,fillcolor=yellow](0.48,-0.02)(0.08,-0.42)
\usefont{T1}{ptm}{m}{n}
\rput(0.303125,-0.25){H}
\usefont{T1}{ptm}{m}{n}
\rput(0.46125,0.385){\scriptsize 2}
\psline[linewidth=0.04cm](1.7,0.44)(1.7,-0.16)
\pscircle[linewidth=0.04,dimen=outer,fillstyle=solid,fillcolor=red](1.7,0.54){0.3}
\psline[linewidth=0.04cm,arrowsize=0.05291667cm 2.0,arrowlength=1.4,arrowinset=0.4]{->}(1.7,-0.64)(1.7,-0.04)
\usefont{T1}{ptm}{m}{n}
\rput(1.8365625,-0.015){\scriptsize 1}
\usefont{T1}{pcr}{m}{n}
\rput(0.9007813,0.055){\Huge =}
\end{pspicture}
}
\]

\clearpage
\part{Applications}
\label{Part:Applications}
\chapter{Quantum Key Distribution}
\label{Chapter:Quantum Key Distribution}
Key distribution has traditionally been a very important area of cryptography. While the need for securely distributing a symmetric key has waned with the advent of public key cryptography, key distribution is still required in cases where processing speed is paramount as well as in cases where the level of secrecy does not allow us to rely on complexity assumptions. Because of the nature of the algorithms involved, encrypting messages with one-time pads is inherently faster than encrypting with a public key. Moreover, public key encryption relies on complexity assumptions that, though conjectured correct, have never been proven to be so. Elaborate schemes have been designed for publicising public keys, yet these are usually vulnerable to man-in-the-middle attacks or rely on a pre-existing secure communication channel. The big advantage of Quantum Key Distribution, as it was presented in \cite{BB84} and \cite{Eke91}, is that it allows us to securely distribute a symmetric key in a tamper-proof manner. This chapter explains how the dagger lambda calculus can be used to perform Quantum Key Distribution, by expressing the formalism that was used in \cite{CWWWZ11}.

\noindent We will begin by presenting a high level description of the steps involved in the protocol:
\begin{enumerate}
  \item Alice chooses two random strings of bits; $a = a_1, a_2, \ldots, a_{4n}$ and $b = b_1, b_2, \ldots, b_{4n}$.
  \item She then uses those strings to generate a string of qubits $\ket{\psi} = \ket{\psi_1}, \ket{\psi_2}, \ldots \ket{\psi_{4n}}$ by using a function $m : A \otimes B \rightarrow B$ such that $\ket{\psi_i} = m \circ (a_i \otimes b_i)$, where $m$ acts as an identity on $b_i$ if $a_i$ is $\ket{0}$ and as a Hadamard on $b_i$ if $a_i$ is $\ket{1}$.
  \item Alice transfers the string of quantum bits $\psi$ via a quantum channel to Bob.
  \item Bob chooses a random string of bits $c = c_1, c_2, \ldots, c_{4n}$ and measures each qubit $\ket{\psi_i}$ in the Z-basis if $c_i = 0$ and in the X-basis if $c_i = 1$, yielding $d = d_1, d_2, \ldots, d_{4n}$.
  \item Bob sends $c$ to Alice via a classical channel.
  \item Alice sends $a \oplus c = a_1 \oplus c_1, a_2 \oplus c_2, \ldots, a_{4n} \oplus c_{4n}$ to Bob via a classical channel.
  \item Alice and Bob check for which $i$, $a_i \oplus c_i = 0$. They maintain the corresponding $b_i$ and $d_i$ respectively and they throw away the rest.
  \item Alice and Bob should on average each be left with $2n$ bits, which should coincide if there has been no attack.
  \item Alice and Bob agree on a subset of roughly half of the remaining bits and compare them to ensure that they haven't been tampered with.
  \item If they coincide, they should on average be left with $n$ bits, with which they can engage in one-time-pad cryptographic communication.
\end{enumerate}

The first step in expressing Quantum Key Distribution, is defining the function $m : A \otimes B \rightarrow B$. When considering this from a linear algebraic perspective, $m$ is defined by its action on the standard basis, whereby $m (\ket{0} \otimes id_B) = id_B$ and $m (\ket{1} \otimes id_B) = H_B$. Consequently, in order to represent the function in the dagger lambda calculus, we will use a term $m : (A \otimes B) \multimap B$ such that the following equalities hold in the soup:
\[ \{ m : (R_0^1 \otimes b)_* \otimes b^\prime \} = \{ id_B : b_* \otimes b^\prime \} \]
\[ \{ m : (R[\pi]_0^1 \otimes b)_* \otimes b^\prime \} = \{ H : b_* \otimes b^\prime \} \]

\begin{lemma}[Controlled unitary 1]
If copies of the same classical input, in the dagger lambda calculus, are used as control terms for both $m$ and $(m_*)^*$, then the resulting maps will cancel each other out of the soup when composed:
\[
\left\{
\begin{array}{l l}
    m : (a_2 \otimes b_1)_* \otimes b_2,    & (m_*)^* : (a_4 \otimes b_1) \otimes b_{3*}, \\
    G_1^2 : a_{1*} \otimes a_2 \otimes a_3, & d_A : a_{3*} \otimes a_{4*} \\
\end{array}
\right\} = \left\{ G_1^0 : a_{1*}, b_3 : b_2 \right\}
\]
\end{lemma}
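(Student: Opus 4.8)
The plan is to manipulate the soup on the left-hand side using the defining properties of $m$, the classical structure rules, and the dualiser until it reduces to the right-hand side. First I would observe that the term $a$ appearing as control is classical under the Green structure (it is a copied classical bit, as in the QKD setup), so the copy map $G_1^2 : a_{1*} \otimes a_2 \otimes a_3$ and the dualiser $d_A : a_{3*} \otimes a_{4*}$ together let me relate $a_2$, $a_3$, and $a_4$ to the single classical constant being supplied. Since $a$ ranges over the classical basis $\{R_0^1, R[\pi]_0^1\}$ — the basis points that $m$ distinguishes — I would proceed by cases on which classical value the control takes, using the fact (established in \emph{Complementarity}) that classicality under Green is exactly what makes these constants behave like copyable basis elements.

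In each case, the defining equation of $m$ kicks in: when the control copies feed $R_0^1$ into both $m$ and $(m_*)^*$, we get $\{ m : (R_0^1 \otimes b_1)_* \otimes b_2 \} = \{ id_B : b_{1*} \otimes b_2 \}$ and correspondingly $(m_*)^*$ becomes an $id_B$ connection. Composing two identities via the soup collapses to $\{ b_3 : b_2 \}$ (after tracking the wire labels through the $\bar{s}$ in the definition of $f^*$). When the control copies feed $R[\pi]_0^1$, both $m$ and $(m_*)^*$ become Hadamard connections, and by the \emph{involution} condition on the Hadamard gate, $\{ H : b_{1*} \otimes b_2, H : b_1 \otimes b_{3*} \}$ reduces to $\{ b_3 : b_2 \}$ as well. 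In both cases the copy-and-dualise structure on the $a$-side, once its two outputs have been consumed by the controlled maps, collapses via the comonoidal identity and isometry conditions down to a single deleting map $G_1^0 : a_{1*}$, which is precisely the surviving connection on the right.

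The main obstacle I anticipate is handling the case analysis uniformly rather than literally splitting into two sub-cases — ideally one wants a single soup-reduction argument that works for an arbitrary classical $a$, invoking only the abstract properties (classicality of $a$, the two defining equations of $m$, involution of $H$) without ever naming $R_0^1$ or $R[\pi]_0^1$. This requires carefully bookkeeping the index decorations ($a_1, a_2, a_3, a_4$ and $b_1, b_2, b_3$) through the $\dag$-flip and complex-conjugation ($\bar{s}$) that produce $(m_*)^*$, since a mismatch there would break the reduction. A secondary subtlety is making sure the dualiser $d_A$ and its interaction with the copy map $G_1^2$ genuinely yield the Green deleting map $G_1^0$ and not merely something isometric to it; this should follow from the comonoidal identity condition $\bar{b}(\bar{t} G_1^0 id_A) G_1^2 = G_1^1$ combined with the isometry condition, but the exact sequence of rewrites needs to be written out.

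Once the soup equality is established, the lemma statement follows immediately, since the two soups are then equal as required.
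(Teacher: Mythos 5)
Your proposal is correct and follows essentially the same route as the paper: a case split on whether the classical control is $R_0^1$ or $R[\pi]_0^1$, applying the defining equations of $m$ so that the two controlled maps become either a pair of identities or a pair of Hadamards, and then collapsing to $\{b_3 : b_2\}$ via composition or Hadamard involution respectively. The uniform, case-free argument you mention as an aspiration is not attempted in the paper either, so no gap arises from leaving it aside.
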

\begin{proof}
If $a_1$ is $R_0^1$, then it is copied by $G_1^2$ into $a_2$ and $a_3$:
\[
\left\{
\begin{array}{l l}
    m : (R_0^1 \otimes b_1)_* \otimes b_2,  & (m_*)^* : (R_0^1 \otimes b_1) \otimes b_{3*}
\end{array}
\right\}
\]
\noindent which, by the definition of $m$, results in two identities:
\[
\left\{
\begin{array}{l l}
    id_B : b_{1*} \otimes b_2,  & id_{B^*} : b_1 \otimes b_{3*}
\end{array}
\right\} =
\left\{
\begin{array}{l l}
    b_1 : b_2,  & b_{1*} : b_{3*}
\end{array}
\right\} = \left\{ b_3 : b_2 \right\}
\]

\noindent If, on the other hand, $a_1$ is $R[\pi]_1^2$, then it will again be copied by $G_1^2$ into $a_2$ and $a_3$:
\[
\left\{
\begin{array}{l l}
    m : (R[\pi]_0^1 \otimes b_1)_* \otimes b_2,  & (m_*)^* : (R[\pi]_0^1 \otimes b_1) \otimes b_{3*}
\end{array}
\right\}
\]
\noindent which, by the definition of $m$, results in two applications of the Hadamard gate:
\[
\left\{
\begin{array}{l l}
    H : b_{1*} \otimes b_2,  & (H_*)^* : b_1 \otimes b_{3*}
\end{array}
\right\} =
\left\{
\begin{array}{l l}
    H : b_{1*} \otimes b_2,  & H : b_{3*} \otimes b_1
\end{array}
\right\} = \left\{ b_3 : b_2 \right\}
\]
\end{proof}

\begin{lemma}[Controlled unitary 2]
If the classical values that are used in the dagger lambda calculus as control terms for $m$ and $(m_*)^*$ do not coincide, then the resulting maps will compose into a Hadamard transform in the soup:
\[
\left\{
\begin{array}{l l}
    m : (a_1 \otimes b_1)_* \otimes b_2,    & (m_*)^* : (a_4 \otimes b_1) \otimes b_{3*}, \\
                                            & d_A : a_{3*} \otimes a_{4*}, \\
    G_1^2 : a_{0*} \otimes a_1 \otimes a_2, & R[\pi]_1^1 : a_{2*} \otimes a_3*
\end{array}
\right\} = \left\{ G_1^0 : a_{0*}, H : b_{3*} \otimes b_2 \right\}
\]
\end{lemma}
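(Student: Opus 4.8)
The plan is to follow the proof of the preceding lemma, \emph{Controlled unitary 1}, almost verbatim; the only new ingredient is that the Red $\pi$-phase shift $R[\pi]_1^1$ sitting on the second control wire acts as the bit-flip on the Green-classical points. First I would record this fact: using additivity of Red phase shifts under composition (the machinery of Section~\ref{Section:Monoidal product of terms and phase shifts}) together with $R[2\pi]_0^1 = R_0^1$, one gets
\[ \{ R[\pi]_1^1 : (R_0^1)_* \otimes x \} = \{ R[\pi]_0^1 : x \} \quad\text{and}\quad \{ R[\pi]_1^1 : (R[\pi]_0^1)_* \otimes x \} = \{ R_0^1 : x \} . \]
Then, after $G_1^2$ duplicates the Green-classical control constant into $a_1$ and $a_2$ --- by classicality, $\{ G_1^2 : a_{0*} \otimes a_1 \otimes a_2 \} = \{ a_0 : a_1,\, a_0 : a_2 \}$, and, once both output legs have been routed through the two controlled gates, the copy map leaves the Green deletion $\{ G_1^0 : a_{0*} \}$ exactly as in \emph{Controlled unitary 1} --- the value that reaches $(m_*)^*$ through $R[\pi]_1^1$ and the dualiser $d_A$ is always the bit-flipped partner of the value that reaches $m$, which is precisely the hypothesis that the two classical control values do not coincide.

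The argument then proceeds by the same case split on the control constant. If it is $R_0^1$, the defining equation of $m$ turns $m : (a_1 \otimes b_1)_* \otimes b_2$ into $\{ id_B : b_{1*} \otimes b_2 \} = \{ b_1 : b_2 \}$, while $R[\pi]_1^1$ sends the second copy to $R[\pi]_0^1$; then, exactly as in \emph{Controlled unitary 1} --- applying complex conjugation to the dual of the defining equation of $m$ on $R[\pi]_0^1$ and using that $H$ is real and self-adjoint, $(H_*)^* = H$ --- the connection $(m_*)^* : (a_4 \otimes b_1) \otimes b_{3*}$ reduces to $\{ H : b_1 \otimes b_{3*} \} = \{ H : b_{3*} \otimes b_1 \}$, and fusing along the shared wire $b_1$ gives $\{ H : b_{3*} \otimes b_2 \}$. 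If instead the control constant is $R[\pi]_0^1$, the two roles are interchanged: $m$ now contributes a Hadamard while $R[\pi]_1^1$ carries the other copy to $R[2\pi]_0^1 = R_0^1$, so $(m_*)^*$ contributes $id_{B^*}$; composing over $b_1$ again yields the single Hadamard $\{ H : b_{3*} \otimes b_2 \}$. Together with the leftover $\{ G_1^0 : a_{0*} \}$, this is the claimed soup in both cases.

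The part I expect to be most delicate is the wire-name bookkeeping around the dualiser and the conjugation: tracking $a_3$ and $a_4$ through $d_A : a_{3*} \otimes a_{4*}$ and through $(m_*)^* = \bar{s}\,m_*$, and verifying that conjugating and dualising the map $m$ with a given control value produces the map $(m_*)^*$ with the dual control value, acting as the adjoint of the corresponding unitary on $B$. This, however, is the identical computation already performed in the proof of \emph{Controlled unitary 1}; the present lemma differs only in that the intervening $R[\pi]_1^1$ swaps which branch is the identity and which is the Hadamard, so that the two controlled maps no longer cancel to the identity but compose to a single Hadamard.
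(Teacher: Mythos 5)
Your proposal is correct and follows essentially the same route as the paper: a case split on the classical control constant ($R_0^1$ versus $R[\pi]_0^1$), using classicality to copy it through $G_1^2$ (leaving the residual $G_1^0 : a_{0*}$), the $\pi$-phase shift to flip the copy fed through the dualiser to $(m_*)^*$, the defining equations of $m$ together with $(H_*)^* = H$, and fusion along $b_1$ to obtain the single Hadamard. The only difference is presentational: you isolate and justify the bit-flip action of $R[\pi]_1^1$ as an explicit preliminary fact, which the paper leaves implicit.
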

\begin{proof}
If $a_0$ is $R_0^1$, then it will be copied by $G_1^2$ into $a_1$ and $a_2$:
\[
\left\{
\begin{array}{l l}
    m : (R_0^1 \otimes b_1)_* \otimes b_2,  & (m_*)^* : (a_4 \otimes b_1) \otimes b_{3*}, \\
                                            & d_A : (R[\pi]_0^1)^* \otimes a_{4*}
\end{array}
\right\}
\]
\noindent By the definition of $m$, the soup reduces to:
\[
\left\{
\begin{array}{l l}
    id_B : b_{1*} \otimes b_2,  & (H_*)^* : b_1 \otimes b_{3*} \\
\end{array}
\right\} =
\left\{
\begin{array}{l l}
    id_B : b_{1*} \otimes b_2,  & H : b_{3*} \otimes b_1
\end{array}
\right\} = \left\{ H : b_{3*} \otimes b_2 \right\}
\]

\noindent If, on the other hand, $a_0$ is $R[\pi]_0^1$, then it will again be copied by $G_1^2$ into $a_1$ and $a_2$:
\[
\left\{
\begin{array}{l l}
    m : (R[\pi]_0^1 \otimes b_1)_* \otimes b_2, & (m_*)^* : (a_4 \otimes b_1) \otimes b_{3*}, \\
                                                & d_A : (R_0^1)^* \otimes a_{4*}
\end{array}
\right\}
\]

Which, by the definition of $m$, reduces to:
\[
\left\{
\begin{array}{l l}
    H : b_{1*} \otimes b_2,  & id_{B^*} : b_1 \otimes b_{3*} \\
\end{array}
\right\} = \left\{ H : b_{3*} \otimes b_2 \right\}
\]
\end{proof}

\noindent Switching back to category-theoretic notation, we will now present an adaptation of the diagram that was derived in \cite{CWWWZ11}, which will be used to represent the protocol's procedures:
\[
\scalebox{1} 
{
\begin{pspicture}(0,-4.9603124)(9.346875,4.9984374)
\psline[linewidth=0.04cm,arrowsize=0.05291667cm 2.0,arrowlength=1.4,arrowinset=0.4]{<-}(5.7809377,0.5)(5.7809377,1.1)
\psline[linewidth=0.04cm,arrowsize=0.05291667cm 2.0,arrowlength=1.4,arrowinset=0.4]{<-}(5.3809376,0.5)(5.3809376,1.1)
\psline[linewidth=0.04cm,arrowsize=0.05291667cm 2.0,arrowlength=1.4,arrowinset=0.4]{<-}(4.1809373,0.5)(4.1809373,1.1)
\psline[linewidth=0.04cm,arrowsize=0.05291667cm 2.0,arrowlength=1.4,arrowinset=0.4]{<-}(3.7809374,0.5)(3.7809374,1.1)
\psline[linewidth=0.04cm](0.5809375,0.7)(0.5809375,1.1)
\psline[linewidth=0.04cm,arrowsize=0.05291667cm 2.0,arrowlength=1.4,arrowinset=0.4]{->}(0.5809375,-2.2)(0.5809375,0.8)
\psline[linewidth=0.04cm](0.9809375,0.7)(0.9809375,1.1)
\psline[linewidth=0.04cm,arrowsize=0.05291667cm 2.0,arrowlength=1.4,arrowinset=0.4]{->}(0.9809375,0.3)(0.9809375,0.8)
\psline[linewidth=0.04cm,arrowsize=0.05291667cm 2.0,arrowlength=1.4,arrowinset=0.4]{->}(0.7809375,1.5)(0.7809375,2.3)
\psframe[linewidth=0.04,dimen=outer,fillstyle=solid](1.1409374,1.7)(0.3809375,0.94)
\usefont{T1}{ptm}{m}{n}
\rput(1.0623437,2.01){$B$}
\usefont{T1}{ptm}{m}{n}
\rput(0.26234376,0.51){$A$}
\usefont{T1}{ptm}{m}{n}
\rput(0.75828123,1.295){\large $m$}
\usefont{T1}{ptm}{m}{n}
\rput(1.2623438,0.51){$B$}
\psline[linewidth=0.04cm](2.1809375,0.7)(2.1809375,1.1)
\psline[linewidth=0.04cm,arrowsize=0.05291667cm 2.0,arrowlength=1.4,arrowinset=0.4]{->}(2.1809375,0.3)(2.1809375,0.8)
\psline[linewidth=0.04cm](2.5809374,0.7)(2.5809374,1.1)
\psline[linewidth=0.04cm,arrowsize=0.05291667cm 2.0,arrowlength=1.4,arrowinset=0.4]{->}(2.5809374,0.3)(2.5809374,0.8)
\psline[linewidth=0.04cm,arrowsize=0.05291667cm 2.0,arrowlength=1.4,arrowinset=0.4]{->}(2.3809376,1.5)(2.3809376,2.3)
\psframe[linewidth=0.04,dimen=outer,fillstyle=solid](2.7409375,1.7)(1.9809375,0.94)
\usefont{T1}{ptm}{m}{n}
\rput(2.6623437,2.01){$B$}
\usefont{T1}{ptm}{m}{n}
\rput(1.8623438,0.51){$A$}
\usefont{T1}{ptm}{m}{n}
\rput(2.3582811,1.295){\large $m$}
\usefont{T1}{ptm}{m}{n}
\rput(2.8623438,0.51){$B$}
\psline[linewidth=0.04cm,arrowsize=0.05291667cm 2.0,arrowlength=1.4,arrowinset=0.4]{<-}(1.5809375,4.2)(1.5809375,3.32)
\psarc[linewidth=0.04](1.5809375,2.2){0.8}{0.0}{180.0}
\usefont{T1}{ptm}{m}{n}
\rput(1.8623438,4.61){$B$}
\pscircle[linewidth=0.04,dimen=outer,fillstyle=solid,fillcolor=green](1.5809375,3.12){0.3}
\psline[linewidth=0.04cm](3.9809375,1.5)(3.9809375,1.9)
\usefont{T1}{ptm}{m}{n}
\rput(4.3323436,2.01){$B^*$}
\usefont{T1}{ptm}{m}{n}
\rput(4.532344,0.51){$A^*$}
\psline[linewidth=0.04cm](5.5809374,1.5)(5.5809374,1.9)
\usefont{T1}{ptm}{m}{n}
\rput(5.932344,2.01){$B^*$}
\usefont{T1}{ptm}{m}{n}
\rput(6.132344,0.51){$A^*$}
\usefont{T1}{ptm}{m}{n}
\rput(5.032344,0.51){$B^*$}
\psline[linewidth=0.04cm](4.7809377,3.92)(4.7809377,3.32)
\psarc[linewidth=0.04](4.7809377,2.2){0.8}{0.0}{180.0}
\usefont{T1}{ptm}{m}{n}
\rput(5.132344,3.71){$B^*$}
\pscircle[linewidth=0.04,dimen=outer,fillstyle=solid,fillcolor=green](4.7809377,3.12){0.3}
\psframe[linewidth=0.04,dimen=outer,fillstyle=solid](4.3409376,1.7)(3.5809374,0.94)
\usefont{T1}{ptm}{m}{n}
\rput(3.9682813,1.335){\large $\overline{m}$}
\psframe[linewidth=0.04,dimen=outer,fillstyle=solid](5.9409375,1.7)(5.1809373,0.94)
\usefont{T1}{ptm}{m}{n}
\rput(5.568281,1.355){\large $\overline{m}$}
\psline[linewidth=0.04cm,arrowsize=0.05291667cm 2.0,arrowlength=1.4,arrowinset=0.4]{<-}(5.5809374,1.8)(5.5809374,2.2)
\psline[linewidth=0.04cm,arrowsize=0.05291667cm 2.0,arrowlength=1.4,arrowinset=0.4]{<-}(3.9809375,1.8)(3.9809375,2.2)
\psarc[linewidth=0.04](3.1809375,0.3){0.6}{-180.0}{0.0}
\psline[linewidth=0.04cm](3.7809374,0.3)(3.7809374,0.6)
\psarc[linewidth=0.04](3.1809375,0.3){2.2}{-180.0}{0.0}
\psline[linewidth=0.04cm](5.3809376,0.3)(5.3809376,0.6)
\psline[linewidth=0.04cm](2.1809375,-1.5)(2.1809375,0.3)
\psline[linewidth=0.04cm](2.1809375,-2.2)(2.1809375,-1.8)
\psarc[linewidth=0.04](1.3809375,-2.2){0.8}{-180.0}{0.0}
\psline[linewidth=0.04cm](4.1809373,-1.5)(4.1809373,0.6)
\psline[linewidth=0.04cm](4.1809373,-2.2)(4.1809373,-1.8)
\psline[linewidth=0.04cm](5.7809377,-2.2)(5.7809377,0.6)
\psarc[linewidth=0.04](4.9809375,-2.2){0.8}{-180.0}{0.0}
\psline[linewidth=0.04cm](4.9809375,-4.9)(4.9809375,-4.4)
\psline[linewidth=0.04cm](4.9809375,-3.8)(4.9809375,-3.2)
\psline[linewidth=0.04cm,linecolor=green,arrowsize=0.05291667cm 2.0,arrowlength=1.4,arrowinset=0.4]{->}(4.9809375,-4.5)(4.9809375,-4.1)
\psline[linewidth=0.04cm,linecolor=green,arrowsize=0.05291667cm 2.0,arrowlength=1.4,arrowinset=0.4]{<-}(4.9809375,-4.1)(4.9809375,-3.7)
\psline[linewidth=0.04cm](4.7809377,4.3)(4.7809377,4.9)
\psline[linewidth=0.04cm,linecolor=green,arrowsize=0.05291667cm 2.0,arrowlength=1.4,arrowinset=0.4]{<->}(4.7809377,3.8)(4.7809377,4.4)
\pscircle[linewidth=0.04,dimen=outer,fillstyle=solid,fillcolor=green](4.9809375,-3.08){0.3}
\psline[linewidth=0.04cm](1.3809375,-3.8)(1.3809375,-3.2)
\psline[linewidth=0.04cm,arrowsize=0.05291667cm 2.0,arrowlength=1.4,arrowinset=0.4]{->}(1.3809375,-4.9)(1.3809375,-3.7)
\pscircle[linewidth=0.04,dimen=outer,fillstyle=solid,fillcolor=green](1.3809375,-3.08){0.3}
\psline[linewidth=0.04cm](1.5809375,4.9)(1.5809375,4.12)
\usefont{T1}{ptm}{m}{n}
\rput(3.4323437,0.51){$B^*$}
\usefont{T1}{ptm}{m}{n}
\rput(5.3323436,-3.59){$A^*$}
\usefont{T1}{ptm}{m}{n}
\rput(4.662344,-4.69){$A$}
\usefont{T1}{ptm}{m}{n}
\rput(1.0623437,-4.69){$A$}
\usefont{T1}{ptm}{m}{n}
\rput(5.0623436,4.61){$B$}
\usefont{T1}{ptm}{m}{n}
\rput(1.6265625,-3.575){\scriptsize $a_1$}
\usefont{T1}{ptm}{m}{n}
\rput(0.8265625,-2.175){\scriptsize $a_2$}
\usefont{T1}{ptm}{m}{n}
\rput(1.9265625,-2.175){\scriptsize $a_3$}
\usefont{T1}{ptm}{m}{n}
\rput(3.1365626,-1.675){\scriptsize $b_1$}
\usefont{T1}{ptm}{m}{n}
\rput(0.5365625,2.025){\scriptsize $b_2$}
\usefont{T1}{ptm}{m}{n}
\rput(3.1365626,-0.075){\scriptsize $b_3$}
\usefont{T1}{ptm}{m}{n}
\rput(2.1365626,2.025){\scriptsize $b_4$}
\usefont{T1}{ptm}{m}{n}
\rput(1.8365625,3.725){\scriptsize $b_5$}
\usefont{T1}{ptm}{m}{n}
\rput(5.2265625,-4.675){\scriptsize $a_4$}
\usefont{T1}{ptm}{m}{n}
\rput(4.6365623,-3.575){\scriptsize $a_{5*}$}
\usefont{T1}{ptm}{m}{n}
\rput(4.5365624,-2.175){\scriptsize $a_{6*}$}
\usefont{T1}{ptm}{m}{n}
\rput(5.5365624,-2.175){\scriptsize $a_{7*}$}
\usefont{T1}{ptm}{m}{n}
\rput(3.7465625,2.025){\scriptsize $b_{6*}$}
\usefont{T1}{ptm}{m}{n}
\rput(5.3465624,2.025){\scriptsize $b_{7*}$}
\usefont{T1}{ptm}{m}{n}
\rput(4.5465627,3.725){\scriptsize $b_{8*}$}
\usefont{T1}{ptm}{m}{n}
\rput(4.5365624,4.625){\scriptsize $b_9$}
\end{pspicture}
}
\]

This can translated to the dagger lambda calculus, where the corresponding sequent would be represented by $a_1 : A, a_4 : A \vdash_{S_1} b_5 \otimes b_9 : B \otimes B$ and where the soup $S_1$ would be of the form:
\[
S_1 = \left\{
\begin{array}{l l}
                                                & d_{B^*} : b_8 \otimes b_9, \\
    G_2^1 : (b_2 \otimes b_4)_* \otimes b_5,    & G_2^1 : (b_6 \otimes b_7) \otimes b_{8*}, \\
    (m_*)^* : (b_3 \otimes a_6) \otimes b_{6*}, & (m_*)^* : (b_1 \otimes a_7) \otimes b_{7*}, \\
    m : (a_2 \otimes b_1)_* \otimes b_2,        & m : (a_3 \otimes b_3)_* \otimes b_4, \\
    G_1^2 : a_{1*} \otimes a_2 \otimes a_3,     & G_1^2 : a_5 \otimes a_{6*} \otimes a_{7*}, \\
                                                & d_A : a_{4*} \otimes a_{5*}
\end{array}
\right\}
\]

In both the categorical diagram and the sequent in the dagger lambda calculus, Alice's control input is denoted by $a_1$ and Bob's control input is denoted by $a_4$. We will use the soup derivations of the dagger lambda calculus to verify the protocol's behaviour, in a way that resembles the verification what was performed in the categorical setting by \cite{CWWWZ11}. In order to verify the protocol, we will first examine the case where $a_1$ and $a_4$ coincide by ``hardwiring'' them to two different copies of the same original state. This will be done by performing a Cut on $a_0 : A \vdash_{\{G_1^2 : a_{0*} \otimes a_1 \otimes a_4\}} a_1 \otimes a_4 : A \otimes A$ with $a_1 : A, a_4 : A \vdash_{S_1} b_5 \otimes b_9 : B \otimes B$. The resulting sequent can be written as $a_0 : A \vdash_{S_2} b_5 \otimes b_9 : B \otimes B$, where the soup $S_2$ is:
\[
S_2 = \left\{
\begin{array}{l l}
                                                                    & d_{B^*} : b_8 \otimes b_9, \\
    G_2^1 : (b_2 \otimes b_4)_* \otimes b_5,                        & G_2^1 : (b_6 \otimes b_7) \otimes b_{8*}, \\
    (m_*)^* : (a_6 \otimes b_3) \otimes b_{6*},                     & (m_*)^* : (a_7 \otimes b_1) \otimes b_{7*}, \\
    m : (a_2 \otimes b_1)_* \otimes b_2,                            & m : (a_3 \otimes b_3)_* \otimes b_4, \\
                                                                    & d_A : a_{5*} \otimes a_{7*}, \\
                                                                    & d_A : a_{4*} \otimes a_{6*}, \\
    G_1^4 \; : \; a_{0*} \;\; \otimes \;\; a_2 \; \otimes \; a_3    & \!\! \otimes \; a_4 \; \otimes \; a_5
\end{array}
\right\}
\]

\noindent At this point, we can simplify the soup by using Lemma 12.0.2 twice:
\[
\left\{
\begin{array}{l l}
                                                & d_{B^*} : b_8 \otimes b_9, \\
    G_2^1 : (b_2 \otimes b_4)_* \otimes b_5,    & G_2^1 : (b_6 \otimes b_7) \otimes b_{8*}, \\
    b_7 : b_2,                                  & b_6 : b_4, \\
                                                & G_1^0 : a_{0*}
\end{array}
\right\}
\]

\noindent We can use soup reduction to rewrite this as:
\[
\left\{
\begin{array}{l l}
                                                & d_{B^*} : b_8 \otimes b_9, \\
    G_2^1 : (b_2 \otimes b_4)_* \otimes b_5,    & G_2^1 : (b_4 \otimes b_2) \otimes b_{8*}, \\
                                                & G_1^0 : a_{0*}
\end{array}
\right\}
\]

\noindent Which is essentially:
\[
\left\{
\begin{array}{l l}
    G_1^0 : a_{0*}, & G_0^2 : b_5 \otimes b_9
\end{array}
\right\}
\]

\noindent The resulting sequent reduces to:
\[ a_0 : A \vdash_{\{ G_1^0 : a_{0*} \}} b \otimes b : B \otimes B \]

Similarly, to examine the case where the control inputs do not coincide, we can perform a Cut on $a_0 : A \vdash_{\{G_1^2 : a_{0*} \otimes a_1 \otimes a, R_1^1[\pi] : a_* \otimes a_4 \}} a_1 \otimes a_4 : A \otimes A$ with $a_1 : A, a_4 : A \vdash_S b_5 \otimes b_9 : B \otimes B$. The resulting sequent can be written as $a_0 : A \vdash_{S_3} b_5 \otimes b_9 : B \otimes B$, where the soup $S_3$ is:
\[
S_3 = \left\{
\begin{array}{l l}
                                                                    & d_{B^*} : b_8 \otimes b_9, \\
    G_2^1 : (b_2 \otimes b_4)_* \otimes b_5,                        & G_2^1 : (b_6 \otimes b_7) \otimes b_{8*}, \\
    (m_*)^* : (a_6 \otimes b_3) \otimes b_{6*},                     & (m_*)^* : (a_7 \otimes b_1) \otimes b_{7*}, \\
    m : (a_2 \otimes b_1)_* \otimes b_2,                            & m : (a_3 \otimes b_3)_* \otimes b_4, \\
                                                                    & d_A : a_{5*} \otimes a_{7*}, \\
                                                                    & d_A : a_{4*} \otimes a_{6*}, \\
    G_1^4 \; : \; a_{0*} \;\; \otimes \;\; a_2 \; \otimes \; a_3    & \!\! \otimes \; a_4 \; \otimes \; a_5
\end{array}
\right\}
\]

\noindent The resulting sequent reduces to:
\[ a_0 : A \vdash_{\{ G_1^0 : a_{0*} \}} G_0^1 \otimes G_0^1 : B \otimes B \]

This completes our functional verification of the protocol from within the sequents of our dagger lambda calculus. We have proved that in both the case when the control inputs over lap and when they don't, the protocol's behaviour is as expected.

\chapter{Quantum Fourier Transform}
\label{Chapter:Quantum Fourier Transform}
The quantum Fourier transform is a calculation that lies at the heart of many quantum algorithms, including Shor's factoring algorithm \cite{Sho97}. Functionally, when considering its action on basis states,the transformation is exactly the same transformation as the discrete Fourier transform. Because the transform is a linear operator, its action on arbitrary quantum states is completely determined by its discreet counterpart.

In this chapter we will explain how the QFT is constructed, present its corresponding diagrammatic representation in the picture calculus, and demonstrate that the dagger lambda calculus is expressive enough by using it to represent the transform. We will then proceed by ``running'' a calculation through, while observing how connections propagate in the soup. In order to keep our exposition simple, we will limit our description to the behaviour of the two-qubit QFT, though it should not be hard to generalise our QFT to a larger numbers of qubits.

As we can see from \cite{NC00} and \cite{CD11}, the only gates that are required to construct the quantum Fourier transform are the Hadamard gate and a gate called $\wedge Z_\alpha$ that performs a controlled $Z$ rotation on an arbitrary angle $\alpha$. In Hilbert spaces, the two-qubit QFT can be expressed as $(H \otimes id_A) \circ \wedge Z_{\pi / 2} \circ (id_A \otimes H)$. We can verify its behaviour on a given input (for example $\ket{10}$) by applying it to that state. This would give us:
\begin{align*}
    (H \otimes id_A) \circ \wedge Z_{\pi / 2} \circ (\ket{1} \otimes \ket{0})   & = (H \otimes id_A) \circ (\ket{1} \otimes \ket{+_{\pi/2}}) \\
                                                                                & = \ket{-} \otimes \ket{+_{\pi/2}}
\end{align*}

The dagger lambda calculus is best used alongside the diagrammatic calculus, so that one can complement the other. We will proceed by explaining how the quantum Fourier transform can be represented in the calculus for complementary observables, as well as how these diagrams translate to sequents and soup connections in the dagger lambda calculus. The two notations will be presented side by side: a set of soup connections on the left and a picture on the right. For a more complete exposition of the diagrammatic representation, the reader is referred to \cite{CD11}.

The first step towards constructing a QFT in the dagger lambda calculus, consists of representing the controlled phase gate $\wedge Z_\alpha$. The sequent for this gate can be written as $a_1 : A, a_2 : A \vdash_S a_5 \otimes a_9 : A \otimes A$ where the soup $S$ is:
\[
S = \left\{
\begin{array}{l l}
                                                            & G[\alpha/2]_1^1 : a_{8*} \otimes a_9, \\
                                                            & R_2^1 : (a_6 \otimes a_7)_* \otimes a_8, \\
    G_2^2 : (a_1 \otimes a_3)_* \otimes a_5 \otimes a_6,    & G[-\alpha/2]_1^1 : a_{4*} \otimes a_7, \\
                                                            & R_1^2 : a_{2*} \otimes a_3 \otimes a_4
\end{array}
\right\}
\]

We will use $S_\alpha$ in the dagger lambda calculus to denote the soup $S$ of the controlled phase gate, along with its associated angle of rotation $\alpha$. In the context of Hilbert spaces, we presented a way of generating the two-qubit quantum Fourier transform by using $\wedge Z_{\pi/2}$, the controlled $Z$ rotation with angle $\pi/2$, composed with some Hadamard transforms. Therefore, in the dagger lambda calculus, it can be written as:
\[ a_1 : A, a_0 : A \vdash_{S_{\pi/2} \cup \{ H : a_{0*} \otimes a_2, H : a_{5*} \otimes a_{10} \}} a_{10} \otimes a_9 : A \otimes A \]

Similarly to the diagrammatic ``execution'' that was presented in \cite{CD11}, we can now ``run'' the two-qubit quantum Fourier transform in the dagger lambda calculus by plugging in an input state in place of $a_1$ and $a_0$ in our sequent. We will be using $\ket{10} = R[\pi]_0^1 \otimes R_0^1$, by plugging in $R[\pi]_0^1$ for $a_1$ and $R_0^1$ for $a_0$. The soup thus becomes:

\[
\left\{
\begin{array}{l l}
    H : a_{5*} \otimes a_{10},                                  & G[\pi/4]_1^1 : a_{8*} \otimes a_9, \\
                                                                & R_2^1 : (a_6 \otimes a_7)_* \otimes a_8, \\
    G_2^2 : (R[\pi]_0^1 \otimes a_3)_* \otimes a_5 \otimes a_6, & G[-\pi/4]_1^1 : a_{4*} \otimes a_7, \\
                                                                & R_1^2 : a_{2*} \otimes a_3 \otimes a_4, \\
                                                                & H : (R_0^1)_* \otimes a_2
\end{array}
\right\}\!\!\!\!\!\!\!\!\!
\scalebox{1} 
{
\begin{pspicture}(0,0)(3.9971876,3)
\psline[linewidth=0.04cm,arrowsize=0.05291667cm 2.0,arrowlength=1.4,arrowinset=0.4]{->}(3.579375,2.65)(3.579375,3.45)
\psline[linewidth=0.04cm,arrowsize=0.05291667cm 2.0,arrowlength=1.4,arrowinset=0.4]{->}(1.979375,2.65)(1.979375,3.45)
\psline[linewidth=0.04cm](1.979375,-1.75)(1.979375,0.05)
\psline[linewidth=0.04cm,arrowsize=0.05291667cm 2.0,arrowlength=1.4,arrowinset=0.4]{->}(1.979375,-3.05)(1.979375,-1.55)
\psline[linewidth=0.04cm](1.979375,1.25)(1.979375,2.45)
\psline[linewidth=0.04cm,arrowsize=0.05291667cm 2.0,arrowlength=1.4,arrowinset=0.4]{->}(1.979375,0.25)(1.979375,1.45)
\psline[linewidth=0.04cm,arrowsize=0.05291667cm 2.0,arrowlength=1.4,arrowinset=0.4]{->}(3.579375,-1.05)(2.779375,-0.45)
\psline[linewidth=0.04cm](3.579375,-1.05)(1.979375,0.15)
\psline[linewidth=0.04cm,arrowsize=0.05291667cm 2.0,arrowlength=1.4,arrowinset=0.4]{->}(1.979375,0.15)(2.779375,0.75)
\psline[linewidth=0.04cm](1.979375,0.15)(3.579375,1.35)
\psline[linewidth=0.04cm](3.579375,1.85)(3.579375,2.45)
\psline[linewidth=0.04cm,arrowsize=0.05291667cm 2.0,arrowlength=1.4,arrowinset=0.4]{->}(3.579375,1.45)(3.579375,2.05)
\psline[linewidth=0.04cm](3.579375,0.65)(3.579375,1.25)
\psline[linewidth=0.04cm,arrowsize=0.05291667cm 2.0,arrowlength=1.4,arrowinset=0.4]{->}(3.579375,0.25)(3.579375,0.85)
\psline[linewidth=0.04cm](3.579375,-1.75)(3.579375,-1.15)
\psline[linewidth=0.04cm,arrowsize=0.05291667cm 2.0,arrowlength=1.4,arrowinset=0.4]{->}(3.579375,-2.15)(3.579375,-1.55)
\psline[linewidth=0.04cm](3.579375,-0.55)(3.579375,0.05)
\psline[linewidth=0.04cm,arrowsize=0.05291667cm 2.0,arrowlength=1.4,arrowinset=0.4]{->}(3.579375,-0.95)(3.579375,-0.35)
\psline[linewidth=0.04cm](3.579375,-2.75)(3.579375,-2.15)
\psline[linewidth=0.04cm,arrowsize=0.05291667cm 2.0,arrowlength=1.4,arrowinset=0.4]{->}(3.579375,-3.15)(3.579375,-2.55)
\usefont{T1}{ptm}{m}{n}
\rput(3.820625,-1.625){\scriptsize 2}
\pscircle[linewidth=0.04,dimen=outer,fillstyle=solid,fillcolor=red](1.979375,-3.15){0.3}
\pscircle[linewidth=0.04,dimen=outer,fillstyle=solid,fillcolor=red](3.579375,-3.15){0.3}
\psframe[linewidth=0.02,dimen=outer,fillstyle=solid,fillcolor=yellow](3.779375,-1.89)(3.339375,-2.33)
\usefont{T1}{ptm}{m}{n}
\rput(3.5825,-2.14){H}
\pscircle[linewidth=0.04,dimen=outer,fillstyle=solid,fillcolor=red](3.579375,-1.05){0.3}
\pscircle[linewidth=0.04,dimen=outer,fillstyle=solid,fillcolor=red](3.579375,1.35){0.3}
\pscircle[linewidth=0.04,dimen=outer,fillstyle=solid,fillcolor=green](1.979375,0.15){0.3}
\psframe[linewidth=0.02,dimen=outer,fillstyle=solid,fillcolor=yellow](2.179375,2.81)(1.739375,2.37)
\usefont{T1}{ptm}{m}{n}
\rput(1.9825,2.56){H}
\usefont{T1}{ptm}{m}{n}
\rput(1.9779687,-3.14){\boldmath$\pi$\unboldmath}
\usefont{T1}{ptm}{m}{n}
\rput(3.8223438,-0.425){\scriptsize 4}
\usefont{T1}{ptm}{m}{n}
\rput(2.6121874,-0.625){\scriptsize 3}
\usefont{T1}{ptm}{m}{n}
\rput(2.2134376,1.375){\scriptsize 5}
\usefont{T1}{ptm}{m}{n}
\rput(2.6178124,0.975){\scriptsize 6}
\usefont{T1}{ptm}{m}{n}
\rput(3.8179688,0.775){\scriptsize 7}
\usefont{T1}{ptm}{m}{n}
\rput(3.8128126,1.975){\scriptsize 8}
\usefont{T1}{ptm}{m}{n}
\rput(3.8179688,3.175){\scriptsize 9}
\usefont{T1}{ptm}{m}{n}
\rput(2.2779686,3.175){\scriptsize 10}
\pscircle[linewidth=0.04,dimen=outer,fillstyle=solid,fillcolor=green](3.579375,0.15){0.3}
\pscircle[linewidth=0.04,dimen=outer,fillstyle=solid,fillcolor=green](3.579375,2.55){0.3}
\usefont{T1}{ptm}{m}{n}
\rput(3.578125,2.545){\footnotesize \boldmath$\pi\!/\!4$\unboldmath}
\usefont{T1}{ppl}{m}{n}
\rput(3.59,0.145){\tiny -\boldmath$\pi\!/\!4$\unboldmath}
\end{pspicture}
}
\]
\vskip 3cm

\noindent We can then use $G_2^2$ to copy $R[\pi]_0^1$ across three terms, $a_3$, $a_5$ and $a_6$, giving us:
\[
\left\{
\begin{array}{l l}
    H : (R[\pi]_0^1)_* \otimes a_{10},  & G[\pi/4]_1^1 : a_{8*} \otimes a_9, \\
                                        & R_2^1 : (R[\pi]_0^1 \otimes a_7)_* \otimes a_8, \\
                                        & G[-\pi/4]_1^1 : a_{4*} \otimes a_7, \\
                                        & R_1^2 : a_{2*} \otimes R[\pi]_0^1 \otimes a_4, \\
                                        & H : (R_0^1)_* \otimes a_2
\end{array}
\right\}
\scalebox{1} 
{
\begin{pspicture}(0,0)(4.1971874,3)
\psline[linewidth=0.04cm](3.579375,1.25)(2.379375,0.65)
\psline[linewidth=0.04cm,arrowsize=0.05291667cm 2.0,arrowlength=1.4,arrowinset=0.4]{<-}(3.179375,1.05)(2.379375,0.65)
\psline[linewidth=0.04cm,arrowsize=0.05291667cm 2.0,arrowlength=1.4,arrowinset=0.4]{->}(3.579375,-0.95)(2.779375,-0.55)
\psline[linewidth=0.04cm](3.579375,-0.95)(2.379375,-0.35)
\psline[linewidth=0.04cm,arrowsize=0.05291667cm 2.0,arrowlength=1.4,arrowinset=0.4]{->}(3.579375,2.65)(3.579375,3.45)
\psline[linewidth=0.04cm,arrowsize=0.05291667cm 2.0,arrowlength=1.4,arrowinset=0.4]{->}(1.979375,2.65)(1.979375,3.45)
\psline[linewidth=0.04cm](1.979375,1.85)(1.979375,2.45)
\psline[linewidth=0.04cm,arrowsize=0.05291667cm 2.0,arrowlength=1.4,arrowinset=0.4]{->}(1.979375,1.45)(1.979375,2.05)
\psline[linewidth=0.04cm](3.579375,1.85)(3.579375,2.45)
\psline[linewidth=0.04cm,arrowsize=0.05291667cm 2.0,arrowlength=1.4,arrowinset=0.4]{->}(3.579375,1.45)(3.579375,2.05)
\psline[linewidth=0.04cm](3.579375,0.65)(3.579375,1.25)
\psline[linewidth=0.04cm,arrowsize=0.05291667cm 2.0,arrowlength=1.4,arrowinset=0.4]{->}(3.579375,0.25)(3.579375,0.85)
\psline[linewidth=0.04cm](3.579375,-1.75)(3.579375,-1.15)
\psline[linewidth=0.04cm,arrowsize=0.05291667cm 2.0,arrowlength=1.4,arrowinset=0.4]{->}(3.579375,-2.15)(3.579375,-1.55)
\psline[linewidth=0.04cm](3.579375,-0.55)(3.579375,0.05)
\psline[linewidth=0.04cm,arrowsize=0.05291667cm 2.0,arrowlength=1.4,arrowinset=0.4]{->}(3.579375,-0.95)(3.579375,-0.35)
\psline[linewidth=0.04cm](3.579375,-2.75)(3.579375,-2.15)
\psline[linewidth=0.04cm,arrowsize=0.05291667cm 2.0,arrowlength=1.4,arrowinset=0.4]{->}(3.579375,-3.15)(3.579375,-2.55)
\usefont{T1}{ptm}{m}{n}
\rput(3.820625,-1.625){\scriptsize 2}
\pscircle[linewidth=0.04,dimen=outer,fillstyle=solid,fillcolor=red](3.579375,-3.15){0.3}
\psframe[linewidth=0.02,dimen=outer,fillstyle=solid,fillcolor=yellow](3.789375,-1.88)(3.329375,-2.34)
\usefont{T1}{ptm}{m}{n}
\rput(3.5825,-2.14){H}
\pscircle[linewidth=0.04,dimen=outer,fillstyle=solid,fillcolor=red](3.579375,-1.05){0.3}
\pscircle[linewidth=0.04,dimen=outer,fillstyle=solid,fillcolor=red](3.579375,1.35){0.3}
\psframe[linewidth=0.02,dimen=outer,fillstyle=solid,fillcolor=yellow](2.189375,2.82)(1.729375,2.36)
\usefont{T1}{ptm}{m}{n}
\rput(1.9825,2.56){H}
\usefont{T1}{ptm}{m}{n}
\rput(3.8223438,-0.425){\scriptsize 4}
\usefont{T1}{ptm}{m}{n}
\rput(3.8179688,0.775){\scriptsize 7}
\usefont{T1}{ptm}{m}{n}
\rput(3.8128126,1.975){\scriptsize 8}
\usefont{T1}{ptm}{m}{n}
\rput(3.8179688,3.175){\scriptsize 9}
\usefont{T1}{ptm}{m}{n}
\rput(2.2779686,3.175){\scriptsize 10}
\pscircle[linewidth=0.04,dimen=outer,fillstyle=solid,fillcolor=red](2.179375,0.55){0.3}
\usefont{T1}{ptm}{m}{n}
\rput(2.1779687,0.56){\boldmath$\pi$\unboldmath}
\pscircle[linewidth=0.04,dimen=outer,fillstyle=solid,fillcolor=red](2.179375,-0.25){0.3}
\usefont{T1}{ptm}{m}{n}
\rput(2.1779687,-0.24){\boldmath$\pi$\unboldmath}
\pscircle[linewidth=0.04,dimen=outer,fillstyle=solid,fillcolor=red](1.979375,1.35){0.3}
\usefont{T1}{ptm}{m}{n}
\rput(1.9779687,1.36){\boldmath$\pi$\unboldmath}
\pscircle[linewidth=0.04,dimen=outer,fillstyle=solid,fillcolor=green](3.579375,0.15){0.3}
\pscircle[linewidth=0.04,dimen=outer,fillstyle=solid,fillcolor=green](3.579375,2.55){0.3}
\usefont{T1}{ptm}{m}{n}
\rput(3.578125,2.545){\footnotesize \boldmath$\pi\!/\!4$\unboldmath}
\usefont{T1}{ppl}{m}{n}
\rput(3.59,0.145){\tiny -\boldmath$\pi\!/\!4$\unboldmath}
\end{pspicture}
}
\]
\vskip 3cm

\noindent The red phase terms, $R[\pi]_1^1$, can be written in a simpler way:
\[
\left\{
\begin{array}{l l}
    H : (R[\pi]_0^1)_* \otimes a_{10},  & G[\pi/4]_1^1 : a_{8*} \otimes a_9, \\
                                        & R[\pi]_1^1 : a_{7*} \otimes a_8, \\
                                        & G[-\pi/4]_1^1 : a_{4*} \otimes a_7, \\
                                        & R[\pi]_1^1 : a_{2*} \otimes a_4, \\
                                        & H : (R_0^1)_* \otimes a_2
\end{array}
\right\}
\scalebox{1} 
{
\begin{pspicture}(0,0)(5.5971875,3.5)
\psline[linewidth=0.04cm,arrowsize=0.05291667cm 2.0,arrowlength=1.4,arrowinset=0.4]{->}(3.579375,2.65)(3.579375,3.45)
\psline[linewidth=0.04cm,arrowsize=0.05291667cm 2.0,arrowlength=1.4,arrowinset=0.4]{->}(1.979375,2.65)(1.979375,3.45)
\psline[linewidth=0.04cm](1.979375,1.85)(1.979375,2.45)
\psline[linewidth=0.04cm,arrowsize=0.05291667cm 2.0,arrowlength=1.4,arrowinset=0.4]{->}(1.979375,1.45)(1.979375,2.05)
\psline[linewidth=0.04cm](3.579375,1.85)(3.579375,2.45)
\psline[linewidth=0.04cm,arrowsize=0.05291667cm 2.0,arrowlength=1.4,arrowinset=0.4]{->}(3.579375,1.45)(3.579375,2.05)
\psline[linewidth=0.04cm](3.579375,0.65)(3.579375,1.25)
\psline[linewidth=0.04cm,arrowsize=0.05291667cm 2.0,arrowlength=1.4,arrowinset=0.4]{->}(3.579375,0.25)(3.579375,0.85)
\psline[linewidth=0.04cm](3.579375,-1.75)(3.579375,-1.15)
\psline[linewidth=0.04cm,arrowsize=0.05291667cm 2.0,arrowlength=1.4,arrowinset=0.4]{->}(3.579375,-2.15)(3.579375,-1.55)
\psline[linewidth=0.04cm](3.579375,-0.55)(3.579375,0.05)
\psline[linewidth=0.04cm,arrowsize=0.05291667cm 2.0,arrowlength=1.4,arrowinset=0.4]{->}(3.579375,-0.95)(3.579375,-0.35)
\psline[linewidth=0.04cm](3.579375,-2.75)(3.579375,-2.15)
\psline[linewidth=0.04cm,arrowsize=0.05291667cm 2.0,arrowlength=1.4,arrowinset=0.4]{->}(3.579375,-3.15)(3.579375,-2.55)
\usefont{T1}{ptm}{m}{n}
\rput(3.820625,-1.625){\scriptsize 2}
\pscircle[linewidth=0.04,dimen=outer,fillstyle=solid,fillcolor=red](3.579375,-3.15){0.3}
\pscircle[linewidth=0.04,dimen=outer,fillstyle=solid,fillcolor=green](3.579375,0.15){0.3}
\psframe[linewidth=0.02,dimen=outer,fillstyle=solid,fillcolor=yellow](3.779375,-1.89)(3.339375,-2.33)
\usefont{T1}{ptm}{m}{n}
\rput(3.5825,-2.14){H}
\pscircle[linewidth=0.04,dimen=outer,fillstyle=solid,fillcolor=green](3.579375,2.55){0.3}
\psframe[linewidth=0.02,dimen=outer,fillstyle=solid,fillcolor=yellow](2.179375,2.81)(1.739375,2.37)
\usefont{T1}{ptm}{m}{n}
\rput(1.9825,2.56){H}
\usefont{T1}{ptm}{m}{n}
\rput(3.8223438,-0.425){\scriptsize 4}
\usefont{T1}{ptm}{m}{n}
\rput(3.8179688,0.775){\scriptsize 7}
\usefont{T1}{ptm}{m}{n}
\rput(3.8128126,1.975){\scriptsize 8}
\usefont{T1}{ptm}{m}{n}
\rput(3.8179688,3.175){\scriptsize 9}
\usefont{T1}{ptm}{m}{n}
\rput(2.2779686,3.175){\scriptsize 10}
\pscircle[linewidth=0.04,dimen=outer,fillstyle=solid,fillcolor=red](3.579375,1.35){0.3}
\usefont{T1}{ptm}{m}{n}
\rput(3.5779688,1.36){\boldmath$\pi$\unboldmath}
\pscircle[linewidth=0.04,dimen=outer,fillstyle=solid,fillcolor=red](3.579375,-1.05){0.3}
\usefont{T1}{ptm}{m}{n}
\rput(3.5779688,-1.04){\boldmath$\pi$\unboldmath}
\pscircle[linewidth=0.04,dimen=outer,fillstyle=solid,fillcolor=red](1.979375,1.35){0.3}
\usefont{T1}{ptm}{m}{n}
\rput(1.9779687,1.36){\boldmath$\pi$\unboldmath}
\usefont{T1}{ptm}{m}{n}
\rput(3.578125,2.545){\footnotesize \boldmath$\pi\!/\!4$\unboldmath}
\usefont{T1}{ppl}{m}{n}
\rput(3.59,0.145){\tiny -\boldmath$\pi\!/\!4$\unboldmath}
\end{pspicture}
}
\]
\vskip 3cm

\noindent We can then commute red and green phases as follows:
\[
\left\{
\begin{array}{l l}
    H : (R[\pi]_0^1)_* \otimes a_{10},  & G[\pi/4]_1^1 : a_{8*} \otimes a_9, \\
                                        & R[\pi]_1^1 : a_{7*} \otimes a_8, \\
                                        & R[\pi]_1^1 : a_{4*} \otimes a_7, \\
                                        & G[\pi/4]_1^1 : a_{2*} \otimes a_4, \\
                                        & H : (R_0^1)_* \otimes a_2
\end{array}
\right\}
\scalebox{1} 
{
\begin{pspicture}(0,0)(5.5971875,3.5)
\psline[linewidth=0.04cm,arrowsize=0.05291667cm 2.0,arrowlength=1.4,arrowinset=0.4]{->}(3.579375,2.65)(3.579375,3.45)
\psline[linewidth=0.04cm,arrowsize=0.05291667cm 2.0,arrowlength=1.4,arrowinset=0.4]{->}(1.979375,2.65)(1.979375,3.45)
\psline[linewidth=0.04cm](1.979375,1.85)(1.979375,2.45)
\psline[linewidth=0.04cm,arrowsize=0.05291667cm 2.0,arrowlength=1.4,arrowinset=0.4]{->}(1.979375,1.45)(1.979375,2.05)
\psline[linewidth=0.04cm](3.579375,1.85)(3.579375,2.45)
\psline[linewidth=0.04cm,arrowsize=0.05291667cm 2.0,arrowlength=1.4,arrowinset=0.4]{->}(3.579375,1.45)(3.579375,2.05)
\psline[linewidth=0.04cm](3.579375,0.65)(3.579375,1.25)
\psline[linewidth=0.04cm,arrowsize=0.05291667cm 2.0,arrowlength=1.4,arrowinset=0.4]{->}(3.579375,0.25)(3.579375,0.85)
\psline[linewidth=0.04cm](3.579375,-1.75)(3.579375,-1.15)
\psline[linewidth=0.04cm,arrowsize=0.05291667cm 2.0,arrowlength=1.4,arrowinset=0.4]{->}(3.579375,-2.15)(3.579375,-1.55)
\psline[linewidth=0.04cm](3.579375,-0.55)(3.579375,0.05)
\psline[linewidth=0.04cm,arrowsize=0.05291667cm 2.0,arrowlength=1.4,arrowinset=0.4]{->}(3.579375,-0.95)(3.579375,-0.35)
\psline[linewidth=0.04cm](3.579375,-2.75)(3.579375,-2.15)
\psline[linewidth=0.04cm,arrowsize=0.05291667cm 2.0,arrowlength=1.4,arrowinset=0.4]{->}(3.579375,-3.15)(3.579375,-2.55)
\usefont{T1}{ptm}{m}{n}
\rput(3.820625,-1.625){\scriptsize 2}
\pscircle[linewidth=0.04,dimen=outer,fillstyle=solid,fillcolor=red](3.579375,-3.15){0.3}
\psframe[linewidth=0.02,dimen=outer,fillstyle=solid,fillcolor=yellow](3.789375,-1.88)(3.329375,-2.34)
\usefont{T1}{ptm}{m}{n}
\rput(3.5825,-2.14){H}
\pscircle[linewidth=0.04,dimen=outer,fillstyle=solid,fillcolor=green](3.579375,2.55){0.3}
\psframe[linewidth=0.02,dimen=outer,fillstyle=solid,fillcolor=yellow](2.189375,2.82)(1.729375,2.36)
\usefont{T1}{ptm}{m}{n}
\rput(1.9825,2.56){H}
\usefont{T1}{ptm}{m}{n}
\rput(3.8223438,-0.425){\scriptsize 4}
\usefont{T1}{ptm}{m}{n}
\rput(3.8179688,0.775){\scriptsize 7}
\usefont{T1}{ptm}{m}{n}
\rput(3.8128126,1.975){\scriptsize 8}
\usefont{T1}{ptm}{m}{n}
\rput(3.8179688,3.175){\scriptsize 9}
\usefont{T1}{ptm}{m}{n}
\rput(2.2779686,3.175){\scriptsize 10}
\pscircle[linewidth=0.04,dimen=outer,fillstyle=solid,fillcolor=red](3.579375,1.35){0.3}
\usefont{T1}{ptm}{m}{n}
\rput(3.5779688,1.36){\boldmath$\pi$\unboldmath}
\pscircle[linewidth=0.04,dimen=outer,fillstyle=solid,fillcolor=red](3.579375,0.15){0.3}
\usefont{T1}{ptm}{m}{n}
\rput(3.5779688,0.16){\boldmath$\pi$\unboldmath}
\pscircle[linewidth=0.04,dimen=outer,fillstyle=solid,fillcolor=red](1.979375,1.35){0.3}
\usefont{T1}{ptm}{m}{n}
\rput(1.9779687,1.36){\boldmath$\pi$\unboldmath}
\usefont{T1}{ptm}{m}{n}
\rput(3.578125,2.545){\footnotesize \boldmath$\pi\!/\!4$\unboldmath}
\pscircle[linewidth=0.04,dimen=outer,fillstyle=solid,fillcolor=green](3.579375,-1.05){0.3}
\usefont{T1}{ptm}{m}{n}
\rput(3.578125,-1.055){\footnotesize \boldmath$\pi\!/\!4$\unboldmath}
\end{pspicture}
}
\]
\vskip 3cm

\noindent The red phases cancel each other out, since two consecutive $\pi$ rotations bring us back where we started:
\[
\left\{
\begin{array}{l l}
    H : (R[\pi]_0^1)_* \otimes a_{10},  & G[\pi/4]_1^1 : a_{8*} \otimes a_9, \\
                                        & G[\pi/4]_1^1 : a_{2*} \otimes a_8, \\
                                        & H : (R_0^1)_* \otimes a_2
\end{array}
\right\}
\scalebox{1} 
{
\begin{pspicture}(0,0)(5.5775,2.27)
\psline[linewidth=0.04cm,arrowsize=0.05291667cm 2.0,arrowlength=1.4,arrowinset=0.4]{->}(3.579375,1.45)(3.579375,2.25)
\psline[linewidth=0.04cm,arrowsize=0.05291667cm 2.0,arrowlength=1.4,arrowinset=0.4]{->}(1.979375,1.45)(1.979375,2.25)
\psline[linewidth=0.04cm](1.979375,0.65)(1.979375,1.25)
\psline[linewidth=0.04cm,arrowsize=0.05291667cm 2.0,arrowlength=1.4,arrowinset=0.4]{->}(1.979375,0.25)(1.979375,0.85)
\psline[linewidth=0.04cm](3.579375,0.65)(3.579375,1.25)
\psline[linewidth=0.04cm,arrowsize=0.05291667cm 2.0,arrowlength=1.4,arrowinset=0.4]{->}(3.579375,0.25)(3.579375,0.85)
\psline[linewidth=0.04cm](3.579375,-0.55)(3.579375,0.05)
\psline[linewidth=0.04cm,arrowsize=0.05291667cm 2.0,arrowlength=1.4,arrowinset=0.4]{->}(3.579375,-0.95)(3.579375,-0.35)
\psline[linewidth=0.04cm](3.579375,-1.55)(3.579375,-0.95)
\psline[linewidth=0.04cm,arrowsize=0.05291667cm 2.0,arrowlength=1.4,arrowinset=0.4]{->}(3.579375,-1.95)(3.579375,-1.35)
\usefont{T1}{ptm}{m}{n}
\rput(3.820625,-0.425){\scriptsize 2}
\pscircle[linewidth=0.04,dimen=outer,fillstyle=solid,fillcolor=red](3.579375,-1.95){0.3}
\psframe[linewidth=0.02,dimen=outer,fillstyle=solid,fillcolor=yellow](3.799375,-0.67)(3.319375,-1.15)
\usefont{T1}{ptm}{m}{n}
\rput(3.5825,-0.94){H}
\pscircle[linewidth=0.04,dimen=outer,fillstyle=solid,fillcolor=green](3.579375,1.35){0.3}
\psframe[linewidth=0.02,dimen=outer,fillstyle=solid,fillcolor=yellow](2.199375,1.63)(1.719375,1.15)
\usefont{T1}{ptm}{m}{n}
\rput(1.9825,1.36){H}
\usefont{T1}{ptm}{m}{n}
\rput(3.8128126,0.775){\scriptsize 8}
\usefont{T1}{ptm}{m}{n}
\rput(3.8179688,1.975){\scriptsize 9}
\usefont{T1}{ptm}{m}{n}
\rput(2.2779686,1.975){\scriptsize 10}
\pscircle[linewidth=0.04,dimen=outer,fillstyle=solid,fillcolor=red](1.979375,0.15){0.3}
\usefont{T1}{ptm}{m}{n}
\rput(1.9779687,0.16){\boldmath$\pi$\unboldmath}
\usefont{T1}{ptm}{m}{n}
\rput(3.578125,1.345){\footnotesize \boldmath$\pi\!/\!4$\unboldmath}
\pscircle[linewidth=0.04,dimen=outer,fillstyle=solid,fillcolor=green](3.579375,0.15){0.3}
\usefont{T1}{ptm}{m}{n}
\rput(3.578125,0.145){\footnotesize \boldmath$\pi\!/\!4$\unboldmath}
\end{pspicture}
}
\]
\vskip 2cm

\noindent We can now use the Hadamards to transform the remaining red terms into green ones:
\[
\left\{
\begin{array}{l l}
    G[\pi]_0^1 : a_{10},    & G[\pi/4]_1^1 : a_{8*} \otimes a_9, \\
                            & G[\pi/4]_1^1 : a_{2*} \otimes a_8, \\
                            & G_0^1 : a_2
\end{array}
\right\}
\scalebox{1} 
{
\begin{pspicture}(0,0)(5.5775,1.82)
\psline[linewidth=0.04cm,arrowsize=0.05291667cm 2.0,arrowlength=1.4,arrowinset=0.4]{->}(3.579375,1.0)(3.579375,1.8)
\psline[linewidth=0.04cm,arrowsize=0.05291667cm 2.0,arrowlength=1.4,arrowinset=0.4]{->}(1.979375,1.0)(1.979375,1.8)
\psline[linewidth=0.04cm](3.579375,0.2)(3.579375,0.8)
\psline[linewidth=0.04cm,arrowsize=0.05291667cm 2.0,arrowlength=1.4,arrowinset=0.4]{->}(3.579375,-0.2)(3.579375,0.4)
\psline[linewidth=0.04cm](3.579375,-1.0)(3.579375,-0.4)
\psline[linewidth=0.04cm,arrowsize=0.05291667cm 2.0,arrowlength=1.4,arrowinset=0.4]{->}(3.579375,-1.4)(3.579375,-0.8)
\usefont{T1}{ptm}{m}{n}
\rput(3.820625,-0.875){\scriptsize 2}
\pscircle[linewidth=0.04,dimen=outer,fillstyle=solid,fillcolor=green](3.579375,-1.5){0.3}
\pscircle[linewidth=0.04,dimen=outer,fillstyle=solid,fillcolor=green](3.579375,0.9){0.3}
\usefont{T1}{ptm}{m}{n}
\rput(3.8128126,0.325){\scriptsize 8}
\usefont{T1}{ptm}{m}{n}
\rput(3.8179688,1.525){\scriptsize 9}
\usefont{T1}{ptm}{m}{n}
\rput(2.2779686,1.525){\scriptsize 10}
\usefont{T1}{ptm}{m}{n}
\rput(3.578125,0.895){\footnotesize \boldmath$\pi\!/\!4$\unboldmath}
\pscircle[linewidth=0.04,dimen=outer,fillstyle=solid,fillcolor=green](3.579375,-0.3){0.3}
\usefont{T1}{ptm}{m}{n}
\rput(3.578125,-0.305){\footnotesize \boldmath$\pi\!/\!4$\unboldmath}
\pscircle[linewidth=0.04,dimen=outer,fillstyle=solid,fillcolor=green](1.979375,0.9){0.3}
\usefont{T1}{ptm}{m}{n}
\rput(1.9779687,0.91){\boldmath$\pi$\unboldmath}
\end{pspicture}
}
\]
\vskip 2cm

\noindent Which in the end reduces to the two-qubit quantum Fourier transform's known output for our choice of input:
\[
\Big\{ G[\pi]_0^1 : a_{10}, G[\pi/2]_0^1 : a_9 \Big\}
\scalebox{1} 
{
\begin{pspicture}(0,0)(5.5775,0.62)
\psline[linewidth=0.04cm,arrowsize=0.05291667cm 2.0,arrowlength=1.4,arrowinset=0.4]{->}(3.579375,-0.2)(3.579375,0.6)
\psline[linewidth=0.04cm,arrowsize=0.05291667cm 2.0,arrowlength=1.4,arrowinset=0.4]{->}(1.979375,-0.2)(1.979375,0.6)
\pscircle[linewidth=0.04,dimen=outer,fillstyle=solid,fillcolor=green](3.579375,-0.3){0.3}
\usefont{T1}{ptm}{m}{n}
\rput(3.8179688,0.325){\scriptsize 9}
\usefont{T1}{ptm}{m}{n}
\rput(2.2779686,0.325){\scriptsize 10}
\usefont{T1}{ptm}{m}{n}
\rput(3.578125,-0.305){\footnotesize \boldmath$\pi\!/\!2$\unboldmath}
\pscircle[linewidth=0.04,dimen=outer,fillstyle=solid,fillcolor=green](1.979375,-0.3){0.3}
\usefont{T1}{ptm}{m}{n}
\rput(1.9779687,-0.29){\boldmath$\pi$\unboldmath}
\end{pspicture}
}
\]
\vskip 1cm

The quantum Fourier Transform's behaviour on other possible inputs can be verified by following a similar process for the rest of the basis states. The fact that the dagger lambda calculus can represent the quantum Fourier transform and, by extension, Shor's factoring algorithm speaks volumes about the language's expressive power.

\chapter{Teleportation Protocol}
\label{Chapter:Teleportation Protocol}
The teleportation protocol, one of the most famous quantum protocols, uses entanglement to teleport an arbitrary quantum state. In the most common presentation of the protocol, Alice is in possession of a qubit with a quantum state that she wants to teleport to Bob. Alice and Bob share a classical two-bit communication channel but, since the qubit has not been measured, its state carries much more information than two classical bits. In order for Alice to teleport the data, she makes use of an entangled pair of qubits that she shares with Bob. Alice performs a Bell basis measurement on her qubit and her part of the entangled pair of qubits; she then uses the classical communication channel to communicate the result of that measurement to Bob, who uses it to perform a unitary correction on his qubit.

\begin{remark}
    Similarly to the remark of \cite{CD11} about their ZX-calculus, the diagrammatic forefather of the dagger lambda calculus, our language does not represent the non-deterministic aspect of measurements. Every sequent in this lambda calculus, like every diagram in the ZX-calculus, corresponds to one run of the experiment, as far as measurements are concerned. For this reason, measurements are replaced by the projections into which they will resolve in this run of the experiment. To avoid having to consider sequents for every possible outcome of an experiment, the dagger lambda calculus could be extended to support the conditional diagrams of \cite{DP10} and \cite{CD11}. This would be a significant departure from the simplified logic of the lambda calculus and thereby one that merits to be studied in its own right. Consequently, we include this later on, in the \textit{Further work} section of this dissertation.
\end{remark}

Back to our description of the teleportation protocol, Alice's Bell basis measurement will resolve to a projection on one of the Bell basis states: $\bra{\Psi_+}$, $\bra{\Psi_-}$, $\bra{\Phi_+}$ or $\bra{\Phi_-}$. We therefore use $\bra{+_\alpha}$ and $\bra{+_\beta}$ to represent the outcome of the measurement, where the four possible pairs of $\alpha, \beta \in \{0, \pi\}$ range over all the outcomes of a measurement against the Bell basis. We will now present a diagram for part of the teleportation protocol, like the one used by \cite{CD11}, but adapted to support dualisers. Note that this diagram is from before the unitary corrections are performed. The full diagram for teleportation, with all the corrections, will be presented shortly afterwards:
\[
\scalebox{1} 
{
\begin{pspicture}(0,-2.97)(7.519375,2.97)
\psframe[linewidth=0.02,linestyle=dashed,dash=0.16cm 0.16cm,dimen=outer](5.519375,2.85)(1.719375,-2.85)
\psframe[linewidth=0.02,linestyle=dashed,dash=0.16cm 0.16cm,dimen=outer](7.519375,2.85)(5.719375,-2.85)
\psline[linewidth=0.04cm,arrowsize=0.05291667cm 2.0,arrowlength=1.4,arrowinset=0.4]{->}(4.619375,-0.85)(4.619375,-0.45)
\psline[linewidth=0.04cm](2.819375,-1.73)(2.819375,-2.25)
\usefont{T1}{ptm}{m}{n}
\rput(2.9559374,-2.425){\scriptsize 1}
\psline[linewidth=0.04cm,arrowsize=0.05291667cm 2.0,arrowlength=1.4,arrowinset=0.4]{<-}(2.819375,-2.15)(2.819375,-2.95)
\rput{-180.0}(11.23875,-3.1){\psarc[linewidth=0.04](5.619375,-1.55){1.0}{0.0}{180.0}}
\psline[linewidth=0.04cm,arrowsize=0.05291667cm 2.0,arrowlength=1.4,arrowinset=0.4]{<-}(4.619375,-1.65)(4.619375,-1.25)
\psarc[linewidth=0.04](2.819375,-0.81){0.6}{-180.0}{0.0}
\pscircle[linewidth=0.04,dimen=outer,fillstyle=solid,fillcolor=green](2.819375,-1.55){0.3}
\psarc[linewidth=0.04](4.019375,-0.53){0.6}{0.0}{180.0}
\psline[linewidth=0.04cm,linecolor=red,arrowsize=0.05291667cm 2.0,arrowlength=1.4,arrowinset=0.4]{<->}(4.619375,-1.35)(4.619375,-0.75)
\psline[linewidth=0.04cm,arrowsize=0.05291667cm 2.0,arrowlength=1.4,arrowinset=0.4]{->}(3.419375,-0.85)(3.419375,-0.45)
\psline[linewidth=0.04cm,arrowsize=0.05291667cm 2.0,arrowlength=1.4,arrowinset=0.4]{->}(2.219375,-0.85)(2.219375,0.95)
\psline[linewidth=0.04cm](4.019375,1.75)(4.019375,2.15)
\psline[linewidth=0.04cm,arrowsize=0.05291667cm 2.0,arrowlength=1.4,arrowinset=0.4]{->}(4.019375,1.35)(4.019375,1.95)
\psline[linewidth=0.04cm](4.019375,0.75)(4.019375,1.35)
\psline[linewidth=0.04cm,arrowsize=0.05291667cm 2.0,arrowlength=1.4,arrowinset=0.4]{->}(4.019375,0.45)(4.019375,0.95)
\usefont{T1}{ptm}{m}{n}
\rput(4.2528124,1.875){\scriptsize 8}
\psframe[linewidth=0.02,dimen=outer,fillstyle=solid,fillcolor=yellow](4.249375,1.64)(3.749375,1.14)
\usefont{T1}{ptm}{m}{n}
\rput(4.0225,1.36){H}
\pscircle[linewidth=0.04,dimen=outer,fillstyle=solid,fillcolor=red](4.019375,0.25){0.3}
\psline[linewidth=0.04cm](2.219375,0.75)(2.219375,2.15)
\psline[linewidth=0.04cm,arrowsize=0.05291667cm 2.0,arrowlength=1.4,arrowinset=0.4]{->}(6.619375,-1.65)(6.619375,0.95)
\psline[linewidth=0.04cm](6.619375,0.75)(6.619375,2.95)
\usefont{T1}{ppl}{m}{n}
\rput{90.0}(-0.32156253,-4.3378124){\rput(2.0021875,-2.33){\small Alice}}
\usefont{T1}{ppl}{m}{n}
\rput{90.0}(4.8682814,-9.528906){\rput(7.1939063,-2.33){\small Bob}}
\pscircle[linewidth=0.04,dimen=outer,fillstyle=solid,fillcolor=green](2.219375,2.35){0.3}
\usefont{T1}{ptm}{m}{n}
\rput(2.2279687,2.36){\boldmath$\alpha$\unboldmath}
\pscircle[linewidth=0.04,dimen=outer,fillstyle=solid,fillcolor=green](4.019375,2.35){0.3}
\usefont{T1}{ptm}{m}{n}
\rput(4.027969,2.34){\boldmath$\beta$\unboldmath}
\usefont{T1}{ptm}{m}{n}
\rput(4.880625,-1.525){\scriptsize 2}
\usefont{T1}{ptm}{m}{n}
\rput(6.3721876,-1.525){\scriptsize 3}
\usefont{T1}{ptm}{m}{n}
\rput(2.3823438,-0.525){\scriptsize 4}
\usefont{T1}{ptm}{m}{n}
\rput(3.6734376,-0.525){\scriptsize 5}
\usefont{T1}{ptm}{m}{n}
\rput(4.8778124,-0.525){\scriptsize 6}
\usefont{T1}{ptm}{m}{n}
\rput(4.277969,0.775){\scriptsize 7}
\end{pspicture}
}
\]

\noindent This diagram can easily be represented as a sequent in the lambda calculus. The sequent will be $a_1 : A \vdash_S a_3 : A$, where the soup S is:
\[
\left\{
\begin{array}{l l}
    G[\alpha]_1^0 : x_{4*}                  & G[\beta]_1^0 : a_{8*}, \\
                                            & H : a_{7*} \otimes a_8, \\
                                            & R_2^1 : (a_5 \otimes a_6)_* \otimes a_7, \\
    G_1^2 : x_{1*} \otimes x_4 \otimes x_5  & d_{A*} : a_2 \otimes a_6, \\
                                            & x_* \otimes x : a_{2*} \otimes a_3
\end{array}
\right\}
\]

Note that, in this sequent, the terms $a_{2*}$ and $a_3$ represent the entangled pair of qubits that are shared between Alice and Bob. Alice possesses $a_{2*}$ and Bob possesses $a_3$. Bob receives the results of Alice's projection, $\alpha$ and $\beta$, from the classical communication channel. We will now present the full diagram, which includes the unitary corrections that will be performed by Bob; the classical communication channel is not represented in the diagram with any wires but is instead reflected by the fact that Alice's $\alpha$ and $\beta$ measurements are matched by corresponding corrections by Bob. Having received Alice's measurement values classically, Bob can perform a red phase rotation by an angle of $\beta$, followed by a green phase rotation by an angle of $\alpha$, on his term $a_3$. This completes Bob's unitary correction, who should now possess Alice's original input state:
\[
\scalebox{1} 
{
\begin{pspicture}(0,-2.97)(8.897187,2.97)
\psframe[linewidth=0.02,linestyle=dashed,dash=0.16cm 0.16cm,dimen=outer](5.519375,2.85)(1.719375,-2.85)
\psframe[linewidth=0.02,linestyle=dashed,dash=0.16cm 0.16cm,dimen=outer](7.519375,2.85)(5.719375,-2.85)
\psline[linewidth=0.04cm,arrowsize=0.05291667cm 2.0,arrowlength=1.4,arrowinset=0.4]{->}(4.619375,-0.85)(4.619375,-0.45)
\psline[linewidth=0.04cm](2.819375,-1.73)(2.819375,-2.25)
\usefont{T1}{ptm}{m}{n}
\rput(2.9559374,-2.425){\scriptsize 1}
\psline[linewidth=0.04cm,arrowsize=0.05291667cm 2.0,arrowlength=1.4,arrowinset=0.4]{<-}(2.819375,-2.15)(2.819375,-2.95)
\rput{-180.0}(11.23875,-3.1){\psarc[linewidth=0.04](5.619375,-1.55){1.0}{0.0}{180.0}}
\psline[linewidth=0.04cm,arrowsize=0.05291667cm 2.0,arrowlength=1.4,arrowinset=0.4]{<-}(4.619375,-1.65)(4.619375,-1.25)
\psarc[linewidth=0.04](2.819375,-0.81){0.6}{-180.0}{0.0}
\pscircle[linewidth=0.04,dimen=outer,fillstyle=solid,fillcolor=green](2.819375,-1.55){0.3}
\psarc[linewidth=0.04](4.019375,-0.53){0.6}{0.0}{180.0}
\psline[linewidth=0.04cm,linecolor=red,arrowsize=0.05291667cm 2.0,arrowlength=1.4,arrowinset=0.4]{<->}(4.619375,-1.35)(4.619375,-0.75)
\psline[linewidth=0.04cm,arrowsize=0.05291667cm 2.0,arrowlength=1.4,arrowinset=0.4]{->}(3.419375,-0.85)(3.419375,-0.45)
\psline[linewidth=0.04cm,arrowsize=0.05291667cm 2.0,arrowlength=1.4,arrowinset=0.4]{->}(2.219375,-0.85)(2.219375,0.95)
\psline[linewidth=0.04cm](4.019375,1.75)(4.019375,2.15)
\psline[linewidth=0.04cm,arrowsize=0.05291667cm 2.0,arrowlength=1.4,arrowinset=0.4]{->}(4.019375,1.35)(4.019375,1.95)
\psline[linewidth=0.04cm](4.019375,0.75)(4.019375,1.35)
\psline[linewidth=0.04cm,arrowsize=0.05291667cm 2.0,arrowlength=1.4,arrowinset=0.4]{->}(4.019375,0.45)(4.019375,0.95)
\usefont{T1}{ptm}{m}{n}
\rput(4.2528124,1.875){\scriptsize 8}
\psframe[linewidth=0.02,dimen=outer,fillstyle=solid,fillcolor=yellow](4.259375,1.65)(3.739375,1.13)
\usefont{T1}{ptm}{m}{n}
\rput(4.0225,1.36){H}
\pscircle[linewidth=0.04,dimen=outer,fillstyle=solid,fillcolor=red](4.019375,0.25){0.3}
\psline[linewidth=0.04cm](2.219375,0.75)(2.219375,2.15)
\psline[linewidth=0.04cm,arrowsize=0.05291667cm 2.0,arrowlength=1.4,arrowinset=0.4]{->}(6.619375,0.75)(6.619375,2.05)
\usefont{T1}{ppl}{m}{n}
\rput{90.0}(-0.32156253,-4.3378124){\rput(2.0021875,-2.33){\small Alice}}
\usefont{T1}{ppl}{m}{n}
\rput{90.0}(4.8682814,-9.528906){\rput(7.1939063,-2.33){\small Bob}}
\pscircle[linewidth=0.04,dimen=outer,fillstyle=solid,fillcolor=green](2.219375,2.35){0.3}
\usefont{T1}{ptm}{m}{n}
\rput(2.2279687,2.36){\boldmath$\alpha$\unboldmath}
\pscircle[linewidth=0.04,dimen=outer,fillstyle=solid,fillcolor=green](4.019375,2.35){0.3}
\usefont{T1}{ptm}{m}{n}
\rput(4.027969,2.34){\boldmath$\beta$\unboldmath}
\usefont{T1}{ptm}{m}{n}
\rput(4.880625,-1.525){\scriptsize 2}
\usefont{T1}{ptm}{m}{n}
\rput(6.3721876,-1.525){\scriptsize 3}
\usefont{T1}{ptm}{m}{n}
\rput(2.3823438,-0.525){\scriptsize 4}
\usefont{T1}{ptm}{m}{n}
\rput(3.6734376,-0.525){\scriptsize 5}
\usefont{T1}{ptm}{m}{n}
\rput(4.8778124,-0.525){\scriptsize 6}
\usefont{T1}{ptm}{m}{n}
\rput(4.277969,0.775){\scriptsize 7}
\pscircle[linewidth=0.04,dimen=outer,fillstyle=solid,fillcolor=green](6.619375,1.35){0.3}
\usefont{T1}{ptm}{m}{n}
\rput(6.627969,1.36){\boldmath$\alpha$\unboldmath}
\psline[linewidth=0.04cm](6.619375,1.85)(6.619375,2.95)
\psline[linewidth=0.04cm,arrowsize=0.05291667cm 2.0,arrowlength=1.4,arrowinset=0.4]{->}(6.619375,-1.65)(6.619375,-0.45)
\psline[linewidth=0.04cm,arrowsize=0.05291667cm 2.0,arrowlength=1.4,arrowinset=0.4]{->}(6.619375,-0.65)(6.619375,0.95)
\pscircle[linewidth=0.04,dimen=outer,fillstyle=solid,fillcolor=red](6.619375,0.25){0.3}
\usefont{T1}{ptm}{m}{n}
\rput(6.627969,0.24){\boldmath$\beta$\unboldmath}
\usefont{T1}{ptm}{m}{n}
\rput(6.857969,0.775){\scriptsize 9}
\usefont{T1}{ptm}{m}{n}
\rput(6.9179688,2.275){\scriptsize 10}
\end{pspicture}
}
\]

\noindent Going back to the lambda calculus, the sequent for the full diagram is expressed by performing a Cut of $a_1 : A \vdash_S a_3 : A$ with $a_3 : A \vdash_{\{ R[\beta]_1^1 : a_{3*} \otimes a_9, G[\alpha]_1^1 : a_{9*} \otimes a_{10} \}} a_{10} : A$, which gives us:
\[ a_1 : A \vdash_{S \cup \{ R[\beta]_1^1 : a_{3*} \otimes a_9, G[\alpha]_1^1 : a_{9*} \otimes a_{10} \}} a_{10} : A \]

\noindent We will now start performing operations in our soup, to simplify the existing connections, and prove that the teleportation sequent does actually produce Alice's original input state when Bob runs his corrections at the other end. We will use the soup rules for lifting terms, to lift $G[\alpha]_1^0$ and $G[\beta]_1^0$ into phase shifts. We will also use the rule for the Hadamard gate, to transform the resulting $G[\beta]_1^1$ to a $R[\beta]_1^1$. The resulting soup will be of the form:
\[
\left\{
\begin{array}{l l}
                                        & G[\alpha]_1^1 : a_{9*} \otimes a_{10}, \\
                                        & R[\beta]_1^1 : a_{3*} \otimes a_9, \\
    R[\beta]_1^1 : a_{5*} \otimes a_3,  & \\
    G[\alpha]_1^1 : a_{1*} \otimes a_5  &
\end{array}
\right\}
\]

\noindent But recall that $\alpha, \beta \in \{0, \pi\}$, so $2\alpha = 2\beta = 0$. The consecutive red $\beta$ rotations hence cancel out and so do the remaining green $\alpha$ rotations after them. This leaves us with the following soup, thus proving that Alice's original input state was teleported to Bob:
\[ \big\{ a_1 : a_{10} \big\} \]

\clearpage
\part{Conclusion}
\label{Part:Conclusion}
\chapter{Concluding remarks}
\label{Chapter:Concluding remarks}
We have now completed our study of the dagger lambda calculus; a higher-order language that was initially defined with dagger compact categories in mind, but which was later imbued with classical control by enriching it with complementary control structures. The dagger lambda calculus has proven to be a powerful and expressive language, capable of running quantum protocols and programs. Our main motivation behind this language has been to bridge the gap between two different approaches towards quantum computation, by reconciling the diagrammatic formalism with the type theoretic research of higher-order quantum programming languages. Our study of the language's semantics has revealed a number of insights about the building blocks of quantum computation, such as the symmetry of substitution, or the admissibility of some of the rules, giving us the ability to decompose classical notions of computation into finer primitives.

\section{Future work}
\label{Section:Future work}
As per our remark in chapter \ref{Chapter:Teleportation Protocol}, the dagger lambda calculus in its current form does not have a way of representing the non-deterministic aspect of measurements. Every sequent in the lambda calculus corresponds to one run of the experiment, where measurements are, as a result, replaced by the projections into which they will resolve. A way of binding measurement results, using \textit{conditional diagrams}, was recently proposed by \cite{DP10} as a modification to the diagrammatic calculus. One area of future work is the modification of the dagger lambda calculus, so that it supports \textit{conditional sequents}. This would require a significant departure from the language's original structure, but it would be interesting to examine whether these new connections between measurement results could be incorporated into a richer type of soup.

Another area for potential future work is that of Measurement Based Quantum Computation \cite{RB01, RB02, RBB03}. Since the dagger lambda calculus supports both quantum and classical data, it would be ideally suited for such a computational paradigm. The lambda calculus could be extended to support a language like the \textit{Measurement Calculus} of \cite{DKP07}. It would then be interesting to investigate whether the dagger lambda calculus can exhibit an equational correspondence to the Measurement Calculus, as well as to analyse the language's properties in a way that is similar to the analysis we performed in section \ref{Section:Proofs of properties}.

Once we have defined a programming language that fully supports Measurement Based Quantum Computation and the non-deterministic behaviour of measurements, we will be able to use it to represent all of the elements of quantum algorithms. There are not many efficient quantum algorithms out there. By representing the ones that are available, however, we will be able to spot their common elements and distil the structural elements that are responsible for the quantum mechanical speedup.

\section{Acknowledgements}
\label{Section:Acknowledgements}
I would like to thank my supervisors, Samson Abramsky and Bob Coecke, for their guidance and support throughout the course of my DPhil. Samson is an inexhaustible source of information in the areas of type theory, logic and category theory. In addition to being a supporting, fatherly figure for his students, he could always see many steps ahead in my research and steer me clear of problems. Moreover, his advice has been instrumental in helping me shape the dagger lambda calculus, which serves as the cornerstone of this dissertation. Bob introduced me to categorical quantum computing and his guidance in the early stages of my DPhil inspired my interest in diagrammatic calculi and measurement based quantum computation.

I also want to thank my examiners, Prakash Panangaden and Jonathan Barrett, for their very detailed and helpful comments and the corrections they suggested. I am also thankful towards the anonymous reviewers at a premature conference submission of mine (if you are reading this, you know who you are); their feedback has proved invaluable in shaping the exposition of some of my chapters.

I particularly want to thank my friend and colleague Nikos Tzevelekos for teaching me category theory and for the many in depth discussions and insights he offered during the development of my calculus. I also want to thank Rick Blute, Andrzej Murawski and Andreas Doering for their helpful feedback on early drafts of my research, and Peter Selinger, Beno\^{i}t Valiron, John Baez and Mike Stay for our discussions on quantum programming languages. Thanks are also due to Ross Duncan, Mehrnoosh Sadrzadeh, Simon Perdrix, \'{E}ric Paquette, Chris Heunen, Jaime Vicary, Aleks Kissinger, Bill Edwards, Alejandro D\'{\i}az-Caro and Du\v{s}ko Pavlovi\'{c} for answering questions, sharing their ideas and offering advice on topics ranging from Frobenius algebras and logic, to free categories and quantum programming. I am also thankful to Bob Harper, Benjamin Piece and Frank Pfenning for their lectures at the 2010 Oregon Programming Languages Summer School and for the engaging discussions we had in between sessions.

I am particularly grateful to Joel Spencer, for transmitting to me his love of mathematical logic and theoretical computer science during my undergraduate years. I am also grateful to Erasmia Kiriazi and Elias Kamouzis for lighting up in me a passion for mathematics.

I am grateful to the US Office of Naval Research, FQXi, and the Levelhulme Foundation for supporting my studies financially; this research would not have been possible without them. I also want to thank my college, Wolfson College, and the Department of Computer Science for supporting my trip to attend the 2010 Oregon Programming Languages Summer School.

An acknowledgements section, like this one, would not be complete without mentioning those whose love, friendship, and support have accompanied me throughout my DPhil. A doctoral degree is as much a struggle with one's inner self as it is a struggle with uncovering the secrets of science and, in that sense, personal relationships play a pivotal role in helping the human achieve his true potential. I am thankful to my fianc\'{e}e, Chara Tzanetaki, for her unwavering love and support throughout the best and the worst times in my doctoral research. Chara has always been there to push me forward, lift my spirits in times I could not do so myself, and even proofread drafts of my research. I am also thankful to my parents, my brother, my friends from Oxford: Yiannis Hadjimichael, Konstantinos Stamatis, Thomas Papadopoulos, Illektra Apostolidou, Mary Kopsacheili and Kyriaki Michailidou; as well as my friends from Greece: Tasos Katechis, Nicholas Tsiroyiannis, Konstantinos Nikolaras, George Pallis. Finally, a big thanks to my friends in Cambridge: Napoleon Katsos, Nausica Smith, and little George Jason Katsos who are now a lot more than just friends and a lot more than just family.

\clearpage
\bibliographystyle{alpha}
\bibliography{thesis}

\clearpage
\renewcommand{\listtheoremname}{List of Definitions}
\listoftheorems
[
    ignoreall,
    show={definition}
]

\clearpage
\renewcommand{\listtheoremname}{List of Theorems, Lemmas and Corollaries}
\listoftheorems
[
    ignoreall,
    show={theorem,lemma,corollary} 
]
\end{document}